\definecolor{LightCyan}{rgb}{0.88,1,1}
\definecolor{Gray}{rgb}{0.82,0.82,0.82}
\theoremstyle{plain}
\newtheorem{thm}{Theorem}
\newtheorem{cor}{Corollary}
\begin{document}
  \title{On Data-Selective Learning}
  \foreigntitle{Aprendizado sob Seleção de dados}
  \author{Hamed}{Yazdanpanah}
  \advisor{Prof.}{Paulo Sergio}{Ramirez Diniz}{Ph.D.}
  \advisor{Prof.}{Markus Vinicius}{Santos Lima}{D.Sc.}

  \examiner{Prof.}{Paulo Sergio Ramirez Diniz}{Ph.D.}
  \examiner{Prof.}{Markus Vinicius Santos Lima}{D.Sc.}
  \examiner{Prof.}{Marcello Luiz Rodrigues de Campos}{Ph.D.}
  \examiner{Prof.}{José Antonio Apolinário Jr.}{D.Sc.}
  \examiner{Prof.}{Mário Sarcinelli Filho}{D.Sc.}
  \examiner{Prof.}{Cássio Guimarães Lopes}{Ph.D.}
  \department{PEE}
  \date{03}{2018}

  \keyword{Adaptive filtering}
  \keyword{Data-selective adaptive filtering}
  \keyword{Set-membership filtering}
  \keyword{Robustness}
  \keyword{Quaternion}
  \keyword{Trinion}
  \keyword{Partial-update}
  \keyword{Sparsity}
  \keyword{Feature LMS algorithm}
  \keyword{Computational complexity}

  \maketitle

  \frontmatter
  \dedication{ }

\begin{flushright}
To my parents, Mohammad and Mina, and Ana Clara\\for their love, attention, and support.
\end{flushright}
  \chapter*{Acknowledgments}

I would like to express my sincere gratitude to my advisor, Professor Paulo S. R. Diniz, for the continuous support, guidance, patience, motivation, and immense knowledge. Specially, I would like to thank him for his generous support and patience during my illness that lasted for about one year. Also, his extreme competence and friendly comprehension inspire me to be a better professional and friend. In fact, he is a remarkable example of a Brazilian. I could not have imagined having a better advisor for my Ph.D. study.

Also, I would like to thank Professor Markus V. S. Lima, my other advisor. He helped me for all details of my thesis. In fact, I am grateful for having his guidance during my study. He was always keen to know what I was doing and how I was proceeding. He always inspired me to be a serious and diligent researcher. I thank him for being not only my advisor, but also a friend.

Beside my advisors, I would like to thank my thesis committee: Prof. Marcello L. R. de Campos, Prof. José A. Apolinário Jr., Prof. Mário S. Filho, and Prof. Cássio G. Lopes for their encouragement, insightful comments and suggestions. My thesis benefited from their valuable comments. Moreover, I would like to express my sincere gratitude to Prof. José A. Apolinário Jr. for his invaluable comments on Chapter 7 of the text.

My sincere thanks also goes to Prof. Sergio L. Netto and Prof. Eduardo A. B. da Silva for offering me a research project in their group. I have
learned a lot from them during the project.

I would like to thank the professors of the Programa de Engenharia Elétrica (PEE) who have contributed to my education. In particular, I am grateful to Prof. Wallace A. Martins for the courses he taught.

Also, I would like to thank the staff of the SMT Lab. I am particularly grateful to Michelle Nogueira for her support and assistance during my Ph.D. study. Moreover, I thank the university staff, in particular, Daniele C. O. da Silva and Mauricio de Carvalho Machado for their help.

My sincere thanks also goes to Camila Gussen and all friends of the SMT Lab. They make the SMT Lab a pleasant and collaborative workplace. Also, I would like to thank Prof. Tadeu Ferreira for his special attention and help.

A very special gratitude goes out to Coordenação de Aperfeiçoamento de Pessoal de Nível Superior (CAPES), Conselho Nacional de Desenvolvimento Científico e Tecnológico (CNPq), and Fundação de Amparo à Pesquisa do Estado do Rio de Janeiro (FAPERJ) for the financial support.

I am really grateful to my lovely girlfriend, Ana Clara, and her family for all their love, patience, and help. Her love motivates me to continue my studies in Brazil and to choose this beautiful country as my home. Her continuous encouragement, unfailing emotional support, and permanent attention played fundamental roles throughout my years of study.

I am deeply grateful to my parents for giving birth to me at the first place and supporting me spiritually throughout my life. I can never pay them back the sacrifice they made for me. My father, Mohammad Yazdanpanah, and my mother, Mina Alizadeh, have provided me through moral and emotional support during my education. Finally, I must express my very profound gratitude to my brother and my sister for providing me with support and continuous encouragement through the process of researching and writing this thesis. This accomplishment would not have been possible without my family. Thank you.
  \begin{abstract}

Filtros adaptativos são aplicados em diversos aparelhos eletrônicos e de comunicação, como {\it smartphones}, fone de ouvido avançados, DSP chips, antenas inteligentes e sistemas de teleconferência. Eles também têm aplicação em várias áreas como identificação de sistemas, equalização de canal, cancelamento de eco, cancelamento de interferência, previsão de sinal e mercado de ações. Desse modo, reduzir o consumo de energia de algoritmos adaptativos tem importância significativa, especialmente em tecnologias verdes  e aparelhos que usam bateria.

Nesta tese, filtros adaptativos com seleção de dados, em particular filtros adaptativos da família {\it set-membership} (SM), são apresentados para cumprir essa missão. No presente trabalho objetivamos apresentar novos algoritmos, baseados nos clássicos, a fim de aperfeiçoar seus desempenhos e, ao mesmo tempo, reduzir o número de operações aritméticas exigidas. Dessa forma, primeiro analisamos a robustez dos filtros adaptativos SM clássicos. Segundo, estendemos o SM aos números trinions e quaternions. Terceiro, foram utilizadas também duas famílias de algoritmos, SM filtering e {\it partial-updating}, de uma maneira elegante, visando reduzir energia ao máximo possível e obter um desempenho competitivo em termos de estabilidade. Quarto, a tese propõe novos filtros adaptativos baseado em algoritmos {\it least-mean-square} (LMS) e mínimos quadrados recursivos com complexidade computacional baixa para espaços esparsos. Finalmente, derivamos alguns algoritmos {\it feature} LMS para explorar a esparsidade escondida nos parâmetros.

\end{abstract}

  \begin{foreignabstract}

Adaptive filters are applied in several electronic and communication devices like smartphones, advanced headphones, DSP chips, smart antenna, and teleconference systems. Also, they have application in many areas such as system identification, channel equalization, noise reduction, echo cancellation, interference cancellation, signal prediction, and stock market. Therefore, reducing the energy consumption of the adaptive filtering algorithms has great importance, particularly in green technologies and in devices using battery.

In this thesis, data-selective adaptive filters, in particular the set-membership (SM) adaptive filters, are the tools to reach the goal. There are well known SM adaptive filters in literature. This work introduces new algorithms based on the classical ones in order to improve their performances and reduce the number of required arithmetic operations at the same time. Therefore, firstly, we analyze the robustness of the classical SM adaptive filtering algorithms. Secondly, we extend the SM technique to trinion and quaternion systems. Thirdly, by combining SM filtering and partial-updating, we introduce a new improved set-membership affine projection algorithm with constrained step size to improve its stability behavior. Fourthly, we propose some new least-mean-square (LMS) based and recursive least-squares based adaptive filtering algorithms with low computational complexity for sparse systems. Finally, we derive some feature LMS algorithms to exploit the hidden sparsity in the parameters.

\end{foreignabstract}

  \tableofcontents
  \listoffigures
  \listoftables
  \printlosymbols
  \printloabbreviations
  
  \mainmatter
  \chapter{Introduction}



In the last decades, the volume of data to be processed and kept for storage has been proliferated, mainly due to the increased availability of low-cost sensors and storage devices. As examples, we can mention the usage of multiple antennas in multiple-input and multiple-output wireless communication systems, the application of multiple audio devices in speech enhancement and audio signal processing, and the employment of echo cancellers in small or handheld communication devices. Moreover, these technological features are continuously spreading.

Our world is overwhelmed by data and to benefit from them in our daily life, we need to process the data correctly. A significant amount of data, however, brings about no new information in order that only part of it is particularly useful~\cite{Berberidis_censor_data_tsp2016,Wang_Big_data_GlobalSIP2014}. Therefore, we are compelled to improve our ability to evaluate the importance of the received data. This capability is called {\it data selection}. It enables the derivation of {\it data-selective adaptive filters}, which can neglect undesired data in a smart way. These filters are designed to reject the redundant data and perform their modeling tasks utilizing a small fraction of the available data.

Data-selective adaptive filters evaluate, select, and process data at each iteration of their learning process. These filters assess the data and choose only the ones bringing about some innovation. This property of the data-selective adaptive filters distinguishes them from the family of classical adaptive filters, which consider all data. In particular, these data-selective adaptive filters improve the accuracy of the estimator and decrease the computational complexity at the same time~\cite{Hamed_robustnessSMNLMS_sam2016,Hamed_robustnessSM_EURASIP2017,Markus_sparseSMAP_tsp2014}.

In this thesis, to apply the data selection, we employ the {\it set-membership filtering} (SMF)\abbrev{SMF}{Set-Membership Filtering} approach~\cite{Gollamudi_smf_letter1998,Diniz_adaptiveFiltering_book2013}. The set-membership (SM) adaptive filtering algorithm aims at estimating
the system such that the magnitude of the estimation output error is upper bounded by a predetermined positive constant called the threshold. The threshold is usually chosen based on {\it a priori} information about the sources of uncertainty. A comparison between traditional and SM adaptive filters was performed in~\cite{Diniz_adaptiveFiltering_book2013,Markus-phdthesis}, where the results had shown that the algorithms employing the SMF\abbrev{SMF}{Set-Membership Filtering} strategy require lower computational resources as compared to the conventional adaptive filters. The SMF\abbrev{SMF}{Set-Membership Filtering} algorithms, however, are not so widely used since there is some lack of analysis tools, and there is a limited number of set-membership adaptive filtering algorithms available. This thesis introduces new algorithms employing the SMF\abbrev{SMF}{Set-Membership Filtering} approach and provides some analysis tools.

This chapter is organized as follows. Section~\ref{sec:motivation-chap1} contains the main motivations. The targets of this thesis are given in Section~\ref{sec:target-chap1}. Section~\ref{sec:profile-chap1} describes the contributions of this thesis. Finally, the notation is explained in Section~\ref{sec:notation-chap1}.

\section{Motivations} \label{sec:motivation-chap1}

The area of {\it Digital Signal Processing} takes part in our daily lives for decades now, since it is at the core of virtually all electronic gadget we have been utilizing, ranging from medical equipment to mobile phones. If we have full information about the signals,
we can apply the most suitable algorithm (a digital filter for instance) to process the signals. However, if we do not know the statistical properties of the signals, a possible solution is to utilize an adaptive filter that automatically modifies its characteristics to match the behavior of the observed data. 

Adaptive filters~\cite{Diniz_adaptiveFiltering_book2013,Sayed_adaptiveFilters_book2008,Haykin_adaptiveFiltering_book2002} are utilized in several electronic and communication devices, such as smartphones, advanced headphones, DSP chips, smart antennas, and microphone arrays for teleconference systems. Also, they have application in many areas such as system identification~\cite{Raffaello_rls_dcd_eusipco2016}, channel equalization~\cite{Diniz_semiblind_ds_iscas2008}, noise reduction~\cite{Andersen_atf_taslp2016}, echo cancellation~\cite{Ruiz_acoustic_ec_its2014}, interference cancellation~\cite{Rodrigo_multi-antenna_twc2013}, signal prediction~\cite{Hamed_smtrinion-tcssII2016}, acoustic images~\cite{Ehrenfried_damas_aiaa2007}, stock market~\cite{Zheng_stock_market_icca2010}, etc. Due to the diversity of applications of adaptive signal processing, traditional adaptive filters cannot meet the needs of every application. An ideal adaptive filter would have low processing time, high accuracy in the learning process, low energy consumption, low memory usage, etc. These properties, however, conflict with each other.

An adaptive filter uses an algorithm to adjust its coefficients. An algorithm is a procedure to modify the coefficients in order to minimize a prescribed criterion. The algorithm is characterized by defining the search method, the objective function, and the error signal nature. The traditional algorithms in adaptive filtering implement coefficient updates at each iteration. However, when the adaptive filter learns from the observed data and reaches its steady state, it is desirable that the adaptive filter has the ability to reduce its energy consumption since there is less information to be learned. Here appears the importance of data-selective adaptive filters since they assess the input data, then according to the innovation they decide to perform an update or not.

After defining the set-membership adaptive filtering algorithms as a subset of the data-selective adaptive filters, many works have shown how effective these algorithms are in reducing the energy consumption. In some environments they can decrease the number of updates by 80$\%$~\cite{Diniz_adaptiveFiltering_book2013,Markus-phdthesis}. This thesis, however, shows that there is room for improvements regarding the reduction in the number of arithmetic operations and energy consumption, as discussed in Chapters 5 and 6.


\section{Targets} \label{sec:target-chap1}

The targets of this thesis are:
\begin{itemize}
\item To analyze the performance of some existing set-membership adaptive filtering algorithms to confirm their competitive performance as compared to the classical adaptive filtering approaches;
\item To develop data-selective adaptive filtering algorithms beyond the real and complex numbers, and examine the advantage of the set-membership technique in different mathematical number systems;
\item To improve some existing set-membership adaptive filtering algorithms to bring about improvements in performance and  computational complexity;
\item To introduce some new sparsity-aware set-membership adaptive filtering algorithms with low computational burden;
\item To exploit the hidden sparsity in the linear combination of parameters of adaptive filters.
\end{itemize}
In a nutshell, in this thesis, we improve and analyze data-selective adaptive filtering algorithms.


\section{Thesis Contributions} \label{sec:profile-chap1}

In this thesis, we analyze the robustness of classical set-membership adaptive filtering algorithms and extend these conventional algorithms for the trinion and the quaternion systems. In addition, we introduce an improved version of a set-membership adaptive filtering algorithm along with the partial updating strategy. Moreover, we develop some algorithms for sparse systems utilizing the SMF\abbrev{SMF}{Set-Membership Filtering} technique. Finally, we try to exploit the hidden sparsity in systems with lowpass and highpass frequencies. To address such topics, the text is hereinafter organized as follows.

Chapter 2 introduces some conventional adaptive filtering algorithms, such as the least-mean-square (LMS), the normalized LMS (NLMS), the affine projection (AP), and the recursive least-squares (RLS) ones. Then, we review the set estimation theory in adaptive signal processing and presents the set-membership filtering (SMF)\abbrev{SMF}{Set-Membership Filtering} strategy. Also, we describe a short review of the set-membership normalized least-mean-square (SM-NLMS) \abbrev{SM-NLMS}{Set-Membership Normalized LMS}and the set-membership affine projection (SM-AP) algorithms. 

In Chapter 3, we address the robustness, in the sense of $l_2$-stability, of the SM-NLMS \abbrev{SM-NLMS}{Set-Membership Normalized LMS}and the SM-AP\abbrev{SM-AP}{Set-Membership Affine Projection} algorithms. For the SM-NLMS \abbrev{SM-NLMS}{Set-Membership Normalized LMS}algorithm, we demonstrate that it is robust regardless the choice of its parameters and that the SM-NLMS \abbrev{SM-NLMS}{Set-Membership Normalized LMS}enhances the parameter estimation in most of the iterations in which an update occurs, two advantages over the classical NLMS algorithm. Moreover, we also prove that if the noise bound is known, then we can set the SM-NLMS \abbrev{SM-NLMS}{Set-Membership Normalized LMS}so that it never degrades the estimate. As for the SM-AP\abbrev{SM-AP}{Set-Membership Affine Projection} algorithm, we demonstrate that its robustness depends on a judicious choice of one of its parameters: the constraint vector (CV). We prove the existence of CVs satisfying the robustness condition, but practical choices remain unknown. We also demonstrate that both the SM-AP\abbrev{SM-AP}{Set-Membership Affine Projection} and the SM-NLMS \abbrev{SM-NLMS}{Set-Membership Normalized LMS}algorithms do not diverge, even when their parameters are selected naively, provided the additional noise is bounded. Furthermore, numerical results that corroborate our analyses are presented.

In Chapter 4, we introduce new data-selective adaptive filtering algorithms for trinion and quaternion systems $\mathbb{T}$ and $\mathbb{H}$. The work advances the set-membership trinion- and quaternion-valued normalized least-mean-square (SMTNLMS\abbrev{SMTNLMS}{Set-Membership Trinion-Valued NLMS} and SMQNLMS)\abbrev{SMQNLMS}{Set-Membership Quaternion-Valued NLMS} and the set-membership trinion- and quaternion-valued affine projection (SMTAP\abbrev{SMTAP}{Set-Membership Trinion-Valued AP} and SMQAP)\abbrev{SMQAP}{Set-Membership Quaternion-Valued AP} algorithms. Also, as special cases, we obtain trinion- and quaternion-valued algorithms not employing the set-membership strategy. Prediction simulations based on recorded wind data are provided, showing the improved performance of the proposed algorithms regarding reduced computational load. Moreover, we study the application of quaternion-valued adaptive filtering algorithms to adaptive beamforming.

Usually, set-membership algorithms implement updates more regularly during the early iterations in stationary environments. Therefore, if these updates exhibit high computational complexity, an alternative solution is needed. A possible approach to partly control the computational complexity is to apply partial update technique, where only a subset of the adaptive filter coefficients is updated at each iteration. In Chapter 5, we present an improved set-membership partial-update affine projection (I-SM-PUAP)\abbrev{I-SM-PUAP}{Improved SM-PUAP} algorithm, aiming at accelerating the convergence rate, and decreasing the update rate of the set-membership partial-update affine projection (SM-PUAP)\abbrev{SM-PUAP}{Set-Membership Partial-Update AP} algorithm. To meet these targets, we constrain the weight vector perturbation to be bounded by a hypersphere instead of the threshold hyperplanes as in the standard algorithm. We use the distance between the present weight vector and the expected update in the standard SM-AP\abbrev{SM-AP}{Set-Membership Affine Projection} algorithm to construct the hypersphere. Through this strategy, the new algorithm shows better behavior in the early iterations. Simulation results verify the excellent performance of the proposed algorithm related to the convergence rate and the required number of updates.

In Chapter 6, we derive two LMS-based\abbrev{LMS}{Least-Mean-Square} algorithms, namely the simple set-membership affine projection (S-SM-AP)\abbrev{S-SM-AP}{Simple SM-AP} and the improved S-SM-AP\abbrev{IS-SM-AP}{Improved S-SM-AP} (IS-SM-AP), in order to exploit the sparsity of an unknown system while focusing on having low computational cost. To achieve this goal, the proposed algorithms apply a discard function on the weight vector to disregard the coefficients close to zero during the update process. In addition, the IS-SM-AP\abbrev{IS-SM-AP}{Improved S-SM-AP} algorithm reduces the overall number of computations required by the adaptive filter even further by replacing small coefficients with zero. Moreover, we introduce the $l_0$ norm RLS ($l_0$-RLS)\abbrev{$l_0$-RLS}{$l_0$ Norm RLS} and the RLS\abbrev{RLS}{Recursive Least-Squares} algorithm for sparse models (S-RLS)\abbrev{S-RLS}{RLS Algorithm for Sparse System}. Also, we derive the data-selective version of these RLS-based\abbrev{RLS}{Recursive Least-Squares} algorithms. Simulation results show similar performance when comparing the proposed algorithms with some existing state-of-the-art sparsity-aware algorithms while the proposed algorithms require lower computational complexity.   

When our target is to detect and exploit sparsity in the model parameters, in many situations, the sparsity is hidden in the relations among these coefficients so that some suitable tools are required to reveal the potential sparsity. Chapter 7 proposes a set of least-mean-square (LMS)\abbrev{LMS}{Least-Mean-Square} type algorithms, collectively called feature LMS (F-LMS)\abbrev{F-LMS}{Feature LMS} algorithms, setting forth a hidden feature of the unknown parameters, which ultimately would improve convergence speed and steady-state mean-squared error. The fundamental idea is to apply linear transformations, by means of the so-called feature matrices, to reveal the sparsity hidden in the coefficient vector, followed by a sparsity-promoting penalty function to exploit such sparsity. Some F-LMS\abbrev{F-LMS}{Feature LMS} algorithms for lowpass and highpass systems are also introduced by using simple feature matrices that require only trivial operations. Simulation results demonstrate that the proposed F-LMS\abbrev{F-LMS}{Feature LMS} algorithms bring about several performance improvements whenever the hidden sparsity of the parameters is exposed.

Finally, chapter 8 highlights the conclusions of the work, and gives some clues for future works regarding the topics addressed in the thesis.


\section{Notation} \label{sec:notation-chap1}

In this section, we introduce most of the usual notation utilized in this thesis. However, in order to avoid confusing the reader, we evade presenting here the definition of the rare notation in this text, and we introduce them only at the vital moments.

Equalities are shown by $=$, and when they refer to a definition, we use $\triangleq$.\symbl{$\triangleq$}{Definition} The real, nonnegative real, nature, integer, complex, trinion, and quaternion numbers are denoted by $\mathbb{R}$, $\mathbb{R}_+$, $\mathbb{N}$, $\mathbb{Z}$, $\mathbb{C}$, $\mathbb{T}$, and $\mathbb{H}$, respectively. \symbl{$\mathbb{R}$}{Set of real numbers} \symbl{$\mathbb{R}_+$}{Set of nonnegative real numbers} \symbl{$\mathbb{N}$}{Set of natural numbers} \symbl{$\mathbb{Z}$}{Set of integer numbers} \symbl{$\mathbb{C}$}{Set of complex numbers} \symbl{$\mathbb{T}$}{Set of trinion numbers} \symbl{$\mathbb{H}$}{Set of quaternion numbers}

Moreover, scalars are represented by lowercase letters (e.g., $x$), vectors by lowercase boldface letters (e.g., $\xbf$), and matrices by uppercase boldface letters (e.g., $\Xbf$). The
symbols $(\cdot)^T$ and $(\cdot)^H$ stand for the transposition\symbl{$(\cdot)^T$}{Transposition of $(\cdot)$} and Hermitian operators,\symbl{$(\cdot)^H$}{Hermitian transposition of $(\cdot)$} respectively. Also, all vectors are column vectors in order that the inner product between two vectors $\xbf$ and $\ybf$ is defined as $\xbf^T\ybf$ or $\xbf^H\ybf$.

We represent the trace operator by ${\rm tr}(\cdot)$.\symbl{${\rm tr}(\cdot)$}{Trace of matrix} The identity matrix and zero vector (matrix) are denoted by $\Ibf$ \symbl{$\Ibf$}{Identity matrix} and ${\bf 0}$,\symbl{${\bf 0}$}{Zero vector or zero matrix} respectively. Also, ${\rm diag}(\xbf)$ stands for a diagonal matrix with vector $\xbf$ on its diagonal and zero outside it.\symbl{${\rm diag}(\xbf)$}{Diagonal matrix with $\xbf$ on its diagonal} Furthermore, $\mathbb{P}[\cdot]$ and $\mathbb{E}[\cdot]$ denote the probability\symbl{$\mathbb{P}$}{Probability operator} and the expected value operators,\symbl{$\mathbb{E}$}{Expected value operator} respectively. Also, $\|\cdot\|$ denotes the $l_2$ norm (when the norm is not defined explicitly, we are referring to the $l_2$ norm).
  \chapter{Conventional and Set-Membership Adaptive Filtering Algorithms}



The {\it point estimation theory}~\cite{Lehmann_pointEstimation_book2003} utilizes a sample data for computing a single solution as the best estimate of an unknown parameter. For decades, machine learning and adaptive filtering have been grounded in the point estimation theory~\cite{Diniz_adaptiveFiltering_book2013,Sayed_adaptiveFilters_book2008,Haykin_adaptiveFiltering_book2002,Theodoridis_Pattern_Recognition_book2008,Bishop_Pattern_Recognition_book2011}. Nowadays, the benefit of the set estimation approach, however, is becoming clearer by disclosing its advantages~\cite{Combettes_foundationSetTheoreticEstimation_procIEEE1993,Markus_edcv_eusipco2013,Combettes_noise_SetTheoretic_tsp1991}. 

In contrast with the world of theoretical models, in the real-world we live with uncertainties originate from measurement noise, quantization, interference, modeling errors, etc. Therefore, searching the solution utilizing point estimation theory sometimes results in a waste of energy and time. An alternative is to address the problem from the {\it set estimation theory}~\cite{Combettes_foundationSetTheoreticEstimation_procIEEE1993} point of view. In fact, in this approach, we search for a set of acceptable solutions instead of a unique point as a solution. 

The adaptive filtering algorithms presented in \cite{Haykin_adaptiveFiltering_book2002,Sayed_adaptiveFilters_book2008} exhibit a trade-off between convergence rate and misadjustment after transient, particularly in stationary environments. In general, fast converging algorithms lead to high variance estimators after convergence. To tackle this problem, we can apply set-membership filtering (SMF)~\abbrev{SMF}{Set-Membership Filtering}
\cite{Diniz_adaptiveFiltering_book2013,Markus-phdthesis} which is a representative of the set estimation theory. The SMF\abbrev{SMF}{Set-Membership Filtering} technique prevents unnecessary updates and reduces the computational complexity by updating the filter coefficients only when the estimation error is greater than a predetermined upper bound~\cite{Fogel_valueOfInformation_automatica1982,Deller_smi_asspmag1989,Gollamudi_smf_letter1998}.

In set-membership adaptive filters, we try to find a {\it feasibility set} such that any member in this set has the output estimation error limited by a predetermined upper bound. For this purpose, the objective function of the algorithm is related to a bounded error constraint on the filter output, such that the updates are contained in a set of acceptable solutions. The inclusion of {\it a priori} information, such as the noise bound, into the objective function leads to some noticeable advantages. As compared with the normalized least-mean-square (NLMS)\abbrev{NLMS}{Normalized LMS} and the affine projection (AP)\abbrev{AP}{Affine Projection} algorithms, their set-membership counterparts have lower computational cost, better accuracy, data selection, and robustness against noise~\cite{Gollamudi_smf_letter1998,Gollamudi_smUpdatorShared_tsp1998,Nagaraj_beacon_tsp1999,Diniz_sm_bnlms_tsp2003,Werner_sm_ap_letter2001,Hamed_robustnessSMNLMS_sam2016,Hamed_robustnessSM_EURASIP2017}.

This chapter presents a brief review of some adaptive filtering algorithms. An interested reader should refer to~\cite{Diniz_adaptiveFiltering_book2013} for more details. Section~\ref{sec:conventional_algorithms} describes the point estimation adaptive filtering algorithms.  Section~\ref{sec:SM-AF-chap2} reviews the SMF\abbrev{SMF}{Set-Membership Filtering} approach and the main set-membership algorithms. The estimation of the threshold parameter for big data applications is discussed in Section~\ref{sec:estimate_gamma_chap2}. Finally, Section~\ref{sec:conclusion-chap2} contains the conclusions.


\section{Point Estimation Adaptive Filtering \\Algorithms} \label{sec:conventional_algorithms}

In this section, we introduce some LMS-based adaptive filtering algorithms and the recursive least-squares (RLS)\abbrev{RLS}{Recursive Least-Squares} algorithm.

\subsection{Least-mean-square algorithm} \label{sub:lms-chap2}

The update equation of the least-mean-square (LMS)\abbrev{LMS}{Least-Mean-Square} algorithm is given by~\cite{Diniz_adaptiveFiltering_book2013}
\begin{align}
\wbf(k+1)=\wbf(k)+2\mu e(k)\xbf(k),
\end{align}
where $\xbf(k)=[x_0(k)~x_1(k)~\cdots~x_N(k)]^T$ and $\wbf(k)=[w_0(k)~w_1(k)~\cdots~w_N(k)]^T$ are the input signal vector and the the weight vector, respectively.\symbl{$\xbf(k)$}{Input signal vector} \symbl{$\wbf(k)$}{Coefficient vector} \symbl{$k$}{Iteration counter} The output signal is defined by $y(k)\triangleq\wbf^T(k)\xbf(k)=\xbf^T(k)\wbf(k)$,\symbl{$y(k)$}{Output signal} and $e(k)\triangleq d(k)-y(k)$ denotes the error signal,\symbl{$e(k)$}{Error signal} where $d(k)$ is the desired signal.\symbl{$d(k)$}{Desired signal} The convergence factor $\mu$ \symbl{$\mu$}{Convergence factor} should be chosen in the range $0<\mu<\frac{1}{{\rm tr}[\Rbf]}$ to guarantee the convergence, where $\Rbf\triangleq\mathbb{E}[\xbf(k)\xbf^T(k)]$ is the correlation matrix. \symbl{$\Rbf$}{Correlation matrix}


\subsection{Normalized LMS algorithm} \label{sub:nlms-chap2}

To increase the convergence rate of the LMS\abbrev{LMS}{Least-Mean-Square} algorithm without using matrix $\Rbf$, we can utilize the NLMS\abbrev{NLMS}{Normalized LMS} algorithm. The recursion rule of the NLMS\abbrev{NLMS}{Normalized LMS} algorithm is described by~\cite{Diniz_adaptiveFiltering_book2013}
\begin{align}
\wbf(k+1)=\wbf(k)+\frac{\mu_n}{\xbf^T(k)\xbf(k)+\delta}e(k)\xbf(k),
\end{align}
where $\delta$ is a small regularization factor,\symbl{$\delta$}{Regularization factor} and the step size $\mu_n$ should be selected in the range $0<\mu_n<2$.


\subsection{Affine projection algorithm} \label{sub:ap-chap2}

When the input signal is correlated, it is possible to use old data signal to improve the convergence speed of the algorithm. For this purpose, let us utilize the last $L+1$ input signal vector and form matrix $\Xbf(k)$ as
\begin{align}
\Xbf(k)=[\xbf(k)~\xbf(k-1)~\cdots~\xbf(k-L)]\in\mathbb{R}^{(N+1)\times(L+1)}.
\end{align}
Also, let us define the desired signal vector $\dbf(k)$, the output signal vector $\ybf(k)$, and the error signal vector $\ebf(k)$ as follows
\begin{align}
\dbf(k)&=[d(k)~d(k-1)~\cdots~d(k-L)]^T,\nonumber\\
\ybf(k)&\triangleq\wbf^T(k)\Xbf(k)=\Xbf^T(k)\wbf(k),\nonumber\\
\ebf(k)&\triangleq\dbf(k)-\ybf(k).
\end{align}

Then, the update rule of the affine projection (AP)\abbrev{AP}{Affine Projection} algorithm is described by~\cite{Diniz_adaptiveFiltering_book2013}
\begin{align}
\wbf(k+1)=\wbf(k)+\mu\Xbf(k)[\Xbf^T(k)\Xbf(k)]^{-1}\ebf(k),
\end{align}
where $\mu$ is the convergence factor.


\subsection{Recursive least-squares algorithm} \label{sub:rls-chap2}

Here, we review the RLS\abbrev{RLS}{Recursive Least-Squares} algorithm. The goal of this algorithm is to match the output signal to the desired signal as much as possible.

The objective function of the RLS\abbrev{RLS}{Recursive Least-Squares} algorithm is given by
\begin{align}
\zeta(k)=\sum_{i=0}^k\lambda^{k-i}\varepsilon^2(i)=\sum_{i=0}^k\lambda^{k-i}[d(i)-\xbf^T(i)\wbf(k)]^2,
\end{align}
where $\lambda$ is a forgetting factor which should be adopted in the range $0\ll\lambda\leq1$, and $\varepsilon(i)$ is called the {\it a posteriori} error.\symbl{$\varepsilon(k)$}{{\it A posteriori} error signal} Note that in the elaboration of the LMS-based\abbrev{LMS}{Least-Mean-Square} algorithms we use the {\it a priori} error, whereas for the RLS\abbrev{RLS}{Recursive Least-Squares} algorithm we utilize the {\it a posteriori} error.

If we differentiate $\zeta(k)$ with respect to $\wbf(k)$ and equate the result to zero, we get the optimal coefficient vector $\wbf(k)$~\cite{Diniz_adaptiveFiltering_book2013}
\begin{align}
\wbf(k)=\Big[\sum_{i=0}^k\lambda^{k-i}\xbf(i)\xbf^T(i)\Big]^{-1}\sum_{i=0}^k\lambda^{k-i}\xbf(i)d(i)=\Rbf_D^{-1}(k)\pbf_D(k),
\end{align}
where $\Rbf_D(k)$ and $\pbf_D(k)$ are named the deterministic correlation matrix of the input signal and the deterministic cross-correlation vector between the input and the desired signals, respectively.\symbl{$\Rbf_D(k)$}{Deterministic correlation matrix of the input signal} \symbl{$\pbf_D(k)$}{Deterministic cross-correlation vector between the input and the desired signals} By using the matrix inversion lemma~\cite{Goodwin_Dynamic_system_id_book1977}, the inverse of $\Rbf_D(k)$ can be given by\symbl{$\Sbf_D(k)$}{The inverse of $\Rbf_D(k)$}
\begin{align}
\Sbf_D(k)=\Rbf_D^{-1}(k)=\frac{1}{\lambda}\Big[\Sbf_D(k-1)-\frac{\Sbf_D(k-1)\xbf(k)\xbf^T(k)\Sbf_D(k-1)}{\lambda+\xbf^T(k)\Sbf_D(k-1)\xbf(k)}\Big].
\end{align}


\section{Set-Membership Adaptive Filtering \\Algorithms} \label{sec:SM-AF-chap2}

In this section, we firstly introduce the set-membership filtering (SMF)\abbrev{SMF}{Set-Membership Filtering} approach. Secondly, we present the SM-NLMS\abbrev{SM-NLMS}{Set-Membership Normalized LMS} algorithm. Finally, we review the SM-AP\abbrev{SM-AP}{Set-Membership Affine Projection} algorithm.


\subsection{Set-membership filtering} \label{sub:SMF-chap2}

The SMF\abbrev{SMF}{Set-Membership Filtering} approach proposed in~\cite{Gollamudi_smf_letter1998} is suitable for adaptive filtering problems that are linear in parameters. Thus, for a given input signal vector $\xbf(k)\in\mathbb{R}^{N+1}$ at iteration $k$ and the filter coefficients $\wbf\in\mathbb{R}^{N+1}$, the output signal of the filter is obtained by 
\begin{align}
y(k)=\wbf^T\xbf(k),
\end{align}
where $\xbf(k)=[x_0(k)~x_1(k)~\cdots~x_N(k)]^T$ and $\wbf=[w_0~w_1~\cdots~w_N]^T$.
For a desired signal sequence $d(k)$, the estimation error sequence $e(k)$ is computed as
\begin{align}
e(k)=d(k)-y(k).
\end{align}
The SMF\abbrev{SMF}{Set-Membership Filtering} criterion aims at estimating the parameter $\wbf$ such that the magnitude of the estimation output error is upper bounded by a constant $\gammabar\in\mathbb{R}_+$, for all possible pairs $(\xbf,d)$.\symbl{$\gammabar$}{Upper bound for the magnitude of the error signal} If the value of $\gammabar$ is suitably selected, there are various valid estimates for $\wbf$. The threshold is usually chosen based on {\it a priori} information about the sources of uncertainty. Note that any $\wbf$ leading to an output estimation error with magnitude smaller than $\gammabar$ is an acceptable solution. Hence, we obtain a set of filters rather than a single estimate.

Let us denote by ${\cal S}$ the set comprised of all possible pairs $(\xbf,d)$.\symbl{${\cal S}$}{Set comprised of all possible pairs $(\xbf,d)$} We want to find $\wbf$ such that $|e|=|d-\wbf^T\xbf|\leq\gammabar$ for all $(\xbf,d)\in{\cal S}$. Therefore, the {\it feasibility set} $\Theta$ will be defined as\symbl{$\Theta$}{Feasibility set}
\begin{align}
\Theta\triangleq\bigcap_{(\xbf,d)\in{\cal S}}\{\wbf\in\mathbb{R}^{N+1}:|d-\wbf^T\xbf|\leq\gammabar\},
\end{align}
so that the SMF\abbrev{SMF}{Set-Membership Filtering} criterion can be stated as finding $\wbf\in\Theta$.

In the case of online applications, we do not have access to all members of ${\cal S}$. Thus, we consider the practical case in which only measured data are available and develop iterative techniques. Suppose that a set of data pairs $\{(\xbf(0),d(0)),(\xbf(1),d(1)),\cdots,(\xbf(k),d(k))\}$ is available, and define the {\it constraint set} ${\cal H}(k)$ at time instant $k$ as\symbl{${\cal H}(k)$}{Constraint set at iteration $k$}
\begin{align}
{\cal H}(k)\triangleq \{\wbf\in\mathbb{R}^{N+1}:|d(k)-\wbf^T\xbf(k)|\leq\gammabar\}.
\end{align}
Also, define the {\it exact membership set} $\psi(k)$ as the intersection of the constraint sets from the beginning, i.e. the first iteration, to iteration $k$,\symbl{$\psi(k)$}{Exact membership set} or
\begin{align}
\psi(k)\triangleq \bigcap_{i=0}^k{\cal H}(i).
\end{align}
Then, $\Theta$ can be iteratively estimated via the exact membership set since $\lim_{k\rightarrow\infty}\psi(k)=\Theta$.

Figure \ref{fig:smf-chap2} shows the geometrical interpretation of the SMF\abbrev{SMF}{Set-Membership Filtering} principle. The boundaries of the constraint sets are hyperplanes, and ${\cal H}(k)$ corresponds to region between the parallel hyperplanes in the parameter space. The exact membership set represents a polytope in the parameter space. The volume of $\psi(k)$ decreases for each $k$ in which the pairs $(\xbf(k),d(k))$ bring about some innovation. Note that $\Theta\subset\psi(k)$ for all $k$, since $\Theta$ is the intersection of all possible constraint sets.

\begin{figure}[t!]
\begin{center}
\includegraphics[width=.85\linewidth]{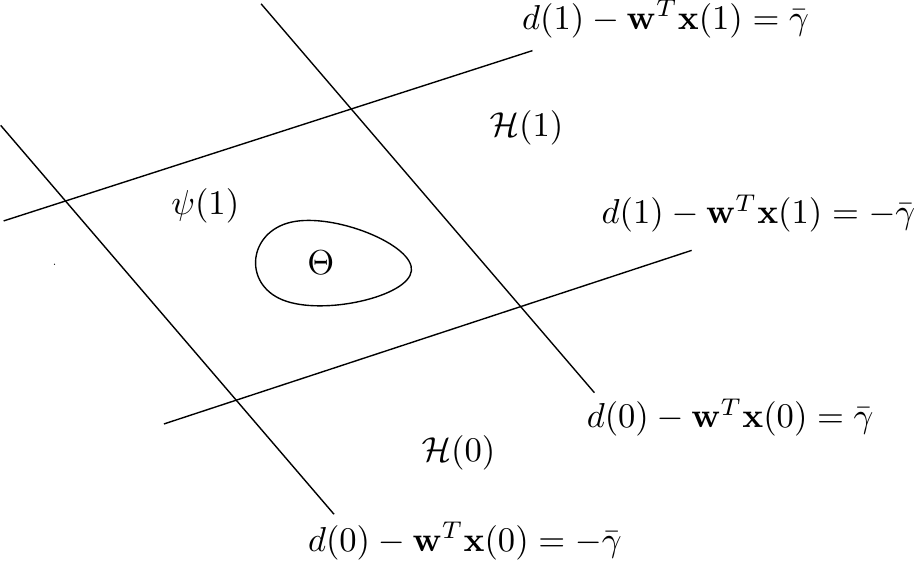}
\caption{SMF geometrical interpretation in the parameter space  $\psi(1)$ (redrawn from \cite{Markus-phdthesis}).}
\label{fig:smf-chap2}
\end{center}
\end{figure} 

The target of set-membership adaptive filtering is to obtain adaptively an estimate that belongs to the feasibility set. The simplest method is to calculate a point estimate using, for example, the information
provided by ${\cal H}(k)$ similar to the set-membership NLMS\abbrev{NLMS}{Normalized LMS} algorithm described in the following subsection, or several previous ${\cal H}(k)$ like in the SM-AP\abbrev{SM-AP}{Set-Membership Affine Projection} algorithm discussed in Subsection~\ref{sub:sm-ap-chap2}.


\subsection{Set-membership normalized LMS algorithm} \label{sub:sm-nlms-chap2}

The set-membership NLMS\abbrev{NLMS}{Normalized LMS} algorithm, first proposed in \cite{Gollamudi_smf_letter1998}, implements a test to check if the previous estimate $\wbf(k)$ lies outside the constraint set ${\cal H}(k)$. If $|d(k)-\wbf^T(k)\xbf(k)|>\gammabar$, then $\wbf(k+1)$ will be updated to the closest boundary of ${\cal H}(k)$ at a minimum distance. Figure~\ref{fig:sm-nlms-chap2} depicts the updating procedure of the SM-NLMS \abbrev{SM-NLMS}{Set-Membership Normalized LMS}algorithm.

\begin{figure}[t!]
\begin{center}
\includegraphics[width=.75\linewidth]{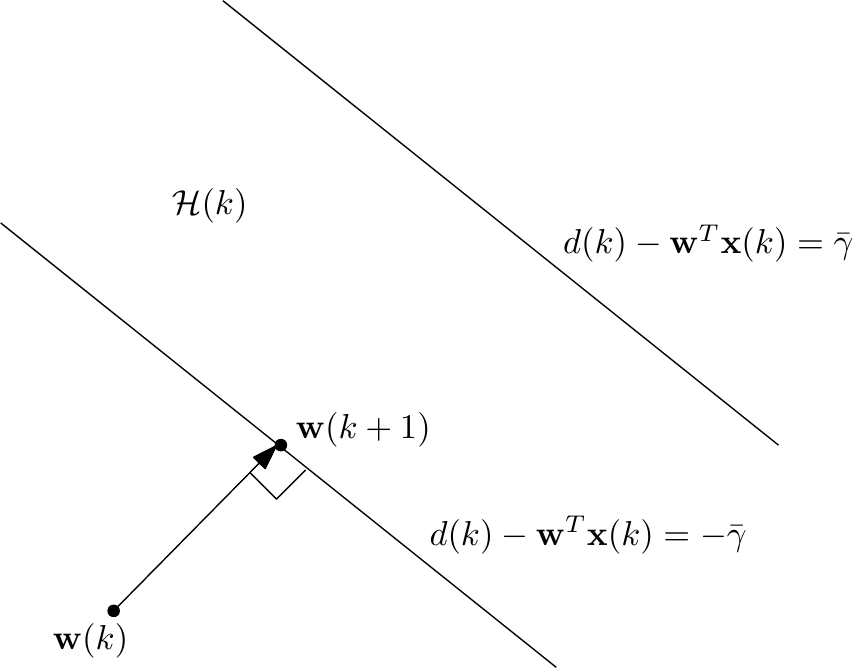}
\caption{Coefficient vector updating for the SM-NLMS algorithm (redrawn from \cite{Diniz_adaptiveFiltering_book2013}).}
\label{fig:sm-nlms-chap2}
\end{center}
\end{figure}

The SM-NLMS \abbrev{SM-NLMS}{Set-Membership Normalized LMS}algorithm has the updating rule
\begin{align}
\wbf(k+1)= \wbf(k)+\frac{\mu(k)}{\xbf^T(k)\xbf(k)+\delta}e(k)\xbf(k), \label{eq:sm-nlms-update-chap2}
\end{align}
where the variable step size $\mu(k)$ is given by 
\begin{align}
\mu(k)=\left\{\begin{array}{ll}1-\frac{\gammabar}{|e(k)|}&\text{if }|e(k)|>\gammabar,\\0&\text{otherwise},\end{array}\right. 
\end{align}
and $\delta$ is a small regularization factor. As
a rule of thumb, the value of $\gammabar$ is selected about $\sqrt{\tau\sigma_n^2}$, where $\sigma_n^2$ is the variance of the additional noise~\cite{Gollamudi_smf_letter1998,Galdino_SMNLMS_gammabar_ISCAS2006}, and $1\leq\tau\leq5$.

Note that we can introduce the NLMS algorithm through the SM-NLMS algorithm. Indeed, the NLMS\abbrev{NLMS}{Normalized LMS} algorithm with unit step size is a particular case of the SM-NLMS \abbrev{SM-NLMS}{Set-Membership Normalized LMS}algorithm by adopting $\gammabar=0$.


\subsection{Set-membership affine projection algorithm} \label{sub:sm-ap-chap2}

The exact membership set $\psi(k)$ suggests the use of more constraint sets in the update~\cite{Werner_sm_ap_letter2001}. Moreover, it is widely known that data-reusing algorithms can increase convergence speed significantly for correlated-input 
signals~\cite{Diniz_adaptiveFiltering_book2013,Haykin_adaptiveFiltering_book2002,Ozeki_ap_japan1984}. This section introduces the SM-AP\abbrev{SM-AP}{Set-Membership Affine Projection} algorithm whose updates belong to the last $L+1$ constraint sets. For this purpose, let us define the input signal matrix $\Xbf(k)$, the output signal vector $\ybf(k)$, the error signal vector $\ebf(k)$, the desired signal vector $\dbf(k)$, 
the additive noise signal vector $\nbf(k)$, and the constraint vector (CV)\abbrev{CV}{Constraint Vector} $\gammabf(k)$ as \symbl{$\Xbf(k)$}{Input signal matrix} \symbl{$\ybf(k)$}{Output signal vector} \symbl{$\ebf(k)$}{Error signal vector} \symbl{$\dbf(k)$}{Desired signal vector} \symbl{$\nbf(k)$}{Additive noise signal vector} \symbl{$\gammabf(k)$}{Constraint vector}
\begin{equation}
\begin{aligned}
\Xbf(k)&=[\xbf(k)~\xbf(k-1)~\cdots~\xbf(k-L)]                 \in\mathbb{R}^{(N+1)\times (L+1)},\\
\xbf(k)&=[x(k)~x(k-1)~\cdots~x(k-N)]^T\in\mathbb{R}^{N+1},\\
\ybf(k)&=[y(k)~y(k-1)~\cdots~y(k-L)]^T\in\mathbb{R}^{L+1},\\
\ebf(k)&=[e(k)~\epsilon(k-1)~\cdots~\epsilon(k-L)]^T             \in\mathbb{R}^{L+1},\\
\dbf(k)&=[d(k)~d(k-1)~\cdots~d(k-L)]^T                        \in\mathbb{R}^{L+1},\\
\nbf(k)&=[n(k)~n(k-1)~\cdots~n(k-L)]^T                           \in\mathbb{R}^{L+1},\\
\gammabf(k)&=[\gamma_0(k)~\gamma_1(k)~\cdots~\gamma_L(k)]^T     \in\mathbb{R}^{L+1},
 \label{eq:pack-chap2}
\end{aligned}
\end{equation}
where $N$ is the order of the adaptive filter\symbl{$N$}{Order of the FIR adaptive filter}, and $L$ is the data-reusing factor\symbl{$L$}{Data reuse factor}, i.e., 
$L$ previous data are used together with the data from the current iteration $k$. 
The output signal vector is defined as $\ybf(k)\triangleq\wbf^T(k)\Xbf(k)=\Xbf^T(k)\wbf(k)$, the desired signal vector is given by $\dbf(k)\triangleq\wbf_o^T\Xbf(k)+\nbf(k)$, where $\wbf_o$ is the optimal solution (unknown system),\symbl{$\wbf_o$}{Impulse response of the unknown system} and the error signal vector is given by $\ebf(k) \triangleq \dbf(k)-\ybf(k)$. The entries of the constraint vector 
should satisfy $| \gamma_i(k) |\leq \gammabar$, for $i=0,\ldots,L$, where $\gammabar \in \mathbb{R}_+$ is the upper bound for the 
magnitude of the error signal $e(k)$.

The objective function to be minimized in the SM-AP\abbrev{SM-AP}{Set-Membership Affine Projection} algorithm can be stated as follows: a coefficient update is implemented whenever $\wbf(k)\not\in\psi^{L+1}(k)$ in such a way that
\begin{align}
&\min\frac{1}{2}\|\wbf(k+1)-\wbf(k)\|^2\nonumber\\
&\text{subject to:}\nonumber\\
&\dbf(k)-\Xbf^T(k)\wbf(k+1)=\gammabf(k),
\end{align} 
where $\psi^{L+1}(k)$ is the intersection of the $L+1$ last constraint sets.

Figure~\ref{fig:sm-ap-chap2} shows a usual coefficient update related to the SM-AP\abbrev{SM-AP}{Set-Membership Affine Projection} algorithm in $\mathbb{R}^2$, $L=1$ and $|\gamma_i(k)|\leq\gammabar$ such that $\wbf(k+1)$ is not placed at the border of ${\cal H}(k)$.

\begin{figure}[t!]
\begin{center}
\includegraphics[width=.85\linewidth]{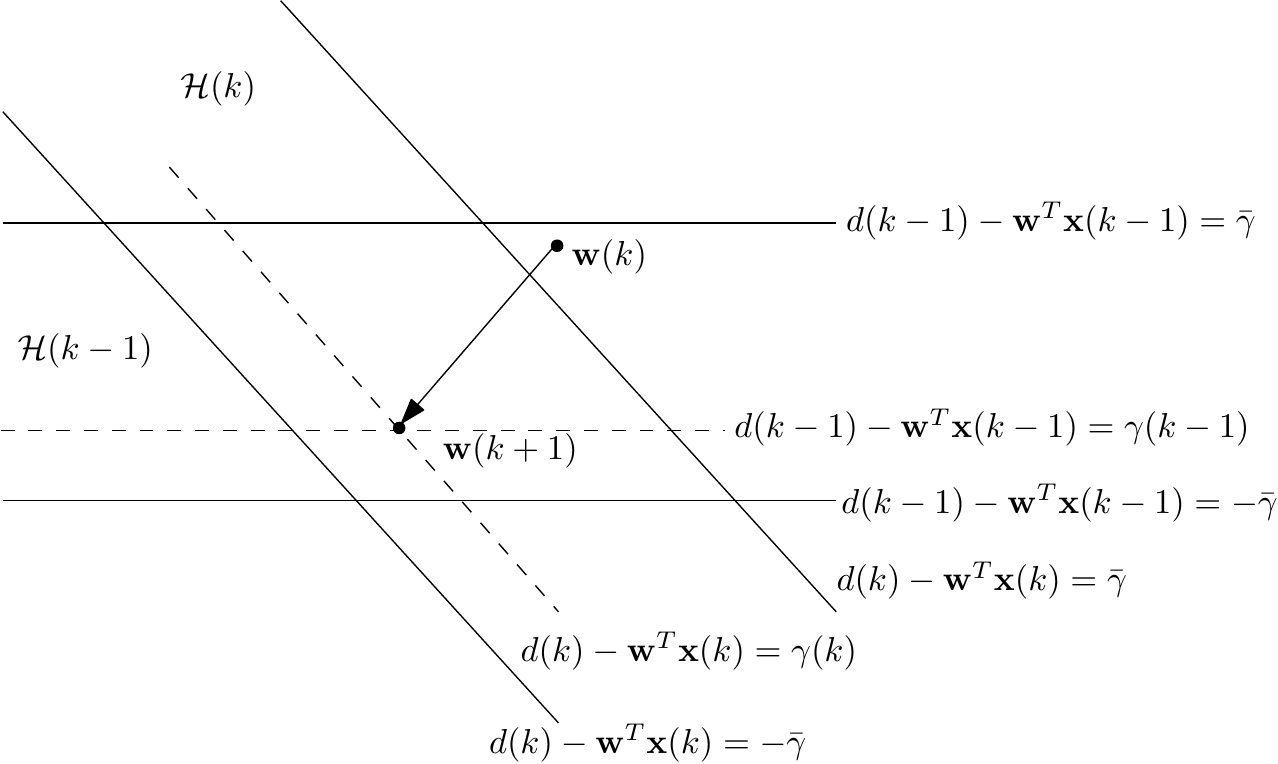}
\caption{Coefficient vector updating for the SM-AP algorithm (redrawn from \cite{Diniz_adaptiveFiltering_book2013}).}
\label{fig:sm-ap-chap2}
\end{center}
\end{figure}

By using the method of Lagrange multipliers, after some manipulations, the recursion rule of the SM-AP\abbrev{SM-AP}{Set-Membership Affine Projection} algorithm will be described as
\begin{align}
\wbf(k+1) =\left\{\begin{array}{ll}\wbf(k)+\Xbf(k)\Abf(k)(\ebf(k)-\gammabf(k))&\text{if}~|e(k)|>\gammabar ,
\\ \wbf(k)&\text{ otherwise,}\end{array}\right.   \   \label{eq:sm-ap}
\end{align}
where we assume that $\Abf(k)\triangleq(\Xbf^T(k)\Xbf(k))^{-1} \in\mathbb{R}^{L+1\times L+1}$ exists, i.e., $\Xbf^T(k)\Xbf(k)$ is a full-rank matrix. \symbl{$\Abf(k)$}{Auxiliary matrix $\Abf(k)\triangleq(\Xbf^T(k)\Xbf(k))^{-1}$}
Otherwise, we could add a regularization parameter as explained in~\cite{Diniz_adaptiveFiltering_book2013}.

Note that we can propose the AP\abbrev{AP}{Affine Projection} algorithm through the SM-AP\abbrev{SM-AP}{Set-Membership Affine Projection} algorithm. In other words, the AP\abbrev{AP}{Affine Projection} algorithm with unity step-size, aiming at improving the convergence speed of stochastic gradient algorithms, is a particular case of the SM-AP\abbrev{SM-AP}{Set-Membership Affine Projection} algorithm by selecting $\gammabar=0$.

It is worthwhile to mention that when $L=0$ and $\gamma_0(k)=\frac{\gammabar e(k)}{|e(k)|}$, the SM-AP algorithm has the SM-NLMS algorithm as special case.


\section{Estimating $\gammabar$ in the Set-Membership \\Algorithm for Big Data Application} \label{sec:estimate_gamma_chap2}

In big data applications, initially, it could be practical to prescribe a percentage of the amount of data we intend to utilize to achieve the desired performance. This percentage will be defined in accordance with our ability to analyze the data, taking into consideration the constraints on energy, computational time, and memory restrictions. After adopting a percentage of the update, our goal is to select the most informative data to be part of the corresponding selected percentage. Here, by taking the probability of updating into consideration, we will estimate the threshold in the SM-NLMS \abbrev{SM-NLMS}{Set-Membership Normalized LMS} and the SM-AP\abbrev{SM-AP}{Set-Membership Affine Projection} algorithms, which is responsible for censoring the data in accordance with the adopted percentage of the update. The content of this section is published in~\cite{Hamed_gamma_estimate_GlobalSIP2017}.

We want to obtain $\gammabar$ such that the algorithm considers the desired percentage of data to update its recursion rule. In fact, if the magnitude of the output estimation error is greater than $\gammabar$, the set-membership (SM)\abbrev{SM}{Set-Membership} algorithm will update since the current input and the desired signals carry enough innovation.

In general, for the desired update rate, $p$, we require computing
$\gammabar$ such that
\begin{align}
\mathbb{P}[|e(k)|>\gammabar]=p, \label{eq:single_gamma-chap2}
\end{align}
where $\mathbb{P}[\cdot]$ denotes the probability operator. Note that $p$ represents the update rate of the algorithm, i.e., the percentage of the data which we consider most informative data.

Given the probability density function of the error signal, then it is possible to compute $\gammabar$. Note that the error signal is the difference between the desired and the output signals, i.e.,
\begin{align}
e(k)&\triangleq d(k)-y(k)\triangleq \wbf_o^T\xbf(k)+n(k)-\wbf^T(k)\xbf(k)\nonumber\\
&=[\wbf_o-\wbf(k)]^T\xbf(k)+n(k)=\etilde(k)+n(k), \label{eq:error_signal-chap2}
\end{align}
where $\etilde(k)$ is the noiseless error signal, and $n(k)$ is the noise signal. \symbl{$\etilde(k)$}{Noiseless error signal} \symbl{$n(k)$}{Noise signal}
In the steady-state environment $\|\mathbb{E}[\wbf_o-\wbf(k)]\|_2^2<\infty$~\cite{Hamed_robustnessSM_EURASIP2017}, where $\mathbb{E}[\cdot]$ is the expected value operator and, in general, $\mathbb{E}[\wbf_o-\wbf(k)]\approx{\bf 0}$. Therefore, if you have sufficient order for the adaptive system, then in the steady-state environment the distribution of the error signal and the additive noise signal are the same. Thus, we can use the distribution of the additive noise signal in Equation~(\ref{eq:single_gamma-chap2}) to calculate the desired value of $\gammabar$.

Assuming the distribution of the noise signal is Gaussian with zero mean and variance $\sigma_n^2$,\symbl{$\sigma_n^2$}{Variance of the noise signal} an important case, we can provide a solution for the threshold for this special case. If the noiseless error signal is uncorrelated with the additional noise signal, by Equation~(\ref{eq:error_signal-chap2}), we have $\mathbb{E}[e(k)]=\mathbb{E}[\etilde(k)]+\mathbb{E}[n(k)]=0$
and ${\rm Var}[e(k)]=\mathbb{E}[\etilde^2(k)]+\sigma_n^2$, where ${\rm Var}[\cdot]$ is the variance operator.\symbl{${\rm Var}$}{Variance operator} $\mathbb{E}[\etilde^2(k)]$ is the excess of the steady-state mean-square error (EMSE)\abbrev{EMSE}{Excess of the Steady-State Mean-Square Error} that in the steady-state environment is given by~\cite{Markus_mseSMAP_icassp2010,Markus_mseSMAP_cssp2013}
\begin{align}
\mathbb{E}[\etilde^2(k)]=\frac{(L+1)[\sigma_n^2+\gammabar^2-2\gammabar\sigma_n^2\rho_0(k)]p}{[(2-p)-2(1-p)\gammabar\rho_o(k)]}\Big(\frac{1-a}{1-a^{L+1}}\Big), \label{eq:E(etilde)-chap2}
\end{align}
where 
\begin{align}
\rho_0(k)=&\sqrt{\frac{2}{\pi(2\sigma_n^2+\frac{1}{L+1}\gammabar^2)}}, \label{eq:rho_0-chap2}\\
a=&[1-p+2p\gammabar\rho_0(k)](1-p). \label{eq:a-chap2}
\end{align}

To calculate $\mathbb{E}[\etilde^2(k)]$ in Equation~(\ref{eq:E(etilde)-chap2}), we require the value of $\gammabar$, while estimating $\gammabar$ is our purpose. To address this
problem, the natural approach is estimate it using numerical
integration or Monte-Carlo methods. However, aiming at gaining some
insight, at the first moment we can assume that in the steady-state
environment $\mathbb{E}[\etilde^2(k)]=0$, and the distribution of $e(k)$ is the same as $n(k)$, in order to calculate the estimation of $\gammabar$ using Equation~(\ref{eq:single_gamma-chap2}). Then, we substitute the obtained value of $\gammabar$ in Equation~(\ref{eq:E(etilde)-chap2}) to compute $\mathbb{E}[\etilde^2(k)]$. Finally, by obtaining $\mathbb{E}[\etilde^2(k)]$, we can have a better estimation for the distribution of $e(k)$. 

Therefore, since the distribution of $e(k)$ is the same as the distribution of  $n(k)$, for the first estimation of $\gammabar$ we have
\begin{align}
\mathbb{P}[|e(k)|>\gammabar]=\mathbb{P}[|n(k)|>\gammabar]
=\mathbb{P}[n(k)<-\gammabar]+\mathbb{P}[n(k)>\gammabar]=p.
\end{align}
Then because of the symmetry in Gaussian distribution we have $\mathbb{P}[n(k)>\gammabar]=\frac{p}{2}$. Since $n(k)$ has Gaussian distribution, we need to obtain $\gammabar$ from
\begin{align}
\int_{\gammabar}^{\infty}\frac{1}{\sqrt{2\pi\sigma_n^2}}\exp(-\frac{r^2}{2\sigma_n^2})dr=\frac{p}{2}.
\end{align}
Hence, given an update rate $0\leq p\leq 1$, we may use the standard normal distribution table and find the desired $\gammabar$. As the second step, for getting a better estimation of $\gammabar$, we substitute $\gammabar$ in Equations~(\ref{eq:E(etilde)-chap2})-(\ref{eq:a-chap2}) to obtain $\mathbb{E}[\etilde^2(k)]$. We can now use the zero mean Gaussian distribution with variance $\sigma_e^2=\mathbb{E}[\etilde^2(k)]+\sigma_n^2$ as the distribution of the error signal.\symbl{$\sigma_e^2$}{Variance of the error signal} Applying this distribution to Equation~(\ref{eq:single_gamma-chap2}), we can obtain a better estimation for $\gammabar$ through the equation
\begin{align}
\int_{\gammabar}^{\infty}\frac{1}{\sqrt{2\pi\sigma_e^2}}\exp(-\frac{r^2}{2\sigma_e^2})dr=\frac{p}{2}. \label{eq:second_step}
\end{align}
By using the standard normal distribution table, from where we can find the new estimation of $\gammabar$. It is worth mentioning
that the chosen desired update rate determines a loose relative
importance of the innovation brought about by the new
incoming data set.


\section{Conclusions} \label{sec:conclusion-chap2}

In this chapter, we have reviewed some adaptive filtering algorithms which play an essential role in the following chapters. First, we have introduced the LMS\abbrev{LMS}{Least-Mean-Square}, the NLMS\abbrev{NLMS}{Normalized LMS}, the AP\abbrev{AP}{Affine Projection}, and the RLS\abbrev{RLS}{Recursive Least-Squares} algorithms. Then, we have described the SMF\abbrev{SMF}{Set-Membership Filtering} approach. By incorporating this strategy into the conventional algorithms, we implement an update when the magnitude of the output estimation error is greater than the predetermined positive constant. For this purpose, we have defined some of the involved sets such as the feasibility set, the constraint set, and the exact membership set. Then, we have described the SM-NLMS\abbrev{SM-NLMS}{Set-Membership Normalized LMS} and the SM-AP\abbrev{SM-AP}{Set-Membership Affine Projection} algorithms. Finally, for the SM-NLMS\abbrev{SM-NLMS}{Set-Membership Normalized LMS} and the SM-AP\abbrev{SM-AP}{Set-Membership Affine Projection} algorithms, we have discussed how to estimate the threshold parameter in big data applications to obtain the desired update rate.

  \chapter{On the Robustness of the Set-Membership Algorithms}

Online learning algorithms are a substantial part of Adaptive Signal Processing, thus the efficiency of the algorithms has to be assessed. The classical adaptive filtering algorithms are iterative estimation methods based on the {\it point estimation theory}~\cite{Lehmann_pointEstimation_book2003}. 
This theory focuses on searching for a unique solution that minimizes (or maximizes) some objective function. 
Two widely used classical algorithms are the normalized least-mean-square (NLMS)\abbrev{NLMS}{Normalized LMS} and the affine projection (AP)\abbrev{AP}{Affine Projection} algorithms. 
These algorithms present a trade-off between convergence rate and steady-state misadjustment, and their properties 
have been extensively studied~\cite{Diniz_adaptiveFiltering_book2013,Sayed_adaptiveFilters_book2008}. 

Two important set-membership (SM)\abbrev{SM}{Set-Membership} algorithms are the set-membership NLMS (SM-NLMS) and the set-membership AP (SM-AP)\abbrev{SM-AP}{Set-Membership Affine Projection} algorithms, proposed 
in~\cite{Gollamudi_smf_letter1998,Werner_sm_ap_letter2001}, respectively. 
These algorithms keep the advantages of their classical counterparts, but they are more accurate, more robust against noise, and 
also reduce the computational complexities due to the data selection strategy previously 
explained~\cite{Markus_mseSMAP_cssp2013,Diniz_adaptiveFiltering_book2013,Arablouei_tracking_performance_SMNLMS_APSIPA2012,Carini_Filtered_x_SMAP_icassp2006}. 
Various applications of SM\abbrev{SM}{Set-Membership} algorithms and their advantages over the classical algorithms have been discussed in the 
literature~\cite{Gollamudi_smUpdatorShared_tsp1998,Nagaraj_beacon_tsp1999,Guo_fsmf_tsp2007,Diniz_sm_pap_jasmp2007,Markus_semiblindQAM_spawc2008,Bhotto_2012_TSP,Zhang_robustSMnlms_tcas2014,Mao_smfGPS_sensors2017}.

Despite the recognized advantages of the SM\abbrev{SM}{Set-Membership} algorithms, they are not broadly used, probably 
due to the limited analysis of the properties of these algorithms. 
The steady-state mean-squared error (MSE)\abbrev{MSE}{Mean-Squared Error} analysis of the SM-NLMS\abbrev{SM-NLMS}{Set-Membership Normalized LMS} algorithm has been discussed 
in~\cite{Markus_mseSMNLMS_iswcs2010,Yamada_sm-nlmsAnalysis_tsp2009}. 
Also, the steady-state MSE\abbrev{MSE}{Mean-Squared Error} performance of the SM-AP\abbrev{SM-AP}{Set-Membership Affine Projection} algorithm has been analyzed 
in~\cite{Diniz_CSSP_2011,Markus_mseSMAP_cssp2013,Markus_mseSMAP_icassp2010}. 

The content of this chapter was published in~\cite{Hamed_robustnessSMNLMS_sam2016,Hamed_robustnessSM_EURASIP2017}. In this chapter, the robustness of the SM-NLMS \abbrev{SM-NLMS}{Set-Membership Normalized LMS}and the SM-AP\abbrev{SM-AP}{Set-Membership Affine Projection} algorithms are discussed in the sense of 
$l_2$ stability~\cite{Sayed_adaptiveFilters_book2008,Rupp_PAProbustness_tsp2011}. For the SM-NLMS\abbrev{SM-NLMS}{Set-Membership Normalized LMS} algorithm, we demonstrate that it is robust regardless the choice of its parameters and that 
the SM-NLMS\abbrev{SM-NLMS}{Set-Membership Normalized LMS} enhances the parameter estimation in most of the iterations in which an update occurs, two advantages 
over the classical NLMS \abbrev{NLMS}{Normalized LMS}algorithm.
Moreover, we also prove that if the noise bound is known, then we can set the SM-NLMS\abbrev{SM-NLMS}{Set-Membership Normalized LMS} so that it  
never degrades the estimate.
As for the SM-AP\abbrev{SM-AP}{Set-Membership Affine Projection} algorithm, we demonstrate that its robustness depends on a judicious choice of one of its parameters: 
the constraint vector (CV)\abbrev{CV}{Constraint Vector}. 
We prove the existence of CVs\abbrev{CV}{Constraint Vector} satisfying the robustness condition, but practical choices remain unknown. 
We also demonstrate that both the SM-AP\abbrev{SM-AP}{Set-Membership Affine Projection} and the SM-NLMS\abbrev{SM-NLMS}{Set-Membership Normalized LMS} algorithms do not diverge, even when their parameters are selected naively, provided that the additional noise is bounded. Section~\ref{sec:robustness-criterion} describes the robustness criterion. Section \ref{sec:discussed-algrithms-robustness} presents the algorithms discussed in this chapter. The robustness of the SM-NLMS\abbrev{SM-NLMS}{Set-Membership Normalized LMS} algorithm is studied in Section~\ref{sec:robustness-sm-nlms}, where we also discuss the cases in which the 
noise bound is assumed known and unknown. Section~\ref{sec:robustness-sm-ap} presents the local and the global robustness properties of the SM-AP\abbrev{SM-AP}{Set-Membership Affine Projection} algorithm. Section~\ref{sec:simulation-robustness} contains the simulations and numerical results. Finally, concluding remarks are drawn in Section~\ref{sec:conclusion-robustness}.


\section{Robustness Criterion}\label{sec:robustness-criterion}

At every iteration $k$, assume that the desired signal $d(k)$ is related to the unknown system $\wbf_o$ by
\begin{align}
d(k) \triangleq \underbrace{\wbf_o^T \xbf(k)}_{\triangleq y_o(k)} + n(k),
\end{align}
where $n(k)$ denotes the unknown noise and accounts for both measurement noise and modeling uncertainties or errors. 
Also, we assume that the unknown noise sequence $\{ n(k) \}$ has finite energy~\cite{Sayed_adaptiveFilters_book2008}, i.e.,
\begin{align}
\sum_{k=0}^j |n(k)|^2<\infty,\qquad {\rm for~all~} j. \label{eq:noise-condtion}
\end{align}
Suppose that we have a sequence of desired signals $\{d(k)\}$ and we intend to estimate $y_o(k)=\wbf_o^T\xbf(k)$. 
For this purpose, assume that $\hat{y}_{k|k}$ is an estimate of $y_o(k)$ and it is only dependent on $d(j)$ for $j=0,\cdots,k$. 
For a given positive number $\eta$, we aim at calculating the following estimates
$\hat{y}_{k|k} \in \{\hat{y}_{0|0},\hat{y}_{1|1},\cdots,\hat{y}_{M|M}\}$, 
such that for any $n(k)$ satisfying~\eqref{eq:noise-condtion} and any $\wbf_o$, the following criterion is satisfied:
\begin{align}
\frac{\sum\limits_{k=0}^j \|\hat{y}_{k|k}-y_o(k)\|^2}{\wbftilde^T(0)\wbftilde(0)+\sum_{k=0}^j|n(k)|^2}<\eta^2, \qquad {\rm for~all~} j=0,\cdots,M \label{eq:criterion}
\end{align}
where $\wbftilde(0) \triangleq \wbf_o-\wbf(0)$ and $\wbf(0)$ is our initial guess about $\wbf_o$. 
Note that the numerator is a measure of estimation-error energy up to iteration $j$ and the denominator includes 
the energy of disturbance up to iteration $j$ and the energy of the error $\wbftilde(0)$ that is due to the initial guess.

So, the criterion given in~\eqref{eq:criterion} requires that we adjust estimates $\{\hat{y}_{k|k}\}$ such that
the ratio of the estimation-error energy (numerator) to the energy of the uncertainties (denominator) does not exceed $\eta^2$.
When this criterion is satisfied, we say that bounded disturbance energies induce bounded estimation-error energies and, 
therefore, the obtained estimates are robust. The smaller value of $\eta$ results in the more robust solution, but the value of $\eta$ cannot be decreased freely.
The interested reader can refer to~\cite{Sayed_adaptiveFilters_book2008}, pages 719 and 720, for more details about this robustness criterion.


\section{The Set-Membership Algorithms} \label{sec:discussed-algrithms-robustness}

In this section, we remind the SM-NLMS\abbrev{SM-NLMS}{Set-Membership Normalized LMS} and the SM-AP\abbrev{SM-AP}{Set-Membership Affine Projection} algorithms, and in the following sections we deal with their robustness.


\subsection{The SM-NLMS Algorithm} \label{subsec:sm_nlms}

The SM-NLMS\abbrev{SM-NLMS}{Set-Membership Normalized LMS} algorithm is characterized by the updating rule~\cite{Diniz_adaptiveFiltering_book2013}
\begin{align}
 \wbf(k+1) = \wbf(k) + \frac{\mu(k)}{ \| \xbf(k) \|^2 + \delta } e(k) \xbf(k) ,  \label{eq:sm-nlms-robustness}
\end{align}
where 
\begin{align}  \label{eq:def_mu}
 \mu(k) \triangleq \left\{ \begin{matrix}  1 - \frac{\gammabar}{|e(k)|}  &  \text{if } |e(k)| > \gammabar   ,  \\ 
                                  0                             &  \text{otherwise}  , \end{matrix}  \right.  
\end{align}
and $\gammabar \in \mathbb{R}_+$ is the upper bound for the magnitude of the error signal 
that is acceptable and it is usually chosen as a multiple of the noise standard deviation $\sigma_n$~\cite{Markus_mseSMAP_cssp2013,Diniz_adaptiveFiltering_book2013}. The parameter $\delta \in \mathbb{R}_+$ is a regularization factor, usually chosen as 
a small constant, used to avoid singularity (divisions by $0$).


\subsection{The SM-AP Algorithm} \label{subsec:sm_ap}

The SM-AP\abbrev{SM-AP}{Set-Membership Affine Projection} algorithm is described by the recursion~\cite{Werner_sm_ap_letter2001}
\begin{align}
\wbf(k+1)=\left\{\begin{array}{ll}\wbf(k)+\Xbf(k)\Abf(k)(\ebf(k)-\gammabf(k))&\text{if}~|e(k)|>\gammabar ,
\\ \wbf(k)&\text{ otherwise,}\end{array}\right.   \   \label{eq:sm-ap}
\end{align}
where we assume that $\Abf(k)\triangleq(\Xbf^T(k)\Xbf(k))^{-1} \in\mathbb{R}^{(L+1)\times (L+1)}$ exists, i.e., $\Xbf^T(k)\Xbf(k)$ is a full-rank matrix.
Otherwise, we could add a regularization parameter as explained in~\cite{Diniz_adaptiveFiltering_book2013}.


\section{Robustness of the SM-NLMS Algorithm}\label{sec:robustness-sm-nlms}

In this section, we discuss the robustness of the set-membership NLMS (SM-NLMS) algorithm.
In Subsection~\ref{sub:robustness-sm-nlms}, we present some 
robustness properties.
We address the robustness of the SM-NLMS\abbrev{SM-NLMS}{Set-Membership Normalized LMS} algorithm for the cases of unknown noise bound and known noise bound in 
Subsections~\ref{sub:sm-nlms-unbounded-noise} and~\ref{sub:sm-nlms-bounded-noise}, respectively.
Then, in Subsection~\ref{sub:sm-nlms-time-varying-gammabar}, we introduce a time-varying error bound {aiming at achieving 
simultaneously fast convergence, low computational burden, and efficient use of the input data}.

\subsection{Robustness of {the} SM-NLMS algorithm}\label{sub:robustness-sm-nlms}

Let us consider a system identification scenario in which the unknown system is denoted by  
$\wbf_o \in \mathbb{R}^{N+1}$ and the desired (reference) signal $d(k)$ is defined as 
\begin{align}
 d(k) \triangleq \wbf_o^T \xbf(k) + n(k) ,         \label{eq:desiredSignalModel} 
\end{align}
where $n(k) \in \mathbb{R}$ represents the additive measurement noise. 

One of the main difficulties of analyzing the SM-NLMS\abbrev{SM-NLMS}{Set-Membership Normalized LMS} algorithm is its conditional statement in~\eqref{eq:def_mu}. 
We can overcome such difficulty by defining \symbl{$\mubar(k)$}{Auxiliary step size $\mubar(k)\triangleq1-\frac{\gammabar}{|e(k)|}$} \symbl{$f(e(k),\gammabar)$}{The indicator function: returns 1 if $|e(k)|>\gammabar$, otherwise returns 0} \symbl{$\alpha(k)$}{Auxiliary value $\alpha(k)\triangleq\|\xbf(k)\|^2+\delta$}
\begin{align}
 \mubar(k) \triangleq 1 - \frac{\gammabar}{|e(k)|},   \label{eq:def_mubar}
\end{align}
and the indicator function $f:\mathbb{R}\times\mathbb{R}_+ \rightarrow \{ 0,1 \}$ as 
\begin{align}
 f(e(k),\gammabar) \triangleq \left\{ \begin{matrix}
                               1   &   \text{if } |e(k)| > \gammabar  ,  \\ 
                               0   &   \text{otherwise}  .     
                               \end{matrix}  \right.   \label{eq:def_indicatorFunc}
\end{align}
In this way, the SM-NLMS\abbrev{SM-NLMS}{Set-Membership Normalized LMS} updating rule can be rewritten as 
\begin{align}
 \wbf(k+1) = \wbf(k) + \frac{\mubar(k)}{\alpha(k)} e(k) \xbf(k) f(e(k),\gammabar) , \label{eq:sm-nlms_indicator}
\end{align}
where 
\begin{align}
 \alpha(k) \triangleq  \| \xbf(k) \|^2 + \delta .  \label{eq:def_alpha}
\end{align}

Since we are interested in robustness properties, it is useful to define $\wbftilde(k) \in \mathbb{R}^{N+1}$ as \symbl{$\wbftilde(k)$}{Auxiliary vector $\wbftilde(k)\triangleq\wbf_o-\wbf(k)$}
\begin{align}
 \wbftilde(k) \triangleq \wbf_o - \wbf(k)  ,    \label{eq:def_wbftilde}
\end{align}
i.e., $\wbftilde(k)$ is a vector representing the discrepancy between the quantity we aim to estimate $\wbf_o$ and 
our current estimate $\wbf(k)$. Thus, the error signal can be rewritten as 
\begin{align}
 e(k) = d(k) - \wbf^T(k) \xbf(k) &= \wbf_o^T \xbf(k) + n(k) - \wbf^T(k) \xbf(k) \nonumber\\
  &= \underbrace{\wbftilde^T(k) \xbf(k)}_{\triangleq \etilde(k)} + n(k) ,       \label{eq:def_noiselessError-sm-nlms}
\end{align}
where $\etilde(k)$ denotes the noiseless error, i.e., the error due to a mismatch between $\wbf(k)$ and $\wbf_o$. 

By using~\eqref{eq:def_wbftilde} in~\eqref{eq:sm-nlms_indicator} we obtain
\begin{align}
 \wbftilde(k+1) = \wbftilde(k) - \frac{\mubar(k)}{\alpha(k)} e(k) \xbf(k) f(e(k),\gammabar) , 
\end{align}
which can be further expanded by decomposing $e(k)$ as in Equation~\eqref{eq:def_noiselessError-sm-nlms} yielding 
\begin{align}
 \wbftilde(k+1) = \wbftilde(k) - \frac{\mubar(k)}{\alpha(k)} \etilde(k) \xbf(k) f(e(k),\gammabar)  
                               - \frac{\mubar(k)}{\alpha(k)} n(k) \xbf(k) f(e(k),\gammabar) . \label{eq:robust_aux01}
\end{align}

By computing the energy of~\eqref{eq:robust_aux01}, the robustness property given in Theorem~\ref{thm:local_robustness-sm-nlms} can be derived after some mathematical manipulations.

\begin{thm}[Local Robustness of SM-NLMS]\label{thm:local_robustness-sm-nlms}
 For the SM-NLMS\abbrev{SM-NLMS}{Set-Membership Normalized LMS} algorithm, it always holds that
 \begin{align}
  \| \wbftilde (k+1) \|^2 = \| \wbftilde (k) \|^2 , \text{ if } f(e(k),\gammabar) = 0     \label{eq:local_robustness_f0}
 \end{align}
 or
 \begin{align}
\| \wbftilde(k+1) \|^2 + \frac{\mubar(k)}{\alpha(k)} \etilde^2(k) 
 < \| \wbftilde(k) \|^2 + \frac{\mubar(k)}{\alpha(k)} n^2(k)     \ ,                      \label{eq:local_robustness_f1}
\end{align}
if $f(e(k),\gammabar) = 1$.
\end{thm}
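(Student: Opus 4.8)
The plan is to dispatch the two cases separately, the first being immediate and the second requiring a short energy computation. When $f(e(k),\gammabar)=0$ the update term in Equation~\eqref{eq:robust_aux01} vanishes, so $\wbftilde(k+1)=\wbftilde(k)$ and Equation~\eqref{eq:local_robustness_f0} follows at once. All the work is therefore in the case $f(e(k),\gammabar)=1$, where $|e(k)|>\gammabar$ guarantees $\mubar(k)>0$ (strictly positive when $\gammabar>0$, and equal to $1$ when $\gammabar=0$), so that $\mubar(k)/\alpha(k)>0$ and we may later divide by it safely.

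First I would take the squared Euclidean norm of Equation~\eqref{eq:robust_aux01} with $f=1$, i.e.\ of $\wbftilde(k+1)=\wbftilde(k)-\frac{\mubar(k)}{\alpha(k)}e(k)\xbf(k)$, expanding the cross term using the identity $\wbftilde^T(k)\xbf(k)=\etilde(k)$ from Equation~\eqref{eq:def_noiselessError-sm-nlms}. This yields $\|\wbftilde(k+1)\|^2=\|\wbftilde(k)\|^2-2\frac{\mubar(k)}{\alpha(k)}e(k)\etilde(k)+\frac{\mubar^2(k)}{\alpha^2(k)}e^2(k)\|\xbf(k)\|^2$. Moving the term $\frac{\mubar(k)}{\alpha(k)}\etilde^2(k)$ to the left-hand side, cancelling $\|\wbftilde(k)\|^2$, and dividing by the positive factor $\mubar(k)/\alpha(k)$ reduces the claim Equation~\eqref{eq:local_robustness_f1} to the scalar inequality $-2e(k)\etilde(k)+\frac{\mubar(k)}{\alpha(k)}e^2(k)\|\xbf(k)\|^2+\etilde^2(k)<n^2(k)$.

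To finish, I would exploit two facts. Because $\alpha(k)=\|\xbf(k)\|^2+\delta$ with $\delta>0$, we have $\|\xbf(k)\|^2/\alpha(k)<1$, so the middle term is bounded above by $\mubar(k)e^2(k)$; and since $\mubar(k)e^2(k)=e^2(k)-\gammabar|e(k)|$, the left-hand side is strictly smaller than $e^2(k)-2e(k)\etilde(k)+\etilde^2(k)-\gammabar|e(k)|$. Recognising the perfect square $(e(k)-\etilde(k))^2=n^2(k)$, which follows from $e(k)=\etilde(k)+n(k)$, the bound becomes $n^2(k)-\gammabar|e(k)|\leq n^2(k)$, establishing Equation~\eqref{eq:local_robustness_f1}. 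The only delicate point is bookkeeping the strictness: it comes from the regularizer ($\|\xbf(k)\|^2/\alpha(k)<1$ whenever $\delta>0$) and, when $\gammabar>0$, is reinforced by $-\gammabar|e(k)|<0$; I would note explicitly that at least one of these is active so that the inequality is strict. I expect this strictness and the degenerate-case accounting to be the main subtlety, rather than the computation itself.
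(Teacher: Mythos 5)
Your proof is correct and takes essentially the same route as the paper's: both expand the squared Euclidean norm of the recursion for $\wbftilde(k+1)$, use the identity $\wbftilde^T(k)\xbf(k)=\etilde(k)$, and extract the strict inequality from the two facts $\|\xbf(k)\|^2/\alpha(k)<1$ (since $\delta>0$) and $|e(k)|>\gammabar$. The only difference is bookkeeping — the paper keeps $\etilde(k)$ and $n(k)$ separate and writes the expansion as an exact identity plus a negative product $c_1c_2$, whereas you collapse them into $e(k)$, reduce to a scalar inequality, and close with the perfect square $(e(k)-\etilde(k))^2=n^2(k)$ — and your explicit treatment of the $\gammabar=0$ case (where $\mubar(k)=1$) is, if anything, slightly more careful than the paper's statement that $0<\mubar(k)<1$.
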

\begin{proof}
We start by repeating Equation~\eqref{eq:robust_aux01}, but to simplify the notation we will omit both the index $k$ and the arguments of function $f$ that appear on the right-hand side of that equation to obtain
\begin{align}
\wbftilde(k+1) = \wbftilde  - \frac{\mubar}{\alpha} \etilde \xbf f 
                              - \frac{\mubar}{\alpha} n \xbf f   . 
\end{align}
By computing the Euclidean norm of the above equation we get
\begin{align}
 \| \wbftilde(k+1) \|^2 
 =& \wbftilde^T \wbftilde - \frac{\mubar}{\alpha} \etilde \wbftilde^T \xbf f - \frac{\mubar}{\alpha} n \wbftilde^T \xbf f -\frac{\mubar}{\alpha} \etilde \xbf^T \wbftilde f + \frac{\mubar^2}{\alpha^2} \etilde^2 \xbf^T \xbf f^2 \nonumber \\
 &+ \frac{\mubar^2}{\alpha^2} \etilde n \xbf^T \xbf f^2 -\frac{\mubar}{\alpha} n \xbf^T \wbftilde f + \frac{\mubar^2}{\alpha^2} n \etilde \xbf^T \xbf f^2 + \frac{\mubar^2}{\alpha^2} n^2 \xbf^T \xbf f^2  \nonumber \\
 =& \| \wbftilde \|^2 - \frac{\mubar}{\alpha} \etilde^2 f - \frac{\mubar}{\alpha} n \etilde f -\frac{\mubar}{\alpha} \etilde^2 f + \frac{\mubar^2}{\alpha^2} \etilde^2 \| \xbf \|^2 f^2 + \frac{\mubar^2}{\alpha^2} \etilde n \| \xbf \|^2 f^2 \nonumber \\
  &-\frac{\mubar}{\alpha} n \etilde f + \frac{\mubar^2}{\alpha^2} n \etilde \| \xbf \|^2 f^2 + \frac{\mubar^2}{\alpha^2} n^2 \| \xbf \|^2 f^2  \nonumber \\
 =& \| \wbftilde \|^2 - 2 \frac{\mubar}{\alpha} \etilde^2 f - 2 \frac{\mubar}{\alpha} n \etilde f +(\etilde + n)^2 \frac{\mubar^2}{\alpha^2}  \| \xbf \|^2 f^2                                     \nonumber \\
 =& \| \wbftilde \|^2  +  (\etilde + n)^2 \frac{\mubar^2}{\alpha^2}  \| \xbf \|^2 f^2  - 2 \frac{\mubar}{\alpha} \etilde^2 f - 2 \frac{\mubar}{\alpha} n \etilde f     - \frac{\mubar}{\alpha} n^2 f + \frac{\mubar}{\alpha} n^2 f      \nonumber \\
 =& \| \wbftilde \|^2  +  (\etilde + n)^2 \frac{\mubar^2}{\alpha^2}  \| \xbf \|^2 f^2  + \frac{\mubar}{\alpha} n^2 f - (\etilde + n)^2 \frac{\mubar}{\alpha} f  -  \frac{\mubar}{\alpha} \etilde^2 f   ,   \label{eq:robustness_derivation_1}
\end{align}
where the second equality is due to the relation $\etilde = \wbftilde^T \xbf = \xbf^T \wbftilde$, as given  
in Equation~\eqref{eq:def_noiselessError-sm-nlms}. Rearranging the terms in~\eqref{eq:robustness_derivation_1} yields
\begin{align}
 \| \wbftilde(k+1) \|^2 + \frac{\mubar f}{\alpha} \etilde^2 
 = \| \wbftilde \|^2 + \frac{\mubar f}{\alpha} n^2 + c_1 c_2  ,   \label{eq:energy_relation}
\end{align}
where 
\begin{align}
 c_1 \triangleq \frac{\mubar f}{\alpha} (\etilde + n)^2     ,\qquad
 c_2 \triangleq \frac{\mubar f}{\alpha} \| \xbf \|^2 - 1    .
\end{align} 

From~\eqref{eq:energy_relation}, we observe that when $f = 0$ we have
\begin{align}
 \| \wbftilde(k+1) \|^2 = \| \wbftilde(k) \|^2 
\end{align}
as expected, since $f=0$ means that no update was performed. 
However, when $f=1$ we have $0 < \mubar < 1$ and $(\etilde + n)^2 = e^2 > \gammabar^2 > 0$. 
In addition, observe that $0 \leq \| \xbf \|^2/\alpha < 1$ due to Equation~\eqref{eq:def_alpha} and the fact that $\delta>0$. 
Combining these inequalities leads to $c_2 < 0$ and $c_1 > 0$.
Thus, when $f=1$ the product $c_1 c_2 < 0$, which leads to the inequality 
\begin{align}
 \| \wbftilde(k+1) \|^2 + \frac{\mubar}{\alpha} \etilde^2 
 < \| \wbftilde \|^2 + \frac{\mubar}{\alpha} n^2      .  
\end{align}
Returning with the omitted index $k$, for $f(e(k),\gammabar)=1$ we have
\begin{align}
 \| \wbftilde(k+1) \|^2 + \frac{\mubar(k)}{\alpha(k)} \etilde^2(k) 
 < \| \wbftilde(k) \|^2 + \frac{\mubar(k)}{\alpha(k)} n^2(k)     .   
\end{align}
\end{proof}

Theorem~\ref{thm:local_robustness-sm-nlms} presents local bounds for the energy of the coefficient deviation when running from 
an iteration to the next one. 
Indeed, \eqref{eq:local_robustness_f0} states that the coefficient deviation does not change when no coefficient update is actually implemented, whereas~\eqref{eq:local_robustness_f1} {provides} a bound for $\| \wbftilde(k+1) \|^2$ 
based on $\| \wbftilde(k) \|^2$, $\etilde^2(k)$, and $n^2(k)$, when an update occurs.
In addition, the global robustness result in Corollary~\ref{thm:global_robustness-sm-nlms} can readily be derived 
from Theorem~\ref{thm:local_robustness-sm-nlms}.\symbl{${\cal K}_{\rm up}$}{Set containing the iteration indexes in which $\wbf(k)$ is updated}

\begin{cor}[Global Robustness of SM-NLMS]\label{thm:global_robustness-sm-nlms}
Consider the SM-NLMS\abbrev{SM-NLMS}{Set-Membership Normalized LMS} algorithm running from iteration $0$ (initialization) to a given iteration $K$.
The relation 
\begin{align}
\dfrac{\| \wbftilde(K) \|^2 + \sum\limits_{k \in {\cal K}_{\rm up}}\frac{\mubar(k)}{\alpha(k)}\etilde^2(k)}{\| \wbftilde(0) \|^2 + 
\sum\limits_{k \in {\cal K}_{\rm up}}\frac{\mubar(k)}{\alpha(k)}n^2(k)}  < 1   \label{eq:global_robustness-sm-nlms}
\end{align}
holds, where ${\cal K}_{\rm up} \neq \emptyset$ is the set containing the iteration indexes $k$ in which $\wbf(k)$ is indeed updated. 
If ${\cal K}_{\rm up} = \emptyset$,\symbl{$\emptyset$}{Empty set} then $\| \wbftilde(K) \|^2 = \| \wbftilde(0) \|^2$ due to~\eqref{eq:local_robustness_f0}, 
but this case is not of practical interest since ${\cal K}_{\rm up} = \emptyset$ means that no update is performed at all. 
\end{cor}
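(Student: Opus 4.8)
The plan is to obtain the global relation by accumulating the per-iteration (local) bounds of Theorem~\ref{thm:local_robustness-sm-nlms} across $k=0,1,\ldots,K-1$ through a telescoping argument on the coefficient-deviation energy $\|\wbftilde(k)\|^2$. First I would write the trivial identity
\begin{align}
\|\wbftilde(K)\|^2 - \|\wbftilde(0)\|^2 = \sum_{k=0}^{K-1}\left(\|\wbftilde(k+1)\|^2 - \|\wbftilde(k)\|^2\right), \nonumber
\end{align}
so the entire problem reduces to controlling each summand. Here Theorem~\ref{thm:local_robustness-sm-nlms} does all the work: for every $k\notin{\cal K}_{\rm up}$ we have $f(e(k),\gammabar)=0$, and by~\eqref{eq:local_robustness_f0} the corresponding summand vanishes; for every $k\in{\cal K}_{\rm up}$ we have $f(e(k),\gammabar)=1$, and~\eqref{eq:local_robustness_f1} rearranges into $\|\wbftilde(k+1)\|^2-\|\wbftilde(k)\|^2 < \frac{\mubar(k)}{\alpha(k)}\left(n^2(k)-\etilde^2(k)\right)$.

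Next I would substitute these two cases back into the telescoped identity. Since the non-update terms contribute exactly zero, only the indices in ${\cal K}_{\rm up}$ survive, yielding
\begin{align}
\|\wbftilde(K)\|^2 - \|\wbftilde(0)\|^2 < \sum_{k\in{\cal K}_{\rm up}}\frac{\mubar(k)}{\alpha(k)}\left(n^2(k)-\etilde^2(k)\right), \nonumber
\end{align}
where the inequality is \emph{strict} precisely because ${\cal K}_{\rm up}\neq\emptyset$, so at least one strict local inequality enters the sum (a finite sum of inequalities, at least one of which is strict, is strict). Moving the $\etilde^2(k)$ terms to the left-hand side and keeping $\|\wbftilde(0)\|^2$ on the right produces the inequality whose two sides are, respectively, the numerator and the denominator of~\eqref{eq:global_robustness-sm-nlms}; a final division by the denominator then delivers the claimed bound.

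The only step requiring genuine care—and the one I would flag as the main obstacle—is justifying that last division, i.e., that the denominator $\|\wbftilde(0)\|^2+\sum_{k\in{\cal K}_{\rm up}}\frac{\mubar(k)}{\alpha(k)}n^2(k)$ is strictly positive. On each update iteration $\mubar(k)=1-\frac{\gammabar}{|e(k)|}\in(0,1)$ and $\alpha(k)=\|\xbf(k)\|^2+\delta>0$, so every coefficient $\frac{\mubar(k)}{\alpha(k)}$ is positive and the denominator is nonnegative; it fails to be positive only in the degenerate situation where $\wbf(0)=\wbf_o$ \emph{and} $n(k)=0$ at all update instants. But with a perfect initial guess and noiseless data one checks that $e(k)=0$ throughout, so no update is ever triggered and ${\cal K}_{\rm up}=\emptyset$, contradicting the hypothesis; hence this case may be excluded. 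With positivity secured, the division preserves strictness and the corollary follows directly from Theorem~\ref{thm:local_robustness-sm-nlms}, requiring no additional estimates.
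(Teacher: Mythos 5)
Your proposal is correct and follows essentially the same route as the paper's proof: both accumulate the local results of Theorem~\ref{thm:local_robustness-sm-nlms} over update and non-update iterations, cancel the telescoping terms $\| \wbftilde(k) \|^2$, and divide by the denominator, with strictness secured by ${\cal K}_{\rm up}\neq\emptyset$. The only difference is that you also justify positivity of the denominator, which the paper merely assumes; your justification is sound provided it is phrased via the \emph{first} update instant $k_1$ (no updates occur before $k_1$, so $\wbf(k_1)=\wbf(0)=\wbf_o$ and $e(k_1)=n(k_1)$, whence $n(k_1)=0$ at an update instant would contradict $|e(k_1)|>\gammabar\geq 0$), rather than assuming the noise vanishes at all iterations.
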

\begin{proof}
Define the set of all iterations under analysis ${\cal K} \triangleq \{ 0, 1, 2, \ldots,$ $K-1 \}$. 
Denote as ${\cal K}_{\rm up}$ the subset of ${\cal K}$ comprised only of the iterations in which an update 
occurs, whereas ${\cal K}_{\rm up}^c \triangleq {\cal K} \setminus {\cal K}_{\rm up}$ contains the 
iteration indexes in which the filter coefficients are not updated. 

From Theorem~\ref{thm:local_robustness-sm-nlms}, \eqref{eq:local_robustness_f1} holds when $\wbf(k)$ is updated. 
By summing such inequality for all $k \in {\cal K}_{\rm up}$ we obtain 
\begin{align}
\sum_{k \in {\cal K}_{\rm up}} \Big(\| \wbftilde(k+1) \|^2 + \frac{\mubar(k)}{\alpha(k)} \etilde^2(k) \Big)  
< \sum_{k \in {\cal K}_{\rm up}} \Big(\| \wbftilde(k) \|^2 + \frac{\mubar(k)}{\alpha(k)} n^2(k)\Big). \label{eq:robustness_accumulation_f1}
\end{align}
Similarly, we can use~\eqref{eq:local_robustness_f0} to write, for all $k \in {\cal K}_{\rm up}^c$, the equality
\begin{align}
 \sum_{k \in {\cal K}_{\rm up}^c} \| \wbf(k+1) \|^2 = \sum_{k \in {\cal K}_{\rm up}^c} \| \wbf(k) \|^2. \label{eq:robustness_accumulation_f0}
\end{align}
Combining~\eqref{eq:robustness_accumulation_f1} and~\eqref{eq:robustness_accumulation_f0} leads to 
\begin{align}
\sum_{k \in {\cal K}} \| \wbftilde(k+1) \|^2 
+ \sum_{k \in {\cal K}_{\rm up}} \frac{\mubar(k)}{\alpha(k)} \etilde^2(k)  
< \sum_{k \in {\cal K}} \| \wbftilde(k) \|^2 
+ \sum_{k \in {\cal K}_{\rm up}} \frac{\mubar(k)}{\alpha(k)} n^2(k). \label{eq:robustness_accumulation-sm-nlms} 
\end{align}
But since several of the terms $\| \wbftilde(k) \|^2$ get canceled from both sides of the inequality \eqref{eq:robustness_accumulation-sm-nlms}, we find that it simplifies to
\begin{align}
\| \wbftilde(K) \|^2 + \sum_{k \in {\cal K}_{\rm up}}\frac{\mubar(k)}{\alpha(k)}\etilde^2(k)  
< \| \wbftilde(0) \|^2 + \sum_{k \in {\cal K}_{\rm up}}\frac{\mubar(k)}{\alpha(k)}n^2(k) 
\end{align}
or, assuming a nonzero denominator,
\begin{align}
 \dfrac{\| \wbftilde(K) \|^2 + \sum\limits_{k \in {\cal K}_{\rm up}}\frac{\mubar(k)}{\alpha(k)}\etilde^2(k)}{\| \wbftilde(0) \|^2 + 
 \sum\limits_{k \in {\cal K}_{\rm up}}\frac{\mubar(k)}{\alpha(k)}n^2(k)}  < 1      .
\end{align}
This relation holds for all $K$. The only assumption used in the derivation is that ${\cal K}_{\rm up}$ is a nonempty set.  Otherwise, we would have $\| \wbftilde(K) \|^2 = \| \wbftilde(0) \|^2$, which would happen only if $\wbf(k)$ is never updated, which has no practical interest. 
\end{proof}

Corollary~\ref{thm:global_robustness-sm-nlms} shows that, for the SM-NLMS\abbrev{SM-NLMS}{Set-Membership Normalized LMS} algorithm, $l_2$-stability from its uncertainties 
$\{ \wbftilde(0), \{ n(k) \}_{0\leq k\leq K} \}$ to its errors $\{ \wbftilde(K), \{ \etilde(k) \}_{0\leq k\leq K} \}$ 
is always guaranteed. 
Unlike the NLMS\abbrev{NLMS}{Normalized LMS} algorithm, in which the step-size parameter must be chosen appropriately to guarantee such $l_2$-stability, 
for the SM-NLMS\abbrev{SM-NLMS}{Set-Membership Normalized LMS} algorithm it is taken for granted (i.e., no restriction is imposed on $\gammabar$). 


\subsection{Convergence of $\{\|\wbftilde(k)\|^2\}$ with unknown  noise bound}\label{sub:sm-nlms-unbounded-noise}

The robustness results mentioned in Subsection~\ref{sub:robustness-sm-nlms} provide bounds for the evolution of  
$\{\|\wbftilde(k)\|^2\}$ in terms of other variables.
However, we have experimentally observed that the SM-NLMS\abbrev{SM-NLMS}{Set-Membership Normalized LMS} algorithm presents a well-behaved convergence of the 
sequence $\{\|\wbftilde(k)\|^2\}$, i.e.,
for most iterations we have $\|\wbftilde(k+1)\|^2 \leq \|\wbftilde(k)\|^2$.
Therefore, in this subsection, we investigate under which conditions the sequence $\{\|\wbftilde(k)\|^2\}$ 
is (and is not) decreasing.

\begin{cor}\label{cor:sm_nlms_decreasing}
 When an update occurs $($i.e., $f(e(k),\gammabar) = 1$ $)$, $\etilde^2(k) \geq n^2(k)$ implies $\| \wbftilde(k+1) \|^2  < \| \wbftilde(k) \|^2$.
\end{cor}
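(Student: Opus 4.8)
The plan is to obtain this result directly from the local robustness inequality \eqref{eq:local_robustness_f1} already proved in Theorem~\ref{thm:local_robustness-sm-nlms}, since all the genuine work has been done there. Because we are assuming that an update occurs, we have $f(e(k),\gammabar)=1$, so \eqref{eq:local_robustness_f1} applies and gives
\begin{align}
\| \wbftilde(k+1) \|^2 + \frac{\mubar(k)}{\alpha(k)} \etilde^2(k) < \| \wbftilde(k) \|^2 + \frac{\mubar(k)}{\alpha(k)} n^2(k).
\end{align}

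First I would move the noiseless-error term to the right-hand side to isolate $\| \wbftilde(k+1) \|^2$, yielding
\begin{align}
\| \wbftilde(k+1) \|^2 < \| \wbftilde(k) \|^2 + \frac{\mubar(k)}{\alpha(k)} \big( n^2(k) - \etilde^2(k) \big).
\end{align}
Next I would argue that the residual term $\frac{\mubar(k)}{\alpha(k)}\big(n^2(k)-\etilde^2(k)\big)$ is nonpositive. Indeed, the hypothesis $\etilde^2(k) \geq n^2(k)$ forces the factor $n^2(k) - \etilde^2(k) \leq 0$, while the coefficient $\mubar(k)/\alpha(k)$ is strictly positive during an update: as established in the proof of Theorem~\ref{thm:local_robustness-sm-nlms}, $f(e(k),\gammabar)=1$ entails $0<\mubar(k)<1$, and by its definition \eqref{eq:def_alpha} we always have $\alpha(k)=\|\xbf(k)\|^2+\delta>0$ because $\delta>0$.

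Combining these two observations gives $\frac{\mubar(k)}{\alpha(k)}\big(n^2(k)-\etilde^2(k)\big)\leq 0$, so the displayed inequality collapses to $\| \wbftilde(k+1) \|^2 < \| \wbftilde(k) \|^2$, which is exactly the claim. I do not expect any real obstacle here: the entire difficulty was absorbed into establishing \eqref{eq:local_robustness_f1}, and this corollary amounts to a sign analysis of its residual term. The one point worth verifying is that strictness survives, and it does, because the strict inequality already comes from \eqref{eq:local_robustness_f1} itself and we are merely adding a nonpositive quantity; hence no strengthening of the hypothesis (such as $\etilde^2(k)>n^2(k)$) is needed.
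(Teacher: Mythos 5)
Your proposal is correct and follows essentially the same route as the paper: both rearrange the local robustness inequality \eqref{eq:local_robustness_f1} so that the term $\frac{\mubar(k)}{\alpha(k)}\left(\etilde^2(k)-n^2(k)\right)$ is isolated, then conclude by noting that $\mubar(k)/\alpha(k)>0$ during an update while the hypothesis $\etilde^2(k)\geq n^2(k)$ fixes the sign of the difference. Your remark that strictness survives because the strict inequality is inherited from \eqref{eq:local_robustness_f1} is also consistent with the paper's argument.
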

\begin{proof}
By rearranging the terms in~\eqref{eq:local_robustness_f1} we obtain
 \begin{align}
\| \wbftilde(k+1) \|^2 + \frac{\mubar(k)}{\alpha(k)} \left( \etilde^2(k) - n^2(k) \right) 
 < \| \wbftilde(k) \|^2,     
\end{align}
which is valid for $f(e(k),\gammabar) = 1$.
Observe that $\frac{\mubar(k)}{\alpha(k)} > 0$ since $\alpha(k) \in \mathbb{R}_+$ and $\mubar(k) \in (0,1)$ when $f(e(k),\gammabar) = 1$. 
Thus $\frac{\mubar(k)}{\alpha(k)} \left( \etilde^2(k) - n^2(k) \right)\geq0$ when $f(e(k),\gammabar) = 1$ and $\etilde^2(k) \geq n^2(k)$. 
Therefore, when an update occurs, $\etilde^2(k) \geq n^2(k)  \Rightarrow  \| \wbftilde(k+1) \|^2  < \| \wbftilde(k) \|^2$. 
\end{proof}

In words, Corollary~\ref{cor:sm_nlms_decreasing} states that the SM-NLMS\abbrev{SM-NLMS}{Set-Membership Normalized LMS} algorithm improves its estimate $\wbf(k+1)$ 
every time an update is required and the energy of the error signal $e^2(k)$ is dominated by $\etilde^2(k)$, the component of the error 
which is due to the mismatch between $\wbf(k)$ and $\wbf_o$.

Corollary~\ref{cor:sm_nlms_decreasing} also explains why the SM-NLMS\abbrev{SM-NLMS}{Set-Membership Normalized LMS} algorithm usually presents a {\it monotonic decreasing sequence} 
$\{\|\wbftilde(k)\|^2\}$ during its transient period. 
Indeed, in the early iterations, the absolute value of the error is generally large, thus $|e(k)|>\gammabar$ and $\etilde^2(k)>n^2(k)$, 
implying that $\| \wbftilde(k+1) \|^2  < \| \wbftilde(k) \|^2$.
In addition, there are a few iterations during the transient period in which the input data do not bring enough innovation so that 
no update is performed, which means that $\| \wbftilde(k+1) \|^2  = \| \wbftilde(k) \|^2$ for these few iterations. 
As a conclusion, it is very likely to have $\| \wbftilde(k+1) \|^2  \leq \| \wbftilde(k) \|^2$ for all iterations $k$ belonging 
to the transient period.

After the transient period, however, the SM-NLMS\abbrev{SM-NLMS}{Set-Membership Normalized LMS} algorithm may yield $\| \wbftilde(k+1) \|^2  > \| \wbftilde(k) \|^2$ in a few iterations. 
Although it is hard to compute how often such an event occurs, we can provide an upper bound for {the probability of this event} as follows: 
\begin{align}
\mathbb{P}[\|\wbftilde(k+1)\|^2 > \|\wbftilde(k)\|^2] &\leq \mathbb{P}[\{|e(k)|>\gammabar\}\cap\{\etilde^2(k)<n^2(k)\}]      \nonumber\\
&<\mathbb{P}[|e(k)|>\gammabar]={\rm erfc}\left(\sqrt{\frac{\tau}{2}}\right) ,  \label{eq:sm_nlms_probability}
\end{align}
where $\mathbb{P}[\cdot]$ and ${\rm erfc}(\cdot)$ are the probability operator and the complementary error 
function~\cite{Proakis_DigitalCommunications_book1995}, respectively. \symbl{${\rm erfc}(\cdot)$}{The complementary error function} 
The first inequality follows from the fact that we do not know exactly what will happen with $\| \wbftilde(k+1) \|^2$ when an update 
occurs and $\etilde^2(k)<n^2(k)$ at the same time\footnote{This is because Corollary~\ref{cor:sm_nlms_decreasing} provides a 
sufficient, but not necessary, condition for $\|\wbftilde(k+1)\|^2 < \|\wbftilde(k)\|^2$.} 
and, therefore, it corresponds to a {\it pessimistic bound}. 
The second inequality is trivial and the subsequent equality follows from~\cite{Galdino_SMNLMS_gammabar_ISCAS2006} by parameterizing $\gammabar$
as $\gammabar=\sqrt{\tau\sigma_n^2}$, where $\tau \in \mathbb{R}_+$ (typically $\tau = 5$) and by modeling the error $e(k)$ 
as a zero-mean Gaussian random variable with variance $\sigma_n^2$.

From~\eqref{eq:sm_nlms_probability}, one can observe that the probability of obtaining 
$\|\wbftilde(k+1)\|^2 > \|\wbftilde(k)\|^2$ is small. 
For instance, for $2\leq\tau\leq9$ we have $0.0027\leq{\rm erfc}\Big(\sqrt{\frac{\tau}{2}}\Big)\leq0.1579$, and 
for the usual choice $\tau=5$ we have ${\rm erfc}\Big(\sqrt{\frac{\tau}{2}}\Big)=0.0253$.

The results in this subsection show that $\| \wbftilde(k+1) \|^2  \leq \| \wbftilde(k) \|^2$ for most iterations of  
the SM-NLMS\abbrev{SM-NLMS}{Set-Membership Normalized LMS} algorithm, meaning that the SM-NLMS\abbrev{SM-NLMS}{Set-Membership Normalized LMS} algorithm uses the input data efficiently.
Indeed, having $\| \wbftilde(k+1) \|^2  > \| \wbftilde(k) \|^2$ means that the input data was used to obtain an 
estimate $\wbf(k+1)$ which is further away from the quantity we aim to estimate $\wbf_o$, which is a waste of computational resources 
(it would be better not to update at all). 
Here, we showed that this rarely happens for the SM-NLMS\abbrev{SM-NLMS}{Set-Membership Normalized LMS} algorithm, {a property not shared by} the classical algorithms, 
as it will be verified experimentally in Section~\ref{sec:simulation-robustness}.


\subsection{Convergence of $\{\|\wbftilde(k)\|^2\}$ with known noise bound}\label{sub:sm-nlms-bounded-noise}

In this subsection, we demonstrate that if the noise bound is known, then it is possible to set the threshold parameter $\gammabar$ 
of the SM-NLMS\abbrev{SM-NLMS}{Set-Membership Normalized LMS} algorithm so that $\{\|\wbftilde(k)\|^2\}$ is a monotonic decreasing sequence. 
Theorem~\ref{thm:strong-local-robustness-sm-nlms} and Corollary~\ref{cor:strong-global-robustness-sm-nlms} address 
this issue.

\begin{thm}[Strong Local Robustness of SM-NLMS]\label{thm:strong-local-robustness-sm-nlms}
 Assume the noise is bounded by a known constant $B \in \mathbb{R}_+$, i.e., $|n(k)|\leq B, \forall k$. 
 If one chooses $\gammabar \geq 2B$, then $\{\|\wbftilde(k)\|^2\}$ is a monotonic decreasing sequence, i.e.,   
 $\|\wbftilde(k+1)\|^2\leq\|\wbftilde(k)\|^2,\forall k$. 
\end{thm}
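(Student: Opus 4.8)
The plan is to reduce everything to the two local results already in hand—the local robustness relation of Theorem~\ref{thm:local_robustness-sm-nlms} and its consequence Corollary~\ref{cor:sm_nlms_decreasing}—and to show that the hypothesis $\gammabar \geq 2B$ is precisely what forces the decreasing condition at every iteration. I would organize the argument around the two cases dictated by the indicator $f(e(k),\gammabar)$.

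First, when no update occurs (i.e.\ $f(e(k),\gammabar)=0$), Equation~\eqref{eq:local_robustness_f0} gives $\|\wbftilde(k+1)\|^2 = \|\wbftilde(k)\|^2$ immediately, so the desired inequality $\|\wbftilde(k+1)\|^2 \leq \|\wbftilde(k)\|^2$ holds trivially for those iterations.

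The substance of the proof lies in the update case, $f(e(k),\gammabar)=1$. Here I would invoke Corollary~\ref{cor:sm_nlms_decreasing}, which states that $\etilde^2(k) \geq n^2(k)$ is \emph{sufficient} for the strict decrease $\|\wbftilde(k+1)\|^2 < \|\wbftilde(k)\|^2$. It therefore remains only to verify that $\etilde^2(k) \geq n^2(k)$ is automatically satisfied once $\gammabar \geq 2B$. Recalling from Equation~\eqref{eq:def_noiselessError-sm-nlms} that $\etilde(k) = e(k) - n(k)$, and that an update requires $|e(k)| > \gammabar \geq 2B$, the reverse triangle inequality together with the noise bound $|n(k)| \leq B$ yields
\begin{align}
|\etilde(k)| = |e(k) - n(k)| \geq |e(k)| - |n(k)| > 2B - B = B \geq |n(k)|, \nonumber
\end{align}
so that $\etilde^2(k) > n^2(k)$. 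The hypothesis of Corollary~\ref{cor:sm_nlms_decreasing} is met, giving $\|\wbftilde(k+1)\|^2 < \|\wbftilde(k)\|^2$ on every updating iteration.

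Combining the two cases produces $\|\wbftilde(k+1)\|^2 \leq \|\wbftilde(k)\|^2$ for all $k$, establishing that $\{\|\wbftilde(k)\|^2\}$ is a monotonic decreasing sequence. The only point requiring any thought—and it is a short one—is seeing why the factor of two in $\gammabar \geq 2B$ is the correct threshold: it guarantees a gap of at least $B$ between $|e(k)|$ and $|n(k)|$ after subtracting the worst-case noise, and this surviving gap dominates $|n(k)|$ itself. No separate handling of the regularization factor $\delta$ or the step size $\mubar(k)$ is needed, since those enter only through the already-proven local inequality and play no role in the sign comparison $\etilde^2(k) \geq n^2(k)$.
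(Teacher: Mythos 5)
Your proof is correct and follows essentially the same route as the paper's: both reduce the update case to Corollary~\ref{cor:sm_nlms_decreasing} by showing that $\gammabar \geq 2B$ together with $|n(k)|\leq B$ forces $|\etilde(k)| > B \geq |n(k)|$ whenever $|e(k)|>\gammabar$, and both dispose of the non-update case via~\eqref{eq:local_robustness_f0}. The only cosmetic difference is that you obtain the bound $|\etilde(k)|>B$ in one line from the reverse triangle inequality, whereas the paper derives it by a two-case analysis on the sign of $\etilde(k)$.
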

\begin{proof}
 If $f(e(k),\gammabar)=1$, then $|e(k)| = |\etilde(k) + n(k)|>\gammabar$, which means that: 
 (i)~$\etilde(k) > \gammabar - n(k)$ for the positive values of $\etilde(k)$ or 
 (ii)~$\etilde(k) < -\gammabar - n(k)$ for the negative values of $\etilde(k)$. 
 Recalling that $n(k) \in [-B,B]$ and $\gammabar \in [2B,\infty)$, now we can find the bound for $\etilde(k)$ by finding the 
 minimum of (i) and the maximum of (ii) as follows: \\
  (i) $\etilde(k) > \gammabar - n(k) \Rightarrow  \etilde_{\rm min} > \gammabar - B \geq B$; \\
 (ii) $\etilde(k) <-\gammabar - n(k) \Rightarrow  \etilde_{\rm max} <-\gammabar + B \leq -B$. \\
 Results (i) and (ii) above state that if $\gammabar \geq 2B$, then $| \etilde(k) | > B$, 
 which means that $| \etilde(k) | > | n(k) |, \forall k$.
 Consequently, by using Corollary~\ref{cor:sm_nlms_decreasing} it follows that $\|\wbftilde(k+1)\|^2 < \|\wbftilde(k)\|^2,\forall k$ in 
 which $f(e(k),\gammabar)=1$. 
 In addition, if $f(e(k),\gammabar)=0$ we have $\|\wbftilde(k+1)\|^2 = \|\wbftilde(k)\|^2$. 
 Therefore, we can conclude that $\gammabar \geq 2B \Rightarrow \|\wbftilde(k+1)\|^2\leq\|\wbftilde(k)\|^2,\forall k$.  
 \end{proof}
 
 \begin{cor}[Strong Global Robustness of SM-NLMS]\label{cor:strong-global-robustness-sm-nlms}
 Consider the SM-NLMS\abbrev{SM-NLMS}{Set-Membership Normalized LMS} algorithm running from iteration $0$ (initialization) to a given iteration $K$. 
 If $\gammabar \geq 2B$, then $\|\wbftilde(K)\|^2 \leq \|\wbftilde(0)\|^2$, in which the equality 
 holds only when no update is performed along all the iterations. 
 \end{cor}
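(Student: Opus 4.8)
The plan is to obtain this corollary as an immediate consequence of Theorem~\ref{thm:strong-local-robustness-sm-nlms}, essentially by telescoping the per-iteration inequalities. First I would invoke the theorem: since $\gammabar \geq 2B$, the sequence $\{\|\wbftilde(k)\|^2\}$ is monotonically nonincreasing, i.e., $\|\wbftilde(k+1)\|^2 \leq \|\wbftilde(k)\|^2$ for every $k \in \{0,1,\ldots,K-1\}$. This is the entire analytical content, and it has already been established, so the remaining work is purely a chaining argument.

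Next I would link these $K$ inequalities together by transitivity. Writing
\begin{align}
\|\wbftilde(K)\|^2 \leq \|\wbftilde(K-1)\|^2 \leq \cdots \leq \|\wbftilde(1)\|^2 \leq \|\wbftilde(0)\|^2,
\end{align}
the desired global bound $\|\wbftilde(K)\|^2 \leq \|\wbftilde(0)\|^2$ follows at once.

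The only point that requires a little care is the equality condition. Here I would return to the proof of Theorem~\ref{thm:strong-local-robustness-sm-nlms}, which treats two cases at each iteration: when $f(e(k),\gammabar)=0$ (no update) one has $\|\wbftilde(k+1)\|^2 = \|\wbftilde(k)\|^2$, whereas when $f(e(k),\gammabar)=1$ (an update occurs) the inequality is \emph{strict}, namely $\|\wbftilde(k+1)\|^2 < \|\wbftilde(k)\|^2$, as inherited from Corollary~\ref{cor:sm_nlms_decreasing}. Consequently, if at least one update is performed among the iterations $0,\ldots,K-1$, that single step contributes a strict inequality that propagates through the telescoped chain to give $\|\wbftilde(K)\|^2 < \|\wbftilde(0)\|^2$. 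Therefore equality $\|\wbftilde(K)\|^2 = \|\wbftilde(0)\|^2$ can hold only when no update is performed along all the iterations, exactly as stated.

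I do not anticipate any genuine obstacle, since all the substantive estimation is already packaged in Theorem~\ref{thm:strong-local-robustness-sm-nlms}; the corollary amounts to bookkeeping. The subtlest aspect is simply tracking strict versus non-strict inequalities carefully, so that the equality case is characterized precisely rather than merely asserting the nonstrict bound.
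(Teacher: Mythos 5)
Your proposal is correct and matches the paper's intent exactly: the paper omits this proof, stating it is a straightforward consequence of Theorem~\ref{thm:strong-local-robustness-sm-nlms}, and your chaining of the per-iteration inequalities is precisely that consequence spelled out. Your care in recovering the \emph{strict} inequality for update iterations from the theorem's proof (via Corollary~\ref{cor:sm_nlms_decreasing}) is the right way to justify the equality characterization, since the theorem's statement alone only gives the non-strict bound.
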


 The proof of Corollary~\ref{cor:strong-global-robustness-sm-nlms} is omitted because it is a straightforward consequence 
 of Theorem~\ref{thm:strong-local-robustness-sm-nlms}.
 
 
 \subsection{Time-varying $\gammabar(k)$} \label{sub:sm-nlms-time-varying-gammabar}
 
 After reading Subsections~\ref{sub:sm-nlms-unbounded-noise} and~\ref{sub:sm-nlms-bounded-noise}, one might be 
 tempted to set $\gammabar$ as a high value since it reduces the number of updates, thus saving computational resources 
 {and also leading} to a well-behaved sequence $\{ \|\wbftilde(k)\|^2 \}$ that has high probability of being monotonously decreasing.
 However, a high value of $\gammabar$ leads to slow convergence, because the updates during the learning stage (transient period) are 
 less frequent and the step-size $\mu(k)$ is reduced as well. 
 Hence, $\gammabar$ represents a compromise between convergence speed and efficiency and, therefore, 
 should be chosen carefully according to the specific characteristics of the application.
 
 An alternative approach is to allow a time-varying error bound $\gammabar(k)$ generally defined as 
 $\gammabar(k) \triangleq \sqrt{\tau(k) \sigma_n^2}$, where \symbl{$\gammabar(k)$}{Time-varying error bound}
 \begin{align}\label{eq:gammabar-timevar}
  \tau(k) \triangleq\begin{cases}
               \text{Low value (e.g., $\tau(k) \in [1,5]$)}, \qquad\text{if $k \in$ transient period, }   \\
               \text{High value (e.g., $\tau(k) \in [5,9]$)}, \qquad\text{if $k \in $ steady-state.}
              \end{cases}
 \end{align}
 By using such a $\gammabar(k)$, one obtains the best features of the high and low values of $\gammabar$ discussed in the 
 first paragraph of this subsection. 
 In addition, if the noise bound $B$ is known, then one should set $\gammabar(k)\geq 2B$ for all $k$ during the steady-state, 
 as explained in Subsection~\ref{sub:sm-nlms-bounded-noise}.
 It is worth mentioning that~\eqref{eq:gammabar-timevar} provides a general expression for $\tau(k)$ that allows it to vary smoothly 
 along the iterations even within a single period (i.e., transient period or steady-state).
 
 In order to apply the $\gammabar(k)$ defined above, the algorithm should be able to monitor the environment to determine 
 when there is a transition between transient and steady-state periods.
 An intuitive way to do this is to monitor the values of $|e(k)|$. 
 In this case, one should form a window with the $E \in \mathbb{N}$ most recent values of the error, compute the average 
 of these $|e(k)|$ within the window, and compare it against a threshold parameter to make the decision.
 An even more intuitive and efficient way to monitor the iterations relies on how often the algorithm is updating. 
 In this case, one should form a window of length $E$ containing Boolean variables (flags, i.e., 1-bit information) indicating the iterations 
 in which an update was performed considering the $E$ most recent iterations.
 If many updates were performed within the window, then the algorithm must be in the transient period; otherwise, the algorithm is likely to be 
 in steady-state.
 

\section{Robustness of the SM-AP Algorithm} \label{sec:robustness-sm-ap}

In this section, we address the robustness of the set-membership affine projection (SM-AP) algorithm. We study its 
robustness properties in Subsection~\ref{sub:robustness-sm-ap}, whereas in Subsection~\ref{sub:divergence_sm_ap}, we demonstrate that the SM-AP\abbrev{SM-AP}{Set-Membership Affine Projection} algorithm does not diverge.


\subsection{Robustness of the SM-AP algorithm}\label{sub:robustness-sm-ap}

Suppose that in a system identification problem the unknown system is denoted by $\wbf_o\in\mathbb{R}^{N+1}$ 
and the desired (reference) vector is given by
\begin{align}
 \dbf(k) \triangleq \Xbf^T(k) \wbf_o + \nbf(k) .         \label{eq:desiredSignalModel-SM-AP} 
\end{align}
By defining the coefficient mismatch $\wbftilde(k)\triangleq\wbf_o-\wbf(k)$, 
the error vector can be written as
\begin{align}
\ebf(k)=\Xbf^T(k)\wbf_o+\nbf(k)-\Xbf^T(k)\wbf(k)=\underbrace{\Xbf^T(k)\wbftilde(k)}_{\triangleq \ebftilde(k)}+\nbf(k) \ , \label{eq:def_noiselessError-SM-AP}
\end{align}
where $\ebftilde(k)$ denotes the noiseless error vector (i.e., the error due to a nonzero $\wbftilde(k)$). \symbl{$\ebftilde(k)$}{Noiseless error signal vector}
By defining the indicator function $f:\mathbb{R}\times\mathbb{R}_+ \rightarrow \{ 0,1 \}$ as in~\eqref{eq:def_indicatorFunc} 
and using it in (\ref{eq:sm-ap}), the update rule of the SM-AP\abbrev{SM-AP}{Set-Membership Affine Projection} algorithm can be written as follows:
\begin{align}
\hspace{-0.1cm}\wbf(k+1)=\wbf(k)+\Xbf(k)\Abf(k) 
(\ebf(k)-\gammabf(k))f(e(k),\gammabar), \label{eq:r_update}
\end{align}
where $\Abf(k)=[\Xbf^T(k)\Xbf(k)]^{-1}$. After subtracting $\wbf_o$ from both sides of (\ref{eq:r_update}), we obtain
\begin{align}
\wbftilde(k+1)=\wbftilde(k)-\Xbf(k)\Abf(k)(\ebf(k)-\gammabf(k))f(e(k),\gammabar).
\end{align}
Notice that $\Abf(k)$ is a symmetric positive definite matrix. 
To simplify our notation, we will omit the index $k$ and the arguments of function $f$ that appear on the 
right-hand side (RHS)\abbrev{RHS}{Right-Hand Side} of the previous equation, then by decomposing $\ebf(k)$ as in~\eqref{eq:def_noiselessError-SM-AP} 
we obtain
\begin{align}
\wbftilde(k+1)=\wbftilde-\Xbf\Abf\ebftilde f-\Xbf\Abf\nbf f+\Xbf\Abf \gammabf f \label{eq:update-SM-AP} ,
\end{align}
from which Theorem~\ref{thm:local_robustness-SM-AP} can be derived.

\begin{thm}[Local Robustness of SM-AP]\label{thm:local_robustness-SM-AP}
 For the SM-AP\abbrev{SM-AP}{Set-Membership Affine Projection} algorithm, at every iteration we have
 \begin{align}
  \| \wbftilde (k+1) \|^2 = \| \wbftilde (k) \|^2 , \text{ if } f(e(k),\gammabar) = 0     \label{eq:local_robustness_f0-SM-AP}
 \end{align}
otherwise
 \begin{align}
 \left\{\begin{array}{ll}
 \frac{\|\wbftilde(k+1)\|^2+\ebftilde^T\Abf\ebftilde}{\|\wbftilde(k)\|^2+\nbf^T\Abf\nbf}<1,&\text{if}~\gammabf^T\Abf\gammabf<2\gammabf^T\Abf\nbf,\\
 \frac{\|\wbftilde(k+1)\|^2+\ebftilde^T\Abf\ebftilde}{\|\wbftilde(k)\|^2+\nbf^T\Abf\nbf}=1,&\text{if}~\gammabf^T\Abf\gammabf=2\gammabf^T\Abf\nbf,\\
 \frac{\|\wbftilde(k+1)\|^2+\ebftilde^T\Abf\ebftilde}{\|\wbftilde(k)\|^2+\nbf^T\Abf\nbf}>1,&\text{if}~\gammabf^T\Abf\gammabf>2\gammabf^T\Abf\nbf,\end{array}\right.  \label{eq:local_robustness_f1-SM-AP}
 \end{align}
 where the iteration index $k$ has been dropped for the sake of clarity, and we assume that 
 $\|\wbftilde(k)\|^2+\nbf^T\Abf\nbf\neq0$ just to allow us to write the theorem in a compact form.
\end{thm}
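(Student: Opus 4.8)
The plan is to mirror the structure of the SM-NLMS local robustness proof in Theorem~\ref{thm:local_robustness-sm-nlms}: dispose of the no-update case immediately, and for the update case compute $\|\wbftilde(k+1)\|^2$ directly by squaring the recursion~\eqref{eq:update-SM-AP} and then rearranging the resulting energy identity into the three claimed regimes. When $f(e(k),\gammabar)=0$, the recursion gives $\wbftilde(k+1)=\wbftilde(k)$, so~\eqref{eq:local_robustness_f0-SM-AP} holds trivially; the substance of the proof is the case $f=1$.

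For $f=1$, I would first collect the three update terms in~\eqref{eq:update-SM-AP} into the compact form
\[
\wbftilde(k+1)=\wbftilde-\Xbf\Abf(\ebf-\gammabf),
\]
using $\ebftilde+\nbf-\gammabf=\ebf-\gammabf$. Taking the inner product of this with itself produces $\|\wbftilde\|^2$, a cross term $-2\wbftilde^T\Xbf\Abf(\ebf-\gammabf)$, and a quadratic term $(\ebf-\gammabf)^T\Abf\Xbf^T\Xbf\Abf(\ebf-\gammabf)$. The two identities that make everything collapse are $\Xbf^T\Xbf=\Abf^{-1}$, so that $\Abf\Xbf^T\Xbf\Abf=\Abf$ (here $\Abf$ being positive definite is what guarantees the inverse exists), and $\wbftilde^T\Xbf=\ebftilde^T$, which is exactly the definition of the noiseless error in~\eqref{eq:def_noiselessError-SM-AP}. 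These reduce the squared norm to
\[
\|\wbftilde(k+1)\|^2=\|\wbftilde\|^2-2\ebftilde^T\Abf(\ebf-\gammabf)+(\ebf-\gammabf)^T\Abf(\ebf-\gammabf).
\]

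Substituting $\ebf=\ebftilde+\nbf$ and expanding both quadratic expressions, the cross couplings $\ebftilde^T\Abf\nbf$ and $\ebftilde^T\Abf\gammabf$ cancel (each appears once from the linear term and once, with opposite sign, from the quadratic term), leaving the clean energy relation
\[
\|\wbftilde(k+1)\|^2+\ebftilde^T\Abf\ebftilde=\|\wbftilde\|^2+\nbf^T\Abf\nbf+\gammabf^T\Abf\gammabf-2\gammabf^T\Abf\nbf.
\]
Moving $\|\wbftilde\|^2+\nbf^T\Abf\nbf$ to the left shows that the difference between numerator and denominator equals precisely $\gammabf^T\Abf\gammabf-2\gammabf^T\Abf\nbf$; since $\Abf$ being positive definite forces the denominator $\|\wbftilde(k)\|^2+\nbf^T\Abf\nbf$ to be positive under the stated nonzero assumption, the sign of this quantity immediately yields the three cases of~\eqref{eq:local_robustness_f1-SM-AP}.

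I do not anticipate a genuine obstacle here; the argument is a direct energy computation. The only real care is the bookkeeping of the six cross terms arising from the two quadratic expansions and the consistent use of the symmetry of $\Abf$ to combine mirror-image terms such as $\nbf^T\Abf\ebftilde$ with $\ebftilde^T\Abf\nbf$ (and $\gammabf^T\Abf\nbf$ with $\nbf^T\Abf\gammabf$). This is the step most prone to sign errors, and it is exactly what isolates the decisive quantity $\gammabf^T\Abf\gammabf-2\gammabf^T\Abf\nbf$ on which the trichotomy depends.
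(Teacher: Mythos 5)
Your proposal is correct and follows essentially the same route as the paper's proof: squaring the update recursion, invoking $\Abf^{-1}=\Xbf^T(k)\Xbf(k)$ and $\ebftilde=\Xbf^T\wbftilde$, arriving at the energy identity $\|\wbftilde(k+1)\|^2+\ebftilde^T\Abf\ebftilde=\|\wbftilde\|^2+\nbf^T\Abf\nbf+\gammabf^T\Abf\gammabf-2\gammabf^T\Abf\nbf$, and reading off the trichotomy from the sign of $\gammabf^T\Abf\gammabf-2\gammabf^T\Abf\nbf$. Your only deviation is cosmetic: you group the update as $-\Xbf\Abf(\ebf-\gammabf)$ before squaring, which compresses the paper's sixteen-term expansion into a three-term one, but the underlying computation is identical.
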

\begin{proof}
By computing the Euclidean norm of Equation~\eqref{eq:update-SM-AP} and rearranging the terms we get
\begin{align}
\|\wbftilde(k+1)\|^2=&\wbftilde^T\wbftilde-\wbftilde^T\Xbf\Abf\ebftilde f -\wbftilde^T\Xbf\Abf\nbf f +\wbftilde^T\Xbf\Abf\gammabf f -\ebftilde^T\Abf^T\Xbf^T\wbftilde f \nonumber\nonumber\\
&+\ebftilde^T\Abf^T\Abf^{-1}\Abf\ebftilde f^2 +\ebftilde^T\Abf^T\Abf^{-1}\Abf\nbf f^2 -\ebftilde^T\Abf^T\Abf^{-1}\Abf\gammabf f^2 \nonumber\\ &-\nbf^T\Abf^T\Xbf^T\wbftilde f
+\nbf^T\Abf^T\Abf^{-1}\Abf\ebftilde f^2 +\nbf^T\Abf^T\Abf^{-1}\Abf\nbf f^2 \nonumber\\ &-\nbf^T\Abf^T\Abf^{-1}\Abf\gammabf f^2 +\gammabf^T\Abf^T\Xbf^T\wbftilde f
-\gammabf^T\Abf^T\Abf^{-1}\Abf\ebftilde f^2 \nonumber\\ &-\gammabf^T\Abf^T\Abf^{-1}\Abf\nbf f^2
+\gammabf^T\Abf^T\Abf^{-1}\Abf\gammabf f^2 \nonumber\\
=&\|\wbftilde\|^2-\ebftilde^T\Abf\ebftilde f -\ebftilde^T\Abf\nbf f +\ebftilde^T\Abf\gammabf f -\ebftilde^T\Abf\ebftilde f+\ebftilde^T\Abf\ebftilde f^2 \nonumber\\
&+\ebftilde^T\Abf\nbf f^2 -\ebftilde^T\Abf\gammabf f^2 -\nbf^T\Abf\ebftilde f +\nbf^T\Abf\ebftilde f^2 +\nbf^T\Abf\nbf f^2 \nonumber\\
&-\nbf^T\Abf\gammabf f^2 +\gammabf^T\Abf\ebftilde f -\gammabf^T\Abf\ebftilde f^2 -\gammabf^T\Abf\nbf f^2 +\gammabf^T\Abf\gammabf f^2 \ , \label{eq:norm2-sm-ap-robustness}
\end{align}
where it was used that $\Abf^{-1} = \Xbf^T(k)\Xbf(k)$ and $\ebftilde(k) = \Xbf^T(k) \wbftilde(k)$.
From the above equation we observe that when $f=0$ we have
\begin{align}
\|\wbftilde(k+1)\|^2=\|\wbftilde(k)\|^2 \label{eq:equality}
\end{align}
as expected, since $f=0$ means that the algorithm does not update its coefficients. 
However, when $f=1$ the following equality is achieved from \eqref{eq:norm2-sm-ap-robustness}:
\begin{align}
\|\wbftilde(k+1)\|^2=\|\wbftilde\|^2-\ebftilde^T\Abf\ebftilde +\nbf^T\Abf\nbf-2\gammabf^T\Abf\nbf +\gammabf^T\Abf\gammabf \ . \label{eq:main_equation}
\end{align}
After rearranging the terms of the previous equation we obtain 
\begin{align}
\|\wbftilde(k+1)\|^2+\ebftilde^T\Abf\ebftilde=\|\wbftilde\|^2+\nbf^T\Abf\nbf-2\gammabf^T\Abf\nbf+\gammabf^T\Abf\gammabf \ . \label{eq:NiceIdentity}
\end{align}
Therefore,  
$\|\wbftilde(k+1)\|^2+\ebftilde^T\Abf\ebftilde<\|\wbftilde\|^2+\nbf^T\Abf\nbf$ if $\gammabf^T\Abf\gammabf<2\gammabf^T\Abf\nbf$,  
$\|\wbftilde(k+1)\|^2+\ebftilde^T\Abf\ebftilde=\|\wbftilde\|^2+\nbf^T\Abf\nbf$ if $\gammabf^T\Abf\gammabf=2\gammabf^T\Abf\nbf$, and 
$\|\wbftilde(k+1)\|^2+\ebftilde^T\Abf\ebftilde>\|\wbftilde\|^2+\nbf^T\Abf\nbf$ if $\gammabf^T\Abf\gammabf>2\gammabf^T\Abf\nbf$.

Assuming $\|\wbftilde\|^2+\nbf^T\Abf\nbf\neq0$ we can summarize the discussion above in a compact form as follows:
\begin{align}
\left\{\begin{array}{ll}\frac{\|\wbftilde(k+1)\|^2+\ebftilde^T\Abf\ebftilde}{\|\wbftilde(k)\|^2+\nbf^T\Abf\nbf}<1,&\text{if}~\gammabf^T\Abf\gammabf<2\gammabf^T\Abf\nbf,\\
\frac{\|\wbftilde(k+1)\|^2+\ebftilde^T\Abf\ebftilde}{\|\wbftilde(k)\|^2+\nbf^T\Abf\nbf}=1,&\text{if}~\gammabf^T\Abf\gammabf=2\gammabf^T\Abf\nbf,\\
\frac{\|\wbftilde(k+1)\|^2+\ebftilde^T\Abf\ebftilde}{\|\wbftilde(k)\|^2+\nbf^T\Abf\nbf}>1,&\text{if}~\gammabf^T\Abf\gammabf>2\gammabf^T\Abf\nbf.\end{array}\right. 
\end{align}
\end{proof}

The combination of the first two inequalities in~\eqref{eq:local_robustness_f1-SM-AP}, which corresponds to the 
case $\gammabf^T\Abf\gammabf \leq 2\gammabf^T\Abf\nbf$, has an interesting interpretation. 
It describes that for any constraint vector $\gammabf$ satisfying this condition we have
\begin{align}
\|\wbftilde(k+1)\|^2+\ebftilde^T\Abf\ebftilde \leq \|\wbftilde(k)\|^2+\nbf^T\Abf\nbf,  \label{eq:first_eq_loccal_SM-AP}
\end{align}
no matter what the noise vector $\nbf(k)$ is. 
In this way, we can derive the global robustness property of the SM-AP\abbrev{SM-AP}{Set-Membership Affine Projection} algorithm.

\begin{cor}[Global Robustness of SM-AP]\label{cor:global_robustness-SM-AP}
Suppose that the SM-AP\abbrev{SM-AP}{Set-Membership Affine Projection} algorithm running from 0 (initialization) to a given iteration $K$ 
employs a constraint vector $\gammabf$ satisfying $\gammabf^T\Abf\gammabf \leq 2\gammabf^T\Abf\nbf$ 
at every iteration in which an update occurs. 
Then, it always holds that
\begin{align}
\frac{\|\wbftilde(K)\|^2+\sum\limits_{k\in{\cal K}_{\rm up}}\ebftilde^T\Abf\ebftilde}{\|\wbftilde(0)\|^2
+\sum\limits_{k\in{\cal K}_{\rm up}}\nbf^T\Abf\nbf} \leq 1,
\end{align}
where ${\cal K}_{\rm up} \neq \emptyset$ is the set comprised of the iteration indexes $k$ in which $\wbf(k)$ is indeed updated and the equality 
holds when $\gammabf^T\Abf\gammabf=2\gammabf^T\Abf\nbf$ for every $k \in {\cal K}_{\rm up}$. 
If ${\cal K}_{\rm up} = \emptyset$, then $\|\wbftilde(K)\|^2 = \|\wbftilde(0)\|^2$, a case that has no practical interest 
since no update is performed.
\end{cor}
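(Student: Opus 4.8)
The plan is to mimic the derivation of Corollary~\ref{thm:global_robustness-sm-nlms} almost verbatim, using the per-iteration result of Theorem~\ref{thm:local_robustness-SM-AP} as the building block and then telescoping over the iterations. First I would introduce the set of all analyzed iterations ${\cal K} \triangleq \{0,1,\ldots,K-1\}$ and write it as the disjoint union of ${\cal K}_{\rm up}$, the indexes in which an update occurs, and ${\cal K}_{\rm up}^c \triangleq {\cal K} \setminus {\cal K}_{\rm up}$, the indexes in which no update is performed.

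Next, for every $k \in {\cal K}_{\rm up}$ the standing hypothesis $\gammabf^T\Abf\gammabf \leq 2\gammabf^T\Abf\nbf$ lets me invoke the combined inequality~\eqref{eq:first_eq_loccal_SM-AP}, namely $\|\wbftilde(k+1)\|^2+\ebftilde^T\Abf\ebftilde \leq \|\wbftilde(k)\|^2+\nbf^T\Abf\nbf$, which degenerates to an equality exactly when $\gammabf^T\Abf\gammabf = 2\gammabf^T\Abf\nbf$. Summing this over $k \in {\cal K}_{\rm up}$, and summing the trivial equality $\|\wbftilde(k+1)\|^2 = \|\wbftilde(k)\|^2$ from~\eqref{eq:local_robustness_f0-SM-AP} over $k \in {\cal K}_{\rm up}^c$, I obtain a single relation whose left side carries $\sum_{k \in {\cal K}} \|\wbftilde(k+1)\|^2 + \sum_{k \in {\cal K}_{\rm up}} \ebftilde^T\Abf\ebftilde$ and whose right side carries $\sum_{k \in {\cal K}} \|\wbftilde(k)\|^2 + \sum_{k \in {\cal K}_{\rm up}} \nbf^T\Abf\nbf$.

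The decisive step is the telescoping cancellation: the terms $\|\wbftilde(k)\|^2$ for $k=1,\ldots,K-1$ appear on both sides and cancel, leaving only $\|\wbftilde(K)\|^2$ on the left and $\|\wbftilde(0)\|^2$ on the right. This gives $\|\wbftilde(K)\|^2 + \sum_{k \in {\cal K}_{\rm up}} \ebftilde^T\Abf\ebftilde \leq \|\wbftilde(0)\|^2 + \sum_{k \in {\cal K}_{\rm up}} \nbf^T\Abf\nbf$; dividing by the assumed nonzero denominator $\|\wbftilde(0)\|^2 + \sum_{k \in {\cal K}_{\rm up}} \nbf^T\Abf\nbf$ yields the claimed ratio bound, with the equality case inherited directly from the equality branch of~\eqref{eq:local_robustness_f1-SM-AP}. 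The degenerate situation ${\cal K}_{\rm up} = \emptyset$ then follows at once from~\eqref{eq:local_robustness_f0-SM-AP}, giving $\|\wbftilde(K)\|^2 = \|\wbftilde(0)\|^2$.

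I expect the main subtlety, rather than a genuine obstacle, to be justifying that the telescoping remains valid even though ${\cal K}_{\rm up}$ may be a scattered, non-consecutive subset of ${\cal K}$. The crux is that the no-update iterations contribute \emph{exact} equalities, so that the running sum $\sum_{k \in {\cal K}} \big(\|\wbftilde(k+1)\|^2 - \|\wbftilde(k)\|^2\big)$ collapses to $\|\wbftilde(K)\|^2 - \|\wbftilde(0)\|^2$ regardless of where the updates fall; this is precisely what allows me to merge the two index sets into the full range ${\cal K}$ before cancelling, and it is the only place where the positive-definiteness of $\Abf$ and the sign of the cross term $\gammabf^T\Abf\nbf$ need to be tracked carefully.
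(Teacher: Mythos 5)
Your proposal is correct and follows essentially the same route as the paper's own proof: the same decomposition of ${\cal K}$ into ${\cal K}_{\rm up}$ and ${\cal K}_{\rm up}^c$, the same summation of the local inequality~\eqref{eq:first_eq_loccal_SM-AP} over update iterations together with the exact equalities~\eqref{eq:local_robustness_f0-SM-AP} over the remaining ones, the same telescoping cancellation, and the same final division by the nonzero denominator. The only inessential difference is your closing remark about tracking the positive-definiteness of $\Abf$ in the telescoping step; that property is consumed inside Theorem~\ref{thm:local_robustness-SM-AP} and plays no role in the global summation, which relies solely on the assumed condition $\gammabf^T\Abf\gammabf \leq 2\gammabf^T\Abf\nbf$.
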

\begin{proof}
Denote by ${\cal K} \triangleq \{ 0, 1, 2, \ldots,$ $K-1 \}$ the set of all iterations. 
Let ${\cal K}_{\rm up}\subseteq{\cal K}$ be the subset containing only the iterations in which an update 
occurs, whereas ${\cal K}_{\rm up}^c \triangleq {\cal K} \setminus {\cal K}_{\rm up}$ is comprised of the 
iterations in which the filter coefficients are not updated.

As a consequence of Theorem~\ref{thm:local_robustness-SM-AP}, when an update occurs the inequality given in~\eqref{eq:first_eq_loccal_SM-AP}
is valid provided $\gammabf$ is chosen such that $\gammabf^T\Abf\gammabf \leq 2\gammabf^T\Abf\nbf$ is respected. 
In this way, by summing such inequality for all $k \in {\cal K}_{\rm up}$ we obtain 
\begin{align}
\sum_{k \in {\cal K}_{\rm up}} \Big(\|\wbftilde(k+1)\|^2+\ebftilde^T\Abf\ebftilde \Big)
\leq \sum_{k \in {\cal K}_{\rm up}} \Big( \|\wbftilde(k)\|^2+\nbf^T\Abf\nbf\Big). \label{eq:robustness_accumulation_f1_SM-AP}
\end{align}
Observe that $\gammabf$, $\ebftilde$, $\nbf$, and $\Abf$ all depend on the independent variable $k$, which we have omitted for 
the sake of simplification.
In addition, for the iterations without coefficient update, we have~\eqref{eq:local_robustness_f0-SM-AP}, which 
can be summed for all $k \in {\cal K}_{\rm up}^c$ leading to
\begin{align}
 \sum_{k \in {\cal K}_{\rm up}^c} \| \wbftilde(k+1) \|^2 = \sum_{k \in {\cal K}_{\rm up}^c} \| \wbftilde(k) \|^2. \label{eq:robustness_accumulation_f0_SM-AP}
\end{align}
Summing~\eqref{eq:robustness_accumulation_f1_SM-AP} and~\eqref{eq:robustness_accumulation_f0_SM-AP} yields 
\begin{align}
\sum_{k \in {\cal K}} \| \wbftilde(k+1) \|^2 
+ \sum_{k \in {\cal K}_{\rm up}}\ebftilde^T\Abf\ebftilde
\leq \sum_{k \in {\cal K}} \| \wbftilde(k) \|^2 
+ \sum_{k \in {\cal K}_{\rm up}}\nbf^T\Abf\nbf. \label{eq:robustness_accumulation_SM-AP} 
\end{align}
Then, we can cancel several of the terms $\| \wbftilde(k) \|^2$ from both sides of the above inequality simplifying it as follows
\begin{align}
\| \wbftilde(K) \|^2 
+\sum_{k \in {\cal K}_{\rm up}}\ebftilde^T\Abf\ebftilde
\leq \| \wbftilde(0) \|^2 
+ \sum_{k \in {\cal K}_{\rm up}}\nbf^T\Abf\nbf.
\end{align}
Assuming a nonzero denominator, we can write the previous inequality in a compact form
\begin{align}
\frac{\| \wbftilde(K) \|^2 
+ \sum\limits_{k \in {\cal K}_{\rm up}}\ebftilde^T\Abf\ebftilde}{\| \wbftilde(0) \|^2 
+ \sum\limits_{k \in {\cal K}_{\rm up}}\nbf^T\Abf\nbf} \leq 1.
\end{align}
This relation holds for all $K$, provided $\gammabf^T\Abf\gammabf \leq 2\gammabf^T\Abf\nbf$ is satisfied for every iteration  
in which an update occurs, i.e., for every $k \in {\cal K}_{\rm up}$. 
The only assumption used in the derivation is that ${\cal K}_{\rm up}\neq\emptyset$. 
Otherwise, we would have $\| \wbftilde(K) \|^2 = \| \wbftilde(0) \|^2$, which would occur only if $\wbf(k)$ 
is never updated, which is not of practical interest.
\end{proof}

Observe that, unlike the SM-NLMS\abbrev{SM-NLMS}{Set-Membership Normalized LMS} algorithm, the SM-AP\abbrev{SM-AP}{Set-Membership Affine Projection} algorithm requires the condition $\gammabf^T\Abf\gammabf \leq 2\gammabf^T\Abf\nbf$ 
to be satisfied in order to guarantee  $l_2$-stability from its uncertainties 
$\{ \wbftilde(0), \{ n(k) \}_{0\leq k\leq K} \}$ to its errors $\{ \wbftilde(K), \{ \etilde(k) \}_{0\leq k\leq K} \}$. 
The next question is: are there constraint vectors $\gammabf$ satisfying such a condition? 
This is a very interesting point because the left-hand side (LHS)\abbrev{LHS}{Left-Hand Side} of the condition is always positive, whereas the RHS\abbrev{RHS}{Right-Hand Side} is not. 
Corollary~\ref{cor:global_robustness-SM-AP-c*n(k)} answers this question and shows an example of such a constraint vector.

\begin{cor}\label{cor:global_robustness-SM-AP-c*n(k)}
Suppose the CV\abbrev{CV}{Constraint Vector} is chosen as $\gammabf(k) = c\nbf(k)$ in the SM-AP\abbrev{SM-AP}{Set-Membership Affine Projection} algorithm, where $\nbf(k)$ is the noise vector defined 
in~\eqref{eq:desiredSignalModel-SM-AP}.  
If $0 \leq c \leq 2$, then the condition $\gammabf^T\Abf\gammabf \leq 2\gammabf^T\Abf\nbf$ always holds, implying that 
the SM-AP\abbrev{SM-AP}{Set-Membership Affine Projection} algorithm is globally robust by Corollary~\ref{cor:global_robustness-SM-AP}.
\end{cor}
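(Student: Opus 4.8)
The plan is to substitute the proposed constraint vector $\gammabf(k) = c\nbf(k)$ directly into the robustness condition $\gammabf^T\Abf\gammabf \leq 2\gammabf^T\Abf\nbf$ from Corollary~\ref{cor:global_robustness-SM-AP} and reduce it to an elementary inequality in the scalar $c$. First I would compute the left-hand side, $\gammabf^T\Abf\gammabf = c^2\,\nbf^T\Abf\nbf$, and the right-hand side, $2\gammabf^T\Abf\nbf = 2c\,\nbf^T\Abf\nbf$, so that the condition becomes $c^2\,\nbf^T\Abf\nbf \leq 2c\,\nbf^T\Abf\nbf$, or equivalently $c(c-2)\,\nbf^T\Abf\nbf \leq 0$.

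The key step is to exploit that $\Abf(k)=[\Xbf^T(k)\Xbf(k)]^{-1}$ is symmetric positive definite, as already noted in the derivation preceding Theorem~\ref{thm:local_robustness-SM-AP}. This guarantees $\nbf^T\Abf\nbf \geq 0$, with equality if and only if $\nbf = {\bf 0}$. I would then split into two cases. When $\nbf(k) = {\bf 0}$ both sides vanish and the inequality holds trivially (with equality). When $\nbf(k)\neq{\bf 0}$ we have $\nbf^T\Abf\nbf > 0$, so dividing through preserves the inequality direction and the condition collapses to $c(c-2)\leq 0$, which is satisfied precisely for $0\leq c\leq 2$.

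Having verified that $\gammabf^T\Abf\gammabf \leq 2\gammabf^T\Abf\nbf$ holds at every iteration whenever $0\leq c\leq 2$, the conclusion follows immediately by invoking Corollary~\ref{cor:global_robustness-SM-AP}: the hypothesis of that corollary is exactly this condition on the constraint vector at every updating iteration, hence global robustness of the SM-AP\abbrev{SM-AP}{Set-Membership Affine Projection} algorithm is established for this choice of $\gammabf(k)$.

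I do not expect any genuine obstacle here; the argument is a one-line substitution followed by a sign analysis of a quadratic in $c$. The only point requiring care is the degenerate case $\nbf(k)={\bf 0}$, where one cannot divide by $\nbf^T\Abf\nbf$; treating it separately ensures the non-strict inequality is still respected, which is all that Corollary~\ref{cor:global_robustness-SM-AP} requires.
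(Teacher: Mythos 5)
Your proposal is correct and follows essentially the same route as the paper: substitute $\gammabf(k)=c\nbf(k)$ into the robustness condition, reduce it to $(c^2-2c)\,\nbf^T\Abf\nbf\leq 0$, and invoke positive definiteness of $\Abf(k)$ together with Corollary~\ref{cor:global_robustness-SM-AP}. The only difference is your explicit case split on $\nbf(k)={\bf 0}$, which is harmless but unnecessary, since a nonpositive scalar $c^2-2c$ times the nonnegative quadratic form $\nbf^T\Abf\nbf$ is nonpositive without any division.
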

\begin{proof}
Substituting $\gammabf(k) =  c\nbf(k)$ in $\gammabf^T\Abf\gammabf \leq 2\gammabf^T\Abf\nbf$ leads to  
the following condition $(c^2-2c)\nbf^T(k)\Abf(k)\nbf(k)\leq 0$, which is satisfied for $c^2-2c \leq 0 \Rightarrow 0\leq c \leq 2$ since 
$\Abf(k)$ is positive definite.
Hence, due to Corollary~\ref{cor:global_robustness-SM-AP} the proposed $\gammabf(k)$ leads to a globally robust SM-AP\abbrev{SM-AP}{Set-Membership Affine Projection} algorithm.
\end{proof}

It is worth mentioning that the constraint vector $\gammabf(k)$ in Corollary~\ref{cor:global_robustness-SM-AP-c*n(k)} 
is not practical because $\nbf(k)$ is not observable. 
Therefore, Corollary~\ref{cor:global_robustness-SM-AP-c*n(k)} is actually related to the existence of $\gammabf(k)$ 
satisfying $\gammabf^T\Abf\gammabf<2\gammabf^T\Abf\nbf$.

Unlike the SM-NLMS\abbrev{SM-NLMS}{Set-Membership Normalized LMS} algorithm, the $l_2$-stability of the SM-AP\abbrev{SM-AP}{Set-Membership Affine Projection} algorithm is not guaranteed. 
Indeed, as demonstrated in Theorem~\ref{thm:local_robustness-SM-AP} and Corollary~\ref{cor:global_robustness-SM-AP}, 
a judicious choice of the CV\abbrev{CV}{Constraint Vector} is required for the SM-AP\abbrev{SM-AP}{Set-Membership Affine Projection} algorithm to be $l_2$-stable.
{\it It is worth mentioning that practical choices of $\gammabf (k)$ satisfying the robustness condition 
$\gammabf^T\Abf\gammabf \leq 2\gammabf^T\Abf\nbf$ for every iteration $k$ are not known yet!}
Even widely used CVs\abbrev{CV}{Constraint Vector}, like the simple choice CV (SC-CV)~\cite{Markus_optimalCV_sigpro2017}\abbrev{SC-CV}{Simple Choice CV}, sometimes violate this condition 
as will be shown in Section~\ref{sec:simulation-robustness}.
However, this does not mean that the SM-AP\abbrev{SM-AP}{Set-Membership Affine Projection} algorithm diverges. 
In fact, it does not diverge regardless the choice of $\gammabf(k)$, as demonstrated in the next subsection.


\subsection{The SM-AP algorithm does not diverge}\label{sub:divergence_sm_ap}

When the SM-AP\abbrev{SM-AP}{Set-Membership Affine Projection} algorithm updates (i.e., when $|e(k)| > \gammabar$), it generates $\wbf (k+1)$ as the solution to the following 
optimization problem~\cite{Werner_sm_ap_letter2001,Diniz_adaptiveFiltering_book2013}:
\begin{align}
 & \text{minimize }    \| \wbf(k+1) - \wbf(k) \|^2         \nonumber \\
 & \text{subject to }  \dbf(k) - \Xbf^T(k) \wbf(k+1) = \gammabf(k). 
\end{align}
The constraint essentially states that the a posteriori errors $\epsilon(k-l) \triangleq d(k-l) - \xbf^T(k-l) \wbf(k+1)$ are equal to 
their respective $\gamma_l(k)$, which in turn are bounded by $\gammabar$.
This leads to the following derivation:
\begin{align}
 | \epsilon(k-l) | = | d(k-l) - \xbf^T(k-l) \wbf(k+1) | & \leq \gammabar   , \nonumber \\
                     | \xbf^T(k-l) \wbftilde(k+1) + n(k-l) | & \leq \gammabar ,
\end{align}
which should be valid for all iterations and suitable values of the involved variables. Therefore, we have
\begin{align}
-\gammabar-n(k-l)&\leq \xbf^T(k-l)\wbftilde(k+1)\leq \gammabar-n(k-l).
\end{align}
Since the noise sequence is bounded and $\gammabar < \infty$, we have
\begin{align}
-\infty < \sum_{i=0}^N x_i(k-l){\tilde w}_i(k+1) < \infty,
\end{align}
where $x_i(k-l), {\tilde w}_i(k+1) \in \mathbb{R}$ denote the $i$th entry of vectors $\xbf(k-l), \wbftilde(k+1) \in \mathbb{R}^{N+1}$, respectively.
As a result, $|{\tilde w}_i(k+1)|$ is also bounded implying $\| \wbftilde(k+1) \|^2 < \infty$, which means that 
the SM-AP\abbrev{SM-AP}{Set-Membership Affine Projection} algorithm does not diverge even when its CV\abbrev{CV}{Constraint Vector} is not properly chosen. 
In Section~\ref{sec:simulation-robustness} we verify this fact experimentally by using a {\it general CV}\abbrev{CV}{Constraint Vector}, i.e., 
a CV\abbrev{CV}{Constraint Vector} whose entries are randomly chosen but satisfying $| \gamma_i (k) | \leq \gammabar$. 
Such general CV\abbrev{CV}{Constraint Vector} leads to poor performance, in comparison to the SM-AP\abbrev{SM-AP}{Set-Membership Affine Projection} algorithm using adequate CVs\abbrev{CV}{Constraint Vector}, but the algorithm 
does not diverge.

The same reasoning could be applied to demonstrate that the SM-NLMS\abbrev{SM-NLMS}{Set-Membership Normalized LMS} algorithm does not diverge as well. 
However, from Corollary~\ref{thm:global_robustness-sm-nlms} it is straightforward to verify that $\| \wbftilde(K) \|^2 < \infty$ for every $K$, 
as the denominator in~\eqref{eq:global_robustness-sm-nlms} is finite.


\section{Simulations} \label{sec:simulation-robustness}

In this section, we provide simulation results for the SM-NLMS\abbrev{SM-NLMS}{Set-Membership Normalized LMS} and SM-AP\abbrev{SM-AP}{Set-Membership Affine Projection} algorithms in order to verify their robustness properties addressed in the previous sections.  
These results are obtained by applying the aforementioned algorithms to a system identification problem. 
The unknown system $\wbf_o$ is comprised of $10$ coefficients drawn from a standard Gaussian distribution. 
The noise $n(k)$ is a zero-mean white Gaussian noise with variance $\sigma_n^2=0.01$ yielding a signal-to-noise ratio (SNR)\abbrev{SNR}{Signal-to-Noise Ratio} 
equal to $20$~dB.
The regularization factor and the initialization for the adaptive filtering coefficient vector are $\delta = 10^{-12}$ and   
$\wbf(0)=[0~\cdots~0]^T \in \mathbb{R}^{10}$, respectively. 
The error bound parameter is usually set as $\gammabar = \sqrt{5 \sigma_n^2}=0.2236$, unless otherwise stated.


\subsection{Confirming the results for the SM-NLMS algorithm} \label{subsec:simulation-robustness-sm-nlms}

Here, the input signal $x(k)$ is a zero-mean white Gaussian noise with variance equal to $1$. 
Fig.~\ref{fig:sim1-sm-nlms-robustness} aims at verifying Theorem~\ref{thm:local_robustness-sm-nlms}. 
Thus, for the iterations $k$ with coefficient update, let us denote the left-hand side (LHS)\abbrev{LHS}{Left-Hand Side} and the right-hand side (RHS)\abbrev{RHS}{Right-Hand Side} 
of~\eqref{eq:local_robustness_f1} as $g_1(k)$ and $g_2(k)$, respectively. 
In addition, to simultaneously account for~\eqref{eq:local_robustness_f0}, we define 
$g_1(k) = \| \wbftilde(k+1) \|^2$ and $g_2(k) = \| \wbftilde(k) \|^2$ for the iterations 
without coefficient update. 
Fig.~\ref{fig:sim1-sm-nlms-robustness} depicts $g_1(k)$ and $g_2(k)$ considering the system identification scenario 
described in the beginning of Section~\ref{sec:simulation-robustness}. 
In this figure, we can observe that $g_1(k) \leq g_2(k)$ for all $k$.
Indeed, we verified that $g_1(k) = g_2(k)$ (i.e., curves are overlaid) only in the iterations without update, i.e., 
$\wbf(k+1) = \wbf(k)$. 
In the remaining iterations we have $g_1(k) < g_2(k)$, corroborating Theorem~\ref{thm:local_robustness-sm-nlms}.

\begin{figure}[t!]
\centering
\includegraphics[width=1\linewidth]{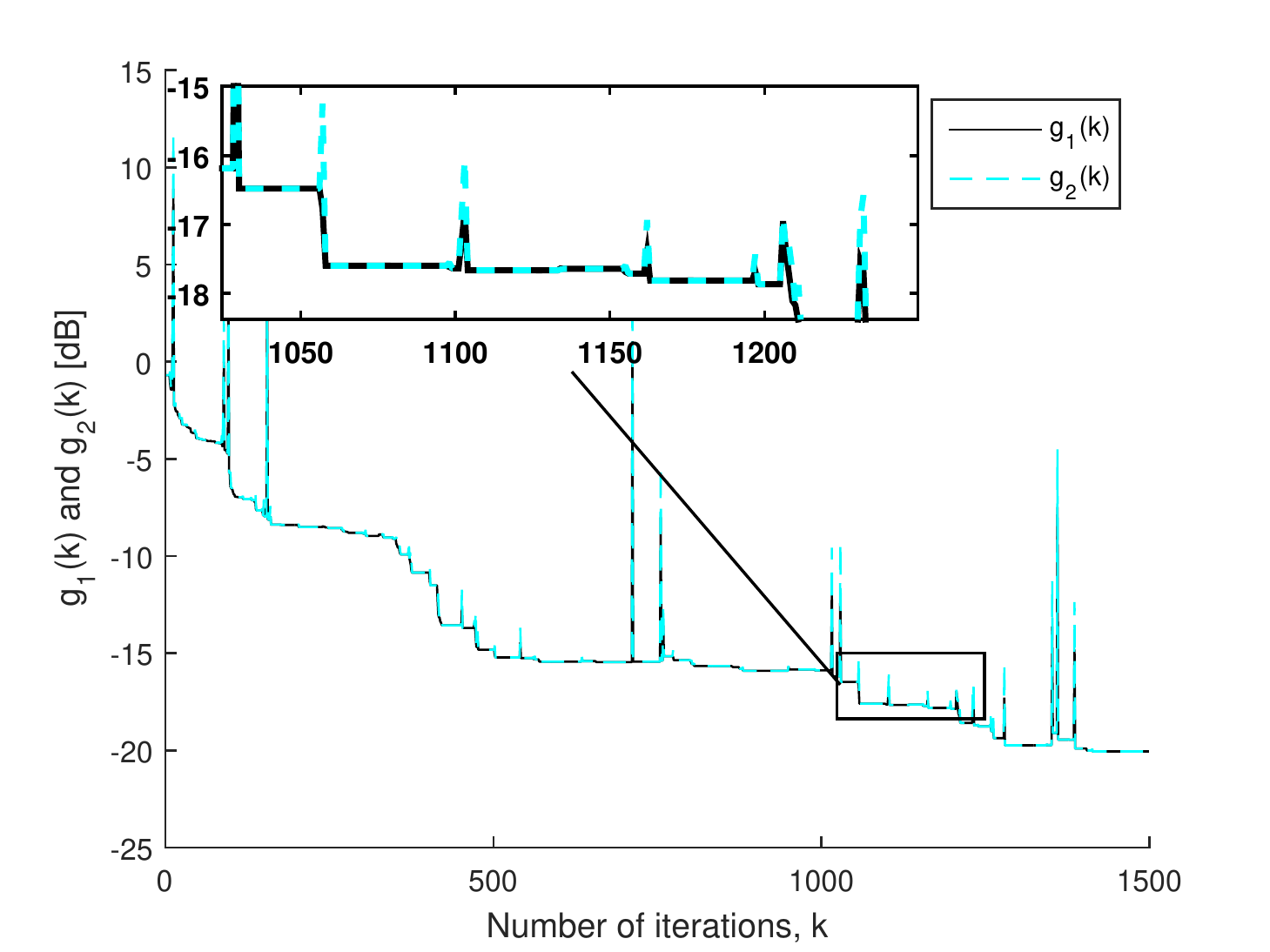}
\caption{Values of $g_1(k)$ and $g_2(k)$ over the iterations for the SM-NLMS algorithm corroborating  Theorem~\ref{thm:local_robustness-sm-nlms}.  \label{fig:sim1-sm-nlms-robustness}}
\end{figure}

Fig.~\ref{fig:sim2-sm-nlms-robustness} depicts the sequence  $\{\|\wbftilde(k)\|^2\}$ for the SM-NLMS\abbrev{SM-NLMS}{Set-Membership Normalized LMS} algorithm and its classical 
counterpart, the NLMS\abbrev{NLMS}{Normalized LMS} algorithm. 
For the SM-NLMS\abbrev{SM-NLMS}{Set-Membership Normalized LMS} algorithm, we consider three cases: fixed $\gammabar$ with unknown noise bound (blue solid line), 
fixed $\gammabar$ with known noise bound $B=0.11$ (cyan solid line), and time-varying $\gammabar(k)$, 
defined as $\sqrt{5\sigma_n^2}$ during the transient period and $\sqrt{9\sigma_n^2}$ during the steady-state, 
with unknown noise bound (green solid line). 
For the results using the time-varying $\gammabar(k)$, the window length is $E=20$, and when the number of updates in the window 
is less than 4, we assume the algorithm is in the steady-state period. 
For the NLMS\abbrev{NLMS}{Normalized LMS} algorithm, two different step-sizes are used: $\mu=0.9$, which leads to fast convergence but high misadjustment, 
and $\mu=0.05$, which leads to slow convergence but low misadjustment.

In Fig.~\ref{fig:sim2-sm-nlms-robustness}, the blue curve confirms the discussion in Subsection~\ref{sub:sm-nlms-unbounded-noise}. 
Indeed, we can observe that the sequence $\{\|\wbftilde(k)\|^2\}$ represented by this blue curve increases only 
$30$ times along the $2500$ iterations, meaning that the SM-NLMS\abbrev{SM-NLMS}{Set-Membership Normalized LMS} algorithm did not improve its estimate $\wbf(k+1)$ only 
in $30$ iterations.
Thus, in this experiment we have $\mathbb{P}[\|\wbftilde(k+1)\|^2>\|\wbftilde(k)\|^2] = 0.012$, whose value is lower than
its corresponding upper bound given by ${\rm erfc}(\sqrt{2.5})=0.0253$, as explained in Subsection~\ref{sub:sm-nlms-unbounded-noise}.
Also, we can observe that the event $\|\wbftilde(k+1)\|^2>\|\wbftilde(k)\|^2$ did not occur in the early iterations because 
in these iterations $\etilde^2(k)$ is usually large due to a significant mismatch between $\wbf(k)$ and $\wbf_o$, i.e., 
the condition specified in Corollary~\ref{cor:sm_nlms_decreasing} is frequently satisfied.

Also in Fig.~\ref{fig:sim2-sm-nlms-robustness}, the cyan curve shows that when the noise bound is known we can obtain a 
monotonic decreasing sequence $\{\|\wbftilde(k)\|^2\}$ by selecting $\gammabar \geq 2B$, corroborating 
Theorem~\ref{thm:strong-local-robustness-sm-nlms} and Corollary~\ref{cor:strong-global-robustness-sm-nlms}. 
The sequence $\{\|\wbftilde(k)\|^2\}$ represented by the green curve in Fig.~\ref{fig:sim2-sm-nlms-robustness} increases only $3$ times, 
thus confirming the advantage 
of using a time-varying $\gammabar(k)$ when the noise bound is unknown, as explained in 
Subsection~\ref{sub:sm-nlms-time-varying-gammabar}. 
As compared to the SM-NLMS\abbrev{SM-NLMS}{Set-Membership Normalized LMS} algorithm, the behavior of the sequence $\{\|\wbftilde(k)\|^2\}$ for the NLMS\abbrev{NLMS}{Normalized LMS} algorithm 
is very irregular. 
Indeed, for the NLMS\abbrev{NLMS}{Normalized LMS} algorithm there are many iterations in which $\|\wbftilde(k+1)\|^2>\|\wbftilde(k)\|^2$, even 
when using a small step-size $\mu$.
Hence, the NLMS\abbrev{NLMS}{Normalized LMS} algorithm does not use the input data as efficiently as the SM-NLMS\abbrev{SM-NLMS}{Set-Membership Normalized LMS} algorithm does, 
given that the NLMS\abbrev{NLMS}{Normalized LMS} performs many ``useless updates''.

In conclusion, an interesting advantage of the SM-NLMS\abbrev{SM-NLMS}{Set-Membership Normalized LMS} algorithm over the NLMS\abbrev{NLMS}{Normalized LMS} algorithm is that the former can achieve 
fast convergence and has a well-behaved sequence $\{\|\wbftilde(k)\|^2\}$ (which rarely increases) at the same time. 
In addition, the SM-NLMS\abbrev{SM-NLMS}{Set-Membership Normalized LMS} algorithm also saves computational resources by not updating the filter coefficients at 
every iteration. 
In Fig.~\ref{fig:sim2-sm-nlms-robustness}, the update rates of the blue, cyan, and green curves are 4.6$\%$, 1.5$\%$, and 1.9$\%$, respectively. 
They confirm that the computational cost of the SM-NLMS\abbrev{SM-NLMS}{Set-Membership Normalized LMS} algorithm is significantly lower than that of the NLMS \abbrev{NLMS}{Normalized LMS}
algorithm.\footnote{In comparison to the NLMS\abbrev{NLMS}{Normalized LMS} algorithm, whenever the SM-NLMS\abbrev{SM-NLMS}{Set-Membership Normalized LMS} algorithm updates it performs two additional operations: 
One division and one subtraction due to the computation of $\mu(k)$. However, for most of the iterations the SM-NLMS\abbrev{SM-NLMS}{Set-Membership Normalized LMS} algorithm requires fewer 
operations because it does not update often.}

\begin{figure}[t!]
\centering
\includegraphics[width=1\linewidth]{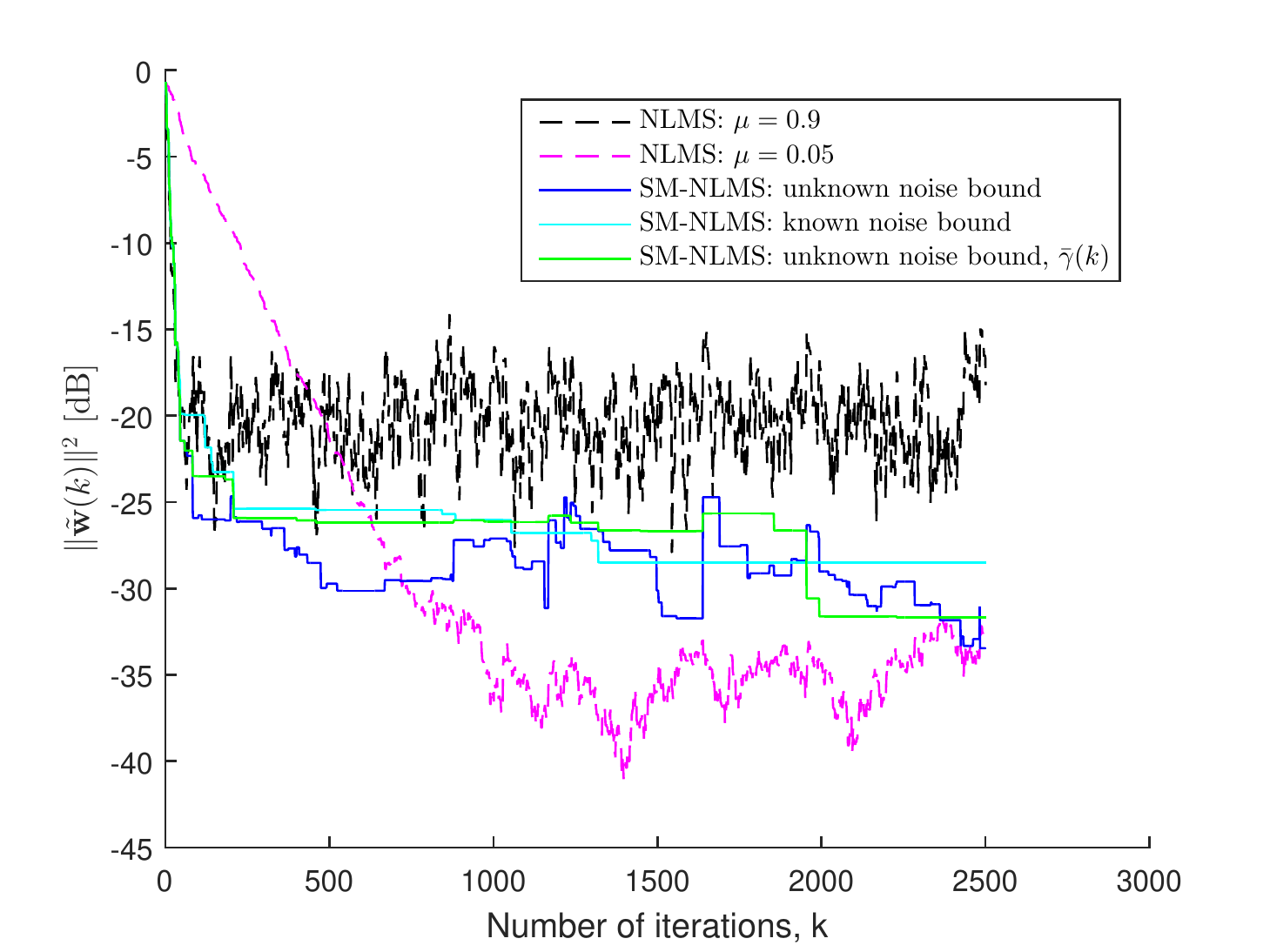}
\caption{$\|\wbftilde(k)\|^2 \triangleq \| \wbf_o - \wbf(k) \|^2$ for the NLMS and the SM-NLMS algorithms. \label{fig:sim2-sm-nlms-robustness}}
\end{figure}


\subsection{Confirming the results for the SM-AP algorithm} \label{subsec:simulation-robustness-sm-ap}

For the case of the SM-AP\abbrev{SM-AP}{Set-Membership Affine Projection} algorithm, the input is a first-order autoregressive signal generated as $x(k)=0.95x(k-1)+n(k-1)$. 
We test the SM-AP\abbrev{SM-AP}{Set-Membership Affine Projection} algorithm employing $L=2$ (i.e., reuse of two previous input data) and 
three different constraint vectors (CVs)\abbrev{CV}{Constraint Vector} $\gammabf(k)$: a general CV\abbrev{CV}{Constraint Vector}, the SC-CV,\abbrev{SC-CV}{Simple Choice CV} 
and the noise vector CV.\abbrev{CV}{Constraint Vector} 
The general CV\abbrev{CV}{Constraint Vector} $\gammabf(k)$, in which the entries are set as $\gamma_l(k) = \gammabar$ for $0\leq l \leq L$, illustrates a case 
where the CV\abbrev{CV}{Constraint Vector} is not properly chosen~\cite{Markus_edcv_eusipco2013,Markus_optimalCV_sigpro2017}. 
The SC-CV~\cite{Markus_edcv_eusipco2013,Markus_optimalCV_sigpro2017}\abbrev{SC-CV}{Simple Choice CV} is defined as 
$\gamma_0(k) = \gammabar\frac{e(k)}{|e(k)|}$ and $\gamma_l(k) = \epsilon(k-l)$ for $1 \leq l \leq L$. 
The noise vector CV\abbrev{CV}{Constraint Vector} is given by $\gammabf(k) = \nbf(k)$.

The results depicted in Figs.~\ref{fig:sim-robustness-sm-ap}, \ref{fig:sim-robustness-sm-ap-simp}, \ref{fig:sm-ap-noise-robustness}, and \ref{fig:sm-ap-simp-boounded-robustness} aim at verifying Theorem~\ref{thm:local_robustness-SM-AP} and 
Corollary~\ref{cor:global_robustness-SM-AP-c*n(k)}. 
We define $g_1(k)$ and $g_2(k)$ as the numerator and the denominator of~\eqref{eq:local_robustness_f1-SM-AP} 
in Theorem~\ref{thm:local_robustness-SM-AP}, respectively, when an update occurs; otherwise, we define 
$g_1(k) = \| \wbftilde(k+1) \|^2$ and $g_2(k) = \| \wbftilde(k) \|^2$.

\begin{figure}[t!]
\centering
\includegraphics[width=1\linewidth]{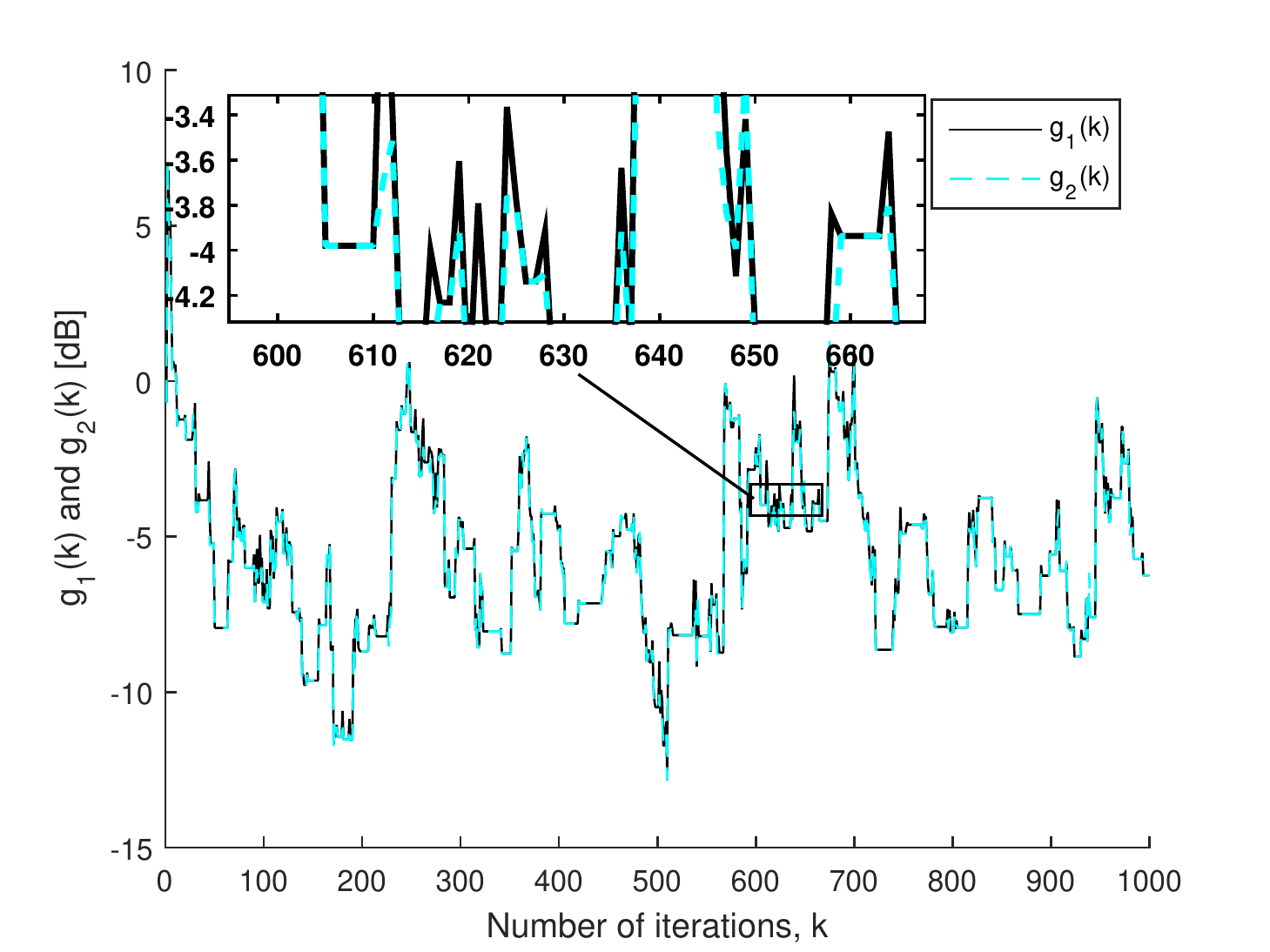}
\caption{Values of $g_1(k)$ and $g_2(k)$ over the iterations for the SM-AP algorithm with $\gammabf(k)$ as the general CV,
where $g_1(k)$ and $g_2(k)$ are the numerator and denominator of~\eqref{eq:local_robustness_f1-SM-AP} in 
Theorem~\ref{thm:local_robustness-SM-AP}, when an update occurs; otherwise, $g_1(k)=\|\wbftilde(k+1)\|^2$ and $g_2(k)=\|\wbftilde(k)\|^2$.  \label{fig:sim-robustness-sm-ap}}
\end{figure}
\begin{figure}[t!]
\centering
\includegraphics[width=1\linewidth]{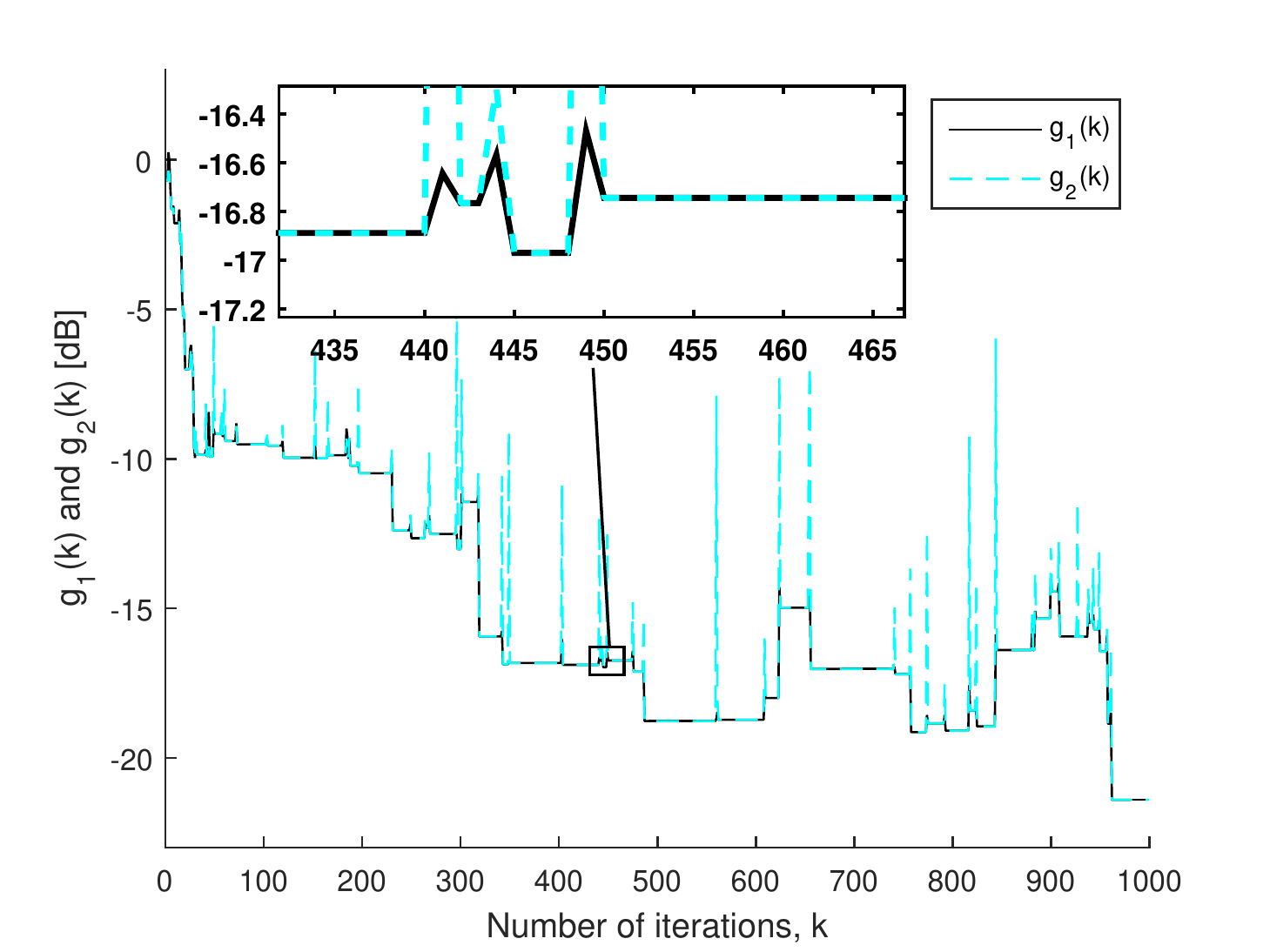}
\caption{Values of $g_1(k)$ and $g_2(k)$ over the iterations for the SM-AP algorithm with $\gammabf(k)$ as the SC-CV, 
where $g_1(k)$ and $g_2(k)$ are the numerator and denominator of~\eqref{eq:local_robustness_f1-SM-AP} in 
Theorem~\ref{thm:local_robustness-SM-AP}, when an update occurs; otherwise, $g_1(k)=\|\wbftilde(k+1)\|^2$ and $g_2(k)=\|\wbftilde(k)\|^2$.  \label{fig:sim-robustness-sm-ap-simp}}
\end{figure}
\begin{figure}[t!]
\centering
\includegraphics[width=1\linewidth]{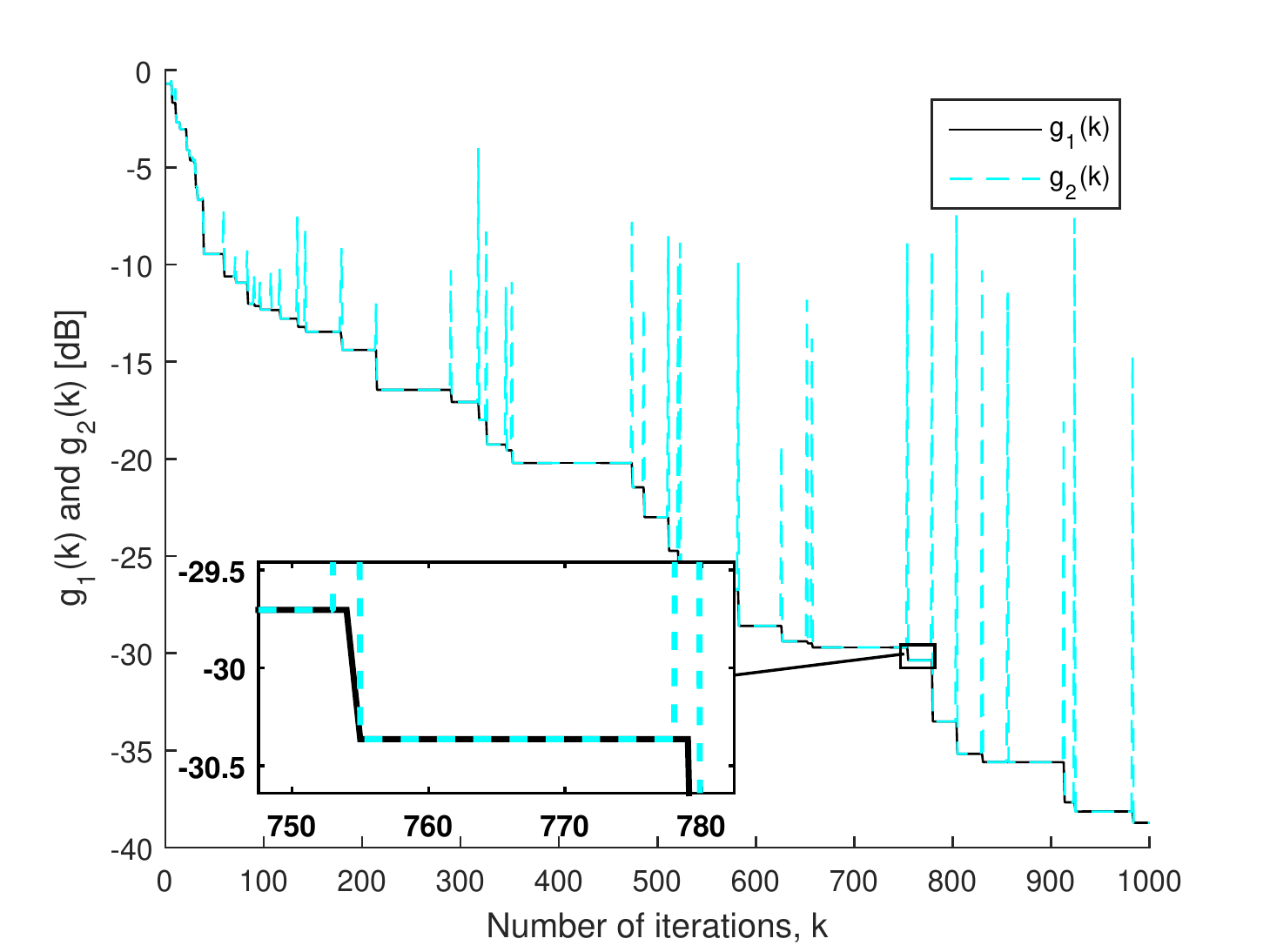}
\caption{Values of $g_1(k)$ and $g_2(k)$ over the iterations for the SM-AP algorithm with $\gammabf(k) = \nbf(k)$, 
where $g_1(k)$ and $g_2(k)$ are the numerator and denominator of~\eqref{eq:local_robustness_f1-SM-AP} in 
Theorem~\ref{thm:local_robustness-SM-AP}, when an update occurs; otherwise, $g_1(k)=\|\wbftilde(k+1)\|^2$ and $g_2(k)=\|\wbftilde(k)\|^2$.  \label{fig:sm-ap-noise-robustness}}
\end{figure}
\begin{figure}[t!]
\centering
\includegraphics[width=1\linewidth]{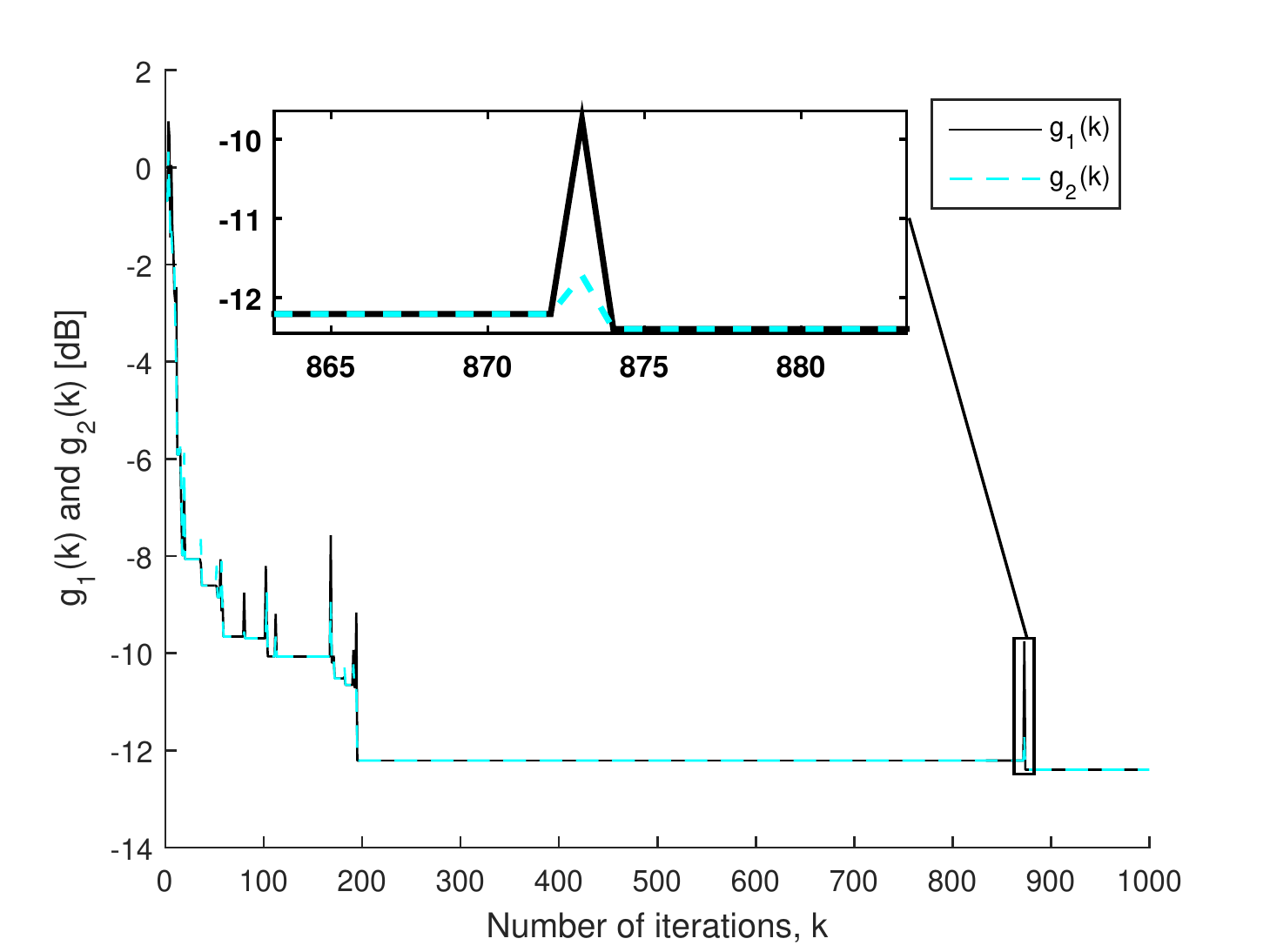}
\caption{Values of $g_1(k)$ and $g_2(k)$ over the iterations for the SM-AP algorithm with $\gammabf(k)$ as the SC-CV when the noise bound is known, where $g_1(k)$ and $g_2(k)$ are the numerator and denominator of~\eqref{eq:local_robustness_f1-SM-AP} in 
Theorem~\ref{thm:local_robustness-SM-AP}, when an update occurs; otherwise, $g_1(k)=\|\wbftilde(k+1)\|^2$ and $g_2(k)=\|\wbftilde(k)\|^2$.  \label{fig:sm-ap-simp-boounded-robustness}}
\end{figure}

The results depicted in Fig.~\ref{fig:sim-robustness-sm-ap} illustrate that, for the general CV\abbrev{CV}{Constraint Vector}, there are many iterations 
in which $g_1(k)>g_2(k)$ (about $293$ out of $1000$ iterations).
This is an expected behavior since the general CV\abbrev{CV}{Constraint Vector} does not take into account (directly or indirectly) the value of $n(k)$ and, 
therefore, it does not consider the robustness condition $\gammabf^T(k) \Abf(k) \gammabf(k) \leq 2 \gammabf^T(k) \Abf(k) \nbf(k)$.

For the SM-AP\abbrev{SM-AP}{Set-Membership Affine Projection} algorithm employing the SC-CV\abbrev{SC-CV}{Simple Choice CV}, however, there are very few iterations in which 
$g_1(k)> g_2(k)$ (only $19$ out of $1000$ iterations), as shown in Fig.~\ref{fig:sim-robustness-sm-ap-simp}. 
This means that even the widely used SC-CV\abbrev{SC-CV}{Simple Choice CV} does not lead to global robustness. 

Fig.~\ref{fig:sm-ap-noise-robustness} depicts the results for the SM-AP\abbrev{SM-AP}{Set-Membership Affine Projection} algorithm with $\gammabf(k)=\nbf(k)$. 
In this case, we can observe that $g_1(k)\leq g_2(k)$ for all $k$, corroborating Corollary~\ref{cor:global_robustness-SM-AP-c*n(k)}.
In other words, this CV\abbrev{CV}{Constraint Vector} guarantees the global robustness of the SM-AP\abbrev{SM-AP}{Set-Membership Affine Projection} algorithm. 

Fig.~\ref{fig:sm-ap-simp-boounded-robustness} illustrates $g_1(k)$ and $g_2(k)$ for the SM-AP\abbrev{SM-AP}{Set-Membership Affine Projection} algorithm with SC-CV\abbrev{SC-CV}{Simple Choice CV} when 
the noise bound is known and 10 times smaller than $\gammabar$. 
In contrast with the SM-NLMS\abbrev{SM-NLMS}{Set-Membership Normalized LMS} algorithm, for the SM-AP\abbrev{SM-AP}{Set-Membership Affine Projection} algorithm even when the noise bound is known and much smaller than $\gammabar$,  
we cannot guarantee that $g_1(k)\leq g_2(k)$ for all $k$. 
In Fig.~\ref{fig:sm-ap-simp-boounded-robustness}, for example, we observe $g_1(k)>g_2(k)$ in 15 iterations.

Fig.~\ref{fig:sm-ap-all-wtilde-robustness} depicts the sequence $\{\|\wbftilde(k)\|^2\}$ for the AP\abbrev{AP}{Affine Projection} and the SM-AP\abbrev{SM-AP}{Set-Membership Affine Projection} algorithms.
For the AP\abbrev{AP}{Affine Projection} algorithm, the step-size $\mu$ is set as 0.9 and 0.05, whereas for the SM-AP\abbrev{SM-AP}{Set-Membership Affine Projection} algorithm the three previously 
defined CVs\abbrev{CV}{Constraint Vector} are tested. 
For the AP\abbrev{AP}{Affine Projection} algorithm, we can observe an irregular behavior of $\{\|\wbftilde(k)\|^2\}$, i.e., this sequence increases and decreases 
very often. 
Even when a low value of $\mu$ is applied we still observe many iterations in which $\|\wbftilde(k+1)\|^2 > \|\wbftilde(k)\|^2$ (425 out of 1000 iterations). 
The SM-AP\abbrev{SM-AP}{Set-Membership Affine Projection} algorithm using the general CV\abbrev{CV}{Constraint Vector} performs similar to the AP\abbrev{AP}{Affine Projection} algorithm with high $\mu$. 
But when the CV\abbrev{CV}{Constraint Vector} is properly chosen, like the SC-CV\abbrev{SC-CV}{Simple Choice CV} for example, we observe that the number of iterations 
in which $\|\wbftilde(k+1)\|^2 > \|\wbftilde(k)\|^2$ is dramatically reduced (26 out of 1000 iterations), which means that the SM-AP\abbrev{SM-AP}{Set-Membership Affine Projection} with an adequate CV \abbrev{CV}{Constraint Vector}
performs fewer ``useless updates'' than the AP\abbrev{AP}{Affine Projection} algorithm. 
Another interesting, although not practical, choice of CV\abbrev{CV}{Constraint Vector} is $\gammabf(k) = \nbf(k)$, which leads to a monotonic decreasing 
sequence $\{\|\wbftilde(k)\|^2\}$.

\begin{figure}[t!]
\centering
\includegraphics[width=1\linewidth]{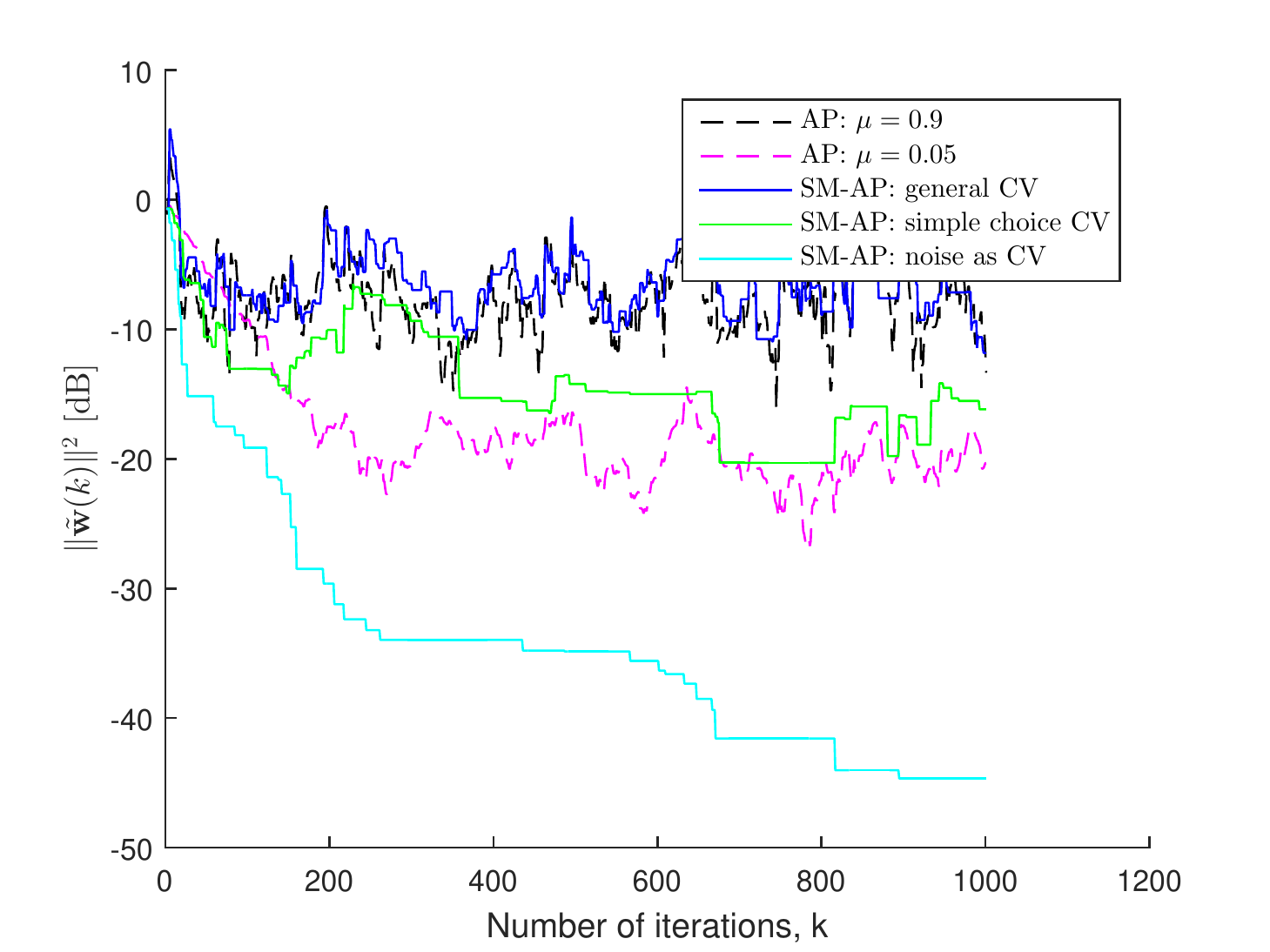}
\caption{$\|\wbftilde(k)\|^2 \triangleq \| \wbf(k) - \wbf_o \|^2$ for the AP and the SM-AP algorithms.  \label{fig:sm-ap-all-wtilde-robustness}}
\end{figure}

The MSE\abbrev{MSE}{Mean-Squared Error} learning curves for the AP\abbrev{AP}{Affine Projection} and the SM-AP\abbrev{SM-AP}{Set-Membership Affine Projection} algorithms are depicted in Fig.~\ref{fig:Robustness_learning_curves}.
These results were computed by averaging the squared error over 1000 trials for each curve. 
Observing the results of the AP\abbrev{AP}{Affine Projection} algorithm, the trade-off between convergence rate and steady-state MSE\abbrev{MSE}{Mean-Squared Error} is evident.
Indeed, excluding the SM-AP\abbrev{SM-AP}{Set-Membership Affine Projection} with general CV\abbrev{CV}{Constraint Vector} (which is not an adequate choice for the CV)\abbrev{CV}{Constraint Vector}, the AP\abbrev{AP}{Affine Projection} algorithm could not achieve fast convergence 
and low MSE\abbrev{MSE}{Mean-Squared Error} simultaneously, as the SM-AP\abbrev{SM-AP}{Set-Membership Affine Projection} algorithm did. In addition, observe that $\gammabf(k)=\nbf(k)$ leads to the best results in terms of convergence rate and steady-state MSE\abbrev{MSE}{Mean-Squared Error}, but the 
performance of the SM-AP\abbrev{SM-AP}{Set-Membership Affine Projection} with SC-CV\abbrev{SC-CV}{Simple Choice CV} is quite close.  
The average number of updates required by the SM-AP\abbrev{SM-AP}{Set-Membership Affine Projection} algorithm using the general CV\abbrev{CV}{Constraint Vector}, the SC-CV\abbrev{SC-CV}{Simple Choice CV}, and the noise CV\abbrev{CV}{Constraint Vector} are
35$\%$, 9.7$\%$, and 3.6$\%$, respectively, implying that the last two CVs\abbrev{CV}{Constraint Vector} also have lower computational cost.     
It is worth noticing that even when using the general CV\abbrev{CV}{Constraint Vector}, the SM-AP\abbrev{SM-AP}{Set-Membership Affine Projection} algorithm still converges although it presents poor performance, 
as explained in Subsection~\ref{sub:divergence_sm_ap}.

\begin{figure}[t!]
\centering
\includegraphics[width=1\linewidth]{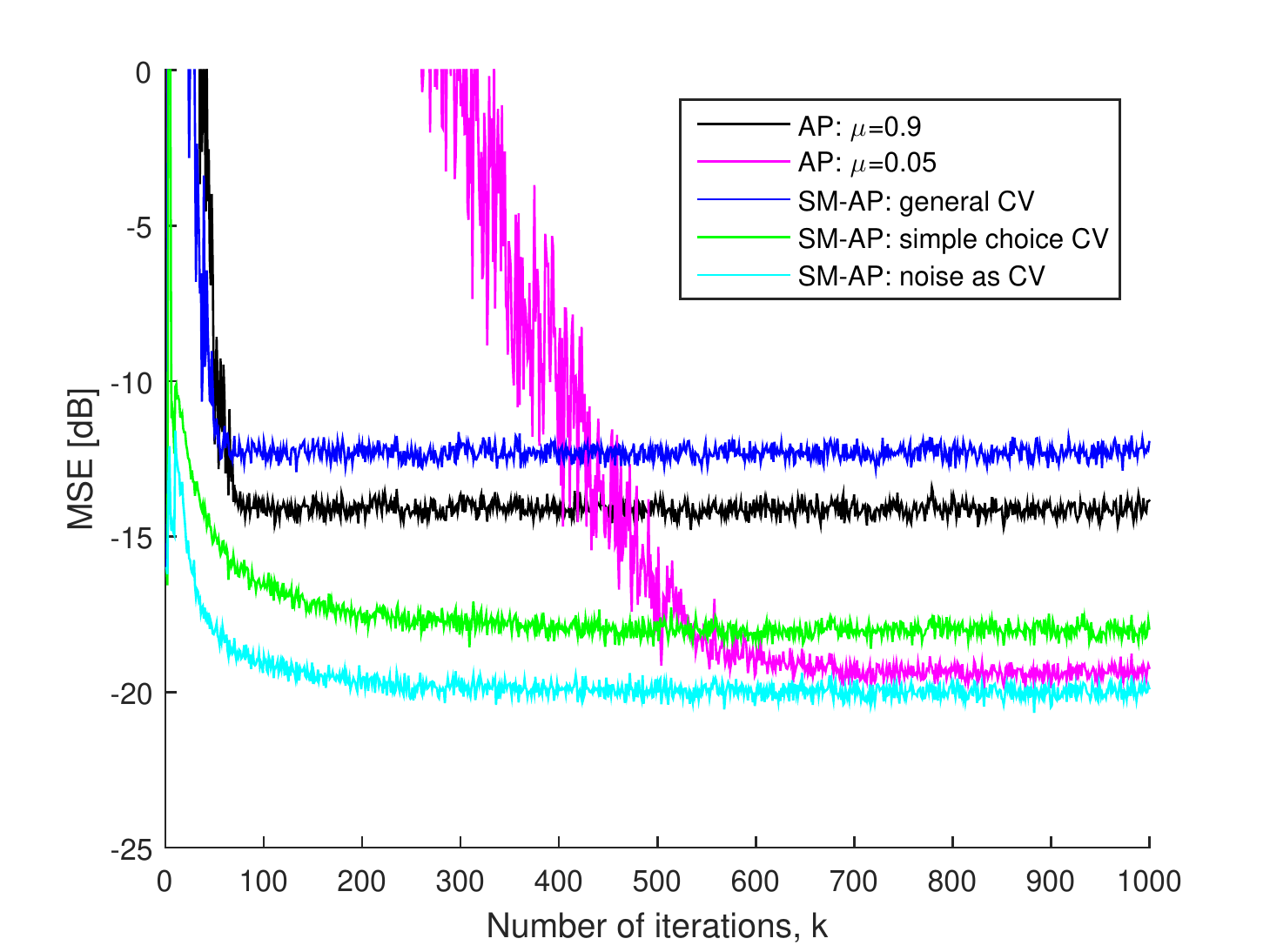}
\caption{Learning curves for the AP and SM-AP algorithm using different constraint vectors.  \label{fig:Robustness_learning_curves}}
\end{figure}


\section{Conclusion} \label{sec:conclusion-robustness}

In this chapter, we addressed the robustness (in the sense of $l_2$-stability) of the SM-NLMS\abbrev{SM-NLMS}{Set-Membership Normalized LMS} and the SM-AP\abbrev{SM-AP}{Set-Membership Affine Projection} algorithms.
In addition to the already known advantages of the SM-NLMS\abbrev{SM-NLMS}{Set-Membership Normalized LMS} algorithm over the NLMS\abbrev{NLMS}{Normalized LMS} algorithm, regarding accuracy and 
computational cost, in this chapter we demonstrated that: 
(i) the SM-NLMS\abbrev{SM-NLMS}{Set-Membership Normalized LMS} algorithm is robust regardless the choice of its parameters and 
(ii) the SM-NLMS\abbrev{SM-NLMS}{Set-Membership Normalized LMS} algorithm uses the input data very efficiently, i.e., it rarely produces a worse estimate $\wbf(k+1)$ 
during its update process. 
For the case where the noise bound is known, we explained how to set appropriately the parameter $\gammabar$ so that 
the SM-NLMS\abbrev{SM-NLMS}{Set-Membership Normalized LMS} algorithm {\it never generates a worse estimate}, i.e., the sequence $\{ \| \wbftilde(k) \|^2 \}$ (the squared Euclidean norm of the parameters deviation) becomes  
monotonously decreasing.  
For the case where the noise bound is unknown, we designed a time-varying parameter $\gammabar(k)$ that achieves simultaneously 
fast convergence and efficient use of the input data.

Unlike the SM-NLMS\abbrev{SM-NLMS}{Set-Membership Normalized LMS} algorithm, we demonstrated that there exists a condition to guarantee the $l_2$-stability 
of the SM-AP\abbrev{SM-AP}{Set-Membership Affine Projection} algorithm. 
This robustness condition depends on a parameter known as the  constraint vector (CV)\abbrev{CV}{Constraint Vector} $\gammabf(k)$. 
We proved the existence of vectors $\gammabf(k)$ satisfying such a condition, but practical choices remain unknown.
In addition, it was shown that the SM-AP\abbrev{SM-AP}{Set-Membership Affine Projection} with an adequate CV\abbrev{CV}{Constraint Vector} uses the input data more efficiently than the AP\abbrev{AP}{Affine Projection} algorithm.

We also demonstrated that both the SM-AP\abbrev{SM-AP}{Set-Membership Affine Projection} and SM-NLMS\abbrev{SM-NLMS}{Set-Membership Normalized LMS} algorithms do not diverge, even when their parameters are not properly 
selected, provided the noise is bounded. 
Finally, numerical results that corroborate our study were presented.
  \chapter{Trinion and Quaternion Set-Membership Affine Projection Algorithms}

The quaternions are a number system that extends the complex numbers. They were introduced by William Rowan Hamilton in 1843 for the first time~\cite{Hamilton_quaternion_PM1844}. Quaternions have several applications in multivariate signal processing problems, such as color image processing~\cite{Pei_crb_tsp2004,Guo_rbct_sp2011}, wind profile prediction~\cite{Took_qvstjf_RE2011,Barth_aqd_letter2014,Jiang_gqvgo_DSP2014}, and adaptive beamforming~\cite{Zhang_qvrab_SP2014}. A wide family of quaternion based algorithms have been introduced in adaptive filtering literatures~\cite{Ujang_qvnaf_TNN2011,Took_qlaafhp_TSP2009,Took_sqlfcla_icassp2009,Neto_nrcwlqa_SSP2011}. 

As a generalization of the complex domain, the quaternion domain provides a useful way to process 3- and 4-dimensional signals. Recently, several quaternion based adaptive filtering algorithms have appeared and they take benefit from the fact that the quaternion domain is a division algebra and it has a suitable data representation~\cite{Pei_quaternion_TIP1999,Bihan_quaternion_ICIP2003,Campa_quaternion_CDC2006}. Therefore, the quaternion algorithms allow a coupling between the components of 3- and 4-dimensional processes. Also, the quaternion-valued algorithm results in better performance compared to the real-valued algorithms, since it accounts for the coupling of the wind measurements and can be developed to exploit the augmented quaternion statistics~\cite{Took_qvstjf_RE2011}. As a by-product, in comparison with the real-valued algorithms in $\mathbb{R}^3$ and $\mathbb{R}^4$, they show enhanced stability and more degrees of freedom in the control of the adaptation mechanism.

However, when the signals involved in the adaptation process have only three dimensions, i.e., one real and two imaginary components, we can apply the trinion based algorithms. Using a data set for wind profile prediction, the trinion-valued least mean square (TLMS) algorithm is proposed~\cite{Guo_tdwpp_DSP2015} and its learning speed is compared with the quaternion least mean square (QLMS) algorithm~\cite{Barth_aqd_letter2014}. In the TLMS\abbrev{TLMS}{Trinion-Valued LMS} algorithm, the computational complexity is lower than QLMS\abbrev{QLMS}{Quaternion-Valued LMS} algorithm, since the implementation of a full quaternion-valued multiplication requires 16 and 12 real-valued multiplications and additions, respectively. In the trinion case, to multiply two 3-D numbers we only need 9 and 6 real-valued multiplications and additions, respectively. The quaternion affine projection (QAP)\abbrev{QAP}{Quaternion-Valued Affine Projection} algorithm~\cite{Jahanchahil_cqvapa_SP2013} has been applied to predict noncircular real-world 4-D wind, but it can also be used to 3-D profile wind prediction.

Here we consider a powerful approach to decrease the computational complexity of an adaptive filter by employing set-membership filtering (SMF)\abbrev{SMF}{Set-Membership Filtering} approach~\cite{Diniz_adaptiveFiltering_book2013,Gollamudi_smf_letter1998}. For real numbers, the set-membership NLMS~\cite{Gollamudi_smf_letter1998,Diniz_adaptiveFiltering_book2013} \abbrev{NLMS}{Normalized LMS}and AP~\cite{Werner_sm_ap_letter2001,Diniz_adaptiveFiltering_book2013,Diniz_sm_bnlms_tsp2003} algorithms were reviewed in Chapter 2. This chapter aims to generalize these algorithms to operate with  trinion and quaternion numbers. The trinion number system is not a mathematical field since there are elements which are not invertible. Therefore, to address this drawback, we replace the non-invertible element with an invertible one. In the quaternion number system, each nonzero element has inverse while the product operation is not commutative. The proposed algorithms get around these drawbacks.

Finally, we apply the trinion based algorithms to predicting the wind profile and compare their competitive performance with the quaternion based algorithms. However, the quaternion algorithms require remarkably higher computational complexity compared to their trinion counterparts. Also, we study the quaternion adaptive beamforming as an application of the quaternion-valued algorithms. In this manner, we will reduce the number of involved sensors in the adaptation mechanism. As a result, we can decrease the computational complexity and the energy consumption of the system.

Part of the content of this chapter was published in~\cite{Hamed_smtrinion-tcssII2016}. This chapter introduces new data selective adaptive filtering algorithms for trinion and quaternion number systems  $\mathbb{T}$ and $\mathbb{H}$. The work advances the set-membership trinion and quaternion-valued normalized least mean square (SMTNLMS\abbrev{SMTNLMS}{Set-Membership Trinion-Valued NLMS} and SMQNLMS)\abbrev{SMQNLMS}{Set-Membership Quaternion-Valued NLMS} and the set-membership trinion and quaternion-valued affine projection (SMTAP\abbrev{SMTAP}{Set-Membership Trinion-Valued AP} and SMQAP)\abbrev{SMQAP}{Set-Membership Quaternion-Valued AP} algorithms. Also, as individual cases, we obtain trinion and quaternion algorithms not employing the set-membership strategy. 

This chapter is organized as follows. Short introductions to quaternions and trinions are provided in Sections~\ref{sec:quaternions} and~\ref{sec:Trinions}, respectively. Section~\ref{sec:set-membership-trinion} briefly reviews the concept of SMF\abbrev{SMF}{Set-Membership Filtering} but instead of real numbers we use trinions and quaternions. The new trinion based SMTAP\abbrev{SMTAP}{Set-Membership Trinion-Valued AP} algorithm is derived in Section~\ref{sec:smtap_smnlms}. Section~\ref{sec:smqap_smqnlms} introduces the quaternion based SMQAP\abbrev{SMQAP}{Set-Membership Quaternion-Valued AP} algorithm. Section~\ref{sec:adaptive-beamforming-tr} reviews the application of quaternion-valued adaptive algorithms to adaptive beamforming. Simulations are presented in Section~\ref{sec:simulations-trinion} and Section~\ref{sec:conclusion-trinion} contains the conclusions.


\section{Quaternions} \label{sec:quaternions}

The quaternion number system is a non-commutative extension of complex numbers, denoted by $\mathbb{H}$. A quaternion $q\in\mathbb{H}$ is defined as~\cite{Hamilton_quaternion_PM1844} \symbl{$q_a$}{The real component of a quaternion $q$} \symbl{$q_b$}{The first imaginary component of a quaternion $q$} \symbl{$q_c$}{The second imaginary component of a quaternion $q$} \symbl{$q_d$}{The third imaginary component of a quaternion $q$}
\symbl{$\imath$}{The first orthogonal unit imaginary axis vector in quaternion numbers} \symbl{$\jmath$}{The second orthogonal unit imaginary axis vector in quaternion numbers} \symbl{$\kappa$}{The third orthogonal unit imaginary axis vector in quaternion numbers}
\begin{align}
q=q_a+q_b\imath+q_c\jmath+q_d\kappa,
\end{align}
where $q_a$, $q_b$, $q_c$, and $q_d$ are in $\mathbb{R}$. $q_a$ is the real component, while $q_b$, $q_c$, and $q_d$ are the three imaginary components. The orthogonal unit imaginary axis vectors $\imath$, $\jmath$, and $\kappa$ obey the following rules
\begin{align}
\imath\jmath=\kappa\qquad \jmath\kappa=\imath\qquad \kappa\imath=\jmath,\nonumber\\
\imath^2=\jmath^2=\kappa^2=\imath\jmath\kappa=-1.
\end{align}
Note that due to non-commutativity of the quaternion multiplication, we have $\jmath\imath=-\kappa\neq \imath\jmath$ for example. The element 1 is the identity element of $\mathbb{H}$, i.e., multiplication by 1 does nothing. The conjugate of a quaternion, denoted by $q^*$, is defined as \symbl{$(\cdot)^*$}{Conjugation operator}
\begin{align}
q^*=q_a-q_b\imath-q_c\jmath-q_d\kappa,
\end{align}
and the norm $|q|$ is given by
\begin{align}
|q|=\sqrt{qq^*}=\sqrt{q_a^2+q_b^2+q_c^2+q_d^2}.
\end{align}
The inverse of $q$ is introduced as 
\begin{align}
q^{-1}=\frac{q^*}{|q|^2}.
\end{align}
Observe that $q$ can be reformulated into the Cayley-Dickson~\cite{Zhang_qvrab_SP2014} form as
\begin{align}
q=\underbrace{(q_a+q_c\jmath)}_{z_1}+\imath\underbrace{(q_b+q_d\jmath)}_{z_2}, \label{eq:Cayley_Dickson}
\end{align}
where $z_1$ and $z_2$ are complex numbers.

The quaternion involutions are defined as follows~\cite{Ell_quaternion_involution_CMA2011,Mandic_quaternion_gradient_SPL2011}
\begin{align}
q^\imath=&-\imath q\imath=q_a+q_b\imath-q_c\jmath-q_d\kappa,\nonumber\\
q^\jmath=&-\jmath q\jmath=q_a-q_b\imath+q_c\jmath-q_d\kappa,\nonumber\\
q^\kappa=&-\kappa q\kappa=q_a-q_b\imath-q_c\jmath+q_d\kappa.
\end{align}
Therefore, we can present the four real components of a quaternion $q$ by the convolutions of $q$
\begin{align}
q_a=&\frac{1}{4}(q+q^\imath+q^\jmath+q^\kappa),\nonumber\\
q_b=&\frac{1}{4\imath}(q+q^\imath-q^\jmath-q^\kappa),\nonumber\\
q_c=&\frac{1}{4\jmath}(q-q^\imath+q^\jmath-q^\kappa),\nonumber\\
q_d=&\frac{1}{4\kappa}(q-q^\imath-q^\jmath+q^\kappa). \label{eq:convolution-relations-tr}
\end{align}
These expressions allow us presenting any quadrivariate or quaternion-valued function $f(q)$ as~\cite{Ell_quaternion_involution_CMA2011}
\begin{align}
f(q)=f(q_a,q_b,q_c,q_d)=f(q,q^\imath,q^\jmath,q^\kappa). \label{eq:convolution_expression_function_tr}
\end{align}

We know that the quaternion ring and $\mathbb{R}^4$ are isomorphic. Hence, by the same argument in the $\mathbb{C}\mathbb{R}$ calculus~\cite{Brandwood_gradient_FCRSP1983}, to introduce the duality between the derivatives of $f(q)\in\mathbb{H}$ and the derivatives of the corresponding quadrivariate real function $g(q_a,q_b,q_c,q_d)\in\mathbb{R}^4$, we begin with~\cite{Mandic_quaternion_gradient_SPL2011}
\begin{align}
f(q)=f_a(q_a,q_b,q_c,q_d)+&f_b(q_a,q_b,q_c,q_d)\imath+f_c(q_a,q_b,q_c,q_d)\jmath\nonumber\\+&f_d(q_a,q_b,q_c,q_d)\kappa=g(q_a,q_b,q_c,q_d).
\end{align}
The real variable function $g(q_a,q_b,q_c,q_d)$ has the following differential
\begin{align}
dg=&\frac{\partial g}{\partial q_a}dq_a+\frac{\partial g}{\partial q_b}dq_b+\frac{\partial g}{\partial q_c}dq_c+\frac{\partial g}{\partial q_d}dq_d\nonumber\\
=&\frac{\partial f(q)}{\partial q_a}dq_a+\frac{\partial f(q)}{\partial q_b}dq_b\imath+\frac{\partial f(q)}{\partial q_c}dq_c\jmath+\frac{\partial f(q)}{\partial q_d}dq_d\kappa. \label{eq:dg_tr}
\end{align}
By using the relations in~\eqref{eq:convolution-relations-tr}, the derivatives of the components of a quaternion $q$ are given by
\begin{align}
dq_a&=\frac{1}{4}(dq+dq^\imath+dq^\jmath+dq^\kappa),\nonumber\\
dq_b&=\frac{-\imath}{4}(dq+dq^\imath-dq^\jmath-dq^\kappa),\nonumber\\
dq_c&=\frac{-\jmath}{4}(dq-dq^\imath+dq^\jmath-dq^\kappa),\nonumber\\
dq_d&=\frac{-\kappa}{4}(dq-dq^\imath-dq^\jmath+dq^\kappa). \label{eq:dq_components-tr}
\end{align}
Also, using~\eqref{eq:convolution_expression_function_tr} we obtain
\begin{align}
df(q)=&\frac{\partial f(q,q^\imath,q^\jmath,q^\kappa)}{\partial q}dq+\frac{\partial f(q,q^\imath,q^\jmath,q^\kappa)}{\partial q^\imath}dq^\imath\nonumber\\
&+\frac{\partial f(q,q^\imath,q^\jmath,q^\kappa)}{\partial q^\jmath}dq^\jmath+\frac{\partial f(q,q^\imath,q^\jmath,q^\kappa)}{\partial q^\kappa}dq^\kappa. \label{eq:df-tr}
\end{align}
Therefore, by replacing the components of $dq$ from~\eqref{eq:dq_components-tr} in Equation~\eqref{eq:dg_tr}, and solving for the coefficients of $dq$, $dq^\imath$, $dq^\jmath$, $dq^\kappa$ from~\eqref{eq:dg_tr} and~\eqref{eq:df-tr}, we will obtain the $\mathbb{H}\mathbb{R}$-derivatives identities as follows
\begin{align}
\left[\begin{array}{c}\frac{\partial f(q,q^\imath,q^\jmath,q^\kappa)}{\partial q}\\\frac{\partial f(q,q^\imath,q^\jmath,q^\kappa)}{\partial q^\imath}\\\frac{\partial f(q,q^\imath,q^\jmath,q^\kappa)}{\partial q^\jmath}\\\frac{\partial f(q,q^\imath,q^\jmath,q^\kappa)}{\partial q^\kappa}\end{array}\right]=\frac{1}{4}\left[\begin{array}{cccc}1&-\imath&-\jmath&-\kappa\\1&-\imath&\jmath&\kappa\\1&\imath&-\jmath&\kappa\\1&\imath&\jmath&-\kappa\end{array}\right]\left[\begin{array}{c}\frac{\partial f}{\partial q_a}\\\frac{\partial f}{\partial q_b}\\\frac{\partial f}{\partial q_c}\\\frac{\partial f}{\partial q_d}\end{array}\right].
\end{align}
Our interest is in the derivative $\frac{\partial f(q,q^\imath,q^\jmath,q^\kappa)}{\partial q}$, thus the gradient of $f(q)$ with respect to $q$ is given by~\cite{Mandic_quaternion_gradient_SPL2011}
\begin{align}
\nabla_qf=\frac{1}{4}(\frac{\partial f}{\partial q_a}-\frac{\partial f}{\partial q_b}\imath-\frac{\partial f}{\partial q_c}\jmath-\frac{\partial f}{\partial q_d}\kappa)=\frac{1}{4}(\nabla_{q_a}f-\nabla_{q_b}f\imath-\nabla_{q_c}f\jmath-\nabla_{q_d}f\kappa).
\end{align}

The real values elements $q_a$, $q_b$, $q_c$, $q_d$ of a quaternion $q$ can be presented in terms of $q^*$, $q^{\imath^*}$, $q^{\jmath^*}$, $q^{\kappa^*}$ as follows~\cite{Mandic_quaternion_gradient_SPL2011}
\begin{align}
q_a&=\frac{1}{4}(q^*+q^{\imath^*}+q^{\jmath^*}+q^{\kappa^*}),\nonumber\\
q_b&=\frac{1}{4\imath}(-q-q^{\imath^*}+q^{\jmath^*}+q^{\kappa^*}),\nonumber\\
q_c&=\frac{1}{4\jmath}(-q+q^{\imath^*}-q^{\jmath^*}+q^{\kappa^*}),\nonumber\\
q_d&=\frac{1}{4\kappa}(-q^*+q^{\imath^*}+q^{\jmath^*}-q^{\kappa^*}).
\end{align}
Then the derivative of the function $f(q)=f(q^*,q^{\imath^*},q^{\jmath^*},q^{\kappa^*})$ can be expressed as
\begin{align}
df(q)=&\frac{\partial f(q^*,q^{\imath^*},q^{\jmath^*},q^{\kappa^*})}{\partial q^*}dq^*+\frac{\partial f(q^*,q^{\imath^*},q^{\jmath^*},q^{\kappa^*})}{\partial q^{\imath^*}}dq^{\imath^*}\nonumber\\
&+\frac{\partial f(q^*,q^{\imath^*},q^{\jmath^*},q^{\kappa^*})}{\partial q^{\jmath^*}}dq^{\jmath^*}+\frac{\partial f(q^*,q^{\imath^*},q^{\jmath^*},q^{\kappa^*})}{\partial q^{\kappa^*}}dq^{\kappa^*}.
\end{align}
Also, the derivative of the quadrivariate $g(q_a,q_b,q_c,q_d)$ is given by
\begin{align}
dg(q_a,q_b,q_c,q_d)=Adq^*+Bdq^{\imath^*}+Cdq^{\jmath^*}+Ddq^{\kappa^*}.
\end{align}
By the same argument above, if we solve for the coefficients of $dq^*$, $dq^{\imath^*}$, $dq^{\jmath^*}$, $dq^{\kappa^*}$ then we will obtain the $\mathbb{H}\mathbb{R}^*$-derivatives identities,
\begin{align}
\left[\begin{array}{c}\frac{\partial f(q^*,q^{\imath^*},q^{\jmath^*},q^{\kappa^*})}{\partial q^*}\\\frac{\partial f(q^*,q^{\imath^*},q^{\jmath^*},q^{\kappa^*})}{\partial q^{\imath^*}}\\\frac{\partial f(q^*,q^{\imath^*},q^{\jmath^*},q^{\kappa^*})}{\partial q^{\jmath^*}}\\\frac{\partial f(q^*,q^{\imath^*},q^{\jmath^*},q^{\kappa^*})}{\partial q^{\kappa^*}}\end{array}\right]=\frac{1}{4}\left[\begin{array}{cccc}1&\imath&\jmath&\kappa\\1&\imath&-\jmath&-\kappa\\1&-\imath&\jmath&-\kappa\\1&-\imath&-\jmath&\kappa\end{array}\right]\left[\begin{array}{c}\frac{\partial f}{\partial q_a}\\\frac{\partial f}{\partial q_b}\\\frac{\partial f}{\partial q_c}\\\frac{\partial f}{\partial q_d}\end{array}\right].
\end{align}
The derivative $\frac{\partial f(q^*,q^{\imath^*},q^{\jmath^*},q^{\kappa^*})}{\partial q^*}$ is of particular interest, thus the gradient of $f(q)$ with respect to $q^*$ is given by~\cite{Mandic_quaternion_gradient_SPL2011}
\begin{align}
\nabla_{q^*}f&=\frac{1}{4}(\frac{\partial f}{\partial q_a}+\frac{\partial f}{\partial q_b}\imath+\frac{\partial f}{\partial q_c}\jmath+\frac{\partial f}{\partial q_d}\kappa)=\frac{1}{4}(\nabla_{q_a}f+\nabla_{q_b}f\imath+\nabla_{q_c}f\jmath+\nabla_{q_d}f\kappa).
\end{align}


\section{Trinions} \label{sec:Trinions}   

As a group, the trinion number system $\mathbb{T}$ is isomorphic to $\mathbb{R}^3$. A number $v$ in $\mathbb{T}$ is composed of one real part, $v_a$, and two imaginary parts, $v_b$ and $v_c$, \symbl{$v_a$}{The real component of a trinion $v$} \symbl{$v_b$}{The first imaginary component of a trinion $v$} \symbl{$v_c$}{The third imaginary component of a trinion $v$} \symbl{$\bar{\imath}$}{The first orthogonal unit imaginary axis vector in trinion numbers} \symbl{$\bar{\jmath}$}{The second orthogonal unit imaginary axis vector in trinion numbers}
\begin{align}
v=v_a+v_b\bar\imath+v_c\bar\jmath.
\end{align}
The number system $\mathbb{T}$ has three operations: addition, scalar multiplication, and trinion multiplication. The sum of two elements of $\mathbb{T}$ is defined to be their sum as elements of $\mathbb{R}^3$. Similarly the product of an element of $\mathbb{T}$ by a real number is defined to be the same as the product by a scalar in $\mathbb{R}^3$. To make a commutative algebraic group of the basis elements 1, $\bar\imath$, and $\bar\jmath$ the following rules apply \cite{Assefa_tftci_SP2011}
\begin{align}
\bar\imath^2=\bar\jmath,~\bar\imath\bar\jmath=\bar\jmath\bar\imath=-1,~\bar\jmath^2=-\bar\imath.
\end{align}
Trinions with these rules set a commutative mathematical ring, i.e., $vw=wv$ for $v,w\in\mathbb{T}$. The basis element 1 will be the identity element of $\mathbb{T}$, meaning that multiplication by 1 does nothing. The conjugate of $v$ is given by~\cite{Guo_tdwpp_DSP2015}
\begin{align}
v^*=v_a-v_b\bar\jmath-v_c\bar\imath,
\end{align}
and the norm by~\cite{Guo_tdwpp_DSP2015} \symbl{$\Re(\cdot)$}{The real part of $(\cdot)$}
\begin{align}
|v|=\sqrt{\Re(vv^*)}=\sqrt{v_a^2+v_b^2+v_c^2}.
\end{align}

The inverse of $v$, if exists, is $w=(w_a+w_b\bar\imath+w_c\bar\jmath)\in\mathbb{T}$ such that $vw=wv=1$. To solve this equation we consider $v=[v_a~v_b~v_c]^T$ and $w=[w_a~w_b~w_c]^T$ then we get 
\begin{align}
\left\{\begin{array}{l}v_aw_a-v_cw_b-v_bw_c=1,\\
v_bw_a+v_aw_b-v_cw_c=0,\\
v_cw_a+v_bw_b+v_aw_c=0,\end{array}\right.
\end{align}
or in the matrix form $\Abf w=[1~0~0]^T$ where $\Abf$ is given by
\begin{align}
\Abf=\left[\begin{array}{ccc}v_a&-v_c&-v_b\\ v_b&v_a&-v_c\\ v_c&v_b&v_a\end{array}\right],
\end{align}
thus $w=\Abf^{-1}[1~0~0]^T$. When the determinant of $\Abf$ is zero, the inverse of $v$ does not exist. In order to get around this problem when the determinant of $\Abf$ is zero, we define $\Abf=\delta \Ibf$ where $\delta$ is a small positive constant and $\Ibf$ is a $3\times3$ identity matrix. Note that $\Abf$ is replaced by the identity matrix multiplied by a small constant in order to avoid numerical problems in the matrix inversion. This strategy avoids division by zero in the trinion-valued algorithms. We will now define $v^{-1}=\Abf^{-1}[1~0~0]^T$.

In the field of complex numbers, a variable $z$ and its conjugate $z^*$ can be considered as two independent variables, so that the complex-valued gradient can be defined \cite{Bos_cgh_PVISP1994}. As far as we know, the trinion involutions, $v^{\bar{\imath}}$ and $v^{\bar{\jmath}}$, are not available in general. In this chapter, we use the following formulas for the gradients of a function $f(v)$ with respect to the trinion-valued variable $v$ and its conjugate \cite{Guo_tdwpp_DSP2015}
\begin{equation}
\begin{aligned}
\nabla_vf&=\frac{1}{3}(\nabla_{v_a}f-\nabla_{v_b}f\bar\jmath-\nabla_{v_c}f\bar\imath),\\
\nabla_{v^*}f&=\frac{1}{3}(\nabla_{v_a}f+\nabla_{v_b}f\bar\imath+\nabla_{v_c}f\bar\jmath),
\end{aligned}
\end{equation}
where $v=v_a+v_b\bar\imath+v_c\bar\jmath$.


\section{Set-Membership Filtering (SMF) in $\mathbb{T}$ and $\mathbb{H}$} \label{sec:set-membership-trinion} 

The target of the SMF\abbrev{SMF}{Set-Membership Filtering} is to design $\wbf$ such that the magnitude of the estimation error is upper bounded by a predetermined parameter $\gammabar$. The value of $\gammabar$ can change with the specific application. If the value of $\gammabar$ is suitably selected, there are many valid estimates for $\wbf$. Suppose that ${\cal S}$ denotes the set of all possible input-desired data pairs $(\xbf,d)$ of interest and define $\Theta$ as the set of all vectors $\wbf$ whose magnitudes of their estimation errors are upper bounded by $\gammabar$ whenever $(\xbf,d)\in{\cal S}$. The set $\Theta$ is named feasibility set and is given by
\begin{align}
\Theta\triangleq\bigcap_{(\xbf,d)\in{\cal S}}\{\wbf\in\mathbb{F}^{N+1}:|d-\wbf^H\xbf|\leq\gammabar\},
\end{align}
where $\mathbb{F}$ is $\mathbb{T}$ or $\mathbb{H}$. 
Let's define the constraint set ${\cal H}(k)$ consisting of all vectors $\wbf$ such that their estimation errors at time instant $k$ are upper bounded in magnitude by $\gammabar$,
\begin{align}
{\cal H}(k)\triangleq\{\wbf\in\mathbb{F}^{N+1}:|d(k)-\wbf^H\xbf(k)|\leq\gammabar\}.
\end{align}
The membership set $\psi(k)$ defined as
\begin{align}
\psi(k)\triangleq\bigcap_{i=0}^k{\cal H}(i)\label{eq:set-membership_set-trinion}
\end{align}
will include $\Theta$ and will coincide with $\Theta$ if all data pairs in ${\cal S}$ are traversed up to time instant $k$. Owing to difficulties to compute $\psi(k)$, adaptive approaches are required~\cite{Gollamudi_smf_letter1998}. The easiest route is to compute a point estimate using, for example, the information provided by the constraint set ${\cal H}(k)$ like in the set-membership NLMS\abbrev{NLMS}{Normalized LMS} algorithm~\cite{Gollamudi_smf_letter1998}, or several previous constraint sets as is done in the set-membership affine projection algorithm~\cite{Werner_sm_ap_letter2001}.


\section{SMTAP Algorithm} \label{sec:smtap_smnlms} 

In this section, we propose the SMTAP\abbrev{SMTAP}{Set-Membership Trinion-Valued AP} algorithm. This trinion-valued algorithm is the counterpart of the real-valued SM-AP\abbrev{SM-AP}{Set-Membership Affine Projection} algorithm. Then we derive the update equations for the simpler algorithms related to the normalized LMS \abbrev{LMS}{Least-Mean-Square}algorithm.

The membership set $\psi(k)$ defined in (\ref{eq:set-membership_set-trinion}) encourages the use of more constraint sets in the update. Therefore, we elaborate an algorithm whose updates belong to a set composed of $L+1$ constraint sets. 

For this purpose, we express $\psi(k)$ as
\begin{align}
\psi(k)=\bigcap_{i=0}^{k-L-1}{\cal H}(i)\bigcap_{j=k-L}^k{\cal H}(j)=\psi^{k-L-1}(k)\bigcap\psi^{L+1}(k), \label{eq:divide_intersection-trinion}
\end{align}
where $\psi^{L+1}(k)$ indicates the intersection of the $L+1$ last constraint sets, and $\psi^{k-L-1}(k)$ represents the intersection of the first $k-L$ constraint sets. Our goal is to formulate an algorithm whose coefficient update belongs to the last $L+1$ constraint sets, i.e., $\wbf(k+1)\in\psi^{L+1}(k)$. \symbl{$\psi^{L+1}(k)$}{The intersection of the $L+1$ last constraint sets}

Assume that ${\cal S}(k-i)$ denotes the set which includes all vectors $\wbf$ such that $d(k-i)-\wbf^H\xbf(k-i)=\gamma_i(k)$, for $i=0,\cdots,L$. All choices for $\gamma_i(k)$ satisfying the bound constraint are valid. That is, if all $\gamma_i(k)$ are selected such that $|\gamma_i(k)|\leq\gammabar$, then ${\cal S}(k-i)\in{\cal H}(k-i)$, for $i=0,\cdots,L$.

The objective function which we ought to minimize can now be stated. A coefficient update is implemented whenever $\wbf(k)\not\in\psi^{L+1}(k)$ as follows
\begin{align}
&\min\frac{1}{2}\|\wbf(k+1)-\wbf(k)\|^2\nonumber\\
&\text{subject to:}\nonumber\\
&\dbf(k)-(\wbf^H(k+1)\Xbf(k))^T=\gammabf(k),\label{eq:constraint-trinion}
\end{align} 
where 

\begin{tabular}{ll}
$\dbf(k)\in\mathbb{T}^{(L+1)\times1}$& contains the desired output from the $L+1$ last \\&time instants;\\
$\gammabf(k)\in\mathbb{T}^{(L+1)\times1}$&specifies the point in $\psi^{L+1}(k)$;\\
$\Xbf(k)\in\mathbb{T}^{(N+1)\times(L+1)}$&contains the corresponding input vectors, i.e.,
\end{tabular}
\begin{equation}
\begin{aligned}
\dbf(k)&=[d(k)~d(k-1)~\cdots~d(k-L)]^T,\\
\gammabf(k)&=[\gamma_0(k)~\gamma_1(k)~\cdots~\gamma_L(k)]^T,\\
\Xbf(k)&=[\xbf(k)~\xbf(k-1)~\cdots~\xbf(k-L)], \label{eq:pack-trinion}
\end{aligned}
\end{equation}
with $\xbf(k)$ being the input-signal vector
\begin{align}
\xbf(k)=[x(k)~x(k-1)~\cdots~x(k-N)]^T. \label{eq:x(k)-trinion}
\end{align}
If we use the method of Lagrange multipliers to transform a  constrained minimization into an unconstrained one, then we have to minimize
\begin{align}
F[\wbf(k+1)]=&\frac{1}{2}\|\wbf(k+1)-\wbf(k)\|^2\nonumber\\
&+\Re\{\lambdabf^T(k)[\dbf(k)-(\wbf^H(k+1)\Xbf(k))^T-\gammabf(k)]\}, \label{eq:objective-trinion}
\end{align}
where $\lambdabf(k)\in\mathbb{T}^{(L+1)\times1}$ is a vector of Lagrange multipliers. To find the minimum solution, we must calculate the following gradient
\begin{align}
\nabla_{\wbf^*(k+1)}F[\wbf(k+1)]=&\frac{1}{3}\Big[\nabla_{\wbf_a(k+1)}F[\wbf(k+1)]+\nabla_{\wbf_b(k+1)}F[\wbf(k+1)]\bar\imath\nonumber\\
&+\nabla_{\wbf_c(k+1)}F[\wbf(k+1)]\bar\jmath\Big].\label{eq:gradient-trinion}
\end{align}
In order to find the above gradient, we ought to calculate the cost function $F[\wbf(k+1)]$ as a function of real-valued variables. As a result we have,
\begin{align}
\|\wbf(k+1)-\wbf(k)\|^2=&\|\wbf_a(k+1)-\wbf_a(k)\|^2+\|\wbf_b(k+1)-\wbf_b(k)\|^2\nonumber\\
&+\|\wbf_c(k+1)-\wbf_c(k)\|^2.\label{eq:first_part_cost-trinion}
\end{align}
We drop the time index '$k$' for the sake of compact notation. In order to find the second term in (\ref{eq:objective-trinion}) as a real-valued term we perform the following calculations,
\begin{align}
\Re&\{\lambdabf^T[\dbf-\Xbf^T\wbf^*(k+1)-\gammabf]\}
=\Re\{(\lambdabf_a^T+\lambdabf_b^T\bar\imath+\lambdabf_c^T\bar\jmath)[(\dbf_a+\dbf_b\bar\imath+\dbf_c\bar\jmath)\nonumber\\
&-(\Xbf_a^T+\Xbf_b^T\bar\imath+\Xbf_c^T\bar\jmath)(\wbf_a(k+1)-\wbf_b(k+1)\bar\jmath-\wbf_c(k+1)\bar\imath)-(\gammabf_a+\gammabf_b\bar\imath+\gammabf_c\bar\jmath)]\}\nonumber\\
=&\Re\{(\lambdabf_a^T+\lambdabf_b^T\bar\imath+\lambdabf_c^T\bar\jmath)[(\dbf_a-\Xbf_a^T\wbf_a(k+1)-\Xbf_b^T\wbf_b(k+1)-\Xbf_c^T\wbf_c(k+1)-\gammabf_a)\nonumber\\
&+(\dbf_b-\Xbf_b^T\wbf_a(k+1)+\Xbf_a^T\wbf_c(k+1)-\Xbf_c^T\wbf_b(k+1)-\gammabf_b)\bar\imath\nonumber\\
&+(\dbf_c-\Xbf_c^T\wbf_a(k+1)+\Xbf_a^T\wbf_b(k+1)+\Xbf_b^T\wbf_c(k+1)-\gammabf_c)\bar\jmath]\}\nonumber\\
=&\lambdabf_a^T(\dbf_a-\Xbf_a^T\wbf_a(k+1)-\Xbf_b^T\wbf_b(k+1)-\Xbf_c^T\wbf_c(k+1)-\gammabf_a)\nonumber\\
&-\lambdabf_b^T(\dbf_c-\Xbf_c^T\wbf_a(k+1)+\Xbf_a^T\wbf_b(k+1)+\Xbf_b^T\wbf_c(k+1)-\gammabf_c)\nonumber\\
&-\lambdabf_c^T(\dbf_b-\Xbf_b^T\wbf_a(k+1)+\Xbf_a^T\wbf_c(k+1)-\Xbf_c^T\wbf_b(k+1)-\gammabf_b).\label{eq:second_part_cost-trinion}
\end{align}
Therefore, by (\ref{eq:objective-trinion}), (\ref{eq:first_part_cost-trinion}), and (\ref{eq:second_part_cost-trinion}) we obtain
\begin{align}
F[\wbf(k+1)]=\frac{1}{2}\text{Eq.}\eqref{eq:first_part_cost-trinion}+\text{Eq.}\eqref{eq:second_part_cost-trinion}.
\end{align}
Thus, the three component-wise gradients can be attained as
\begin{align}
\nabla_{\wbf_a(k+1)}F[\wbf(k+1)]=&(\wbf_a(k+1)-\wbf_a(k))-\lambdabf_a^T\Xbf_a^T+\lambdabf_b^T\Xbf_c^T+\lambdabf_c^T\Xbf_b^T,\label{eq:component_gradient1-trinion}\\
\nabla_{\wbf_b(k+1)}F[\wbf(k+1)]=&(\wbf_b(k+1)-\wbf_b(k))-\lambdabf_a^T\Xbf_b^T-\lambdabf_b^T\Xbf_a^T+\lambdabf_c^T\Xbf_c^T,\label{eq:component_gradient2-trinion}\\
\nabla_{\wbf_c(k+1)}F[\wbf(k+1)]=&(\wbf_c(k+1)-\wbf_c(k))-\lambdabf_a^T\Xbf_c^T-\lambdabf_b^T\Xbf_b^T-\lambdabf_c^T\Xbf_a^T.\label{eq:component_gradient3-trinion}
\end{align}
On the other hand, we have
\begin{align}
\Xbf\lambdabf=&(\Xbf_a+\Xbf_b\bar\imath+\Xbf_c\bar\jmath)(\lambdabf_a+\lambdabf_b\bar\imath+\lambdabf_c\bar\jmath)\nonumber\\
=&(\Xbf_a\lambdabf_a-\Xbf_b\lambdabf_c-\Xbf_c\lambdabf_b)+(\Xbf_a\lambdabf_b+\Xbf_b\lambdabf_a-\Xbf_c\lambdabf_c)\bar\imath\nonumber\\
&+(\Xbf_a\lambdabf_c+\Xbf_c\lambdabf_a+\Xbf_b\lambdabf_b)\bar\jmath.\label{eq:xlambda-trinion}
\end{align}
Overall, by employing Equations (\ref{eq:gradient-trinion}) and (\ref{eq:component_gradient1-trinion})-(\ref{eq:xlambda-trinion}), we get,

\begin{align}
\nabla_{\wbf^*(k+1)}F[\wbf(k+1)]=&\frac{1}{3}\{[(\wbf_a(k+1)-\wbf_a(k))-(\Xbf(k)\lambdabf(k))_a]\nonumber\\
&+[(\wbf_b(k+1)-\wbf_b(k))-(\Xbf(k)\lambdabf(k))_b]\bar\imath\nonumber\\
&+[(\wbf_c(k+1)-\wbf_c(k))-(\Xbf(k)\lambdabf(k))_c]\bar\jmath\}\nonumber\\
=&\frac{1}{3}[\wbf(k+1)-\wbf(k)-\Xbf(k)\lambdabf(k)].
\end{align}
After setting the above equation equal to zero, we obtain
\begin{align}
\wbf(k+1)=\wbf(k)+\Xbf(k)\lambdabf(k).\label{(eq:update_with_lambda-trinion)}
\end{align}
If we substitute (\ref{(eq:update_with_lambda-trinion)}) in the constraint relation (\ref{eq:constraint-trinion}) the following expression results,
\begin{align}
\Xbf^T(k)\Xbf^*(k)\lambdabf^*(k)=\dbf(k)-\Xbf^T(k)\wbf^*(k)-\gammabf(k)=(\ebf(k)-\gammabf(k)).
\end{align}
From the above equation we get $\lambdabf(k)$ as
\begin{align}
\lambdabf(k)=(\Xbf^H(k)\Xbf(k))^{-1}(\ebf(k)-\gammabf(k))^*,\label{eq:lambda-trinion}
\end{align}
where
\begin{align}
\ebf(k)&=[e(k)~\epsilon(k-1)~\cdots~\epsilon(k-L)]^T, \label{eq:e_ap-trinion}
\end{align}
with $e(k)=d(k)-\wbf^H(k)\xbf(k)$, and $\epsilon(k-i)=d(k-i)-\wbf^H(k)\xbf(k-i)$ for $i=1,\cdots,L$. We can now conclude the SMTAP\abbrev{SMTAP}{Set-Membership Trinion-Valued AP} algorithm by starting from (\ref{(eq:update_with_lambda-trinion)}) with $\lambdabf(k)$ being given by (\ref{eq:lambda-trinion}), i.e.,
\begin{align}
\wbf(k+1)=\left\{\begin{array}{ll}\wbf(k)+\pbf_{\rm ap}(k)&\text{if}~|e(k)|>\gammabar,\\\wbf(k)&\text{otherwise},\end{array}\right.\label{eq:update_SMTAP}
\end{align}
where
\begin{align}
\pbf_{\rm ap}(k)&=\Xbf(k)(\Xbf^H(k)\Xbf(k))^{-1}(\ebf(k)-\gammabf(k))^*\label{eq:P(k)-trinion}.
\end{align}

{\it Remark 1:} In order to check if an update $\wbf(k+1)$ is required, we only have to test if $\wbf(k)\not\in{\cal H}(k)$ since in the previous updates $\wbf(k)\in{\cal H}(k-i+1)$ is guaranteed for $i=2,\cdots,L+1$.

{\it Remark 2:} For the initial time instants $k<L+1$, i.e., during initialization, only the knowledge of ${\cal H}(i)$ for $i=0,1,\cdots,k$ is available. As a consequence, if an update is required for $k<L+1$, the algorithm is implemented with the available $k+1$ accessible constraint sets.

{\it Remark 3:} By adopting the bound $\gammabar=0$, the algorithm will convert to the trinion affine projection (TAP)\abbrev{TAP}{Trinion-Valued Affine Projection} algorithm with unity step size which is the generalization of the conventional real-valued AP\abbrev{AP}{Affine Projection} algorithm in $\mathbb{T}$. Therefore, the TAP\abbrev{TAP}{Trinion-Valued Affine Projection} algorithm can be described as
\begin{align}
\wbf(k+1)=\wbf(k)+\mu \pbf'_{\rm ap}(k),\label{eq:TAP}
\end{align} 
where $\mu$ is the convergence factor and 
\begin{align}
\pbf'_{\rm ap}(k)=\Xbf(k)(\Xbf^H(k)\Xbf(k))^{-1}\ebf^*(k).
\end{align}

Note that we can utilize (\ref{eq:update_SMTAP}) and derive the update equation of the SMTNLMS\abbrev{SMTNLMS}{Set-Membership Trinion-Valued NLMS} algorithm. In this case we have to evade data-reusing in (\ref{eq:update_SMTAP}), $L=0$, so that the updating equation becomes,
\begin{align}
\wbf(k+1)=\left\{\begin{array}{ll}\wbf(k)+\pbf(k)&\text{if}~|e(k)|>\gammabar,\\\wbf(k)&\text{otherwise},\end{array}\right.\label{eq:smtnlms_without_norm}
\end{align}
where
\begin{align}
\pbf(k)&=\xbf(k)(\xbf^H(k)\xbf(k))^{-1}(e(k)-\gamma(k))^*,\\
e(k)&=d(k)-\wbf^H(k)\xbf(k). \label{eq:e-trinion}
\end{align}
We will now choose $\gamma(k)=\frac{\gammabar e(k)}{|e(k)|}$, hence from (\ref{eq:smtnlms_without_norm}) we attain the SMTNLMS\abbrev{SMTNLMS}{Set-Membership Trinion-Valued NLMS} update equation as
\begin{align}
\wbf(k+1)=\wbf(k)+\mu(k)\xbf(k)(\xbf^H(k)\xbf(k))^{-1}e^*(k),\label{eq:smtnlms}
\end{align}
where
\begin{align}
\mu(k)&=\left\{\begin{array}{ll}1-\frac{\gammabar}{|e(k)|}&\text{if}~|e(k)|>\gammabar,\\0&\text{otherwise}.\end{array}\right. \label{eq:mu-trinion}
\end{align}
Recalling that the normalized LMS \abbrev{LMS}{Least-Mean-Square}algorithm can be derived as a particular case of AP\abbrev{AP}{Affine Projection} algorithm for $L=0$.

{\it Remark 4:} By choosing the bound $\gammabar=0$ in (\ref{eq:smtnlms}), the algorithm will reduce to the TNLMS\abbrev{TNLMS}{Trinion-Valued Normalized LMS} algorithm with unity step size which is the generalization of the popular real-valued NLMS\abbrev{NLMS}{Normalized LMS} algorithm in $\mathbb{T}$. As a result, TNLMS\abbrev{TNLMS}{Trinion-Valued Normalized LMS} algorithm can be described as

\begin{align}
\wbf(k+1)=\wbf(k)+\mu\xbf(k)(\xbf^H(k)\xbf(k))^{-1}e^*(k),
\end{align}
where $\mu$ is the convergence factor.


\section{SMQAP Algorithm} \label{sec:smqap_smqnlms}
 
This section outlines the derivation of the SMQAP\abbrev{SMQAP}{Set-Membership Quaternion-Valued AP} algorithm. Then we obtain an update equation for the SMQNLMS\abbrev{SMQNLMS}{Set-Membership Quaternion-Valued NLMS} algorithm that follows the same steps as the derivation of the SMTNLMS\abbrev{SMTNLMS}{Set-Membership Trinion-Valued NLMS} algorithm. The SMQAP\abbrev{SMQAP}{Set-Membership Quaternion-Valued AP} and the SMQNLMS\abbrev{SMQNLMS}{Set-Membership Quaternion-Valued NLMS} algorithms are the quaternion versions of the real-valued SM-AP\abbrev{SM-AP}{Set-Membership Affine Projection} and SM-NLMS\abbrev{SM-NLMS}{Set-Membership Normalized LMS} algorithms, respectively.

The membership set $\psi(k)$ introduced in (\ref{eq:set-membership_set-trinion}) suggests the use of more constraint sets in the update. Let us express $\psi(k)$ as in (\ref{eq:divide_intersection-trinion}), our purpose is to derive an algorithm whose coefficient update belongs to the last $L+1$ constraint set, i.e., $\wbf(k+1)\in\psi^{L+1}(k)$. Suppose that ${\cal S}(k-i)$ describes the set which contains all vectors $\wbf$ such that $d(k-i)-\wbf^H\xbf(k-i)=\gamma_i(k)$, for $i=0,\cdots,L$. All choices for $\gamma_i(k)$ satisfying the bound constraint are valid. That is, if all $\gamma_i(k)$ are chosen such that $|\gamma_i(k)|\leq\gammabar$, then ${\cal S}(k-i)\in{\cal H}(k-i)$, for $i=0,\cdots,L$.

The objective function to be minimized in case of the SMQAP\abbrev{SMQAP}{Set-Membership Quaternion-Valued AP} algorithm can be stated as follows: perform a coefficient update whenever $\wbf(k)\not\in\psi^{L+1}(k)$ as in Equation (\ref{eq:constraint-trinion}). Note that $\dbf(k),\gammabf(k)\in\mathbb{H}^{(L+1)\times1}$, $\Xbf(k)\in\mathbb{H}^{(N+1)\times(L+1)}$, and $\xbf(k)$ are defined as in (\ref{eq:pack-trinion}) and (\ref{eq:x(k)-trinion}).

By employing the method of Lagrange multipliers, the unconstrained function to be minimized becomes as in Equation (\ref{eq:objective-trinion}), where $\lambdabf(k)\in\mathbb{H}^{(L+1)\times1}$ is a vector of Lagrange multipliers. After setting the gradient of $F[\wbf(k+1)]$ with respect to $\wbf^*(k+1)$ equal to zero, we will get the equation
\begin{align}
\wbf(k+1)=\wbf(k)+\Xbf(k)\lambdabf(k). \label{eq:update_smqap_with_lambda}
\end{align}
Then, by invoking the constraints in (\ref{eq:constraint-trinion}), the expression of $\lambdabf(k)$ is as
\begin{align}
\lambdabf(k)=(\Xbf^H(k)\Xbf(k))^{-1}(\ebf(k)-\gammabf(k))^*, \label{eq:lambda_q-trinion}
\end{align}
where $\ebf(k)$ is defined as in (\ref{eq:e_ap-trinion}). Finally, the update equation for the SMQAP\abbrev{SMQAP}{Set-Membership Quaternion-Valued AP} algorithm is given by
\begin{align}
\wbf(k+1)=\left\{\begin{array}{ll}\wbf(k)+\qbf_{\rm ap}(k)&\text{if}~|e(k)|>\gammabar,\\\wbf(k)&\text{otherwise},\end{array}\right.\label{eq:update_SMQAP}
\end{align}
where
\begin{align}
\qbf_{\rm ap}(k)&=\Xbf(k)(\Xbf^H(k)\Xbf(k))^{-1}(\ebf(k)-\gammabf(k))^*\label{eq:Q(k)-trinion}.
\end{align}

Note that the {\it Remarks 1} and {\it 2} of Subsection \ref{sec:smtap_smnlms} also apply to the SMQAP\abbrev{SMQAP}{Set-Membership Quaternion-Valued AP} algorithm.

{\it Remark 5:} We can quickly verify that adopting the bound $\gammabar=0$, the algorithm will reduce to QAP\abbrev{QAP}{Quaternion-Valued Affine Projection} algorithm \cite{Jahanchahil_cqvapa_SP2013} with unity step size. Therefore, the QAP\abbrev{QAP}{Quaternion-Valued Affine Projection} algorithm cab be expressed as
\begin{align}
\wbf(k+1)=\wbf(k)+\mu \Xbf(k)(\Xbf^H(k)\Xbf(k))^{-1}\ebf^*(k),\label{eq:QAP}
\end{align} 
where $\mu$ is the convergence factor.

Note that we can use the SMQAP\abbrev{SMQAP}{Set-Membership Quaternion-Valued AP} algorithm to derive the update equation of the SMQNLMS\abbrev{SMQNLMS}{Set-Membership Quaternion-Valued NLMS} algorithm. In fact, the SMQNLMS\abbrev{SMQNLMS}{Set-Membership Quaternion-Valued NLMS} does not require data-reusing as the SMQAP\abbrev{SMQAP}{Set-Membership Quaternion-Valued AP} algorithm \cite{Gollamudi_smf_letter1998}, thus by taking $L=0$ and  $\gamma(k)=\frac{\gammabar e(k)}{|e(k)|}$ we obtain the update equation of the SMQNLMS\abbrev{SMQNLMS}{Set-Membership Quaternion-Valued NLMS} algorithm as
\begin{align}
\wbf(k+1)=\wbf(k)+\mu(k)\|\xbf(k)\|^{-2}\xbf(k)e^*(k),\label{eq:smqnlms_without_norm}
\end{align}
where $e(k)$ and $\mu(k)$ are defined as in (\ref{eq:e-trinion}) and (\ref{eq:mu-trinion}), respectively.

{\it Remark 6:} By adopting the bound $\gammabar=0$ in (\ref{eq:smqnlms_without_norm}), the algorithm will reduce to the QNLMS\abbrev{QNLMS}{Quaternion-Valued Normalized LMS} algorithm with unity step size. Therefore, the QNLMS\abbrev{QNLMS}{Quaternion-Valued Normalized LMS} algorithm can be described as
\begin{align}
\wbf(k+1)=\wbf(k)+\mu\|\xbf(k)\|^{-2}\xbf(k)e^*(k),\label{eq:qnlms_without_regularization}
\end{align}
where $\mu$ is the convergence factor.

The computational complexity for each update of the weight vector of the trinion based and quaternion based adaptive filtering algorithms are listed in Table \ref{tab:complexity}. The filter length and the memory length are $N$ and $L$, respectively. Also, Figures \ref{fig:Complexity_L_trinion} and \ref{fig:Complexity_N_trinion} show a comparison between the total number of real multiplications and additions required by the TAP\abbrev{TAP}{Trinion-Valued Affine Projection} and the QAP\abbrev{QAP}{Quaternion-Valued Affine Projection} algorithms for two cases: $N=15$, variable $L$ and $L=3$, variable $N$. As can be seen, the trinion model can efficiently decrease the computational complexity in comparison with the quaternion model, whenever the problem at hand suits both the quaternion and trinion solutions.

\begin{table*} [!t]
\caption{{COMPUTATIONAL COMPLEXITY PER UPDATE OF THE WEIGHT VECTOR}}\label{tab:complexity}
\begin{center}
 \begin{tabular}{|c|c|c|}
 \hline
 Algorithm & Real Multiplications & Real additions \\
 \hline 
 QNLMS & $20N+4$ & $20N-1$ \\\hline
  QAP & $32L^3+16NL^2+16L^2$&$32L^3+16NL^2+4L^2$\\&$+19NL+26L$ & $+16NL+8L$ \\\hline
   TNLMS & $12N+3$ & $12N-1$ \\\hline
   TAP & $18L^3+9NL^2+9L^2$&$18L^3+9NL^2$\\&$+11NL+50L$ & $+9NL+39L$ \\
 \hline
 \end{tabular}
 \end{center}
 \end{table*}
 
 \begin{figure}[t!]
 \centering
 \subfigure[b][]{\includegraphics[width=.48\linewidth,height=7cm]{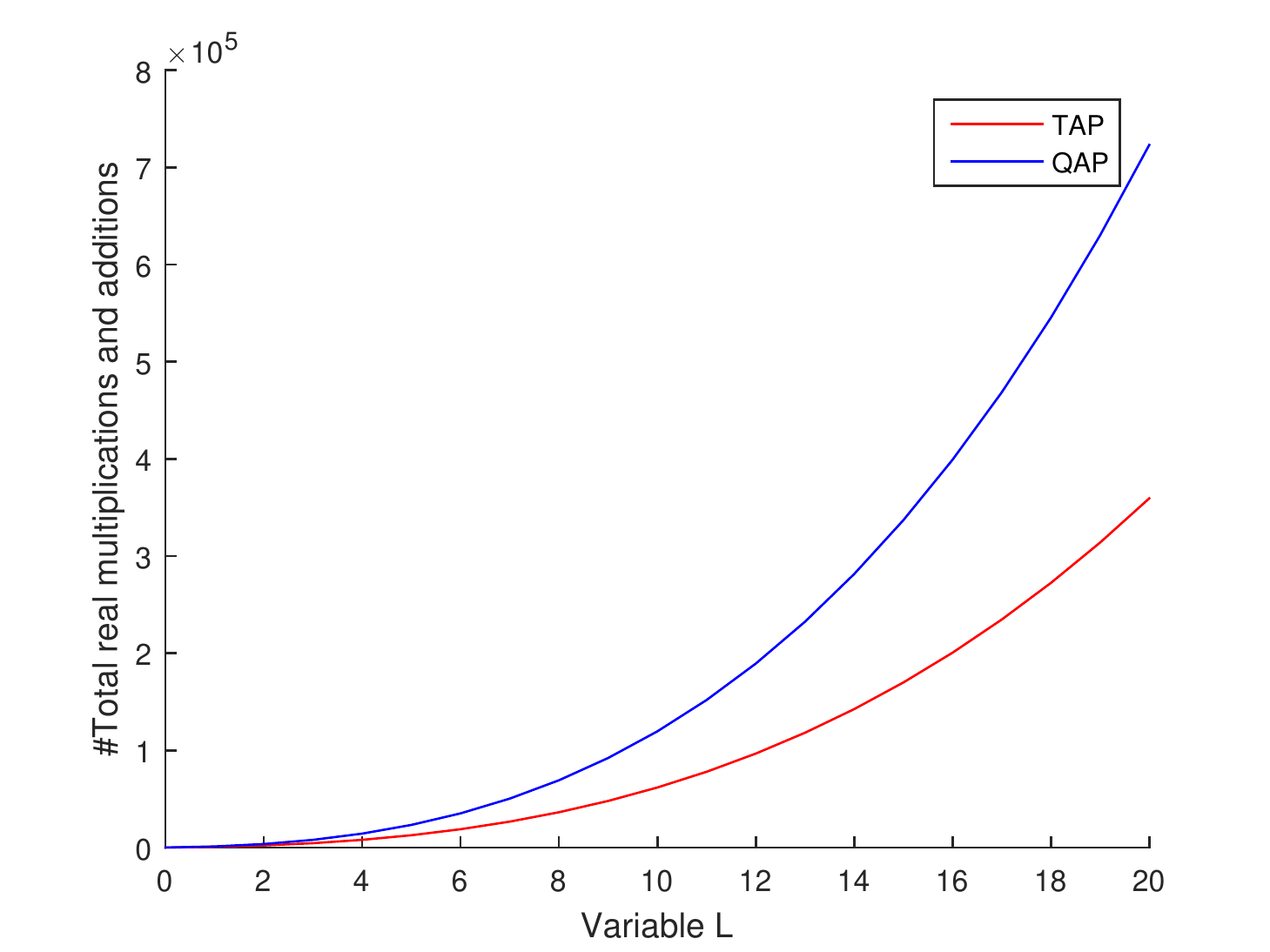}
 \label{fig:Complexity_L_trinion}}
 \subfigure[b][]{\includegraphics[width=.48\linewidth,height=7cm]{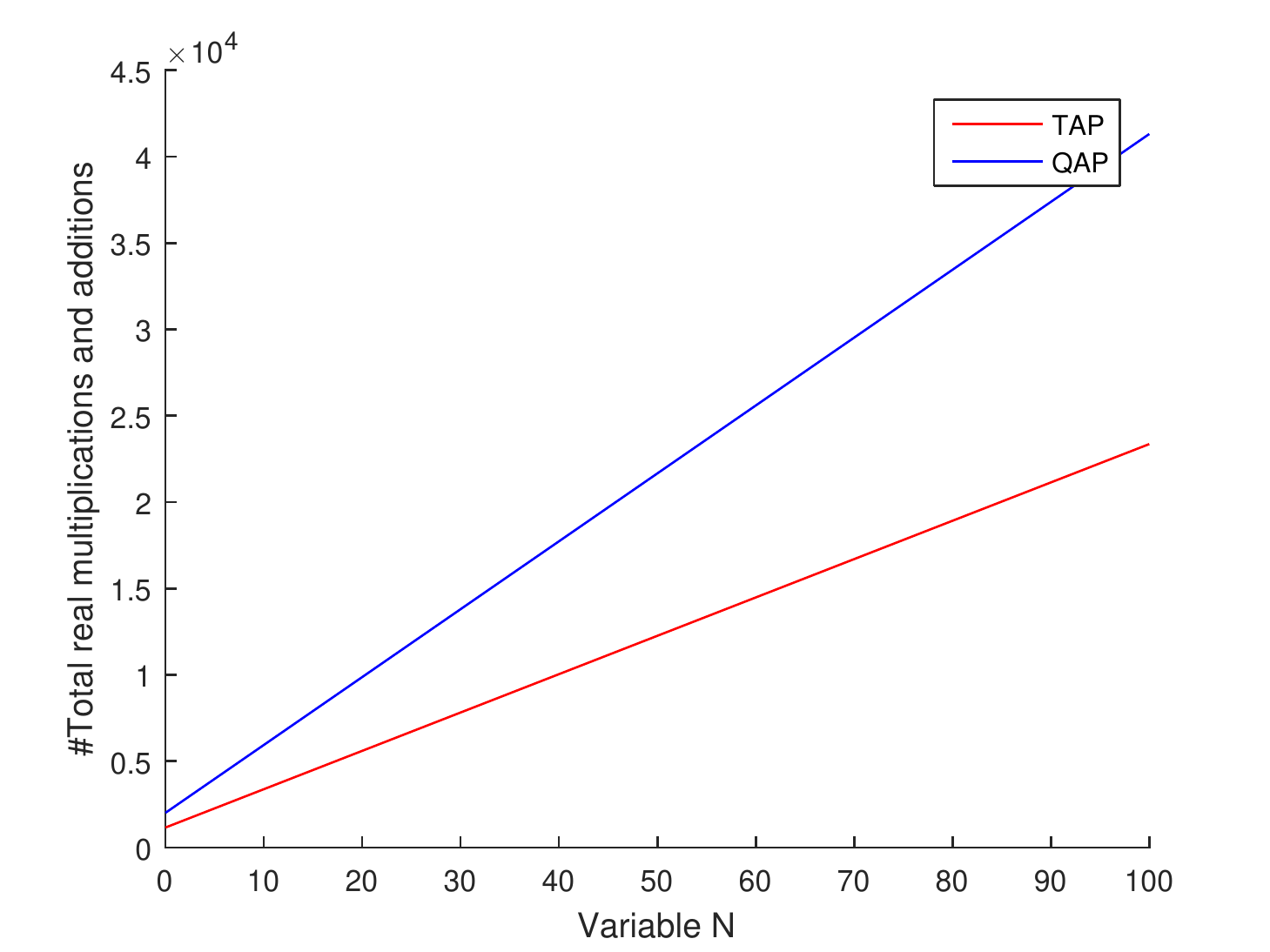}
 \label{fig:Complexity_N_trinion}}
 \caption{The numerical complexity of the TAP and the QAP algorithms for two cases: (a) $N=15$, variable $L$; (b) $L=3$, variable $N$.  \label{fig:Complexity-trinion}}
 \end{figure}
 
 
\section{Application of quaternion-valued adaptive \\algorithms to adaptive beamforming} \label{sec:adaptive-beamforming-tr}
 
As an illustration for the use of quaternions, we can study its application to adaptive beamforming. By utilizing the crossed-dipole array and quaternions, we can decrease the number of engaged sensors in the adaptive beamforming process. Therefore, the computational complexity and the energy consumption of the system will reduce without losing the quality of the performance~\cite{Jiang-phdthesis,Gou_beamformer_MAPE2011,Tao_beamformer_TAES2013,Tao_beamformer_MPE2014}.
   
A uniform linear array (ULA)\abbrev{ULA}{Uniform Linear Array} is illustrated in Figure~\ref{fig:beamformer-tr}~\cite{Jiang-phdthesis,Jiang_gqvgo_DSP2014}. It contains $M$ crossed-dipole pairs, they are placed on $y$-axis and the distance between neighboring antennas is $d$. At each position, the two crossed components are parallel to $x$-axis and $y$-axis, respectively. The direction of arrival (DOA)\abbrev{DOA}{Direction of Arrival} of a far-field incident signal is defined by the angles $\theta$ and $\phi$. Assume that this signal impinges upon the array from the $y$-$z$ plane. Thus, $\phi=\frac{\pi}{2}$ or $-\frac{\pi}{2}$, and $0\leq\theta\leq\frac{\pi}{2}$. As a consequence, the spatial steering vector for this far-field incident signal is given by \symbl{$\sbf_c(\theta,\phi)$}{The spatial steering vector for a far-field incident signal in adaptive beamforming}
\begin{align}
\sbf_c(\theta,\phi)=[1,e^{-\jmath2\pi d\sin\theta\sin\phi/\lambda},\cdots,e^{-\jmath2\pi(M-1) d\sin\theta\sin\phi/\lambda}]^T,
\end{align}
where $\lambda$ stands for the wavelength of the incident signal. For a crossed-dipole the spatial-polarization coherent vector can be expressed by~\cite{Compton_beamformer_TAP1981,Li_beamformer_TAP1991}
\begin{align}
\sbf_p(\theta,\phi,\gamma,\eta)=\left\{\begin{array}{ll}[-\cos\gamma,\cos\theta\sin\gamma e^{\jmath\eta}]&{\rm for~}\phi=\frac{\pi}{2},\\ $[$\cos\gamma,-\cos\theta\sin\gamma e^{\jmath\eta}]&{\rm for~}\phi=-\frac{\pi}{2},\end{array}\right.
\end{align}
where $\gamma\in[0,\frac{\pi}{2}]$ and $\eta\in[-\pi,\pi]$ are the auxiliary polarization angle and the polarization phase difference, respectively.
 
\begin{figure}[t!]
\begin{center}
\includegraphics[width=0.7\linewidth]{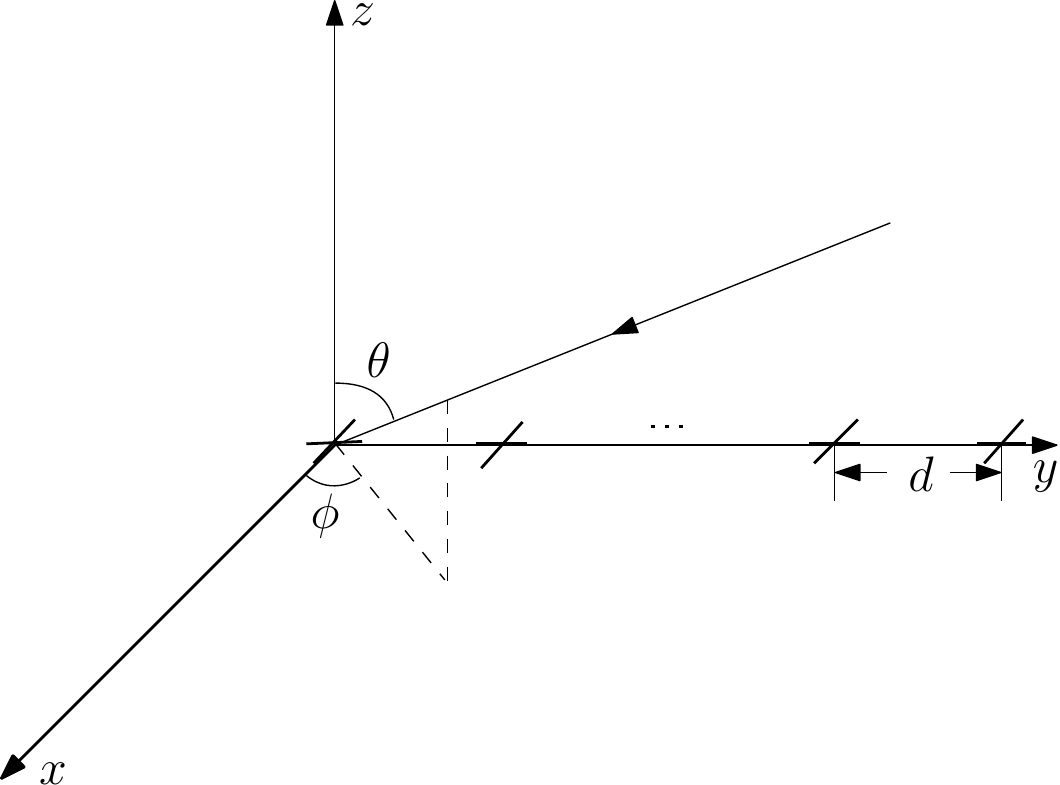}
\caption{A ULA with crossed-dipole~\cite{Jiang-phdthesis}.}
\label{fig:beamformer-tr}
\end{center}
\end{figure}
 
We can divide the array structure into two sub-arrays so that one of them is parallel to the $x$-axis and the other one is parallel to the $y$-axis. Then the complex-valued steering vector parallel to the $x$-axis is presented as
\begin{align}
\sbf_x(\theta,\phi,\gamma,\eta)=\left\{\begin{array}{ll}-\cos\gamma\sbf_c(\theta,\phi)&{\rm for~}\phi=\frac{\pi}{2},\\ \cos\gamma\sbf_c(\theta,\phi)&{\rm for~}\phi=-\frac{\pi}{2},\end{array}\right.
\end{align}
and the one parallel to the $y$-axis is given by
\begin{align}
\sbf_y(\theta,\phi,\gamma,\eta)=\left\{\begin{array}{ll}\cos\theta\sin\gamma e^{\jmath\eta}\sbf_c(\theta,\phi)&{\rm for~}\phi=\frac{\pi}{2},\\-\cos\theta\sin\gamma e^{\jmath\eta}\sbf_c(\theta,\phi)&{\rm for~}\phi=-\frac{\pi}{2}.\end{array}\right.
\end{align}
 
Using the Cayley-Dickson formula~\eqref{eq:Cayley_Dickson}, we can combine $\sbf_x(\theta,\phi,\gamma,\eta)$ and $\sbf_y(\theta,\phi,\gamma,\eta)$ together, we obtain a quaternion-valued steering vector as follows
\begin{align}
\sbf_q(\theta,\phi,\gamma,\eta)=\sbf_x(\theta,\phi,\gamma,\eta)+\imath \sbf_y(\theta,\phi,\gamma,\eta).
\end{align}
The response of the array for the quaternion-valued weight vector $\wbf$ is given as below
\begin{align}
r(\theta,\phi,\gamma,\eta)=\wbf^H\sbf_q(\theta,\phi,\gamma,\eta).
\end{align}
 
In the case of reference signal based quaternion-valued adaptive beamforming, the reference signal $d(k)$ is available. Therefore, the response of the array is the quaternion-valued beamformer output and it is defined as $y(k)=\wbf^H(k)\xbf(k)$, where $\xbf(k)$ is the received quaternion-valued vector sensor signals and $\wbf(k)$ is the quaternion-valued weigh vector. Also, the quaternion-valued error signal can be defined as $e(k)=d(k)-y(k)$. 
 
 
\section{Simulations} \label{sec:simulations-trinion} 
 
In this section, we apply the proposed algorithms to two scenarios. Scenario 1 verifies the performance of the trinion based and the quaternion based algorithms when they are used to wind profile prediction. In Scenario 2, we implement quaternionic adaptive beamforming by quaternion-valued algorithms.

\subsection{Scenario 1} \label{sub:wind-trinion}
 
In this scenario, all the proposed algorithms in this chapter are applied to anemometer readings provided by Google's RE$<$C Initiative \cite{Google_wind}. The wind speed recorded on May 25, 2011, is utilized for the algorithms comparisons. The step size, $\mu$, is selected to be $10^{-8}$ for the TLMS\abbrev{TLMS}{Trinion-Valued LMS} and the QLMS\abbrev{QLMS}{Quaternion-Valued LMS} algorithms and 0.9 for the TNLMS\abbrev{TNLMS}{Trinion-Valued Normalized LMS}, the TAP\abbrev{TAP}{Trinion-Valued Affine Projection}, the QNLMS\abbrev{QNLMS}{Quaternion-Valued Normalized LMS}, and the QAP\abbrev{QAP}{Quaternion-Valued Affine Projection} algorithms, and $\gammabar$ is set to be 5. Also, the threshold bound vector $\gammabf(k)$ is selected as {\it simple choice constraint vector}~\cite{Markus_sparseSMAP_tsp2014} which is defined as $\gamma_0(k)=\frac{\gammabar e(k)}{|e(k)|}$ and $\gamma_i(k)=d(k-i)-\wbf^T(k)\xbf(k-i)$, for $i=1,\cdots,L$. The filter length is 8, the memory length, $L$, and the prediction step are chosen equal to 1. All algorithms are initialized with zeros. 

The predicted results provided by trinion and quaternion based algorithms are shown in Figures~\ref{fig:Trinion_wind} and~\ref{fig:Quaternion_wind}, respectively. The learning curves using the TNLMS\abbrev{TNLMS}{Trinion-Valued Normalized LMS}, the SMTNLMS\abbrev{SMTNLMS}{Set-Membership Trinion-Valued NLMS}, the TAP\abbrev{TAP}{Trinion-Valued Affine Projection}, and the SMTAP\abbrev{SMTAP}{Set-Membership Trinion-Valued AP} algorithms are shown in Figures~\ref{fig:error_nlms-trinion} and~\ref{fig:error_ap-trinion}. Also, for comparison between the trinion and the quaternion based algorithms, the learning curves related to the TNLMS\abbrev{TNLMS}{Trinion-Valued Normalized LMS}, the QNLMS\abbrev{QNLMS}{Quaternion-Valued Normalized LMS}, the TAP\abbrev{TAP}{Trinion-Valued Affine Projection}, and the QAP\abbrev{QAP}{Quaternion-Valued Affine Projection} algorithms are depicted in Figures~\ref{fig:q_vs_t_nlms-trinion} and~\ref{fig:q_vs_t_ap-trinion}. 

The average of implementation times and the number of updates performed by the trinion and the quaternion based algorithms are presented in Table~\ref{tab:update_rate}. From the results, we can observe that all algorithms can track the wind data efficiently; however, the trinion based algorithms need a shorter time for implementation compared to their corresponding quaternion based algorithms. Also, we can observe that the set-membership based versions of the TNLMS\abbrev{TNLMS}{Trinion-Valued Normalized LMS}, the QNLMS\abbrev{QNLMS}{Quaternion-Valued Normalized LMS}, the TAP\abbrev{TAP}{Trinion-Valued Affine Projection}, and the QAP\abbrev{QAP}{Quaternion-Valued Affine Projection} algorithms have a low number of updates. Therefore, the set-membership algorithms can save energy effectively.
 
\begin{table} [!t]
\caption{The Average of implementation times and the number of updates for the trinion and the quaternion based algorithms using MATLAB software}\label{tab:update_rate}
\begin{center}
\begin{tabular}{|c|c|c|c|c|c|}
\hline
Algorithm & Time & Update & Algorithm & Time & Update \\           & (second) & rate   &       & (second)     & rate   \\
\hline 
TLMS & 2.45 & 100$\%$ & QLMS & 7.2 & 100$\%$ \\\hline
TNLMS & 8 & 100$\%$ & QNLMS & 9.4 & 100$\%$ \\\hline
TAP & 67 & 100$\%$ & QAP & 142 & 100$\%$ \\\hline
\hspace{-0.1cm}SMTNLMS\hspace{-0.1cm} & 3.8 & 17.92$\%$ & \hspace{-0.1cm}SMQNLMS\hspace{-0.1cm} & 9.2 & 17.87$\%$ \\\hline
SMTAP & 13 & 6.52$\%$ & SMQAP & 20.1 & 6.34$\%$ \\\hline
\end{tabular}
\end{center}
\end{table}
  
  \begin{figure}[t!]
  \begin{center}
  \includegraphics[width=1\linewidth] {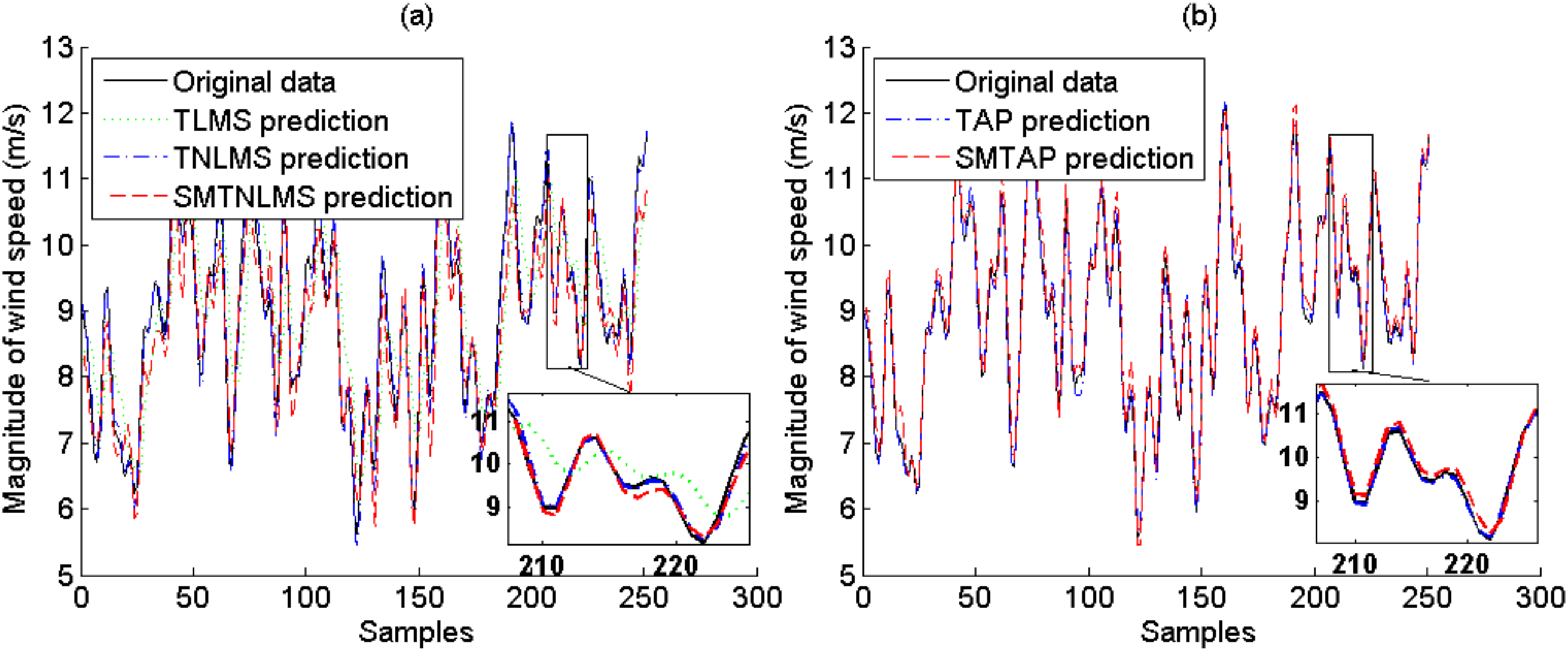} 
  \caption{Predicted results from the trinion based algorithms.}
  \label{fig:Trinion_wind}
  \end{center}
  \end{figure}
  
  \begin{figure}[t!]
  \begin{center}
  \includegraphics[width=1\linewidth] {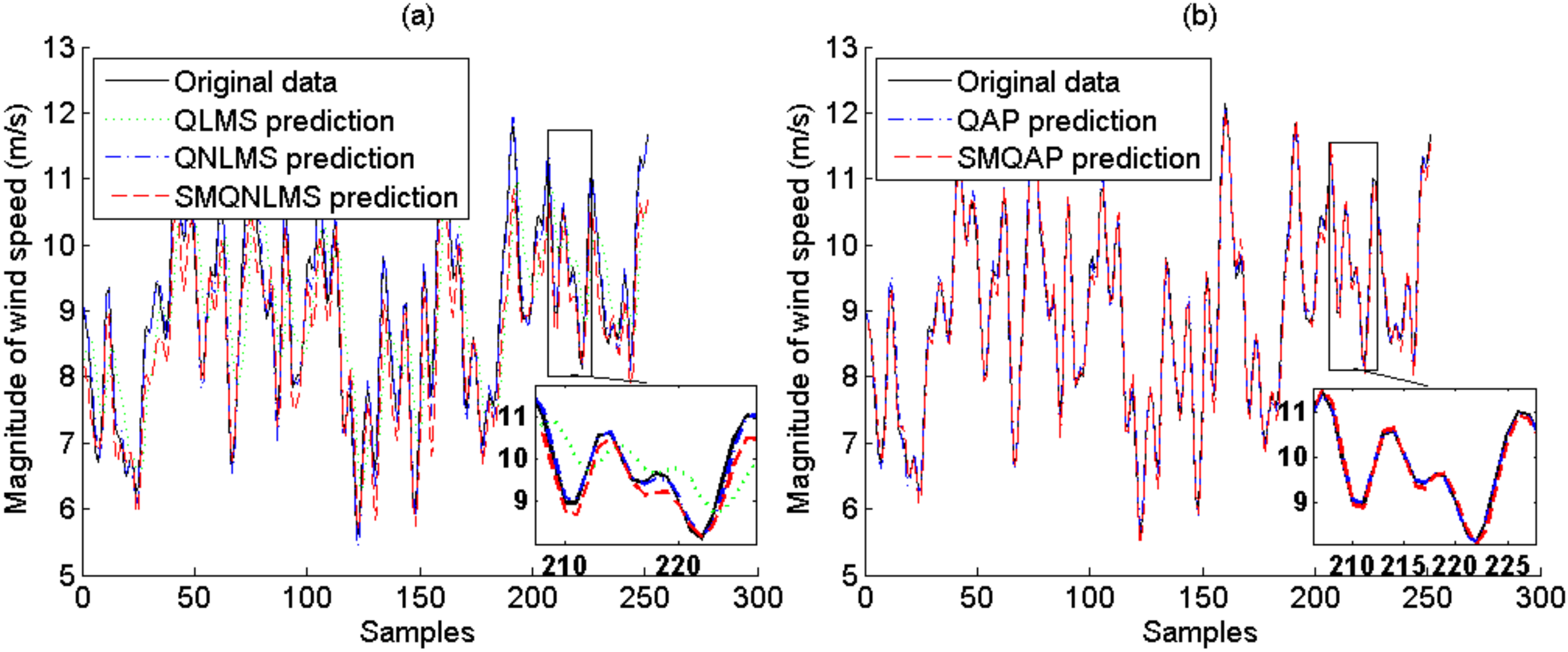} 
  \caption{Predicted results from the quaternion based algorithms.}
  \label{fig:Quaternion_wind}
  \end{center}
  \end{figure}
  
  \begin{figure}[t!]
  \centering
  \subfigure[b][]{\includegraphics[width=.48\linewidth,height=7cm]{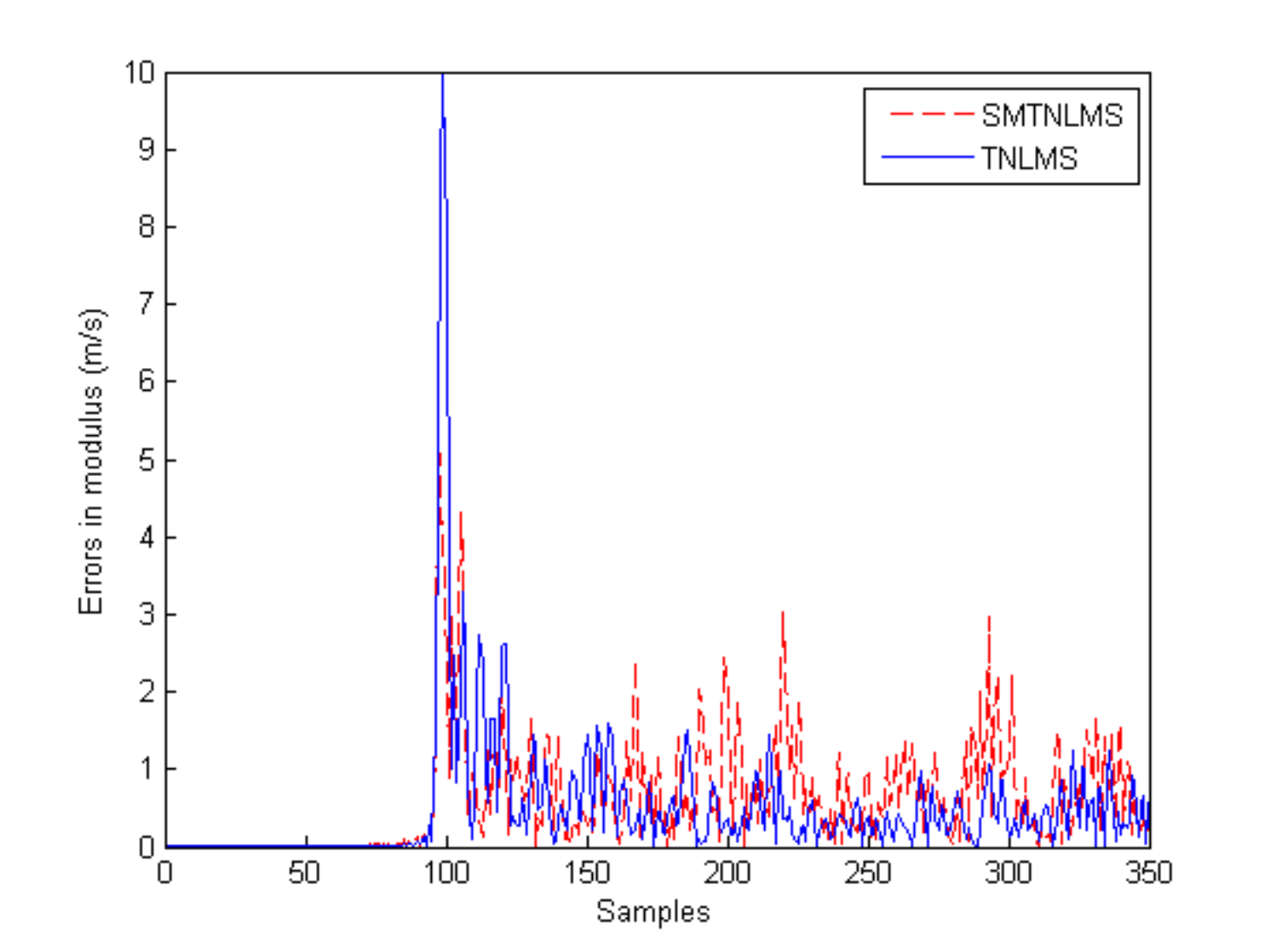}
  \label{fig:error_nlms-trinion}}
  \subfigure[b][]{\includegraphics[width=.48\linewidth,height=7cm]{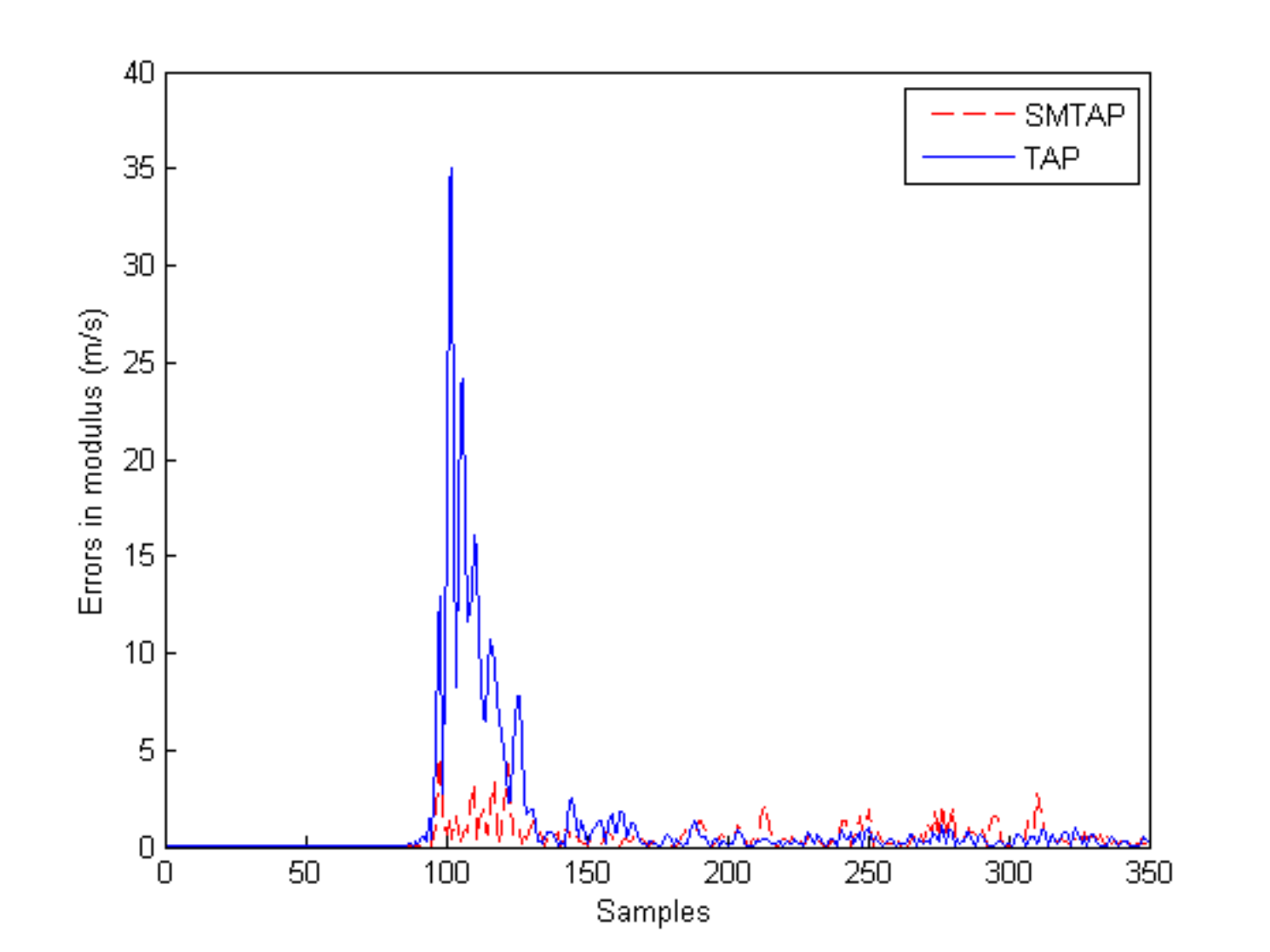}
  \label{fig:error_ap-trinion}}
  \caption{Learning curves of (a) the TNLMS and the SMTNLMS algorithms; (b) the TAP and the SMTAP algorithms.  \label{fig:sim-error-trinion}}
  \end{figure}
  
\begin{figure}[t!]
\centering
\subfigure[b][]{\includegraphics[width=.48\linewidth,height=7cm]{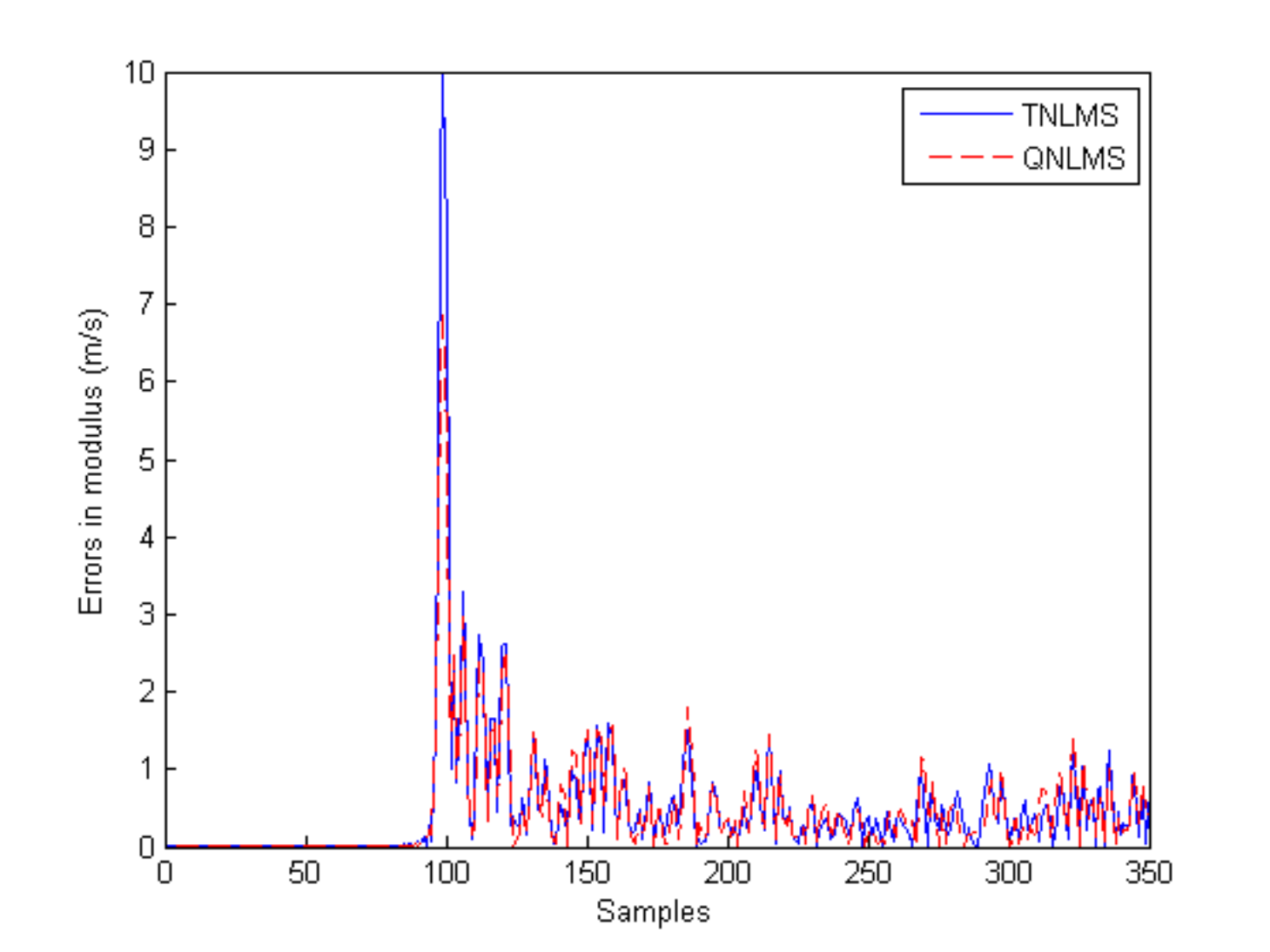}
\label{fig:q_vs_t_nlms-trinion}}
\subfigure[b][]{\includegraphics[width=.48\linewidth,height=7cm]{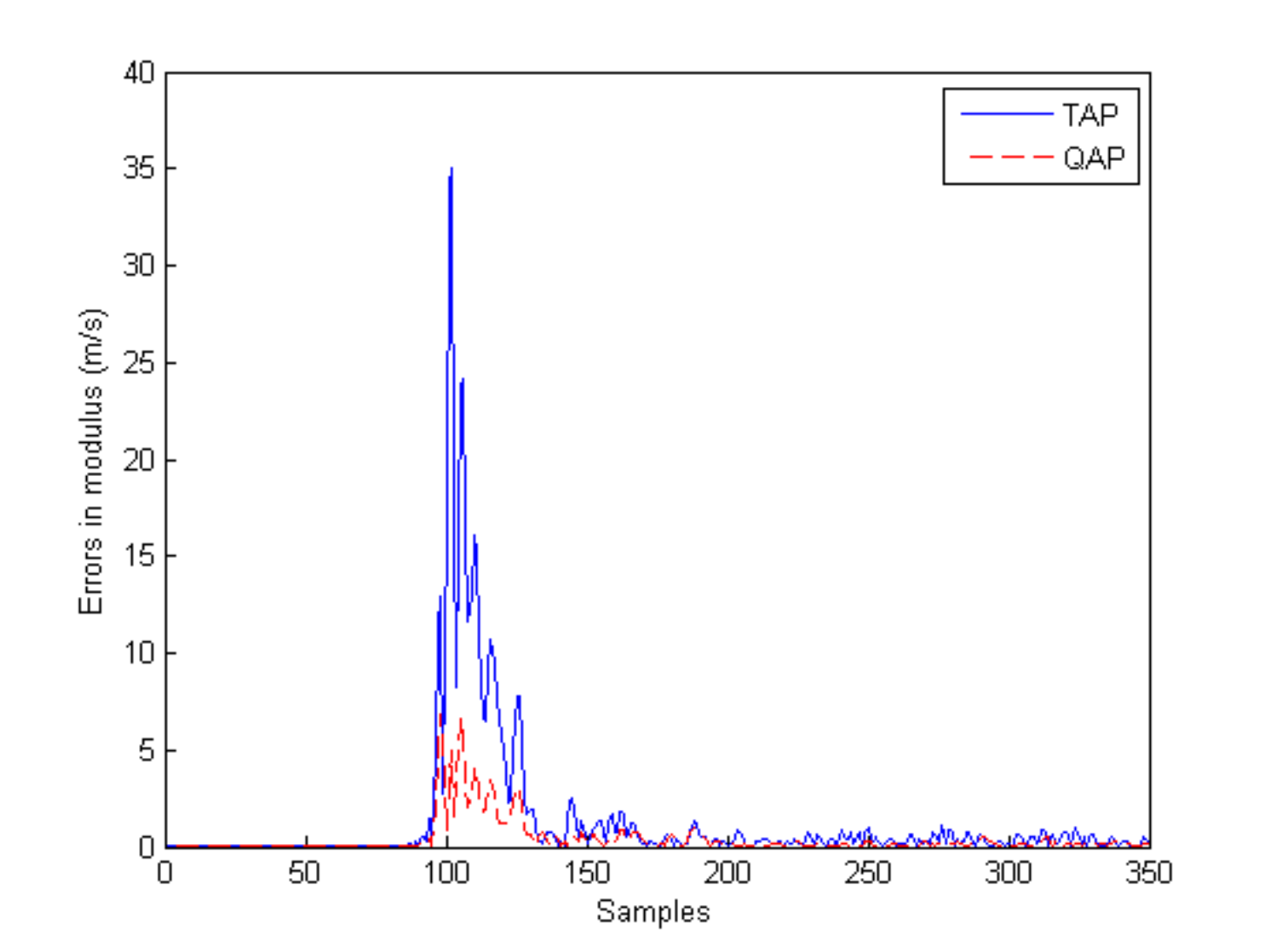}
\label{fig:q_vs_t_ap-trinion}}
\caption{Learning curves of (a) the TNLMS and the QNLMS algorithms; (b) the TAP and the QAP algorithms.  \label{fig:sim-error-q_vs_t-trinion}}
\end{figure}

Moreover, we implemented the same scenario using a real-valued algorithm. Indeed, we used three affine projection (AP) algorithms whose parameters are chosen similar to the TAP algorithm to compare the tracking results between the AP and the TAP algorithms. We did not notify a significant difference between the tracking results of the AP and the TAP algorithms, thus we avoid presenting an additional figure since the results were similar to Figure~\ref{fig:Trinion_wind}(b). However, in wind profile prediction, It would be preferable to employ trinion-valued algorithms since there is some structure between the three components of data.
  
  
\subsection{Scenario 2} \label{sub:beamforming}

In this scenario, we simulate the quaternionic adaptive beamforming~\cite{Jiang_gqvgo_DSP2014} using the QLMS\abbrev{QLMS}{Quaternion-Valued LMS}, the QNLMS\abbrev{QNLMS}{Quaternion-Valued Normalized LMS}, the SMQNLMS\abbrev{SMQNLMS}{Set-Membership Quaternion-Valued NLMS}, the QAP\abbrev{QAP}{Quaternion-Valued Affine Projection}, and the SMQAP\abbrev{SMQAP}{Set-Membership Quaternion-Valued AP} algorithms. We assume a sensor array with 10 crossed-dipoles and half-wavelength spacing. The step size, $\mu$, for the QLMS\abbrev{QLMS}{Quaternion-Valued LMS}, the QNLMS\abbrev{QNLMS}{Quaternion-Valued Normalized LMS}, and the QAP\abbrev{QAP}{Quaternion-Valued Affine Projection} algorithms are $4\times10^{-5}$, 0.009, and 0.005, respectively. For the QAP\abbrev{QAP}{Quaternion-Valued Affine Projection} and the SMQAP\abbrev{SMQAP}{Set-Membership Quaternion-Valued AP} algorithms, the memory length, $L$, is set to 1. A desired signal with 20 dB SNR\abbrev{SNR}{Signal-to-Noise Ratio} ($\sigma_n^2=0.01$) impinges  from broadside, $\theta=0$ and $\phi=\frac{\pi}{2}$, and two interfering signals with signal-to-interference ratio (SIR)\abbrev{SIR}{Signal-to-Interference Ratio} of -10 dB arrive from $(\theta,\phi)=(\frac{\pi}{9},\frac{\pi}{2})$ and $(\theta,\phi)=(\frac{\pi}{6},-\frac{\pi}{2})$, respectively. All the signals have the same polarization of $(\gamma,\eta)=(0,0)$. $\gammabar$ is set to be $\sqrt{2\sigma_n^2}$, and the vector $\gammabf(k)$ is selected as simple choice constraint vector defined in Scenario 1. 

The learning curves of quaternion algorithms over 100 trials are shown in Figure~\ref{fig:ball}. The average number of updates performed by the SMQNLMS\abbrev{SMQNLMS}{Set-Membership Quaternion-Valued NLMS} and the SMQAP\abbrev{SMQAP}{Set-Membership Quaternion-Valued AP} algorithms are 1408 and 1815 in a total of 10000 iterations (about 14.08$\%$ and 18.15$\%$), respectively. As can be seen, the set-membership quaternion algorithms converge faster while having a lower number of updates. Also, the convergence rate of the QAP\abbrev{QAP}{Quaternion-Valued Affine Projection} algorithm is higher than the SMQNLMS\abbrev{SMQNLMS}{Set-Membership Quaternion-Valued NLMS} algorithm. 

The response of a beamformer to the impinging signals as a \symbl{$B(\theta)$}{The beam pattern of a beamformer} function of $\theta$ is called beam pattern and is defined as $B(\theta)=\wbf^H\sbf(\theta)$, where $\sbf(\theta)$ is the steering vector. The magnitude of beam pattern explains the variation of a beamformer concerning the signal arriving from different DOA\abbrev{DOA}{Direction of Arrival} angles. Figure~\ref{fig:beam_pattern} illustrates the magnitude of beam pattern of the quaternion algorithms with $\theta=0$. In this figure, the positive values of $\theta$ show the value range $\theta\in[0,\frac{\pi}{2}]$ for $\phi=\frac{\pi}{2}$ and the negative values, $\theta\in[-\frac{\pi}{2},0]$, indicate the same range of $\theta\in[0,\frac{\pi}{2}]$ but $\phi=-\frac{\pi}{2}$. We can observe that all the quaternion algorithms attained an acceptable beamforming result since the two nulls at the directions of the interfering signals are clearly visible.

The output signal to desired plus noise ratio (OSDR)\abbrev{OSDR}{Output Signal to Desired Plus Noise Ratio} and the output signal to interference plus noise ratio (OSIR)\abbrev{OSIR}{Output Signal to Interference Plus Noise Ratio} for the quaternion algorithms are presented in Table~\ref{tab:output_to_interference}. The OSDR\abbrev{OSDR}{Output Signal to Desired Plus Noise Ratio} is achieved by calculating the power of the output signal and the total power of desired plus one-third of the noise signal, then we compute the ratio between these two values. Also, the OSIR\abbrev{OSIR}{Output Signal to Interference Plus Noise Ratio} is obtained by computing the power of the output signal and the total power of interference plus one-third of the noise signal, then we find the ratio between the two. As can be seen, the best results are obtained by the SMQNLMS\abbrev{SMQNLMS}{Set-Membership Quaternion-Valued NLMS} and the SMQAP\abbrev{SMQAP}{Set-Membership Quaternion-Valued AP} algorithms.
  
\begin{figure}[t!]
\begin{center}
\includegraphics[width=1\linewidth]{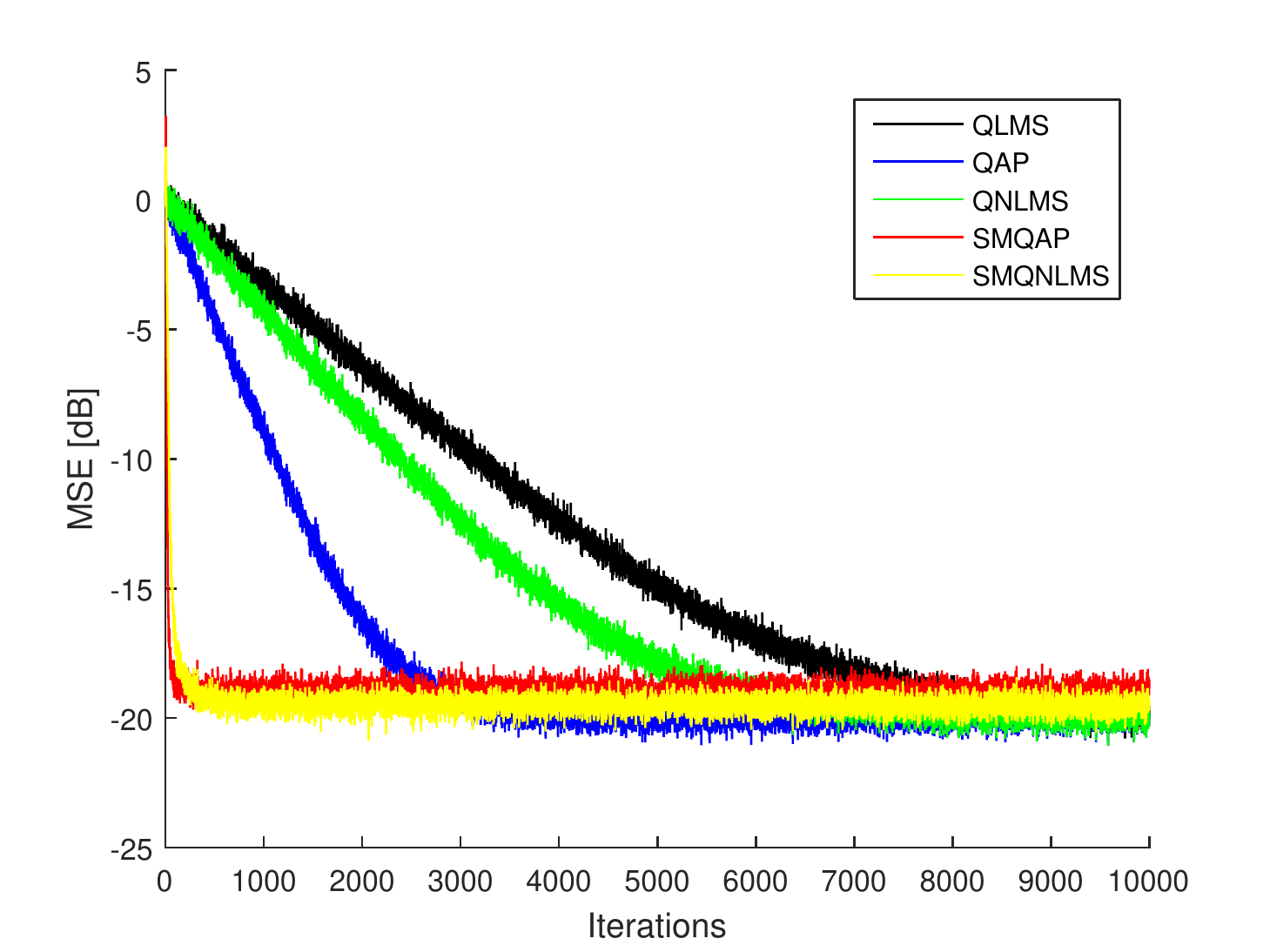}
\caption{Learning curves of the QLMS, the QNLMS, the QAP, the SMQNLMS, and the SMQAP algorithms.}
\label{fig:ball}
\end{center}
\end{figure}

\begin{figure}[t!]
\begin{center}
\includegraphics[width=1\linewidth]{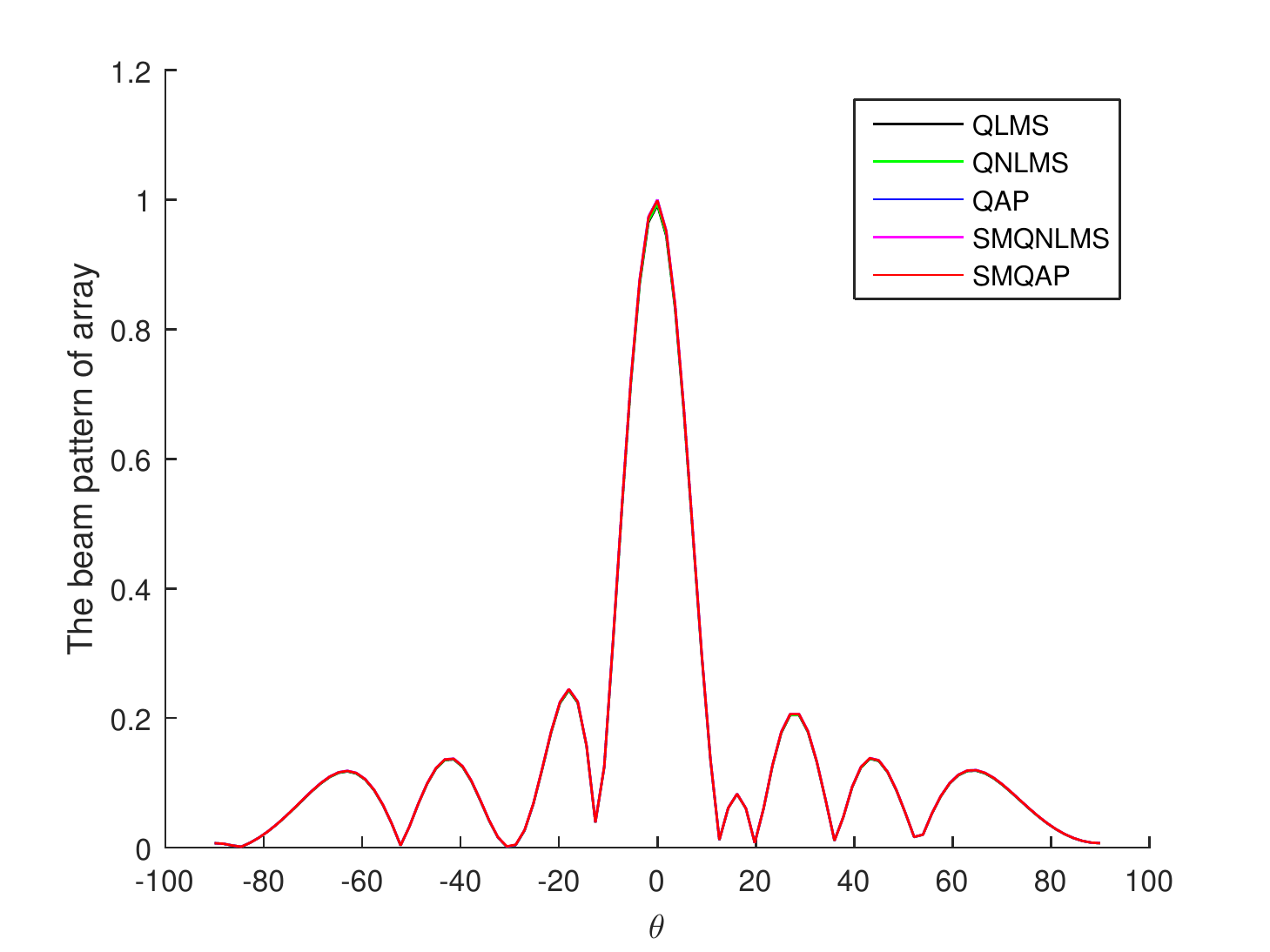}
\caption{Beam patterns of the QLMS, the QNLMS, the QAP, the SMQNLMS, and the SMQAP algorithms when DOA of desired signal is $(\theta,\phi)=(0,\frac{\pi}{2})$.}
\label{fig:beam_pattern}
\end{center}
\end{figure}

\begin{table} [!t]
\caption{The OSDR and the OSIR for the quaternion algorithms}\label{tab:output_to_interference}
\begin{center}
\begin{tabular}{|c|c|c|c|c|c|}
\hline
Algorithms & QLMS & QNLMS & QAP & SMQNLMS & SMQAP \\\hline 
OSDR (dB) & -1.645 & -1.502 & -0.647 & -0.024 & 0.004 \\\hline
OSIR (dB) & -11.699 & -11.557 & -10.701 & -10.079 & -10.050 \\\hline
\end{tabular}
\end{center}
\end{table}
  
  
\section{Conclusions} \label{sec:conclusion-trinion} 
  
In this chapter, we have generalized the set-membership model for the trinion and the quaternion number systems. First, we have reviewed some properties of the quaternion and the trinion systems. Then we have derived the set-membership trinion based algorithms and, by the same argument, the quaternion based adaptive filtering algorithms have been introduced. Also, we have presented the counterparts of the proposed algorithms without employing the set-membership approach. Moreover, we have reviewed the application of quaternion algorithms to adaptive beamforming. Numerical simulations for the recorded wind data and the adaptive beamforming have proven that the set-membership based algorithms have significantly lower update rates, while the penalty to be paid for that is not noteworthy. Also, we have observed that the trinion based algorithms have comparable performance to the quaternion based ones, however with striking lower computational complexity.
  \chapter{Improved Set-Membership Partial-Update Affine Projection Algorithm}

Adaptive filters have applications in a wide range of areas such as noise cancellation, signal prediction, echo cancellation, communications, radar, and speech processing. In several applications, a large number of coefficients to be updated leads to high computational complexity, turning the adaptation of the filter coefficients prohibitive regarding hardware requirements. In some cases, like acoustic echo cancellation, the adaptive filter might use a few thousand coefficients in order to model the underlying physical system with sufficient accuracy. In these applications, the convergence would entail a large number of iterations, calling for a more sophisticated updating rule which is inherently more computationally intensive. For a given adaptive filter, the computational complexity can be reduced by updating only part of the filter coefficients at each iteration, forming a  family of algorithms called partial-update (PU)\abbrev{PU}{Partial-Update} algorithms. In the literature, several variants of adaptive filtering algorithms with partial-update have been proposed \cite{PUbook,Diniz_adaptiveFiltering_book2013,Douglas-PU-1997,Aboulnasr-PU-1999,Dogancay-PU-2001,Werner-PU-2003,Werner-PU-2004,Godavarti,Grira,Arablouei,Pinto,Tandon,Bhotto,Deng}.

Another powerful approach to decrease the computational complexity of an adaptive filter is to employ set-membership filtering (SMF)\abbrev{SMF}{Set-Membership Filtering} approach \cite{Diniz_adaptiveFiltering_book2013,Gollamudi_smf_letter1998}. Algorithms developed from the SMF\abbrev{SMF}{Set-Membership Filtering} framework employ a deterministic objective function related to a bounded error constraint on the filter output, such that the updates belong to a set of feasible solutions. Implementation of SMF\abbrev{SMF}{Set-Membership Filtering} algorithms involves two main steps: 1) information evaluation, 2) parameter update. As compared with the standard  normalized least mean square (NLMS)\abbrev{NLMS}{Normalized LMS} and affine projection (AP)\abbrev{AP}{Affine Projection} algorithms, the set-membership normalized least mean square (SM-NLMS\abbrev{SM-NLMS}{Set-Membership Normalized LMS}) and the set-membership affine projection (SM-AP\abbrev{SM-AP}{Set-Membership Affine Projection}) algorithms lead to reduced computational complexity chiefly due to data-selective updates \cite{Gollamudi_smf_letter1998,Diniz_sm_bnlms_tsp2003,Werner_sm_ap_letter2001,Arablouei_2012_ICASSP,Bhotto_2012_ISCCSP,Yamada_sm-nlmsAnalysis_tsp2009,Bhotto_2012_TSP,Abadi_2008_ISCCSP}.

The use of PU\abbrev{PU}{Partial-Update} strategy decreases the computational complexity while reducing convergence speed. We employ SMF\abbrev{SMF}{Set-Membership Filtering} technique to reduce further the computational load due to a lower number of updates. However applying the SMF\abbrev{SMF}{Set-Membership Filtering} and PU\abbrev{PU}{Partial-Update} strategies together might result in slow convergence speed. One approach to accelerate the convergence speed is choosing a smaller error estimation bound, but it might increase the number of updates. Also, if we adopt a higher error estimation threshold to reduce the number of updates, the convergence rate will decrease. Therefore, convergence speed and computational complexity are conflicting requirements. 

In this chapter, we introduce an interesting algorithm which can accelerate the convergence speed and simultaneously reduce the 
number of updates (and as a result decrease the computational complexity) in the set-membership partial-update affine projection (SM-PUAP)\abbrev{SM-PUAP}{Set-Membership Partial-Update AP} algorithm. In the SM-PUAP\abbrev{SM-PUAP}{Set-Membership Partial-Update AP} algorithm, some updates move too far from their SM-AP\abbrev{SM-AP}{Set-Membership Affine Projection} update; especially when the angle between the updating direction and the threshold hyperplane is small. In this case, we might have a significant disturbance in the coefficient update while attempting to
reach the feasibility set. Therefore, to limit the distance between two consecutive updates, first, we will construct a 
hypersphere centered at the present weight vector whose radius equals the distance between the current weight vector and 
the weight vector that would be obtained with the SM-AP\abbrev{SM-AP}{Set-Membership Affine Projection} algorithm. This radius is an upper bound on the Euclidean norm
of the coefficient disturbance that is allowed in the proposed improved set-membership partial-update affine projection (I-SM-PUAP)\abbrev{I-SM-PUAP}{Improved SM-PUAP} algorithm.

The content of this chapter was published in~\cite{Hamed_I_SM-PUAP_ICASSP2016}. In this chapter, first of all, we review the SM-PUAP\abbrev{SM-PUAP}{Set-Membership Partial-Update AP} algorithm in Section \ref{sec:SM-PUAP-icassp}. Then, in Section \ref{sec:M-SM-PUAP-icassp}, we derive the I-SM-PUAP\abbrev{I-SM-PUAP}{Improved SM-PUAP} algorithm. Section \ref{sec:simulation-icassp} presents simulations of the algorithms. Finally, Section \ref{sec:conclusion-icassp} contains the conclusions.


\section{Set-Membership Partial-Update Affine Projection Algorithm} \label{sec:SM-PUAP-icassp} 

In this section, we present the SM-PUAP\abbrev{SM-PUAP}{Set-Membership Partial-Update AP} algorithm \cite{Diniz_adaptiveFiltering_book2013}. The main objective of the partial-update adaptation is to perform updates in $M$ out of $N+1$ adaptive filter coefficients, where $N$ is the order of the adaptive filter.  The $M$ coefficients to be updated at time instant $k$ are specified by an index set ${\cal I}_M(k)=\{i_0(k),\cdots,i_{M-1}(k)\}$ with $\{i_j(k)\}_{j=0}^{M-1}$ chosen from the set $\{0,\cdots,N\}$.\symbl{${\cal I}_M(k)$}{The set of $M$ coefficients to be updated at time instant $k$} The subset of coefficients with indices in ${\cal I}_M(k)$ plays an essential role in the performance and the effectiveness of the partial-update strategy.  Note that ${\cal I}_M(k)$ varies with the time instant $k$. As a result, the $M$ coefficients to be updated can change according to the time instant. The choice of which $M$ coefficients should be updated is related to the optimization criterion chosen for algorithm derivation. The SM-PUAP\abbrev{SM-PUAP}{Set-Membership Partial-Update AP} algorithm \cite{Diniz_adaptiveFiltering_book2013} takes the update vector $\wbf(k+1)$ as the vector minimizing the Euclidean distance $\|\wbf(k+1)-\wbf(k)\|^2$ subject to the constraint $\wbf(k+1)\in{\cal H}(k)$ in such a way that only $M$ coefficients are updated. 

The optimization criterion in the SM-PUAP\abbrev{SM-PUAP}{Set-Membership Partial-Update AP} algorithm is described as follows. Let $\psi^{L+1}(k)$ indicate the intersection of the last $L+1$ constraint sets. A coefficient update is implemented whenever $\wbf(k)\not\in\psi^{L+1}(k)$ as follows
\begin{equation}
\begin{aligned}
&\min \|\wbf(k+1)-\wbf(k)\|^2\\
&{\rm subject~to:}\\
&\dbf(k)-\Xbf^T(k)\wbf(k+1)=\gammabf(k)\\
&\tilde{\Cbf}_{{\cal I}_M(k)}[\wbf(k+1)-\wbf(k)]=0
\end{aligned}
\end{equation}
where

\begin{tabular}{ll}
$\dbf(k)\in\mathbb{R}^{(L+1)\times1}$& contains the desired output from the \\&$L+1$ last time instants;\\
$\gammabf(k)\in\mathbb{R}^{(L+1)\times1}$&specifies the point in $\psi^{L+1}(k)$;\\
$\Xbf(k)\in\mathbb{R}^{(N+1)\times (L+1)}$&contains the corresponding input vectors, i.e.,
\end{tabular}
\begin{equation}
\begin{aligned}
\dbf(k)&=[d(k)~d(k-1)~\cdots~d(k-L)]^T,\\
\gammabf(k)&=[\gamma_0(k)~\gamma_1(k)~\cdots~\gamma_L(k)]^T,\\
\Xbf(k)&=[\xbf(k)~\xbf(k-1)~\cdots~\xbf(k-L)], \label{eq:pack-icassp}
\end{aligned}
\end{equation}
with $\xbf(k)$ being the input-signal vector
\begin{align}
\xbf(k)=[x(k)~x(k-1)~\cdots~x(k-N)]^T. \label{eq:x(k)-icassp}
\end{align}
Moreover, the matrix $\tilde{\Cbf}_{{\cal I}_M(k)}=\Ibf-\Cbf_{{\cal I}_M(k)}$ is a complementary matrix that gives $\tilde{\Cbf}_{{\cal I}_M(k)}\wbf(k+1)=\tilde{\Cbf}_{{\cal I}_M(k)}\wbf(k)$, which means that only $M$ coefficients are updated. The threshold vector elements are such that $|\gamma_i(k)|\leq\gammabar$, for $i=0,\cdots,L$. The matrix $\Cbf_{{\cal I}_M(k)}$ is a diagonal matrix that identifies the coefficients to be updated at instant $k$, if an update is required.\symbl{$\Cbf_{{\cal I}_M(k)}$}{The diagonal matrix that identifies the coefficients to be updated at instant time $k$, if an update is required} This matrix has $M$ nonzero elements equal to one located at positions declared by ${\cal I}_M(k)$.

Using the method of Lagrange multipliers we obtain the following updating rule
\begin{align}
\wbf(k+1)=\wbf(k)+\Cbf_{{\cal I}_M(k)}\Xbf(k)[\Xbf^T(k)\Cbf_{{\cal I}_M(k)}\Xbf(k)]^{-1}[\ebf(k)-\gammabf(k)]
\end{align}
The updating equation of the SM-PUAP\abbrev{SM-PUAP}{Set-Membership Partial-Update AP} algorithm is given by
\begin{align}
&\wbf(k+1)=\left\{\begin{array}{ll}\wbf(k)+\Cbf_{{\cal I}_M(k)}\Xbf(k)\Pbf(k)(\ebf(k)-\gammabf(k))&\text{if}~|e(k)|>\gammabar\\\wbf(k)&\text{otherwise}\end{array}\right.,\label{eq:update_SM-PUAP-icassp}
\end{align}
where \symbl{$\Pbf(k)$}{The auxiliary matrix $\Pbf(k)\triangleq(\Xbf^T(k)\Cbf_{{\cal I}_M(k)}\Xbf(k)+\delta\Ibf)^{-1}$}
\begin{align}
\Pbf(k)&=(\Xbf^T(k)\Cbf_{{\cal I}_M(k)}\Xbf(k)+\delta\Ibf)^{-1}, \label{eq:P'(k)-icassp}\\
\ebf(k)&=[e(k)~\epsilon(k-1)~\cdots~\epsilon(k-L)]^T,
\end{align}
with $e(k)=d(k)-\wbf^T(k)\xbf(k)$, and $\epsilon(k-i)=d(k-i)-\wbf^T(k)\xbf(k-i)$ for $i=1,\cdots,L$. In the Equation (\ref{eq:P'(k)-icassp}), $\delta$ and $\Ibf$ are a small positive constant and an $(L+1)\times (L+1)$ identity matrix, respectively. The diagonal matrix $\delta\Ibf$ is added to the matrix to be inverted in order to avoid numerical problems in the inversion operation in the cases $\Xbf^T(k)\Cbf_{{\cal I}_M(k)}\Xbf(k)$ is ill conditioned.

A natural choice for the $M$ nonzero diagonal elements of $\Cbf_{{\cal I}_M(k)}$ is those corresponding to the coefficients of $\wbf(k)$ with the most significant norms. In fact, by this selection, the $M$ coefficients with the largest norms will be updated, and the rest of the parameters will remain unchanged.  

Figure \ref{fig:SM-PUAP-icassp} illustrates a possible update in SM-PUAP\abbrev{SM-PUAP}{Set-Membership Partial-Update AP} algorithm in $\mathbb{R}^3$ for $L=0$. As can be seen, $\wbf(k+1)$ is far from the $\wbf_{{\rm SM-AP}}(k)$, and it will reduce the convergence rate of the SM-PUAP\abbrev{SM-PUAP}{Set-Membership Partial-Update AP} algorithm. In the next section, we will address this issue by presenting the I-SM-PUAP\abbrev{I-SM-PUAP}{Improved SM-PUAP} algorithm.

\begin{figure}[t!]
\begin{center}
\includegraphics[width=0.8\linewidth] {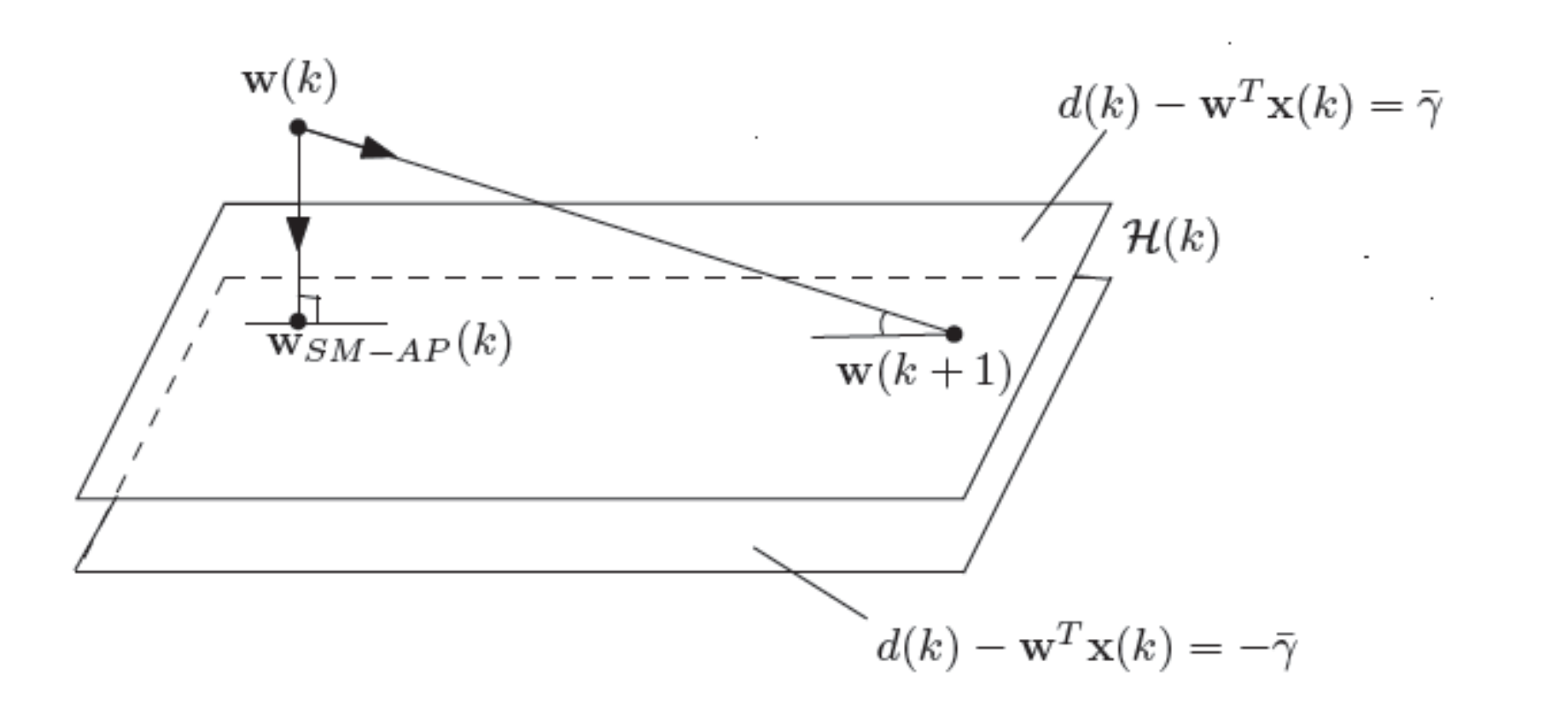} 
\caption{Update in SM-PUAP algorithm in $\mathbb{R}^3$ for $L=0$.}
\label{fig:SM-PUAP-icassp}
\end{center}
\end{figure}


\section{Improved  Set-membership  Partial-Update Affine Projection Algorithm} \label{sec:M-SM-PUAP-icassp}

In this section, we propose the I-SM-PUAP\abbrev{I-SM-PUAP}{Improved SM-PUAP} algorithm aiming at accelerating the convergence speed of SM-PUAP\abbrev{SM-PUAP}{Set-Membership Partial-Update AP} algorithm and 
decreasing the number of updates.

Since the partial update strategy deviates the updating direction from the one determined by the input signal vector $\xbf(k)$ utilized by the SM-PUAP\abbrev{SM-PUAP}{Set-Membership Partial-Update AP} algorithm, it is natural that the size of the step for a partial update algorithm should be smaller than the corresponding algorithm that updates all coefficients. A solution to this problem is to constrain the Euclidean norm of the coefficient disturbance of the partial update algorithm to the disturbance implemented by the originating nonpartial updating algorithm, in our case the SM-AP\abbrev{SM-AP}{Set-Membership Affine Projection} algorithm. For that, we build hypersphere, $S(k)$, whose radius is the distance between $\wbf(k)$ and the SM-AP\abbrev{SM-AP}{Set-Membership Affine Projection} update. The SM-AP\abbrev{SM-AP}{Set-Membership Affine Projection} update takes a step towards 
the hyperplanes $d(k)-\wbf^T\xbf(k)=\pm\gammabar$ with the minimum disturbance, i.e., when the step in the direction $\xbf(k)$ touches the hyperplane perpendicularly. Therefore, the radius of the hypersphere $S(k)$ is given by
\begin{align}
\mu(k)=\min\Big(\frac{|\wbf^T(k)\xbf(k)-d(k)\pm\gammabar|}{\|\xbf(k)\|_2}\Big),
\end{align}
where $\|\cdot\|_2$ is the Euclidean norm in $\mathbb{R}^{N+1}$. The equation describing the hypersphere $S(k)$ with the radius $\mu(k)$ and centered at $\wbf(k)$ is as follows \symbl{$S(k)$}{The hypersphere in $\mathbb{R}^{N+1}$ centered at $\wbf(k)$ with the radius $\mu(k)$}
\begin{align}
(w_0-w_0(k))^2+\cdots+(w_N-w_N(k))^2=\mu^2(k). \label{eq:sphere-icassp}
\end{align}

As can be observed in Figure \ref{fig:SM-PUAP-icassp}, $\wbf(k+1)$ is the point where, starting from $\wbf(k)$, the vector representing the $\wbf(k+1)$ direction touches the hyperplane
$d(k)-\wbf^T\xbf(k)=\gammabar$. Unlike the SM-PUAP\abbrev{SM-PUAP}{Set-Membership Partial-Update AP} algorithm, in the I-SM-PUAP\abbrev{I-SM-PUAP}{Improved SM-PUAP} algorithm  $\wbf(k+1)$ is the point where, starting from $\wbf(k)$, the vector representing the partial direction touches the defined  $N$ dimensional hypersphere $S(k)$ and points at a sparse version of $\xbf(k)$. A visual interpretation of the I-SM-PUAP\abbrev{I-SM-PUAP}{Improved SM-PUAP} algorithm is described in Figure \ref{fig:M-SM-PUAP-icassp}.

\begin{figure}[t!]
\begin{center}
\includegraphics[width=0.8\linewidth] {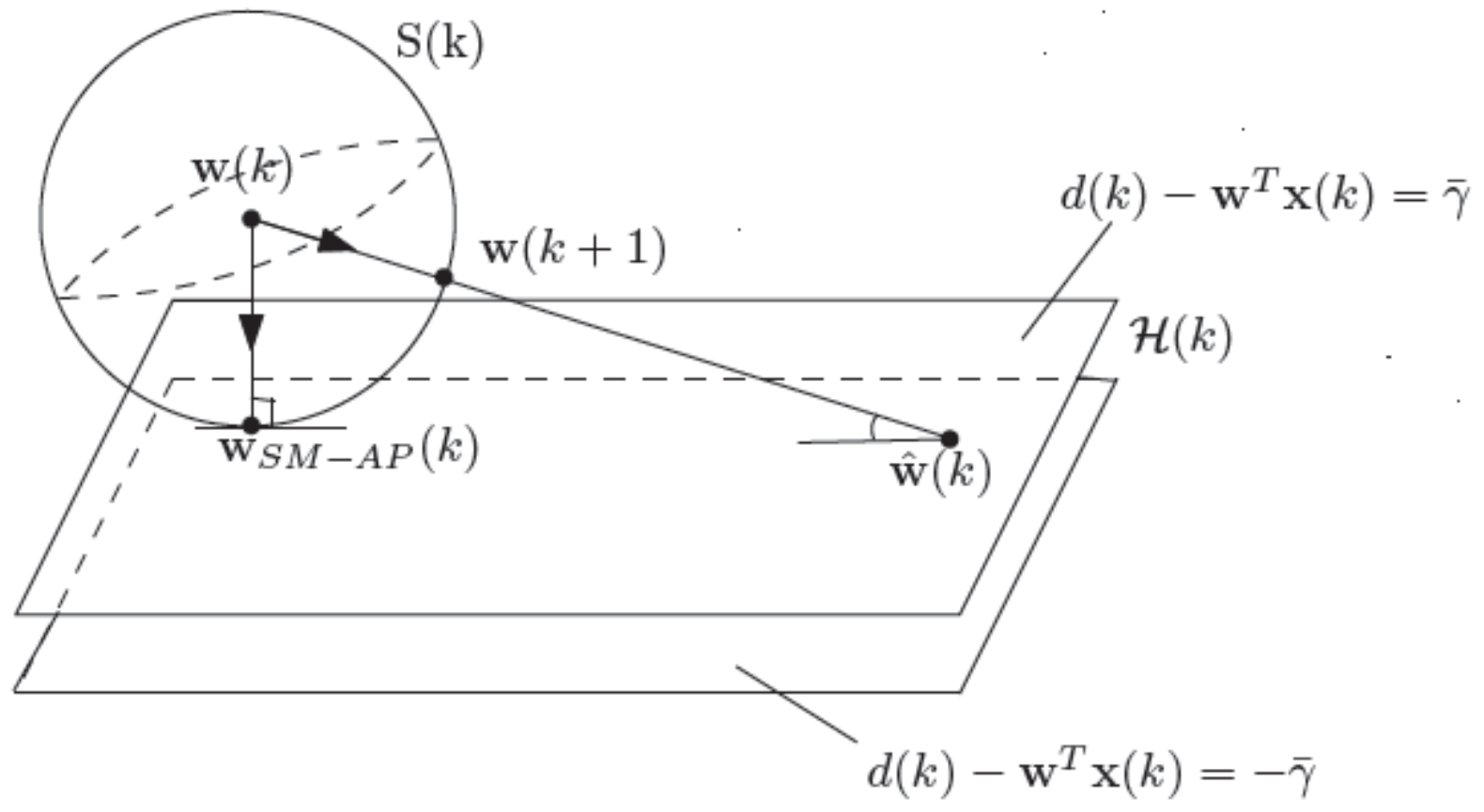} 
\caption{Update in I-SM-PUAP algorithm in $\mathbb{R}^3$ for $L=0$.}
\label{fig:M-SM-PUAP-icassp}
\end{center}
\end{figure}

Define $\hat{\wbf}(k)$ as the update result of Equation (\ref{eq:update_SM-PUAP-icassp}) with $\gammabf(k)=[0~\cdots~0]^T$.
In order to find the update of $\wbf(k)$ to the boundary of hypersphere $S(k)$ such that 
$\tilde{\Cbf}_{{\cal I}_{{M}}(k)}\wbf(k+1)=\tilde{\Cbf}_{{\cal I}_{{M}}(k)}\wbf(k)$ 
we have to find the intersection of the hypersphere $S(k)$ with the line $l(k)$ passing through $\wbf(k)$ and $\hat{\wbf}(k)$. 
This line is parallel to the vector $\ubf(k)=\frac{\abf(k)}{\|\abf(k)\|_2}$, where $\abf(k)=[\hat{w}_0(k)-w_0(k)~\cdots~\hat{w}_N(k)-w_N(k)]^T$. Hence, the equation of the line $l(k)$ is given as follows
\begin{align}
\left\{\begin{array}{ll}\frac{w_0-w_0(k)}{u_0(k)}=\cdots=\frac{w_i-w_i(k)}{u_i(k)}=\cdots=\frac{w_N-w_N(k)}{u_N(k)},&\text{for}~i\in{\cal I}_{{M}}(k),\\
w_i=w_i(k),&\text{for}~i\not\in{\cal I}_{{M}}(k).\end{array}\right. \label{eq:line-icassp}
\end{align}

In order to find the intersection of the line $l(k)$ with the hypersphere $S(k)$, we should replace Equation (\ref{eq:line-icassp}) in  Equation (\ref{eq:sphere-icassp}). 
Thus, we will attain $w_i=w_i(k)$ for $i\not\in{\cal I}_{{M}}(k)$, and for $i\in{\cal I}_{{M}}(k)$ we have
\begin{align}
&\frac{u_0^2(k)}{u_i^2(k)}(w_i-w_i(k))^2+\cdots+(w_i-w_i(k))^2+\cdots+\frac{u_N^2(k)}{u_i^2(k)}(w_i-w_i(k))^2=\mu^2(k).
\end{align}
Then,
\begin{align}
(w_i-w_i(k))^2=u_i^2(k)\mu^2(k),
\end{align}
where we obtained  the last equality owing to $\|\ubf(k)\|_2=1$. Therefore, the intersections of the line $l(k)$ and the hypersphere $S(k)$ are given by
\begin{align}
w_i=w_i(k)\pm u_i(k)\mu(k). \label{eq:intersection-icassp}
\end{align}
We will choose the positive sign in Equation (\ref{eq:intersection-icassp}) since the direction of the vector $\abf(k)$ is from $\wbf(k)$ to $\hat{\wbf}(k)$. As a result,  vector $\wbf(k+1)$ becomes as below
\begin{align}
\wbf(k+1)=\wbf(k)+\mu(k)\ubf(k).
\end{align}

Also, as an alternative method, we can get $\wbf(k+1)$ through an elegant geometrical view. Denote $\wbf(k+1)$ in Equation (\ref{eq:update_SM-PUAP-icassp}) as $\hat{\wbf}(k)$ while taking $\gammabf(k)=[0~\cdots~0]^T$. Define $\abf(k)$ as
\begin{align}
\abf(k)=\hat{\wbf}(k)-\wbf(k)=\Cbf_{{\cal I}_{{M}}(k)}\Xbf(k)\Pbf(k)\ebf(k). \label{eq:geometric-icassp}
\end{align}
If we take the step size equal to $\|\abf(k)\|_2$ and do the update in the direction of $\frac{\abf(k)}{\|\abf(k)\|_2}$, then the parameters will reach $\hat{\wbf}(k)$. However, our objective is to reach the boundary of hypersphere $S(k)$ centered at $\wbf(k)$ with radius $\mu(k)$ in the direction of $\frac{\abf(k)}{\|\abf(k)\|_2}$, thus the step size must be equal to the radius of $S(k)$ so that the update equation becomes
\begin{align}
\wbf(k+1)&=\wbf(k)+\mu(k)\frac{\abf(k)}{\|\abf(k)\|_2}=\wbf(k)+\mu(k)\ubf(k).
\end{align}
Table \ref{tb:m_sm_puap-icassp} summarizes the I-SM-PUAP algorithm.

\begin{table}[t!]
\caption{Improved Set-Membership Partial-Update Affine Projection(I-SM-PUAP) Algorithm}
\begin{center}
\begin{footnotesize}
\begin {tabular}{|l|} \hline\\ \hspace{2.2cm}{\bf I-SM-PUAP Algorithm}\\ \\
\hline\\
Initialization
\\
$\xbf(-1)=\wbf(0)=[0~\cdots~0]^T$\\
$\delta=$ small positive constant\\
choose $\gammabar$\\
Do for $k\geq0$\\
\hspace*{0.15cm} $\ebf(k)=\dbf(k)-\Xbf^T(k)\wbf(k)$\\
\hspace*{0.15cm} ${\rm if}~|e(k)|>\gammabar$\\
\hspace*{0.3cm} $\mu(k)=\min\Big(\frac{|-e(k)\pm\gammabar|}{\|\xbf(k)\|_2}\Big)$\\
\hspace*{0.3cm} $\abf(k)=\Cbf_{{\cal I}_{{M}}(k)}\Xbf(k)[\Xbf^T(k)\Cbf_{{\cal I}_{{M}}(k)}\Xbf(k)+\delta\Ibf]^{-1}\ebf(k)$ \\
\hspace*{0.3cm} $\wbf(k+1)=\wbf(k)+\frac{\mu(k)}{\|\abf(k)\|_2}\abf(k)$
\\
\hspace*{0.15cm} else\\
\hspace*{0.3cm} $\wbf(k+1)=\wbf(k)$\\
\hspace*{0.15cm} end\\
end  \\
\\
\hline
\end {tabular}
\end{footnotesize}
\end{center}
\label{tb:m_sm_puap-icassp}
\end{table}


\section{Simulations} \label{sec:simulation-icassp} 

\subsection{Scenario 1}

In this section, the SM-PUAP\abbrev{SM-PUAP}{Set-Membership Partial-Update AP} algorithm \cite{Diniz_adaptiveFiltering_book2013} and the proposed I-SM-PUAP\abbrev{I-SM-PUAP}{Improved SM-PUAP} algorithm are applied to a system identification problem. The unknown system has order $N=79$ and its coefficients are random scalars drawn from the standard normal distribution. The input signal is a binary phase-shift keying (BPSK)\abbrev{BPSK}{Binary Phase-Shift Keying} signal with $\sigma_x^2=1$. The signal-to-noise ratio (SNR)\abbrev{SNR}{Signal-to-Noise Ratio} is set to 20 dB, i.e., $\sigma_n^2=0.01$. The bound on the output estimation error is chosen as $\gammabar=\sqrt{25\sigma_n^2}$. Also, we adopt the threshold bound vector $\gammabf(k)$ as $\gamma_0(k)=\frac{\gammabar e(k)}{|e(k)|}$ and $\gamma_i(k)=d(k-i)-\wbf^T(k)\xbf(k-i)$, for $i=1,\cdots,L$ \cite{Diniz_adaptiveFiltering_book2013,Markus_edcv_eusipco2013}. The regularization constant, $\delta$, is $10^{-12}$ and $\wbf(0)=[1~\cdots~1]^T$ which is not close to the unknown system. All learning curves averaged over 200 trials. We are updating 50 percent of the components randomly chosen of the filter to illustrate the partial updating, i.e., half of the elements of ${\cal I}_M(k)$ are nonzero at each time instant $k$. 

Figure \ref{fig:sim1-icassp} shows the learning curves for the I-SM-PUAP\abbrev{I-SM-PUAP}{Improved SM-PUAP} algorithm with $L=1,4$, and it illustrates the learning curves for the SM-PUAP\abbrev{SM-PUAP}{Set-Membership Partial-Update AP} algorithm with $L=64$ and 69. Also, in Figure \ref{fig:sim1-icassp} a blue curve is depicted using correlated inputs and $L=1$. In fact, for the blue curve all of the specifications of the system are the same as explained above and the only difference is the input signal. The correlated input signal is chosen as $x(k)=0.95x(k-1)+0.19x(k-2)+0.09x(k-3)-0.5x(k-1)+m(k-4)$, where $m(k)$ is a zero-mean Gaussian noise with unit variance.

\begin{figure}[t!]
\begin{center}
\includegraphics[width=1\linewidth] {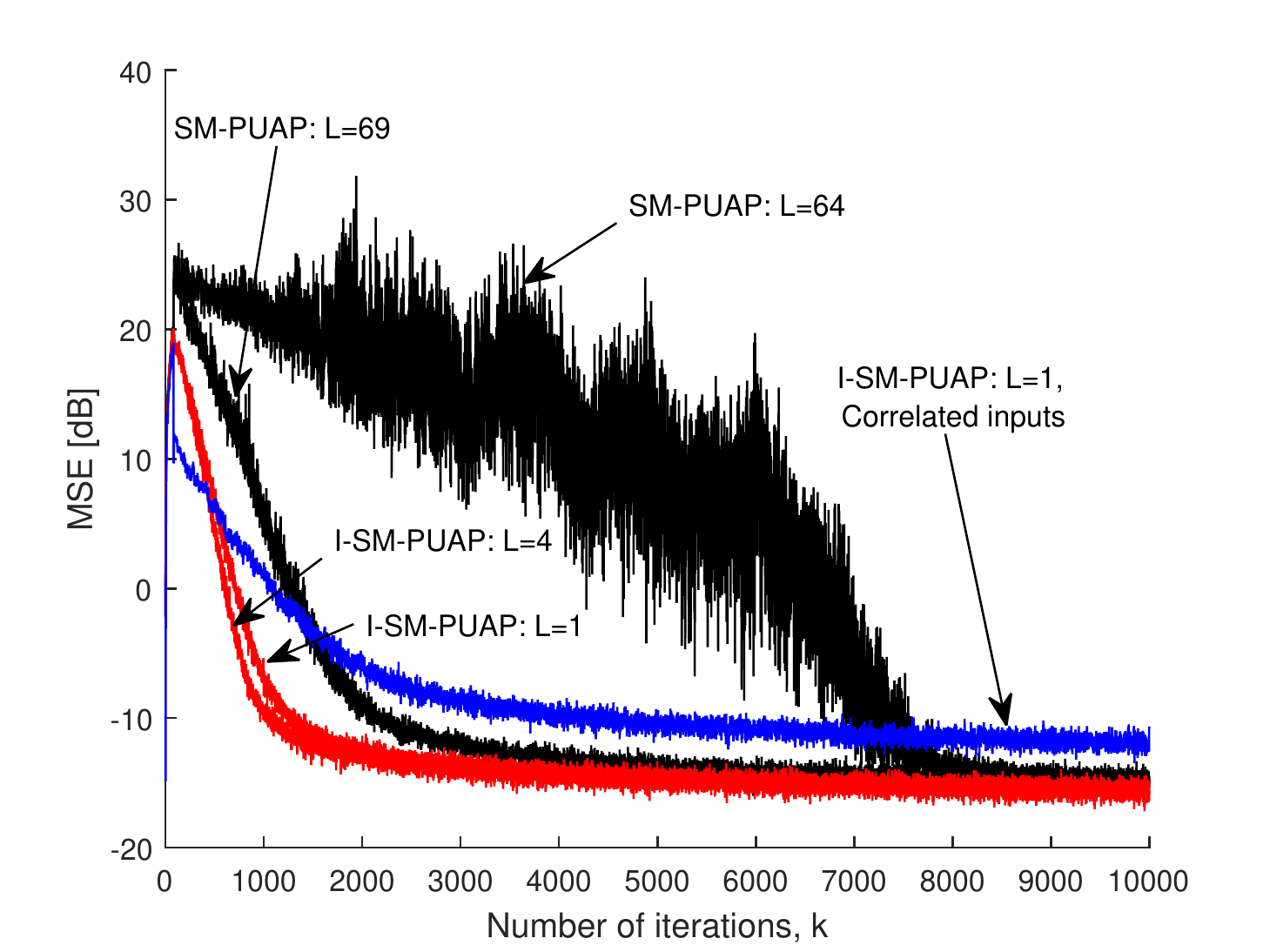} 
\caption{Learning curves of the I-SM-PUAP  and the SM-PUAP algorithms applied on system identification problem.}
\label{fig:sim1-icassp}
\end{center}
\end{figure}

The average number of updates performed by the I-SM-PUAP\abbrev{I-SM-PUAP}{Improved SM-PUAP} algorithm are 8.3$\%$ and 6.5$\%$ for $L=1$ and 4, respectively, and 20$\%$ in the case of the correlated input signal. The average number of updates implemented by the SM-PUAP\abbrev{SM-PUAP}{Set-Membership Partial-Update AP} algorithm are 14$\%$ and 25$\%$ for $L=69$ and 64, respectively. Note that in both algorithms we have to find the inverse of an $(L+1)\times (L+1)$ matrix,  thus large $L$ implies high computational complexity. Therefore, the I-SM-PUAP\abbrev{I-SM-PUAP}{Improved SM-PUAP} algorithm requires lower implementation time since it presents fast convergence even for a small value of $L$. Also, it is worth mentioning that for $L<64$ the SM-PUAP\abbrev{SM-PUAP}{Set-Membership Partial-Update AP} algorithm does not reach its steady-state in 10000 iterations. From the results, we can observe that the proposed algorithm, I-SM-PUAP,\abbrev{I-SM-PUAP}{Improved SM-PUAP} has faster convergence speed and lower number of updates as compared to the SM-PUAP\abbrev{SM-PUAP}{Set-Membership Partial-Update AP} algorithm.

\subsection{Scenario 2}
In this section, we perform the equalization of a channel with the following impulse response
\begin{align}
\hbf=[1~2~3~4~4~3~2~1]^T.
\end{align} 
We use a known training signal that consists of independent binary samples $(-1,1)$ and an additional Gaussian white noise with variance 0.01 is present at the channel output. The I-SM-PUAP\abbrev{I-SM-PUAP}{Improved SM-PUAP} and the SM-PUAP\abbrev{SM-PUAP}{Set-Membership Partial-Update AP} algorithms are applied to find the impulse response of an equalizer of order 80. The delay in the reference signal is selected as 45. The parameters $\gammabar$ and $\gammabf(k)$ are chosen as $\sqrt{25\sigma_n^2}$ and the simple choice constraint vector is utilized as Scenario 1, respectively. The regularization constant, $\delta$, is $10^{-12}$ and $\wbf(0)=[1~\cdots~1]^T$. All learning curves are averaged over 100 trials. At each iteration, half of the elements of ${\cal I}_M(k)$ are set nonzero randomly. The memory-length, $L$, is 3.

Figure~\ref{fig:Equalization-icassp} shows the learning curves  for the I-SM-PUAP\abbrev{I-SM-PUAP}{Improved SM-PUAP} and the SM-PUAP\abbrev{SM-PUAP}{Set-Membership Partial-Update AP} algorithms. The convolution of the equalizer impulse response at a given iteration after convergence with the channel impulse response is shown in Figure~\ref{fig:Convolution-icassp}. The average number of updates implemented by the I-SM-PUAP\abbrev{I-SM-PUAP}{Improved SM-PUAP} and the SM-PUAP\abbrev{SM-PUAP}{Set-Membership Partial-Update AP} algorithms are 61$\%$ and 82$\%$, respectively. As can be seen, the I-SM-PUAP\abbrev{I-SM-PUAP}{Improved SM-PUAP} algorithm has lower MSE\abbrev{MSE}{Mean-Squared Error} and lower number of updates compared to the SM-PUAP\abbrev{SM-PUAP}{Set-Membership Partial-Update AP} algorithm.

\begin{figure}[t!]
\centering
\subfigure[b][]{\includegraphics[width=.48\linewidth,height=7cm]{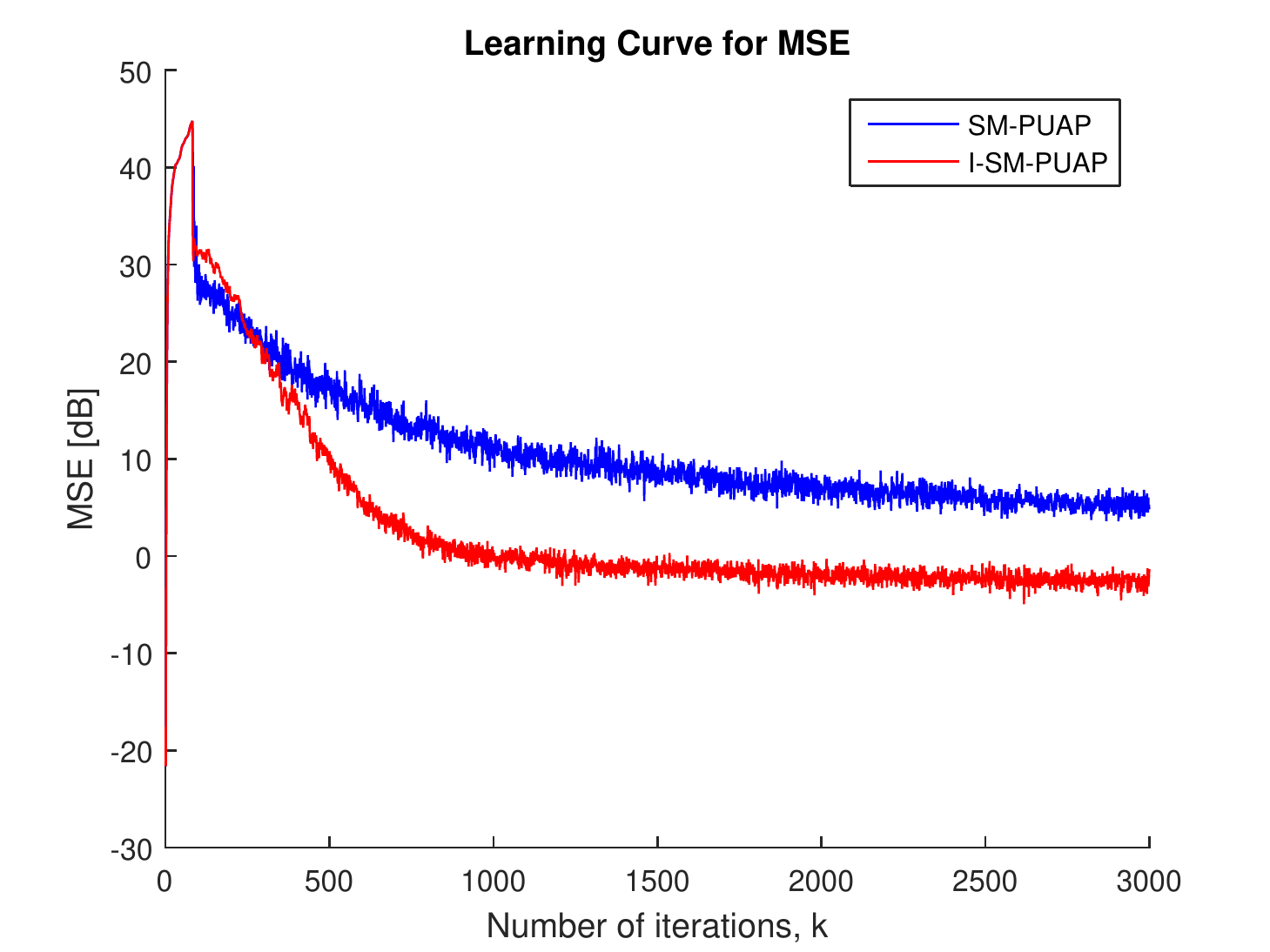}
\label{fig:Equalization-icassp}}
\subfigure[b][]{\includegraphics[width=.48\linewidth,height=7cm]{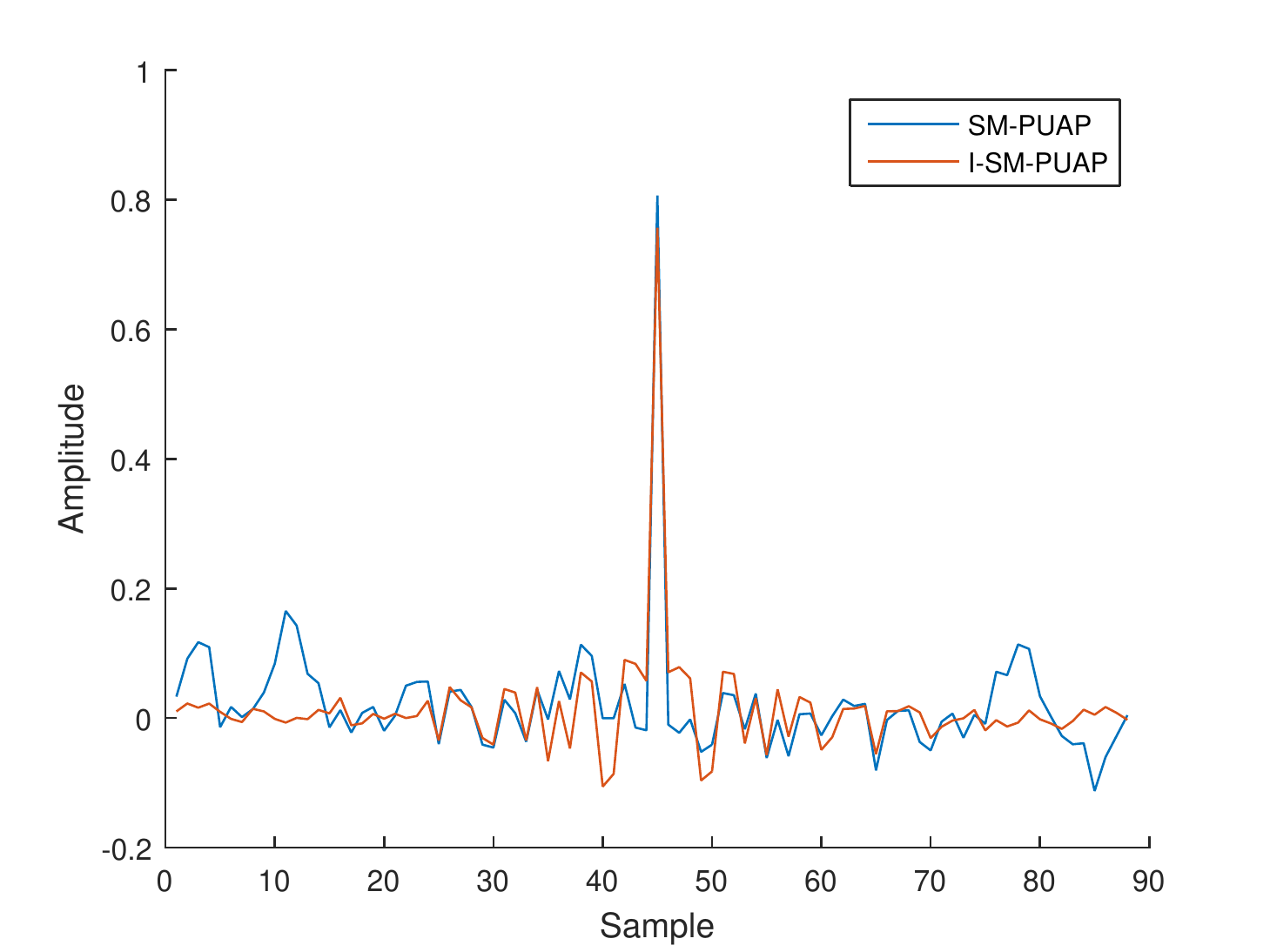}
\label{fig:Convolution-icassp}}
\caption{(a) Learning curves of the I-SM-PUAP and the SM-PUAP algorithms performing the equalization of a channel; (b) convolution results. \label{fig:Equalization}}
\end{figure}


\section{Conclusions} \label{sec:conclusion-icassp}

In this chapter, we have introduced the improved set-membership partial-update affine projection (I-SM-PUAP) algorithm aiming at accelerating the convergence rate of the set-membership partial-update affine projection (SM-PUAP) algorithm, with lower computational complexity and reduced number of updates. To achieve this goal, we use the distance between the present weight vector and the
one obtained with the SM-AP\abbrev{SM-AP}{Set-Membership Affine Projection} update, in order to provide a hypersphere that upper bounds the coefficient disturbance.  
Numerical simulations for the system identification and the channel equalization problems have confirmed that the I-SM-PUAP\abbrev{I-SM-PUAP}{Improved SM-PUAP} algorithm has not only faster convergence rate, but also it requires a lower number of updates as compared to the SM-PUAP\abbrev{SM-PUAP}{Set-Membership Partial-Update AP} algorithm.
  \chapter{Adaptive Filtering Algorithms for Sparse System Modeling}

Adaptive filtering applied to signals originating from time-varying systems find applications in a wide diversity of areas such as communications, control, radar, acoustics, and speech processing. 
Nowadays, it is well known that many types of signal or system parameters admit sparse representation in a certain domain. However, classical adaptive algorithms such as the least-mean-square (LMS)\abbrev{LMS}{Least-Mean-Square}, the normalized LMS (NLMS)\abbrev{NLMS}{Normalized LMS}, the affine projection (AP)\abbrev{AP}{Affine Projection}, and the recursive least-squares (RLS)\abbrev{RLS}{Recursive Least-Squares} do not take into consideration the sparsity in the signal or system models.

Recently, it has been understood that by exploiting appropriately signal sparsity, significant improvement in convergence rate and steady-state performance can be achieved. As a consequence, many extensions of the classical algorithms were proposed aiming at exploiting sparsity. One of the most widely used approaches consists in updating each filter coefficient using a step-size proportional to its magnitude in order to speed up the convergence rate of the coefficients with large magnitudes. This approach led to the development of a family of algorithms known as {\it proportionate}~\cite{Duttweiler_PNLMS_tsap2000,Benesty_IPNLMS_icassp2002,Gay_pnlmsPlusPlus_acssc1998,Diniz_sm_pap_jasmp2007,Paleologu_papaEcho_spl2010}. Another interesting approach to exploit sparsity is to include a {\it sparsity-promoting penalty} (sometimes called regularization) function into the original optimization problem of classical algorithms~\cite{Markus-phdthesis}. Within this approach, most algorithms employ the $l_1$ norm as the sparsity-promoting 
penalty~\cite{Vitor_SparsityAwareAPA_sspd2011,Theodoridis_l1ball_tsp2011,Chen_sparseLMS_icassp2009,Babadi_Sparse_RLS_tsp2010}, but recently an approximation to the $l_0$ norm has shown some 
advantages~\cite{Markus_sparseSMAP_tsp2014,Markus_apssiAnalysis_icassp2014,Markus_apssi_icassp2013,Gu_l0_LMS_SPletter2009}. In addition, these two approaches were combined and tested 
in~\cite{Pelekanakis2012,Markus_proportionatePlusPenalty_iscas2016} yielding interesting results. Observe that in all of the aforementioned approaches something is being included/added to the classical algorithms, thus entailing an increase in their computational complexity. 

In this chapter, we use a different strategy to exploit sparsity. 
Instead of including additional features in the algorithm, as the techniques described in the previous paragraph, we actually discard some coefficients, thus reducing the computational burden.
This idea is motivated by the existence of some uncertainty in the coefficients in practical applications. Indeed, a measured sparse impulse response of a system presents a few coefficients concentrating most of the energy, whereas the other coefficients are close to zero, but not precisely equal to zero~\cite{Markus_sparseSMAP_tsp2014}~\footnote{A system whose impulse response presents this characteristic is formally known as a {\it compressible system}~\cite{Markus-phdthesis}.}. Thus, if we have some prior information about the uncertainty in those parameters, then we can replace the parameters which are ``lower than'' this uncertainty with zero (i.e., discard the coefficients) in order to save computational resources. 

In addition to this new way of exploiting sparsity, we also employ the set-membership filtering (SMF) approach \cite{Gollamudi_smf_letter1998,Diniz_adaptiveFiltering_book2013} in order to generate the {Simple Set-Membership Affine Projection} (S-SM-AP)\abbrev{S-SM-AP}{Simple SM-AP} algorithm, which is mostly the combination of the 
set-membership affine projection  algorithm~\cite{Werner_sm_ap_letter2001} with our strategy to exploit sparsity. The SMF\abbrev{SMF}{Set-Membership Filtering} approach is used just to reduce the computational burden even further since the filter coefficients are updated only when the estimation error is greater than a predetermined threshold. 

Moreover, we derive the improved S-SM-AP\abbrev{S-SM-AP}{Simple SM-AP} (IS-SM-AP)\abbrev{IS-SM-AP}{Improved S-SM-AP} algorithm to reduce the overall number of computations required by the S-SM-AP\abbrev{S-SM-AP}{Simple SM-AP} algorithm even further by replacing small coefficients with zero. Also, we obtain the simple affine projection (S-AP)\abbrev{S-AP}{Simple AP} and the improved S-AP (IS-AP)\abbrev{IS-AP}{Improved S-AP} algorithms as special cases of the S-SM-AP\abbrev{S-SM-AP}{Simple SM-AP} and the IS-SM-AP\abbrev{IS-SM-AP}{Improved S-SM-AP} algorithms, respectively. The S-AP\abbrev{S-AP}{Simple AP} and the IS-AP\abbrev{IS-AP}{Improved S-AP} algorithms do not resort to the SMF concept and can be regarded as affine projection algorithms for sparse systems. 

Finally, we introduce some sparsity-aware RLS\abbrev{RLS}{Recursive Least-Squares} algorithms employing the discard function and the $l_0$ norm approximation. The first proposed algorithm, the RLS for sparse systems (S-RLS)\abbrev{S-RLS}{RLS Algorithm for Sparse System}, sets low weights to the coefficients close to zero and exploits system sparsity with low computational complexity. On the other hand, the second algorithm, the $l_0$ norm RLS ($l_0$-RLS)\abbrev{$l_0$-RLS}{$l_0$ Norm RLS}, has higher computational complexity in comparison with the S-RLS\abbrev{S-RLS}{RLS Algorithm for Sparse System} algorithm. For both algorithms, in order to reduce the computational load further, we apply a data-selective strategy~\cite{Gollamudi_smf_letter1998} leading to the data-selective S-RLS (DS-S-RLS)\abbrev{DS-S-RLS}{Data-Selective S-RLS} and the data-selective $l_0$-RLS (DS-$l_0$-RLS)\abbrev{DS-$l_0$-RLS}{Data-Selective $l_0$-RLS} algorithms. That is, the proposed algorithms update the weight vector if the output estimation error is larger than a prescribed value. By applying the data-selective strategy, both algorithms attain lower computational complexity compared to the RLS\abbrev{RLS}{Recursive Least-Squares} algorithm.

The content of this chapter was published in~\cite{Hamed_eusipco2016,Hamed_S_RLS_ICASSP2017}. In Sections~\ref{sec:SSM-AP-eusipco} and \ref{sec:SM-PAPA-eusipco}, we review the sparsity-aware SM-AP (SSM-AP)\abbrev{SSM-AP}{Sparsity-Aware SM-AP} algorithm and the set-membership proportionate AP algorithm (SM-PAPA)\abbrev{SM-PAPA}{Set-Membership Proportionate AP Algorithm}, respectively. The proposed {S-SM-AP}\abbrev{S-SM-AP}{Simple SM-AP} algorithm is derived in Section~\ref{sec:ss-sm-ap-eusipco}. Sections~\ref{sec:s-rls-sparse} and~\ref{sec:l0-rls-sparse} propose the S-RLS\abbrev{S-RLS}{RLS Algorithm for Sparse System} and the $l_0$-RLS\abbrev{$l_0$-RLS}{$l_0$ Norm RLS} algorithms, respectively. Simulations are presented in Section~\ref{sec:simulations-eusipco} and 
Section~\ref{sec:conclusions-eusipco} contains the conclusions.


\section{Sparsity-Aware SM-AP Algorithm}\label{sec:SSM-AP-eusipco}

In literature, a method to deal with the sparsity has been obtained by adding a penalty function to the original objective function \cite{Vitor_SparsityAwareAPA_sspd2011,Markus-phdthesis,Markus_sparseSMAP_tsp2014,Markus_apssiAnalysis_icassp2014,Markus_apssi_icassp2013}. This penalty function is generally related to the $l_0$ or $l_1$ norms. Utilizing $l_0$ norm has some difficulties since it leads to an NP-hard problem. Therefore, we must try to approximate the $l_0$ norm by {\it almost everywhere} differentiable functions, for then we can apply stochastic gradient methods to solve the optimization problem. In other words, the $l_0$ norm can be estimated by a continuous function $G_\beta:\mathbb{R}^{N+1}\rightarrow\mathbb{R}_+$, where $\beta\in\mathbb{R}_+$ is a parameter responsible for controlling the agreement between quality of the estimation and smoothness of $G_\beta$. This function must satisfy the following condition~\cite{Markus-phdthesis,Markus_sparseSMAP_tsp2014} \symbl{$G_\beta$}{Continuous and almost everywhere differentiable function that approximates the $l_0$ norm; $\beta$ controls the quality of the approximation}
\begin{align}
\lim_{\beta\rightarrow\infty}G_\beta(\wbf)=\|\wbf\|_0,
\end{align}
where $\|\cdot\|_0$ denotes the $l_0$ norm which, for $\wbf\in\mathbb{R}^{N+1}$, is defined as  $\|\wbf\|_0\triangleq\#\{i\in\mathbb{N}:~w_i\neq0\}$, in which $\#$ stands for the cardinality of a finite set. 
Here we present four examples of function $G_\beta$~\cite{Markus-phdthesis,Markus_sparseSMAP_tsp2014}
\begin{subequations}
\begin{align}
{\rm LF:~}G_\beta(\wbf)&=\sum_{i=0}^N(1-e^{-\beta|w_i|}), \label{eq:mult_Laplace}\\
{\rm MLF:~}G_\beta(\wbf)&=\sum_{i=0}^N(1-e^{-0.5\beta^2w^2_i}), \label{eq:modf_mult_Laplace}\\
{\rm GMF:~}G_\beta(\wbf)&=\sum_{i=0}^N(1-\frac{1}{1+\beta|w_i|}), \label{eq:mult_Geman}\\
{\rm MGMF:~}G_\beta(\wbf)&=\sum_{i=0}^N(1-\frac{1}{1+\beta^2w^2_i}). \label{eq:modf_mult_Geman}
\end{align}
\end{subequations}
The functions expressed in Equations~\eqref{eq:mult_Laplace} and~\eqref{eq:mult_Geman} are called the multivariate Laplace function (LF)\abbrev{LF}{Laplace Function} and the multivariate Geman-McClure function (GMF)\abbrev{GMF}{Geman-McClure Function}, respectively. Equations~\eqref{eq:modf_mult_Laplace} and~\eqref{eq:modf_mult_Geman} are modifications of the LF\abbrev{LF}{Laplace Function} and the GMF\abbrev{GMF}{Geman-McClure Function}, respectively, so that they have continuous derivatives too. Figure~\ref{fig:sparsity_Functions} shows the univariate Laplace and Geman-McClure functions for $\beta=5$.

\begin{figure}[t!]
\centering
\subfigure[b][]{\includegraphics[width=.48\linewidth,height=7cm]{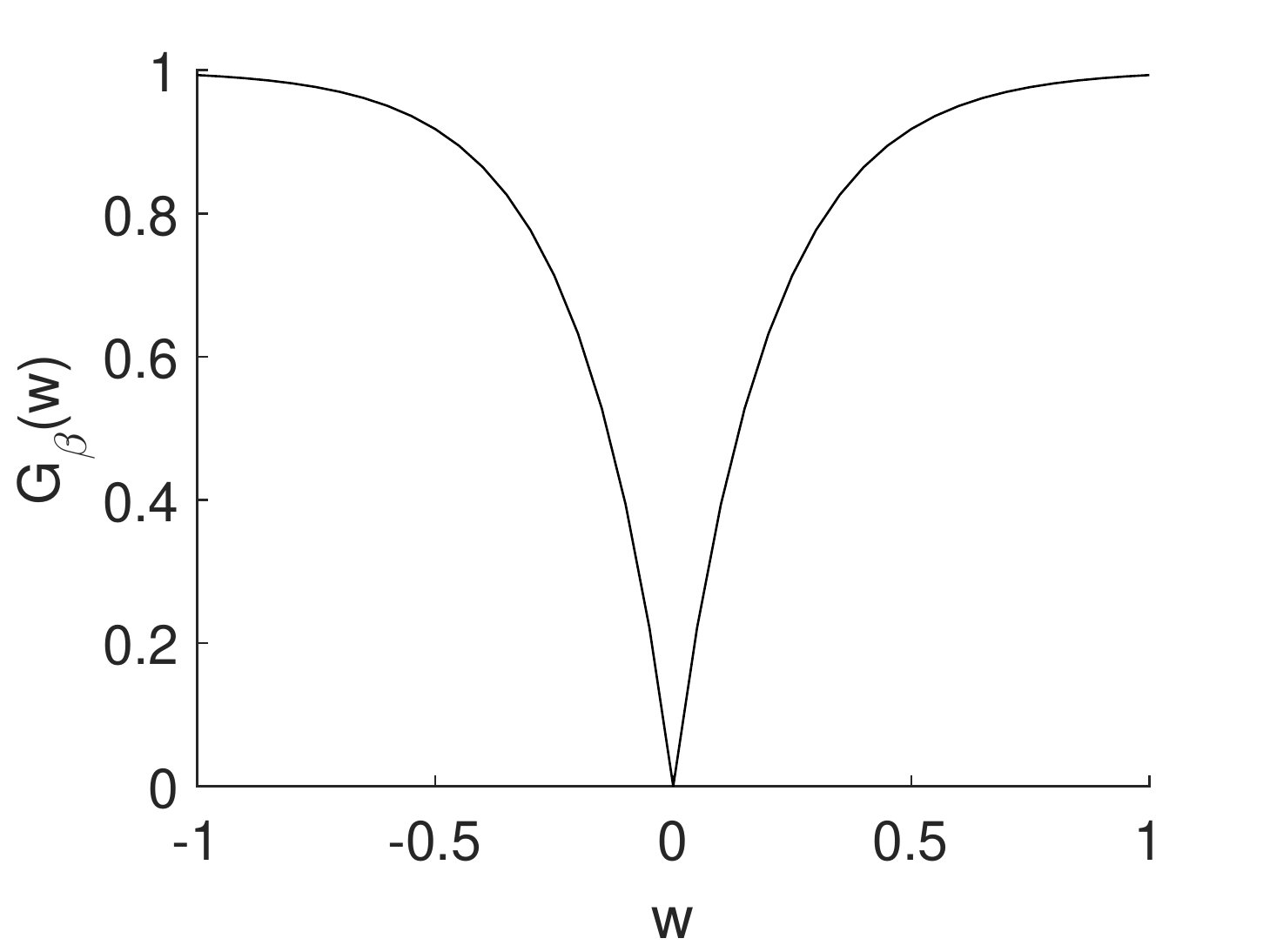}
\label{fig:Laplace_sparse}}
\subfigure[b][]{\includegraphics[width=.48\linewidth,height=7cm]{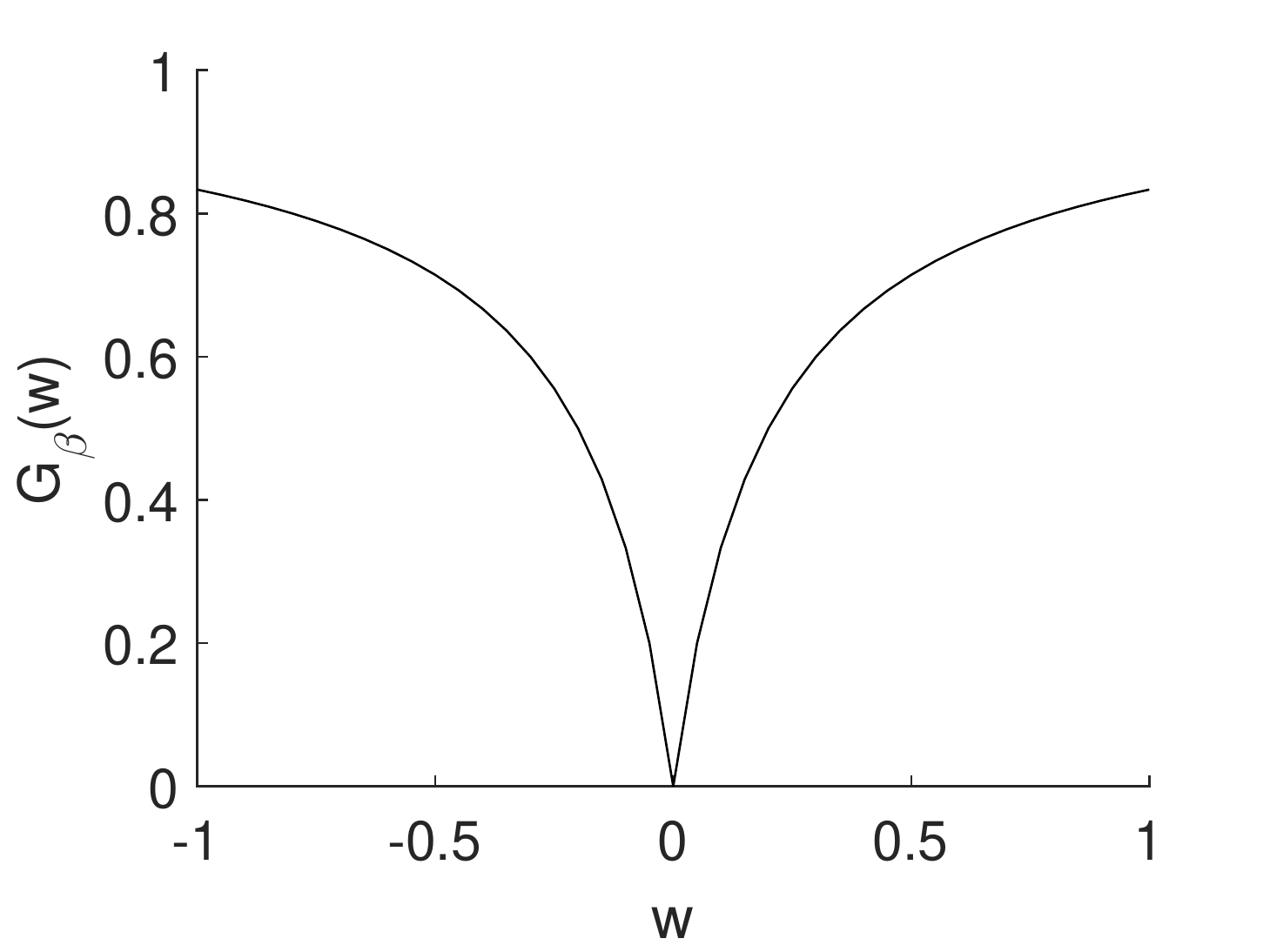}
\label{fig:Geman_sparse}}
\caption{Univariate functions $G_\beta(w)$, with $w\in[-1,1]$ and $\beta=5$: (a) LF; (b) GMF.  \label{fig:sparsity_Functions}}
\end{figure}

The gradient of $G_\beta$ is defined as follows \symbl{$\gbf_\beta(\wbf)$}{Gradient of $G_\beta(\wbf)$ with respect to $\wbf$}
\begin{align}
\nabla G_\beta(\wbf)\triangleq\gbf_\beta(\wbf)\triangleq[g_\beta(w_0)~\cdots~g_\beta(w_N)]^T,
\end{align}
where $g_\beta(w_i)=\frac{\partial G_\beta(\wbf)}{\partial w_i}$. Note that~\eqref{eq:mult_Laplace} and~\eqref{eq:mult_Geman} are not differentiable at the origin, thus we define their derivatives at the origin equal to zero. The derivatives corresponding to~\eqref{eq:mult_Laplace}-\eqref{eq:modf_mult_Geman} are, respectively,
\begin{subequations}
\begin{align}
g_\beta(w_i)&=\beta {\rm sgn}(w_i){\rm e}^{-\beta|w_i|},\\ g_\beta(w_i)&=\beta^2w_i{\rm e}^{-0.5\beta^2w_i^2},\\
g_\beta(w_i)&=\frac{\beta{\rm sgn}(w_i)}{(1+\beta|w_i|)^2},\\ g_\beta(w_i)&=\frac{2\beta^2w_i}{(1+\beta^2w_i^2)^2},
\end{align}
\end{subequations}
where ${\rm sgn}(\cdot)$ denotes the sign function. \symbl{${\rm sgn}(\cdot)$}{The sign function} The interested reader can find the details of approximating the $l_0$ norm in~\cite{Markus_sparseSMAP_tsp2014}.

The SSM-AP\abbrev{SSM-AP}{Sparsity-Aware SM-AP} algorithm performs an update whenever $|e(k)|=|d(k)-\wbf^T(k)\xbf(k)|>\gammabar$, following an update recursion that is an approximation of the solution to the optimization problem \cite{Markus_sparseSMAP_tsp2014}
\begin{align}
&\min \|\wbf(k+1)-\wbf(k)\|_2^2+\alpha\|\wbf(k+1)\|_0\nonumber\\
&{\rm subject~to}\nonumber\\
&\dbf(k)-\Xbf^T(k)\wbf(k+1)=\gammabf(k),\label{eq:ssm_ap_optimization-eusipco}
\end{align}
where $\alpha\in\mathbb{R}_+$ denotes the weight given to the $l_0$ norm.

After replacing the $l_0$ norm with its approximation $G_\beta$, and using the method of Lagrange multipliers, the updating equation of the SSM-AP\abbrev{SSM-AP}{Sparsity-Aware SM-AP} algorithm is reached as follows \cite{Markus_sparseSMAP_tsp2014}
\begin{align}
\wbf(k+1)=
\left\{\begin{array}{ll}\wbf(k)+\Xbf(k)\Abf(k)[\ebf(k)-\gammabf(k)]&\\+\frac{\alpha}{2}[\Xbf(k)\Abf(k)\Xbf^T(k)-\Ibf]\gbf_\beta(\wbf(k))&\text{if }|e(k)|>\gammabar,\\\wbf(k)&\text{otherwise},\end{array}\right. \label{eq:update_rule_ssm-ap}
\end{align} 
where $\Abf(k)=(\Xbf^T(k)\Xbf(k))^{-1}$.

\section{Set-Membership Proportionate AP Algorithm}\label{sec:SM-PAPA-eusipco}

The sparsity of the signals in some applications motivates
us to update each coefficient of the model independently
of the others. Therefore, in adaptive filtering, one of the most widely used methods to exploit sparsity is by implementing coefficient updates that are proportional to the magnitude of the related coefficients. Thus, the coefficients with large magnitude will update with higher convergence rate and, as a result, we have faster overall convergence speed~\cite{Benesty_IPNLMS_icassp2002}. This approach leads to a well known family of algorithms called proportionate. A noticeable number of algorithms utilizing the proportionate approach have been already introduced in the literature. Some of them are the proportionate NLMS\abbrev{NLMS}{Normalized LMS} (PNLMS)~\cite{Duttweiler_PNLMS_tsap2000}\abbrev{PNLMS}{Proportionate Normalized LMS}, the proportionate AP algorithm (PAPA)~\cite{Paleologu_papaEcho_spl2010},\abbrev{PAPA}{Proportionate Affine Projection Algorithm} and their set-membership counterparts~\cite{Diniz_sm_pap_jasmp2007}. In this section, we review the set-membership PAPA (SM-PAPA)\abbrev{SM-PAPA}{Set-Membership Proportionate AP Algorithm}. The optimization criterion of the SM-PAPA\abbrev{SM-PAPA}{Set-Membership Proportionate AP Algorithm} when it implements an update (i.e., when $|e(k)|>\gammabar$) is given by
\begin{align}
&\min \|\wbf(k+1)-\wbf(k)\|^2_{\Mbf^{-1}(k)}\nonumber\\
&{\rm subject~to}\nonumber\\
&\dbf(k)-\Xbf^T(k)\wbf(k+1)=\gammabf(k).\label{eq:sm-papa_optimization-eusipco}
\end{align}
The norm in this optimization criterion is defined as $\|\wbf\|^2_{\Mbf}\triangleq\wbf^T\Mbf\wbf$ and matrix $\Mbf(k)$ is a diagonal weighting matrix of the form
\begin{align}
\Mbf(k)\triangleq{\rm diag}[m_0(k)~\cdots~m_N(k)],
\end{align} 
where 
\begin{align}
m_i(k)\triangleq\frac{1-r\mu(k)}{N}+\frac{r\mu(k)|w_i(k)|}{\|\wbf(k)\|_1},
\end{align}
with
\begin{align}
\mu(k)=\left\{\begin{array}{ll}1-\frac{\gammabar}{|e(k)|}&\text{if }|e(k)|>\gammabar,\\0&\text{otherwise},\end{array}\right.
\end{align}
and $r\in[0,1]$. Also, $\|\cdot\|_1$ stands for the $l_1$ norm and for $\wbf\in\mathbb{R}^{N+1}$ it is defined as $\|\wbf\|_1=\sum_{i=0}^N|w_i|$. Utilizing the method of Lagrange multipliers to solve (\ref{eq:sm-papa_optimization-eusipco}), the update equation of the SM-PAPA\abbrev{SM-PAPA}{Set-Membership Proportionate AP Algorithm} is obtained as follows \cite{Diniz_sm_pap_jasmp2007}
\begin{align}
&\wbf(k+1)=\nonumber\\
&\left\{\begin{array}{ll}\wbf(k)+\Mbf(k)\Xbf(k)[\Xbf^T(k)\Mbf(k)\Xbf(k)]^{-1}[\ebf(k)-\gammabf(k)]&\text{if }|e(k)|>\gammabar,\\\wbf(k)&\text{otherwise}.\end{array}\right.
\end{align}

\section{{A Simple Set-Membership Affine Projection Algorithm}} \label{sec:ss-sm-ap-eusipco}

In the previous sections, we have observed that to exploit sparsity, we require a higher number of arithmetic operations compared to the SM-AP algorithm, which cannot exploit sparsity. Here we introduce a new algorithm to exploit sparsity with low computational complexity. In this algorithm, instead of including/adding something to the classical algorithms, we discard the coefficients close to zero. 

In Subsection~\ref{sub:derivation-eusipco}, we propose {a Simple  Set-Membership Affine Projection (S-SM-AP)}\abbrev{S-SM-AP}{Simple SM-AP} algorithm that exploits the sparsity of the involved system with low computational complexity. For this purpose, the strategy consists in not updating the coefficients of the sparse filter which are close to zero. Then, in Subsection~\ref{sub:discussion-eusipco}, we include a discussion of some characteristics of the proposed algorithm. In Subsection~\ref{sub:Modified-eusipco}, we introduce an improved version of the proposed algorithm aiming at reducing the computational burden even further. Finally, in Subsection~\ref{sub:non-SM-eusipco}, we derive the S-AP\abbrev{S-AP}{Simple AP} and IS-AP\abbrev{IS-AP}{Improved S-AP} algorithms by not employing the SMF\abbrev{SMF}{Set-Membership Filtering} technique.


\subsection{Derivation of the {S-SM-AP} algorithm \label{sub:derivation-eusipco}}

Let us define the {\it discard function} $f_\epsilon:\mathbb{R}\rightarrow\mathbb{R}$ for the 
positive constant $\epsilon$ as follows \symbl{$f_\epsilon(\cdot)$}{Discard function; $\epsilon$ defines what is considered as close to zero}
\begin{align}
f_\epsilon(w)=\left\{\begin{array}{ll}w&{\rm if~} |w|> \epsilon,  \\0&{\rm if~} |w|\leq \epsilon. \end{array}\right.\label{eq:f_epsilon-eusipco}
\end{align}
That is, function $f_\epsilon$ discards the values of $w$ which are close to zero. 
The parameter $\epsilon$ defines what is considered as close to zero and, therefore, should be chosen 
based on some {\it a priori} information about the relative importance of a coefficient to the sparse system.
Figure \ref{fig:f-a-eusipco} depicts the function $f_\epsilon(w)$ for $\epsilon=10^{-4}$. Note that the function $f_\epsilon(w)$ is not differentiable at $\pm \epsilon$, however, we 
need to differentiate this function in order to derive the {S-SM-AP}\abbrev{S-SM-AP}{Simple SM-AP} algorithm. 
To address this issue, we define the derivative of $f_\epsilon(w)$ at $+\epsilon$ and $-\epsilon$ 
as equal to the left and the right derivatives, respectively. 
Thus, the derivative of $f_\epsilon(w)$ at $\pm \epsilon$ is zero. 
Define the {\it discard vector function} 
$\fbf_\epsilon:\mathbb{R}^{N+1}\rightarrow\mathbb{R}^{N+1}$ as \symbl{$\fbf_\epsilon(\cdot)$}{Discard vector function}
\begin{align}
\fbf_\epsilon(\wbf)=[f_\epsilon(w_0)~\cdots~f_\epsilon(w_N)]^T. \label{eq:discard vector function}
\end{align}

\begin{figure}[t!]
\centering
\includegraphics[width=0.5\linewidth]{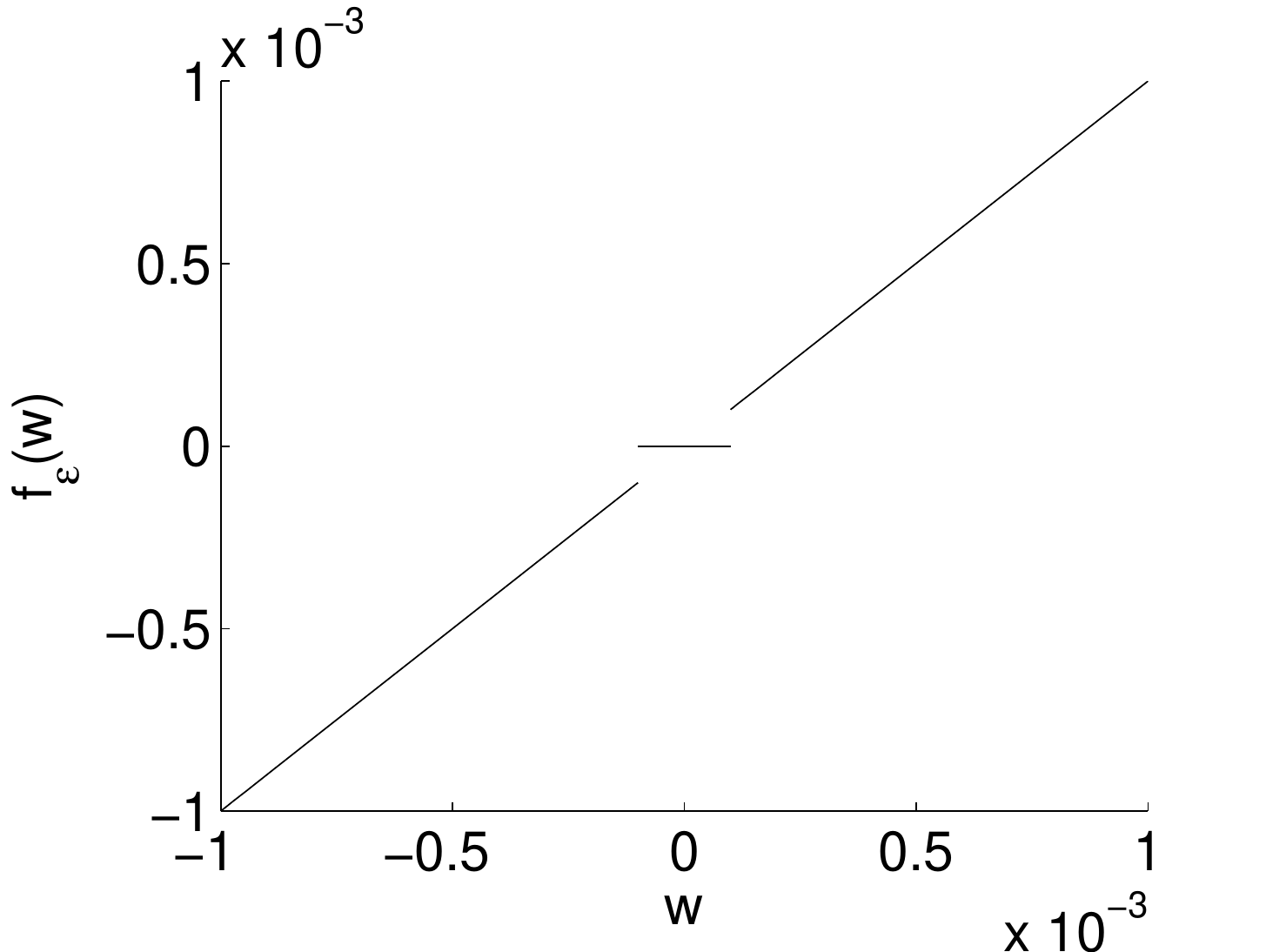}
\caption{Discard function $f_\epsilon(w)$ for $\epsilon=10^{-4}$.\label{fig:f-a-eusipco}}
\end{figure}

The {S-SM-AP}\abbrev{S-SM-AP}{Simple SM-AP} algorithm updates the coefficients whose absolute values are larger than 
$\epsilon$ whenever the error is such that $| e(k) | = |d(k) - \wbf^T(k) \xbf(k)|>\gammabar$. 
Let $\psi^{L+1}(k)$ denote the intersection of the last $L+1$ constraint sets and state the following optimization criterion for the vector update whenever $\wbf(k)\not\in\psi^{L+1}(k)$
\begin{align}
&\min \frac{1}{2}\|\fbf_\epsilon(\wbf(k+1))-\wbf(k)\|^2 \nonumber\\
&{\rm subject~to}\nonumber\\
&\dbf(k)-\Xbf^T(k)\wbf(k+1)=\gammabf(k).\label{eq:ssm_optimization-eusipco}
\end{align}
In order to solve this optimization problem, we construct the Lagrangian $\mathbb{L}$ as
\begin{align}
\mathbb{L}=\frac{1}{2}\|\fbf_\epsilon(\wbf(k+1))-\wbf(k)\|^2
+\lambdabf^T(k)[\dbf(k)-\Xbf^T(k)\wbf(k+1)-\gammabf(k)],
\end{align}
where $\lambdabf(k)\in\mathbb{R}^{L+1}$ is a vector of Lagrange multipliers. After differentiating the above equation with respect to $\wbf(k+1)$ and setting the result 
equal to zero, we obtain
\begin{align}
\fbf_\epsilon(\wbf(k+1))=\wbf(k)+\Fbf_\epsilon^{-1}(\wbf(k+1))\Xbf(k)\lambdabf(k),\label{eq:F_a(w(k+1))-eusipco}
\end{align}
where $\Fbf_\epsilon(\wbf(k+1))$ is the Jacobian matrix of $\fbf_\epsilon(\wbf(k+1))$. \symbl{$\Fbf_\epsilon(\wbf)$}{The Jacobian matrix of $\fbf_\epsilon(\wbf)$} In Equation (\ref{eq:F_a(w(k+1))-eusipco}), by employing a similar strategy as the PASTd\abbrev{PASTd}{Projection Approximation Subspace Tracking with Deflation} (projection approximation subspace tracking with deflation)~\cite{Wang_WirelessCommunicationSystems_book2004}, we replace $\fbf_\epsilon(\wbf(k+1))$ and $\Fbf_\epsilon^{-1}(\wbf(k+1))$ with $\wbf(k+1)$ and $\Fbf_\epsilon^{-1}(\wbf(k))$, respectively, in order to form the recursion, then we obtain
\begin{align}
\wbf(k+1)=\wbf(k)+\Fbf_\epsilon^{-1}(\wbf(k))\Xbf(k)\lambdabf(k).\label{eq:w(k+1)-ssm-eusipco}
\end{align}
If we substitute the above equation in the constraint relation (\ref{eq:ssm_optimization-eusipco}), then we will find $\lambdabf(k)$ as follows
\begin{align}
\lambdabf(k)=(\Xbf^T(k)\Fbf_\epsilon^{-1}(\wbf(k))\Xbf(k))^{-1}(\ebf(k)-\gammabf(k)).\label{eq:lambda-ssm-eusipco}
\end{align}
Replacing (\ref{eq:lambda-ssm-eusipco}) into (\ref{eq:w(k+1)-ssm-eusipco}) leads to the following updating equation 
\begin{align}
\wbf(k+1)&=\wbf(k)\nonumber\\
&+\Fbf_\epsilon^{-1}(\wbf(k))\Xbf(k)(\Xbf^T(k)\Fbf_\epsilon^{-1}(\wbf(k))\Xbf(k))^{-1}(\ebf(k)-\gammabf(k)).
\end{align}
Note that $\Fbf_\epsilon(\wbf(k))$ is not an invertible matrix and, therefore, 
we apply the Moore-Penrose pseudoinverse (generalization of the inverse matrix) 
instead of the standard inverse. However, $\Fbf_\epsilon(\wbf(k))$ is a diagonal matrix with diagonal entries 
equal to zero or one. 
Indeed, for the components of $\wbf(k)$ whose absolute values are larger 
than $\epsilon$, their corresponding entries on the diagonal matrix $\Fbf_\epsilon(\wbf(k))$ 
are equal to one, whereas the remaining entries are zero. 
Hence, the pseudoinverse of $\Fbf_\epsilon(\wbf(k))$ is again $\Fbf_\epsilon(\wbf(k))$. 
As a result, the update equation of the {S-SM-AP}\abbrev{S-SM-AP}{Simple SM-AP} algorithm is as follows
\begin{align}
\wbf(k+1)=\left\{\begin{array}{ll}\wbf(k)+\qbf(k)&\text{if }|e(k)|>\gammabar,\\\wbf(k)&\text{otherwise},\end{array}\right. \label{eq:update_equation-eusipco}
\end{align} 
where
\begin{align}
\qbf(k)=\Fbf_\epsilon(\wbf(k))\Xbf(k)[\Xbf^T(k)\Fbf_\epsilon(\wbf(k))\Xbf(k)+\delta\Ibf]^{-1}(\ebf(k)-\gammabf(k)). \label{eq:q(k)-eusipco}
\end{align}
Note that, we applied a regularization factor $\delta\Ibf$ in (\ref{eq:q(k)-eusipco}) in order to avoid numerical problems in the matrix inversion. The S-SM-AP algorithm is described in Table~\ref{tb:S-SM-AP-chap6}.

\begin{table}[t!]
\caption{Simple set-membership affine projection algorithm (S-SM-AP)}
\begin{center}
\begin{footnotesize}
\begin {tabular}{|l|} \hline\\ \hspace{3.4cm}{\bf S-SM-AP Algorithm}\\
\\
\hline\\
Initialization
\\
$\wbf(0)=[1~1~\cdots~1]^T$\\
choose $\gammabar$ around $\sqrt{5\sigma_n^2}$ and small constant $\delta>0$\\
Do for $k>0$\\
\hspace*{0.3cm} $\ebf(k)=\dbf(k)-\Xbf^T(k)\wbf(k)$\\
\hspace*{0.3cm} if $|e(k)|>\gammabar$\\
\hspace*{0.45cm} $\qbf(k)=\Fbf_\epsilon(\wbf(k))\Xbf(k)[\Xbf^T(k)\Fbf_\epsilon(\wbf(k))\Xbf(k)+\delta\Ibf]^{-1}(\ebf(k)-\gammabf(k))$\\
\hspace*{0.45cm} $\wbf(k+1)= \wbf(k)+\qbf(k)$\\
\hspace*{0.3cm} else\\
\hspace*{0.45cm} $\wbf(k+1)= \wbf(k)$\\
\hspace*{0.3cm} end\\
end\\
 \\
\hline
\end {tabular}
\end{footnotesize}
\end{center}
\label{tb:S-SM-AP-chap6}
\end{table}


\subsection{Discussion of the {S-SM-AP} algorithm \label{sub:discussion-eusipco}}

\subsubsection{Computational Complexity}

The update equation of the {S-SM-AP}\abbrev{S-SM-AP}{Simple SM-AP} algorithm is similar to the update equation of 
the SM-AP\abbrev{SM-AP}{Set-Membership Affine Projection} algorithm, but the former one updates only the subset of coefficients of $\wbf(k)$ 
whose absolute values are larger than $\epsilon$. As a result, the role of matrix $\Fbf_\epsilon(\wbf(k))$ is to discard some coefficients 
of $\wbf(k)$, thus reducing the computational complexity when compared to the SM-AP\abbrev{SM-AP}{Set-Membership Affine Projection} algorithm. 

The computational complexity for each update of the weight vector of the 
SM-PAPA~\cite{Diniz_sm_pap_jasmp2007}\abbrev{SM-PAPA}{Set-Membership Proportionate AP Algorithm}, 
the SSM-AP~\cite{Markus_sparseSMAP_tsp2014}\abbrev{SSM-AP}{Sparsity-Aware SM-AP}, 
and the proposed {S-SM-AP}\abbrev{S-SM-AP}{Simple SM-AP} algorithms are listed in Table~\ref{tab1-eusipco}. The filter order and the memory length factors are $N$ and $L$, respectively. 
It should be noted that the number of operations in Table~\ref{tab1-eusipco} is presented for 
the full update of all coefficients. 
In other words, for the {S-SM-AP}\abbrev{S-SM-AP}{Simple SM-AP} algorithm we have presented the worst case scenario 
which is equivalent to setting $\epsilon=0$,\footnote{In this case, the complexity of the 
{S-SM-AP}\abbrev{S-SM-AP}{Simple SM-AP} and SM-AP\abbrev{SM-AP}{Set-Membership Affine Projection} algorithms are the same.} 
while in practice we are updating only the coefficients with absolute values larger than 
a predetermined positive constant. Also, it is notable that the number of divisions in 
the {S-SM-AP}\abbrev{S-SM-AP}{Simple SM-AP} algorithm is less than the SM-PAPA\abbrev{SM-PAPA}{Set-Membership Proportionate AP Algorithm} and SSM-AP\abbrev{SSM-AP}{Sparsity-Aware SM-AP} algorithms. This is quite significant, as divisions are more complex than 
other operations. Figures \ref{fig:Complexity_L_eusipco} and \ref{fig:Complexity_N_eusipco} show a comparison of the total number of arithmetic operations required by the SM-PAPA\abbrev{SM-PAPA}{Set-Membership Proportionate AP Algorithm}, the SSM-AP\abbrev{SSM-AP}{Sparsity-Aware SM-AP}, and the S-SM-AP\abbrev{S-SM-AP}{Simple SM-AP} algorithms for two cases: $N=15$, variable $L$ and $L=3$, variable $N$. As can be seen, the S-SM-AP\abbrev{S-SM-AP}{Simple SM-AP} algorithm is much less complex than the other two algorithms, especially for high values of $N$ and $L$.

\begin{figure}[t!]
\centering
\subfigure[b][]{\includegraphics[width=.48\linewidth,height=7cm]{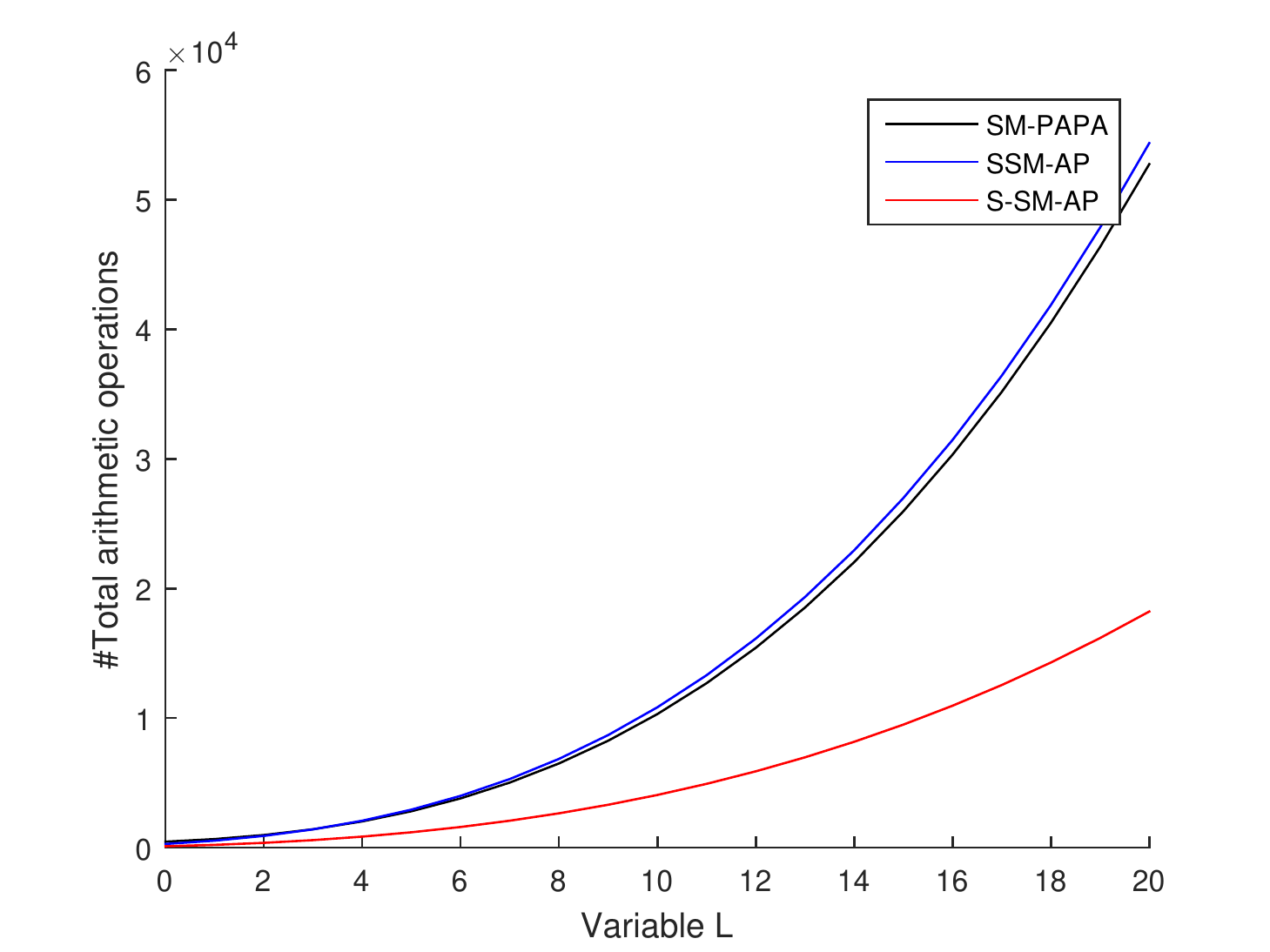}
\label{fig:Complexity_L_eusipco}}
\subfigure[b][]{\includegraphics[width=.48\linewidth,height=7cm]{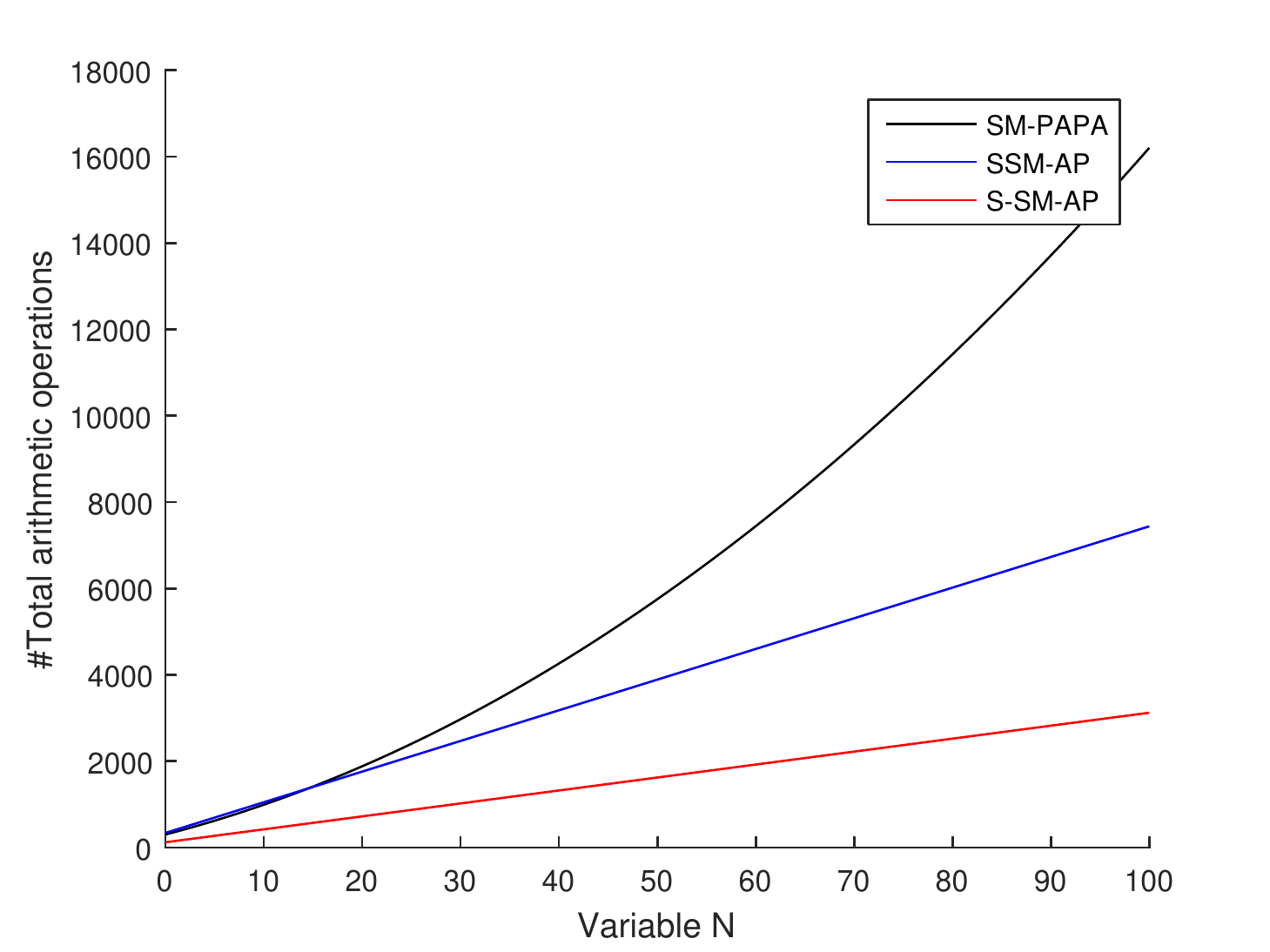}
\label{fig:Complexity_N_eusipco}}
\caption{The numerical complexity of the SM-PAPA, the SSM-AP, and the  IS-SM-AP algorithms for two cases: (a) $N=15$, variable $L$; (b) $L=3$, variable $N$.  \label{fig:Complexity-eusipco}}
\end{figure}

\begin{table*}[t]
\caption{Number of operations for SM-PAPA, SSM-AP, and {S-SM-AP} algorithms \label{tab1-eusipco}}
\begin{center}
\begin{tabular}{|l|c|c|c|} \hline
Algorithm & Addition $\&$ Subtraction & Multiplication & Division\\\hline
\parbox[t]{0mm}{\multirow{2}{*}{SM-PAPA}} & $N^2+(L^2+4L+5)N+$ & $(L^2+5L+7)N+$ & $2N+$ \\ & $(2L^3+5L^2+7L+5)$ & $(2L^3+6L^2+9L+8)$ & $(2L^2+4L+4)$ \\\hline
\parbox[t]{0mm}{\multirow{2}{*}{SSM-AP}} & $(L^2+6L+7)N+$ & $(L^2+6L+9)N+$ & $N+$ \\
 & $(2L^3+6L^2+9L+7)$ & $(2L^3+7L^2+12L+11)$ & $(2L^2+4L+3)$ \\\hline
 \parbox[t]{0mm}{\multirow{2}{*}{{S-SM-AP}}} & $\frac{1}{2}(L^2+5L+6)N$ & $\frac{1}{2}(L^2+5L+6)N$ & \parbox[t]{4mm}{\multirow{2}{*}{$L^2$}} \\ & $\frac{1}{2}(L^3+4L^2+11L+8)$ & $\frac{1}{2}(L^3+6L^2+11L+8)$ & \\\hline
\end{tabular}
\end{center}
\end{table*}

\subsubsection{Initialization}

Unlike classical algorithms in which the initialization of the weight vector is often chosen as 
$\wbf(0) = {\bf 0}$, this same procedure cannot be applied to the proposed algorithm. If the initial coefficients have absolute values lower than $\epsilon$, then the matrix $\Fbf_\epsilon$ is equal to the zero matrix, and it does not allow any update.
Indeed, for the {S-SM-AP}\abbrev{S-SM-AP}{Simple SM-AP} algorithm, each of the coefficients should be initialized as 
$|w_i (0)| > \epsilon$ for $i = 0, 1, \cdots, N$.

\subsubsection{Relation with other algorithms}

The similarities and differences between the proposed algorithm and the SM-AP\abbrev{SM-AP}{Set-Membership Affine Projection} algorithm 
were already addressed when we discussed the complexity of these algorithms. 
Now, one should observe that the update equation of the {S-SM-AP}\abbrev{S-SM-AP}{Simple SM-AP} algorithm is similar to 
the one of the set-membership partial update affine projection (SM-PUAP) 
algorithm \cite{Diniz_adaptiveFiltering_book2013}, in which our matrix $\Fbf_\epsilon(\wbf(k))$ is replaced by 
a diagonal matrix $\Cbf$ also with entries equal to 1 or 0, but there is no specific 
form to set/select $\Cbf$. Therefore, the proposed algorithm can be considered as a particular case of the SM-PUAP in 
which there is a mathematically defined way (based on the sparsity of the unknown system) 
to select the coefficients that are relevant and the ones that will be discarded. {Regarding the memory 
requirements of the proposed algorithm, they are the same as in the AP\abbrev{AP}{Affine Projection} algorithm, i.e., determined by the data-reuse factor $L$.}


\subsection{The Improved {S-SM-AP} (IS-SM-AP) algorithm} \label{sub:Modified-eusipco}

As we can observe in the update equation of the S-SM-AP\abbrev{S-SM-AP}{Simple SM-AP} algorithm, 
if a coefficient of the weight vector falls inside the interval 
$[-\epsilon,+\epsilon]$, then in the next update this coefficient 
does not update since it is eliminated by the discard function. On the other hand, the coefficients $w_i(k)$ inside the interval $[-\epsilon,+\epsilon]$ 
are close to zero, and the best intuitive approximation for them is zero (the center of the interval).
Besides, making these coefficients $w_i(k)$ equal to zero implies in a reduction of 
computational complexity, because it reduces the number of operations required to 
compute the output of the adaptive filter $y(k) = \xbf^T(k) \wbf(k)$.\footnote{This 
additional reduction in the number of operations becomes more important as the filter 
order increases. For instance, in acoustic echo cancellation systems, in which the 
adaptive filter has a few thousands of coefficients~\cite{Hansler_echo_book2004,Benesty_echo_book2010}, 
this simple strategy implies in significant computational savings.} For this purpose, we multiply $\wbf(k)$ by $\Fbf_\epsilon(\wbf(k))$, and 
obtain the Improved S-SM-AP (IS-SM-AP)\abbrev{IS-SM-AP}{Improved S-SM-AP} algorithm as follows
\begin{align}
\wbf(k+1)=\left\{\begin{array}{ll}\Fbf_\epsilon(\wbf(k))\wbf(k)+\qbf(k)&\text{if }|e(k)|>\gammabar,\\\wbf(k)&\text{otherwise}.\end{array}\right. \label{eq:is-sm-ap-update_equation-eusipco}
\end{align}
Table~\ref{tb:IS-SM-AP-chap6} illustrates the IS-SM-AP algorithm.

\begin{table}[t!]
\caption{Improved simple set-membership affine projection algorithm (IS-SM-AP)}
\begin{center}
\begin{footnotesize}
\begin {tabular}{|l|} \hline\\ \hspace{3.4cm}{\bf IS-SM-AP Algorithm}\\
\\
\hline\\
Initialization
\\
$\wbf(0)=[1~1~\cdots~1]^T$\\
choose $\gammabar$ around $\sqrt{5\sigma_n^2}$ and small constant $\delta>0$\\
Do for $k>0$\\
\hspace*{0.3cm} $\ebf(k)=\dbf(k)-\Xbf^T(k)\wbf(k)$\\
\hspace*{0.3cm} if $|e(k)|>\gammabar$\\
\hspace*{0.45cm} $\qbf(k)=\Fbf_\epsilon(\wbf(k))\Xbf(k)[\Xbf^T(k)\Fbf_\epsilon(\wbf(k))\Xbf(k)+\delta\Ibf]^{-1}(\ebf(k)-\gammabf(k))$\\
\hspace*{0.45cm} $\wbf(k+1)= \Fbf_\epsilon(\wbf(k))\wbf(k)+\qbf(k)$\\
\hspace*{0.3cm} else\\
\hspace*{0.45cm} $\wbf(k+1)= \wbf(k)$\\
\hspace*{0.3cm} end\\
end\\
 \\
\hline
\end {tabular}
\end{footnotesize}
\end{center}
\label{tb:IS-SM-AP-chap6}
\end{table} 


\subsection{The S-AP and the IS-AP algorithms} \label{sub:non-SM-eusipco}

By adopting the bound $\gammabar=0$, the S-SM-AP\abbrev{S-SM-AP}{Simple SM-AP} algorithm will convert to the S-AP\abbrev{S-AP}{Simple AP} algorithm with unity step size. Therefore, the S-AP\abbrev{S-AP}{Simple AP} algorithm can be described as follows
\begin{align}
\wbf(k+1)=\wbf(k)+\mu\Fbf_\epsilon(\wbf(k))\Xbf(k)[\Xbf^T(k)\Fbf_\epsilon(\wbf(k))\Xbf(k)+\delta\Ibf]^{-1}\ebf(k) \label{eq:update-s-ap-eusipco}
\end{align}
where $\mu$ is the convergence factor.

By the same argument, we can obtain the update equation of the IS-AP\abbrev{IS-AP}{Improved S-AP} algorithm as below
\begin{align}
\wbf(k+1)=&\Fbf_\epsilon(\wbf(k))\wbf(k)\nonumber\\
&+\mu\Fbf_\epsilon(\wbf(k))\Xbf(k)[\Xbf^T(k)\Fbf_\epsilon(\wbf(k))\Xbf(k)+\delta\Ibf]^{-1}\ebf(k) \label{eq:update-is-ap-eusipco}
\end{align}
where $\mu$ is the convergence factor. These algorithms are counterparts of the AP\abbrev{AP}{Affine Projection} algorithm, however they can exploit the sparsity in systems.

{\it Remark}: In the previous sections, we have focused on the AP\abbrev{AP}{Affine Projection} algorithms. However, the NLMS and the binormalized data-reusing  LMS\abbrev{LMS}{Least-Mean-Square} algorithms can be derived as special cases of the AP\abbrev{AP}{Affine Projection} algorithms. Indeed, by choosing $L=0$ and $1$, the AP\abbrev{AP}{Affine Projection} algorithms will be reduced to the NLMS and the binormalized data-reusing LMS\abbrev{LMS}{Least-Mean-Square} algorithms, respectively.


\section{Some issues of the S-SM-AP and the \\IS-SM-AP Algorithms} \label{sec:difficulties-is-sm-ap-chap6}

As we discussed in Subsection~\ref{sub:discussion-eusipco}, the proposed S-SM-AP\abbrev{S-SM-AP}{Simple SM-AP} and the IS-SM-AP\abbrev{IS-SM-AP}{Improved S-SM-AP} algorithms are sensitive to the initialization. In fact, the absolute value of parameters of $\wbf(0)$ have to be greater than $\epsilon$ and $w_i(0)w_{oi}>0$ for $i=0,\cdots,N$, i.e., $w_i(0)$ and $w_{oi}$ must have the same sign, where $w_{oi}$ is the $i$-th component of the unknown system. Moreover, when the system is time-varying, these algorithms cannot track the system. In other words, if a coefficient falls inside $[-\epsilon,\epsilon]$, then it cannot go out. Thus, in the case of time-varying systems, it means that the algorithm is unable to track the system.

To address this issue, we can use an auxiliary weight vector $\mbf(k)$ as in~\cite{Hu_shrink_sparse_icassp2014}. Through this technique, the discard function applies only to the auxiliary weight vector, and we can propose the discard SM-AP (D-SM-AP)\abbrev{D-SM-AP}{Discard SM-AP} algorithm. The D-SM-AP\abbrev{D-SM-AP}{Discard SM-AP} algorithm is presented in Table~\ref{tb:D-SM-AP-chap6}. Note that the computational burden of the D-SM-AP\abbrev{D-SM-AP}{Discard SM-AP} algorithm is higher than the IS-SM-AP\abbrev{IS-SM-AP}{Improved S-SM-AP} and the S-SM-AP\abbrev{S-SM-AP}{Simple SM-AP} algorithms. However, it can be utilized in time-varying systems, and we can adopt any initialization $\wbf(0)$.

\begin{table}[t!]
\caption{Discard set-membership affine projection algorithm (D-SM-AP)}
\begin{center}
\begin{footnotesize}
\begin {tabular}{|l|} \hline\\ \hspace{4.4cm}{\bf D-SM-AP Algorithm}\\
\\
\hline\\
Initialization
\\
$\wbf(0)={\bf 0}$ and $\mbf(0)={\bf 0}$\\
choose $\gammabar$ around $\sqrt{5\sigma_n^2}$ and small constant $\delta>0$\\
Do for $k>0$\\
\hspace*{0.3cm} $\ebf(k)=\dbf(k)-\Xbf^T(k)\wbf(k)$\\
\hspace*{0.3cm} $\mbf(k+1)= \left\{\begin{array}{ll}\mbf(k)+\Xbf(k)[\Xbf^T(k)\Xbf(k)+\delta\Ibf]^{-1}(\ebf(k)-\gammabf(k))&{\rm if~} |e(k)|>\gammabar\\\mbf(k)&{\rm otherwise}\end{array}\right.$\\
\hspace*{0.3cm} $\wbf(k+1)=\Fbf_{\epsilon}(\mbf(k+1))\mbf(k+1)$\\
end\\
 \\
\hline
\end {tabular}
\end{footnotesize}
\end{center}
\label{tb:D-SM-AP-chap6}
\end{table}


\section{Recursive Least-Squares Algorithm Exploiting Sparsity} \label{sec:s-rls-sparse}

In this section, we utilize the discard function to introduce an RLS\abbrev{RLS}{Recursive Least-Squares} algorithm for sparse systems. In Subsection~\ref{sub:derivation-s-rls-sparse}, we derive the S-RLS\abbrev{S-RLS}{RLS Algorithm for Sparse System} algorithm that exploits the sparsity of the estimated parameters by giving low weight to the small coefficients. 
For this purpose, the strategy consists in multiplying the coefficients of the sparse filter which 
are close to zero by a small constant.  
Then, in Subsection~\ref{sub:discussion-s-rls-sparse}, we include a discussion of some 
characteristics of the proposed algorithm. Subsection~\ref{sub:ds-s-rls-sparse} briefly describes the DS-S-RLS\abbrev{DS-S-RLS}{Data-Selective S-RLS} algorithm, the data-selective version of the S-RLS algorithm.


\subsection{Derivation of the S-RLS algorithm} \label{sub:derivation-s-rls-sparse}

We utilize the discard vector function defined in Equation~\eqref{eq:discard vector function} in order to introduce the objective function of the S-RLS\abbrev{S-RLS}{RLS Algorithm for Sparse System} algorithm as follows
\begin{align}
&\min \xi^d(k)=\sum_{i=0}^k\lambda^{k-i}[d(i)-\xbf^T(i)\fbf_\epsilon(\wbf(k))]^2,\label{eq:ssm_optimization-sparse}
\end{align}
where the parameter $\lambda$ is an exponential weighting factor that should be selected in the range $0\ll\lambda\leq1$.

By differentiating $\xi^d(k)$ with respect to $\wbf(k)$, we obtain
\begin{align}
\frac{\partial\xi^d(k)}{\partial\wbf(k)}=-2\sum_{i=0}^k\lambda^{k-i}\Fbf_\epsilon(\wbf(k))\xbf(i)[d(i)-\xbf^T(i)\fbf_\epsilon(\wbf(k))],
\end{align}
where $\Fbf_\epsilon(\wbf(k))$ is the Jacobian matrix of  $\fbf_\epsilon(\wbf(k))$ (see~\eqref{eq:discard vector function}). By equating the above equation to zero, we find the optimal vector $\wbf(k)$ that solves the least-square problem, as follows
\begin{align}
-\sum_{i=0}^k\lambda^{k-i}\Fbf_\epsilon(\wbf(k))\xbf(i)\xbf^T(i)\fbf_\epsilon(\wbf(k))+\sum_{i=0}^k\lambda^{k-i}\Fbf_\epsilon(\wbf(k))\xbf(i)d(i)=\left[\begin{array}{c}0\\\vdots\\0\end{array}\right].
\end{align}
Therefore,
\begin{align}
\fbf_\epsilon(\wbf(k))=&\Big[\sum_{i=0}^k\lambda^{k-i}\Fbf_\epsilon(\wbf(k))\xbf(i)\xbf^T(i)\Big]^{-1}\times\sum_{i=0}^k\lambda^{k-i}\Fbf_\epsilon(\wbf(k))\xbf(i)d(i). \label{eq:before_double_F-sparse}
\end{align}
Note that $\Fbf_\epsilon(\wbf(k))$ is a diagonal matrix with diagonal entries equal to zero or one. 
Indeed, for the components of $\wbf(k)$ whose absolute values are larger than $\epsilon$, their corresponding entries on the diagonal matrix $\Fbf_\epsilon(\wbf(k))$ are one, whereas the remaining entries are zero. Hence,
\begin{align}
\Fbf_\epsilon(\wbf(k))\xbf(i)\xbf^T(i)&=\Fbf_\epsilon^2(\wbf(k))\xbf(i)\xbf^T(i)=\Fbf_\epsilon(\wbf(k))(\xbf^T(i)\Fbf_\epsilon(\wbf(k)))^T\xbf^T(i)\nonumber\\
&=\Fbf_\epsilon(\wbf(k))\xbf(i)\xbf^T(i)\Fbf_\epsilon(\wbf(k)). \label{eq:F^2=F_indentity-sparse}
\end{align}
By utilizing (\ref{eq:F^2=F_indentity-sparse}) in (\ref{eq:before_double_F-sparse}) and replacing $\fbf_\epsilon(\wbf(k))$ by $\wbf(k+1)$, we get
\begin{align}
\wbf(k+1)&=\Big[\sum_{i=0}^k\lambda^{k-i}\Fbf_\epsilon(\wbf(k))\xbf(i)\xbf^T(i)\Fbf_\epsilon(\wbf(k))\Big]^{-1}\times\sum_{i=0}^k\lambda^{k-i}\Fbf_\epsilon(\wbf(k))\xbf(i)d(i)\nonumber\\
&=\Rbf_{D,\epsilon}^{-1}(k)\pbf_{D,\epsilon}(k), \label{eq:original_version-s-rls}
\end{align}
where $\Rbf_{D,\epsilon}(k)$ and $\pbf_{D,\epsilon}(k)$ are called the deterministic correlation matrix of the input signal and the
deterministic cross-correlation vector between the input and the desired signals, respectively. \symbl{$\Rbf_{D,\epsilon}(k)$}{The deterministic correlation matrix of the input signal involved $\Fbf_\epsilon(\wbf(k))$} \symbl{$\pbf_{D,\epsilon}(k)$}{The deterministic cross-correlation vector between the input and the desired signals involved $\Fbf_\epsilon(\wbf(k))$} Whenever the $i$-th diagonal entry of matrix $\Fbf_\epsilon(\wbf(k))$ is zero, it is replaced by a small power-of-two (e.g., $2^{-5}$) multiplied by the sign of the component $w_i(k)$ in order to avoid that matrix $\Rbf_{D,\epsilon}(k)$ becomes ill conditioned.

If we apply the direct method to calculate the inverse of $\Rbf_{D,\epsilon}(k)$, then the resulting algorithm has computational complexity of $O[N^3]$. Generally, in the traditional RLS\abbrev{RLS}{Recursive Least-Squares} algorithm, the inverse matrix is computed through the matrix inversion lemma~\cite{Goodwin_Dynamic_system_id_book1977}. In matrix inversion lemma, we have
\begin{align}
[\Abf+\Bbf\Cbf\Dbf]^{-1}=\Abf^{-1}-\Abf^{-1}\Bbf[\Dbf\Abf^{-1}\Bbf+\Cbf^{-1}]^{-1}\Dbf\Abf^{-1},
\end{align}
where $\Abf$, $\Bbf$, $\Cbf$, and $\Dbf$ are matrices of appropriate dimensions, and $\Abf$ and $\Cbf$ are invertible. If we choose $\Abf=\lambda\Rbf_{D,\epsilon}(k-1)$, $\Bbf=\Dbf^T=\Fbf_\epsilon(\wbf(k))\xbf(k)$, and $\Cbf=1$ then by
using the matrix inversion lemma, the inverse of the deterministic correlation matrix can be calculated in the form \symbl{$\Sbf_{D,\epsilon}(k)$}{The inverse of $\Rbf_{D,\epsilon}(k)$}
\begin{align}
\Sbf_{D,\epsilon}(k)=&\Rbf_{D,\epsilon}^{-1}(k)\nonumber\\=&\frac{1}{\lambda}\Big[\Sbf_{D,\epsilon}(k-1)
-\frac{\Sbf_{D,\epsilon}(k-1)\Fbf_\epsilon(\wbf(k))\xbf(k)\xbf^T(k)\Fbf_\epsilon(\wbf(k))\Sbf_{D,\epsilon}(k-1)}{\lambda+\xbf^T(k)\Fbf_\epsilon(\wbf(k))\Sbf_{D,\epsilon}(k-1)\Fbf_\epsilon(\wbf(k))\xbf(k)}\Big].\label{eq:S_D-sparse}
\end{align}
The resulting equation to compute $\Rbf_{D,\epsilon}^{-1}(k)$ has computational complexity of $O[N^2]$, whereas the computational resources for the direct inversion is of order $N^3$. Finally, 
\begin{align}
\wbf(k+1)=\Sbf_{D,\epsilon}(k)\pbf_{D,\epsilon}(k).
\end{align}
Table~\ref{tb:S-RLS-sparse} describes the S-RLS\abbrev{S-RLS}{RLS Algorithm for Sparse System} algorithm.

\begin{table}[t!]
\caption{Recursive least-squares algorithm for sparse systems (S-RLS)}
\begin{center}
\begin{footnotesize}
\begin {tabular}{|l|} \hline\\ \hspace{2.9cm}{\bf S-RLS Algorithm}\\
\\
\hline\\
Initialization
\\
$\Sbf_{D,\epsilon}(-1)=\delta\Ibf$\\
where $\delta$ can be the inverse of the input  signal power estimate\\
$\pbf_{D,\epsilon}(-1)=[0~0~\cdots~0]^T$\\
$\wbf(0)=[1~1~\cdots~1]^T$\\
Do for $k\geq0$\\
\hspace*{0.3cm} compute $\Sbf_{D,\epsilon}(k)$ through Equation (\ref{eq:S_D-sparse})\\
\hspace*{0.3cm} $\pbf_{D,\epsilon}(k)=\lambda \pbf_{D,\epsilon}(k-1)+\Fbf_\epsilon(\wbf(k))\xbf(k)d(k)$\\
\hspace*{0.3cm} $\wbf(k+1)=\Sbf_{D,\epsilon}(k)\pbf_{D,\epsilon}(k)$\\
end\\
 \\
\hline
\end {tabular}
\end{footnotesize}
\end{center}
\label{tb:S-RLS-sparse}
\end{table}

We can introduce the alternative S-RLS (AS-RLS)\abbrev{AS-RLS}{Alternative S-RLS} algorithm in order to decrease the computational load of the S-RLS\abbrev{S-RLS}{RLS Algorithm for Sparse System}. Assuming $\Fbf_\epsilon(\wbf(k)) \approx\Fbf_\epsilon(\wbf(k-1))$, we can rewrite Equation~\eqref{eq:original_version-s-rls} as
\begin{align}
&\Big[\sum_{i=0}^k\lambda^{k-i}\Fbf_\epsilon(\wbf(k))\xbf(i)\xbf^T(i)\Fbf_\epsilon(\wbf(k))\Big]\wbf(k+1)=\sum_{i=0}^k\lambda^{k-i}\Fbf_\epsilon(\wbf(k))\xbf(i)d(i)\nonumber\\
&=\lambda\Big[\sum_{i=0}^{k-1}\lambda^{k-i-1}\Fbf_\epsilon(\wbf(k))\xbf(i)d(i)\Big]+\Fbf_\epsilon(\wbf(k))\xbf(k)d(k)\nonumber\\
&\approx\lambda\Big[\sum_{i=0}^{k-1}\lambda^{k-i-1}\Fbf_\epsilon(\wbf(k-1))\xbf(i)d(i)\Big]+\Fbf_\epsilon(\wbf(k))\xbf(k)d(k)\nonumber\\
&=\lambda\pbf_{D,\epsilon}(k-1)+\Fbf_\epsilon(\wbf(k))\xbf(k)d(k)
\end{align}
By considering that $\Rbf_{D,\epsilon}(k-1)\wbf(k)=\pbf_{D,\epsilon}(k-1)$, we obtain
\begin{align}
&\Big[\sum_{i=0}^k\lambda^{k-i}\Fbf_\epsilon(\wbf(k))\xbf(i)\xbf^T(i)\Fbf_\epsilon(\wbf(k))\Big]\wbf(k+1)\nonumber\\
&\approx\lambda\Rbf_{D,\epsilon}(k-1)\wbf(k)+\Fbf_\epsilon(\wbf(k))\xbf(k)d(k)\nonumber\\
&=\Big[\sum_{i=0}^{k-1}\lambda^{k-i}\Fbf_\epsilon(\wbf(k-1))\xbf(i)\xbf^T(i)\Fbf_\epsilon(\wbf(k-1))\Big]\wbf(k)+\Fbf_\epsilon(\wbf(k))\xbf(k)d(k)\nonumber\\
&\approx\Big[\sum_{i=0}^{k}\lambda^{k-i}\Fbf_\epsilon(\wbf(k))\xbf(i)\xbf^T(i)\Fbf_\epsilon(\wbf(k))-\Fbf_\epsilon(\wbf(k))\xbf(k)\xbf^T(k)\Fbf_\epsilon(\wbf(k))\Big]\wbf(k)\nonumber\\
&+\Fbf_\epsilon(\wbf(k))\xbf(k)d(k).
\end{align}
Then, by using Equation~\eqref{eq:F^2=F_indentity-sparse} and a few manipulations, we get
\begin{align}
\wbf(k+1)\approx\wbf(k)+e(k)\Sbf_{D,\epsilon}(k)\Fbf_\epsilon(\wbf(k))\xbf(k),
\end{align}
where $e(k)=d(k)-\xbf^T(k)\wbf(k)$. Table~\ref{tb:alt-S-RLS-sparse} illustrates the AS-RLS\abbrev{AS-RLS}{Alternative S-RLS} algorithm.

\begin{table}[t!]
\caption{Alternative recursive least-squares algorithm for sparse systems}
\begin{center}
\begin{footnotesize}
\begin {tabular}{|l|} \hline\\ \hspace{2.8cm}{\bf AS-RLS Algorithm}\\
\\
\hline\\
Initialization
\\
$\Sbf_{D,\epsilon}(-1)=\delta\Ibf$\\
where $\delta$ can be inverse of the input  signal power estimate\\
$\wbf(0)=[1~1~\cdots~1]^T$\\
Do for $k\geq0$\\
\hspace*{0.3cm} $e(k)=d(k)-\xbf^T(k)\wbf(k)$\\
\hspace*{0.3cm} $\psi(k)=\Sbf_{D,\epsilon}(k-1)\Fbf_\epsilon(\wbf(k))\xbf(k)$\\
\hspace*{0.3cm} $\Sbf_{D,\epsilon}(k)=\frac{1}{\lambda}\Big[\Sbf_{D,\epsilon}(k-1)-\frac{\psi(k)\psi^T(k)}{\lambda+\psi^T(k)\Fbf_\epsilon(\wbf(k))\xbf(k)}\Big]$\\
\hspace*{0.3cm} $\wbf(k+1)=\wbf(k)+e(k)\Sbf_{D,\epsilon}(k)\Fbf_\epsilon(\wbf(k))\xbf(k)$\\
end\\
 \\
\hline
\end {tabular}
\end{footnotesize}
\end{center}
\label{tb:alt-S-RLS-sparse}
\end{table}


\subsection{Discussion of the {S-RLS} algorithm} \label{sub:discussion-s-rls-sparse}

The update equation of the {S-RLS}\abbrev{S-RLS}{RLS Algorithm for Sparse System} algorithm is similar to the update equation of the RLS\abbrev{RLS}{Recursive Least-Squares} algorithm, but the former gives importance only to the subset of coefficients of $\wbf(k)$ whose absolute values are larger than $\epsilon$.
The matrix $\Fbf_\epsilon(\wbf(k))$ defines the important coefficients of $\wbf(k)$.


\subsection{DS-S-RLS algorithm} \label{sub:ds-s-rls-sparse}

In this subsection, our goal is to reduce the update rate of the S-RLS\abbrev{S-RLS}{RLS Algorithm for Sparse System} algorithm. In fact, when the current weight vector is acceptable, i.e., the output estimation error is small, we can save computational resources by avoiding the new update. The data selective S-RLS (DS-S-RLS)\abbrev{DS-S-RLS}{Data-Selective S-RLS} algorithm updates whenever the output estimation error is larger than a prescribed value $\gammabar$, i.e., when $|e(k)|=|d(k)-\wbf^T(k)\xbf(k)|>\gammabar$. Therefore, the DS-S-RLS\abbrev{DS-S-RLS}{Data-Selective S-RLS} algorithm reduces the computational complexity by avoiding updates whenever the estimate is acceptable. Table~\ref{tb:DS-S-RLS-sparse} describes the DS-S-RLS algorithm.

\begin{table}[t!]
\caption{Data-selective recursive least-squares algorithm for sparse systems (DS-S-RLS)}
\begin{center}
\begin{footnotesize}
\begin {tabular}{|l|} \hline\\ \hspace{2.7cm}{\bf DS-S-RLS Algorithm}\\
\\
\hline\\
Initialization
\\
$\Sbf_{D,\epsilon}(-1)=\delta\Ibf$\\
where $\delta$ can be the inverse of the input  signal power estimate\\
choose $\gammabar$ around $\sqrt{5\sigma_n^2}$\\
$\pbf_{D,\epsilon}(-1)=[0~0~\cdots~0]^T$\\
$\wbf(0)=[1~1~\cdots~1]^T$\\
Do for $k\geq0$\\
\hspace*{0.3cm} $e(k)=d(k)-\wbf^T(k)\xbf(k)$\\
\hspace*{0.3cm} if $|e(k)|>\gammabar$\\
\hspace*{0.45cm} compute $\Sbf_{D,\epsilon}(k)$ through Equation (\ref{eq:S_D-sparse})\\
\hspace*{0.45cm} $\pbf_{D,\epsilon}(k)=\lambda \pbf_{D,\epsilon}(k-1)+\Fbf_\epsilon(\wbf(k))\xbf(k)d(k)$\\
\hspace*{0.45cm} $\wbf(k+1)=\Sbf_{D,\epsilon}(k)\pbf_{D,\epsilon}(k)$\\
\hspace*{0.3cm} else\\
\hspace*{0.45cm} $\wbf(k+1)=\wbf(k)$\\
\hspace*{0.3cm} end\\
end\\
 \\
\hline
\end {tabular}
\end{footnotesize}
\end{center}
\label{tb:DS-S-RLS-sparse}
\end{table}


\section{{$l_0$} Norm Recursive Least-Squares Algorithm} \label{sec:l0-rls-sparse}

In the previous section, we have introduced the S-RLS\abbrev{S-RLS}{RLS Algorithm for Sparse System} algorithm for sparse systems utilizing the discard function. Another interesting approach to exploit the system sparsity can be derived by using the $l_0$ norm~\cite{Markus_sparseSMAP_tsp2014} leading to the $l_0$-RLS\abbrev{$l_0$-RLS}{$l_0$ Norm RLS} algorithm. However, as mentioned earlier, the resulting optimization problem of $l_0$ norm has difficulties due to the discontinuity of the $l_0$ norm. Thus, we use Equations~\eqref{eq:mult_Laplace}-\eqref{eq:modf_mult_Geman} to approximate the $l_0$ norm.

Therefore, the objective function of the $l_0$-RLS\abbrev{$l_0$-RLS}{$l_0$ Norm RLS} algorithm is given by
\begin{align}
\min \sum_{i=0}^k\lambda^{k-i}[d(i)-\xbf^T(i)\wbf(k)]^2+\alpha\|\wbf(k)\|_0,
\end{align}
where $\alpha\in\mathbb{R}_+$ is the weight given to the $l_0$ norm penalty. Replacing $\|\wbf(k)\|_0$ by its approximation, we obtain
\begin{align}
\min \sum_{i=0}^k\lambda^{k-i}[d(i)-\xbf^T(i)\wbf(k)]^2+\alpha G_\beta(\wbf(k)).
\end{align}
By differentiating the above equation with respect to $\wbf(k)$, and equating the result to zero, we get
\begin{align}
\wbf(k)=&\Big[\sum_{i=0}^k\lambda^{k-i}\xbf(i)\xbf^T(i)\Big]^{-1}
\times\Big((\sum_{i=0}^k\lambda^{k-i}\xbf(i)d(i))-\frac{\alpha}{2}\gbf_\beta(\wbf(k))\Big)\nonumber\\
=&\Rbf_D^{-1}(k)\Big(\pbf_D(k)-\frac{\alpha}{2}\gbf_\beta(\wbf(k))\Big). \label{eq:pre_DS-l0-RLS-sparse}
\end{align}
If we adopt $\Abf=\lambda\Rbf_D(k-1)$, $\Bbf=\Dbf^T=\xbf(k)$, and $\Cbf=1$ then by using the matrix inversion lemma, the update equation of the $l_0$-RLS\abbrev{$l_0$-RLS}{$l_0$ Norm RLS} algorithm is given as follows
\begin{align}
\wbf(k)=\Sbf_D(k)\Big(\pbf_D(k)-\frac{\alpha}{2}\gbf_\beta(\wbf(k-1))\Big), \label{eq:DS-l0-RLS-sparse}
\end{align}
where the same strategy as the PASTd\abbrev{PASTd}{Projection Approximation Subspace Tracking with Deflation}
(projection approximation subspace tracking with deflation)~\cite{Wang_WirelessCommunicationSystems_book2004} is employed and $\gbf_\beta(\wbf(k))$ is replaced by $\gbf_\beta(\wbf(k-1))$ in order to form the recursion. Also, $\pbf_D(k)$ and $\Sbf_D(k)$ are given as follows
\begin{align}
\pbf_D(k)=&\lambda\pbf_D(k-1)+d(k)\xbf(k),\\
\Sbf_D(k)=&\frac{1}{\lambda}\Big[\Sbf_D(k-1)-\frac{\Sbf_D(k-1)\xbf(k)\xbf^T(k)\Sbf_D(k-1)}{\lambda+\xbf^T(k)\Sbf_D(k-1)\xbf(k)}\Big]. \label{eq:S_D-l0-sparse}
\end{align}
Table~\ref{tb:l0-RLS-sparse} presents the $l_0$-RLS\abbrev{$l_0$-RLS}{$l_0$ Norm RLS} algorithm.

\begin{table}[t!]
\caption{$l_0$ norm recursive least-squares algorithm for sparse systems ($l_0$-RLS)}
\begin{center}
\begin{footnotesize}
\begin {tabular}{|l|} \hline\\ \hspace{2.9cm}{\bf $l_0$-RLS Algorithm}\\
\\
\hline\\
Initialization
\\
$\Sbf_{D}(-1)=\delta\Ibf$\\
where $\delta$ can be inverse of the input  signal power estimate\\
$\pbf_{D}(-1)=[0~0~\cdots~0]^T$\\
$\wbf(-1)=[1~1~\cdots~1]^T$\\
Do for $k\geq0$\\
\hspace*{0.3cm} $\Sbf_{D}(k)$ as in Equation (\ref{eq:S_D-l0-sparse})\\
\hspace*{0.3cm} $\pbf_{D}(k)=\lambda \pbf_{D}(k-1)+d(k)\xbf(k)$\\
\hspace*{0.3cm} $\wbf(k)=\Sbf_D(k)\Big(\pbf_D(k)-\frac{\alpha}{2}\gbf_\beta(\wbf(k-1))\Big)$\\
end\\
 \\
\hline
\end {tabular}
\end{footnotesize}
\end{center}
\label{tb:l0-RLS-sparse}
\end{table}

Similarly to the AS-RLS\abbrev{AS-RLS}{Alternative S-RLS} algorithm, we can derive the alternative $l_0$-RLS (A-$l_0$-RLS)\abbrev{A-$l_0$-RLS}{Alternative $l_0$-RLS} algorithm. We can rewrite Equation~\eqref{eq:DS-l0-RLS-sparse} as
\begin{align}
\Rbf_D(k)\wbf(k)=\pbf_D(k)-\frac{\alpha}{2}\gbf_\beta(\wbf(k-1))=\lambda\pbf_D(k-1)+\xbf(k)d(k)-\frac{\alpha}{2}\gbf_\beta(\wbf(k-1)).
\end{align}

By Equation~\eqref{eq:pre_DS-l0-RLS-sparse}, we have $\Rbf_D(k-1)\wbf(k-1)=\pbf_D(k-1)-\frac{\alpha}{2}\gbf_\beta(\wbf(k-1))$, then we get
\begin{align}
\Rbf_D(k)\wbf(k)=&\lambda\Rbf_D(k-1)\wbf(k-1)+\frac{\lambda\alpha}{2}\gbf_\beta(\wbf(k-1))\nonumber\\
&-\frac{\alpha}{2}\gbf_\beta(\wbf(k-1))+\xbf(k)d(k)\nonumber\\
=&\Big[\sum_{j=0}^k\lambda^{k-i}\xbf(i)\xbf^T(i)-\xbf(k)\xbf^T(k)\Big]\wbf(k-1)\nonumber\\
&+\frac{(\lambda-1)\alpha}{2}\gbf_\beta(\wbf(k-1))+\xbf(k)d(k).
\end{align}
If we define the {\it a priori} error as
\begin{align}
e(k)=d(k)-\xbf^T(k)\wbf(k-1),
\end{align}
we obtain
\begin{align}
\Rbf_D(k)\wbf(k)=\Rbf_D(k)\wbf(k-1)+e(k)\xbf(k)+\frac{(\lambda-1)\alpha}{2}\gbf_\beta(\wbf(k-1)).
\end{align}
Therefore, the update equation of the A-$l_0$-RLS\abbrev{A-$l_0$-RLS}{Alternative $l_0$-RLS} algorithm is given by
\begin{align}
\wbf(k)=\wbf(k-1)+\Sbf_D(k)[e(k)\xbf(k)+\frac{(\lambda-1)\alpha}{2}\gbf_\beta(\wbf(k-1))].
\end{align}
Table~\ref{tb:alt-l0-RLS-sparse} presents the A-$l_0$-RLS\abbrev{A-$l_0$-RLS}{Alternative $l_0$-RLS} algorithm.

\begin{table}[t!]
\caption{Alternative $l_0$ norm recursive least-squares algorithm for sparse systems}
\begin{center}
\begin{footnotesize}
\begin {tabular}{|l|} \hline\\ \hspace{2.8cm}{\bf A-$l_0$-RLS Algorithm}\\
\\
\hline\\
Initialization
\\
$\Sbf_{D}(-1)=\delta\Ibf$\\
where $\delta$ can be inverse of the input  signal power estimate\\
$\wbf(-1)=[1~1~\cdots~1]^T$\\
Do for $k\geq0$\\
\hspace*{0.3cm} $e(k)=d(k)-\xbf^T(k)\wbf(k-1)$\\
\hspace*{0.3cm} $\psi(k)=\Sbf_{D}(k-1)\xbf(k)$\\
\hspace*{0.3cm} $\Sbf_{D}(k)=\frac{1}{\lambda}\Big[\Sbf_{D}(k-1)-\frac{\psi(k)\psi^T(k)}{\lambda+\psi^T(k)\xbf(k)}\Big]$\\
\hspace*{0.3cm} $\wbf(k)=\wbf(k-1)+\Sbf_D(k)[e(k)\xbf(k)+\frac{(\lambda-1)\alpha}{2}\gbf_\beta(\wbf(k-1))]$\\
end\\
 \\
\hline
\end {tabular}
\end{footnotesize}
\end{center}
\label{tb:alt-l0-RLS-sparse}
\end{table}

\subsection{DS-{$l_0$}-RLS algorithm} \label{sub:ds-l0-rls-sparse}

In this subsection, we propose the DS-$l_0$-RLS\abbrev{DS-$l_0$-RLS}{Data-Selective $l_0$-RLS} algorithm to decrease the update rate of the $l_0$-RLS\abbrev{$l_0$-RLS}{$l_0$ Norm RLS} algorithm. Similarly to the discussion in Subsection~\ref{sub:ds-s-rls-sparse}, the DS-$l_0$-RLS\abbrev{DS-$l_0$-RLS}{Data-Selective $l_0$-RLS} algorithm for sparse systems can be derived by implementing an update in the $l_0$-RLS\abbrev{$l_0$-RLS}{$l_0$ Norm RLS} algorithm whenever the output estimation error is larger than a predetermined value $\gammabar$, i.e., when $|e(k)|=|d(k)-\wbf^T(k)\xbf(k)|>\gammabar$. Hence, the computational resources of the DS-$l_0$-RLS\abbrev{DS-$l_0$-RLS}{Data-Selective $l_0$-RLS} algorithm is lower than the $l_0$-RLS\abbrev{$l_0$-RLS}{$l_0$ Norm RLS} algorithm since it prevents unnecessary updates. The DS-$l_0$-RLS algorithm is described in Table~\ref{tb:DS-l0-RLS-sparse}.

\begin{table}[t!]
\caption{Data-selective $l_0$ norm recursive least-squares algorithm for sparse systems (DS-$l_0$-RLS)}
\begin{center}
\begin{footnotesize}
\begin {tabular}{|l|} \hline\\ \hspace{2.7cm}{\bf DS-$l_0$-RLS Algorithm}\\
\\
\hline\\
Initialization
\\
$\Sbf_{D}(-1)=\delta\Ibf$\\
where $\delta$ can be inverse of the input  signal power estimate\\
choose $\gammabar$ around $\sqrt{5\sigma_n^2}$\\
$\pbf_{D}(-1)=[0~0~\cdots~0]^T$\\
$\wbf(-1)=[1~1~\cdots~1]^T$\\
Do for $k\geq0$\\
\hspace*{0.3cm} $e(k)=d(k)-\wbf^T(k-1)\xbf(k)$\\
\hspace*{0.3cm} if $|e(k)|>\gammabar$\\
\hspace*{0.45cm} $\Sbf_{D}(k)$ as in Equation (\ref{eq:S_D-l0-sparse})\\
\hspace*{0.45cm} $\pbf_{D}(k)=\lambda \pbf_{D}(k-1)+d(k)\xbf(k)$\\
\hspace*{0.45cm} $\wbf(k)=\Sbf_D(k)\Big(\pbf_D(k)-\frac{\alpha}{2}\gbf_\beta(\wbf(k-1))\Big)$\\
\hspace*{0.3cm} else\\
\hspace*{0.45cm} $\wbf(k)=\wbf(k-1)$\\
\hspace*{0.3cm} end\\
end\\
 \\
\hline
\end {tabular}
\end{footnotesize}
\end{center}
\label{tb:DS-l0-RLS-sparse}
\end{table}

In Subsection~\ref{sub:simulation_rls_based_sparse}, we compare the simulation results of the RLS-based\abbrev{RLS}{Recursive Least-Squares} algorithms with the Adaptive Sparse Variational Bayes iterative scheme based on Laplace prior (ASVB-L)\abbrev{ASVB-L}{Adaptive Sparse Variational Bayes Iterative Scheme Based on Laplace Prior} algorithm~\cite{Themelis_BayesianAP_tsp2014,Giampouras_Bayesian_LR_Subspace_eusipco2015,Themelis_Bayesian_GIGMC_eusipco2015}. Therefore, it is worthwhile to compare the computational complexity of these algorithms. Table~\ref{tab-complexity-rls} shows the number of real multiplications, real additions, and real divisions must be performed at each iteration by these algorithms.

\begin{table*}[t]
\caption{Number of operations for AS-RLS, $l_0$-RLS, and ASVB-L algorithms \label{tab-complexity-rls}}
\begin{center}
\begin{tabular}{|l|c|c|c|} \hline
Algorithm & Addition $\&$ Subtraction & Multiplication & Division\\\hline
AS-RLS & $N^2+3N$ & $N^2+5N+1$ & $1$ \\\hline
A-$l_0$-RLS & $N^2+5N$ & $N^2+9N+1$ & $N+1$ \\\hline
ASVB-L & $N^2+7N+6$ & $2N^2+10N+3$ & $6N+2$ \\\hline
\end{tabular}
\end{center}
\end{table*}


\section{Simulations} \label{sec:simulations-eusipco}

In this section, we present some numerical simulations for the proposed algorithms. In all scenarios, we deal with the system identification problem. In Subsection~\ref{sub:simulation_lms_based_sparse}, we apply the LMS-based algorithms. The numerical results of the RLS-based\abbrev{RLS}{Recursive Least-Squares} algorithms are illustrated in Subsection~\ref{sub:simulation_rls_based_sparse}.


\subsection{Simulation results of the LMS-based algorithms} \label{sub:simulation_lms_based_sparse}

Here, we have applied the algorithms described in this chapter, the NLMS,\abbrev{NLMS}{Normalized LMS} and the AP\abbrev{AP}{Affine Projection} algorithms to identify three unknown sparse systems of order 14.\footnote{The results 
for the S-SM-AP\abbrev{S-SM-AP}{Simple SM-AP} algorithm are not shown here because they are almost identical to the results of the 
IS-SM-AP\abbrev{IS-SM-AP}{Improved S-SM-AP} algorithm, but the latter has the advantage of requiring fewer computations.} The  first one is an arbitrary sparse system $\wbf_o$, the second one is a block sparse 
system $\wbf'_o$, and the third one is a symmetric-block sparse system $\wbf''_o$. 
The coefficients of these three systems are presented in Table \ref{tab2-eusipco}. The input is a binary phase-shift keying (BPSK)\abbrev{BPSK}{Binary Phase-Shift Keying} signal with variance $\sigma_x^2=1$. 
The signal-to-noise ratio (SNR)\abbrev{SNR}{Signal-to-Noise Ratio} is set to be 20 dB, i.e., the noise variance is $\sigma_n^2=0.01$. The data-reuse factor is $L=1$, the bound on the estimation error is set to be 
$\gammabar=\sqrt{5\sigma_n^2}$, and the threshold bound vector $\gammabf(k)$ is selected 
as the {\it simple-choice constraint vector}~\cite{Markus_sparseSMAP_tsp2014} which is defined as  
$\gamma_0(k)=\frac{\gammabar e(k)}{|e(k)|}$ and $\gamma_i(k)=d(k-i)-\wbf^T(k)\xbf(k-i)$, 
for $i=1,\cdots,L$. The initial vector $\wbf(0)$ and the regularization factor are $10^{-3}\times[1,\cdots,1]^T$ and 
$10^{-12}$, respectively. The learning curves are the results of averaging of the outcomes of 500 trials. 

\begin{table}[t!]
\caption{The coefficients of unknown systems $\wbf_o$, $\wbf'_o$, and $\wbf''_o$. \label{tab2-eusipco}}
\begin{center}
\begin{tabular}{|c|c|c|}\hline
$\wbf_o$&$\wbf'_o$&$\wbf''_o$\\\hline
{\bf 24e-2}&2e-7&2e-8\\
2e-8&-21e-10&-1e-9\\
{\bf -23e-2}&17e-8&1e-7\\
-3e-7&21e-8&-3e-7\\
{\bf 5e-1}&-3e-7&{\bf -64e-3}\\
-1e-9&{\bf 24e-2}&{\bf 2e-1}\\
{\bf 2e-1}&{\bf 7e-1}&{\bf 5e-1}\\
1e-7&{\bf 2e-1}&{\bf 2e-1}\\
-5e-8&{\bf 33e-2}&{\bf -64e-3}\\
12e-6&{\bf -6e-1}&-5e-5\\
1e-8&-5e-7&12e-6\\
-5e-6&18e-9&1e-8\\
4e-6&-5e-7&-5e-6\\
-1e-7&21e-8&4e-6\\
{\bf -2e-1}&-11e-8&-1e-5\\\hline
\end{tabular}
\end{center}
\end{table}

\subsubsection{Scenario 1}

In this scenario, we have implemented the IS-SM-AP\abbrev{IS-SM-AP}{Improved S-SM-AP}, the SSM-AP\abbrev{SSM-AP}{Sparsity-Aware SM-AP}, the SM-PAPA\abbrev{SM-PAPA}{Set-Membership Proportionate AP Algorithm}, and the NLMS\abbrev{NLMS}{Normalized LMS} algorithms to identify the three unknown sparse systems in Table \ref*{tab2-eusipco}. The convergence factor of the NLMS\abbrev{NLMS}{Normalized LMS} algorithm is $\mu =0.9$. The constant $\epsilon$ in the IS-SM-AP\abbrev{IS-SM-AP}{Improved S-SM-AP} algorithm is chosen as $2\times10^{-4}$; that is, on average, 5 out of 15 coefficients (boldface coefficients in $\wbf_o$, $\wbf'_o$, and $\wbf''_o$) are updated at each iteration. We have selected $\alpha=5\times10^{-3}$, $\beta=5$, and $\varepsilon=100$ for the SM-PAPA\abbrev{SM-PAPA}{Set-Membership Proportionate AP Algorithm} and the SSM-AP\abbrev{SSM-AP}{Sparsity-Aware SM-AP} algorithms. In the SSM-AP\abbrev{SSM-AP}{Sparsity-Aware SM-AP} algorithm, we have used the GMF\abbrev{GMF}{Geman-McClure Function} as the approximation of the $l_0$ norm. 

\begin{figure}[t!]
\centering
\subfigure[b][]{\includegraphics[width=.48\linewidth,height=7cm]{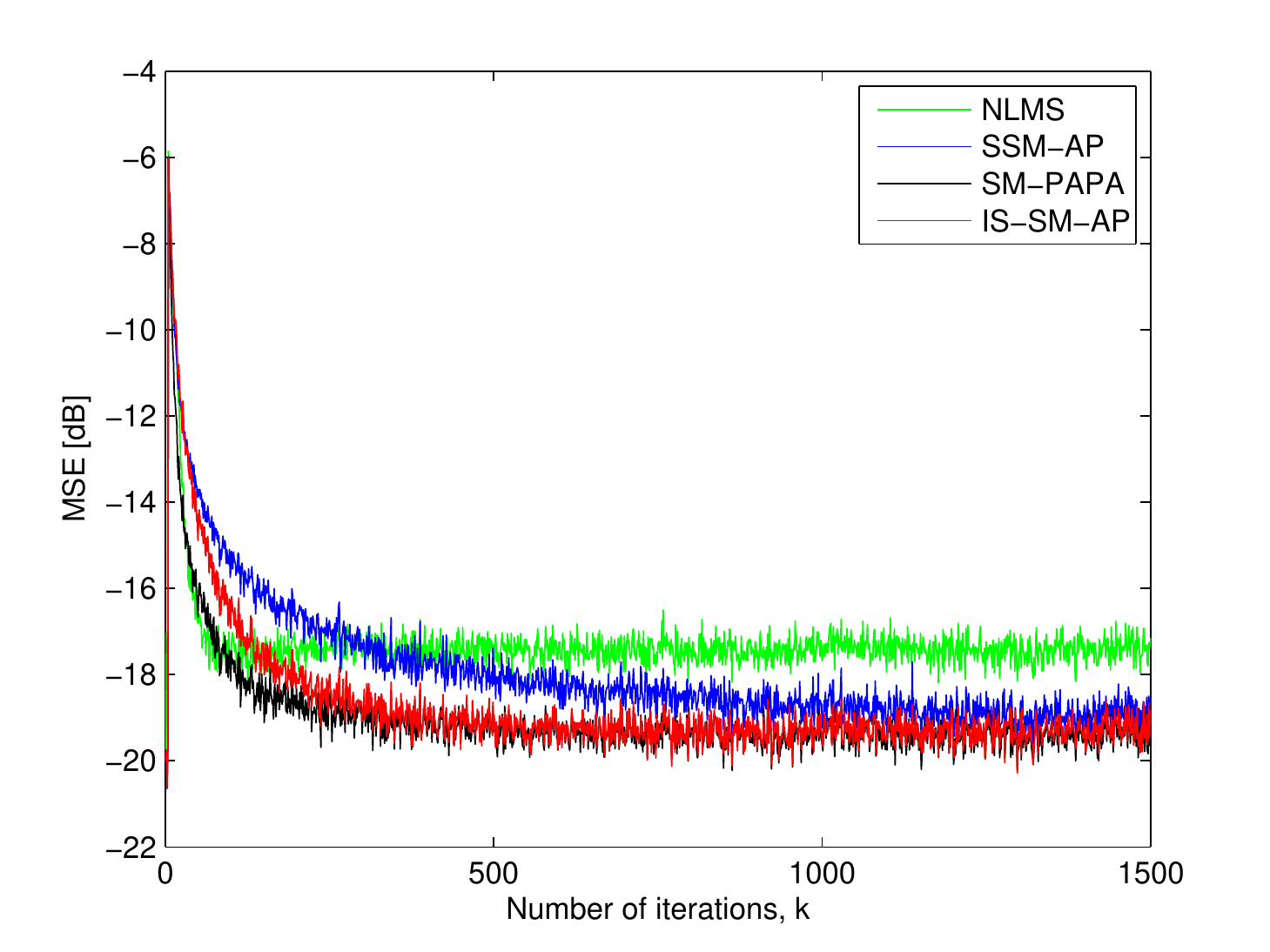}
\label{fig:sim1-eusipco}}
\subfigure[b][]{\includegraphics[width=.48\linewidth,height=7cm]{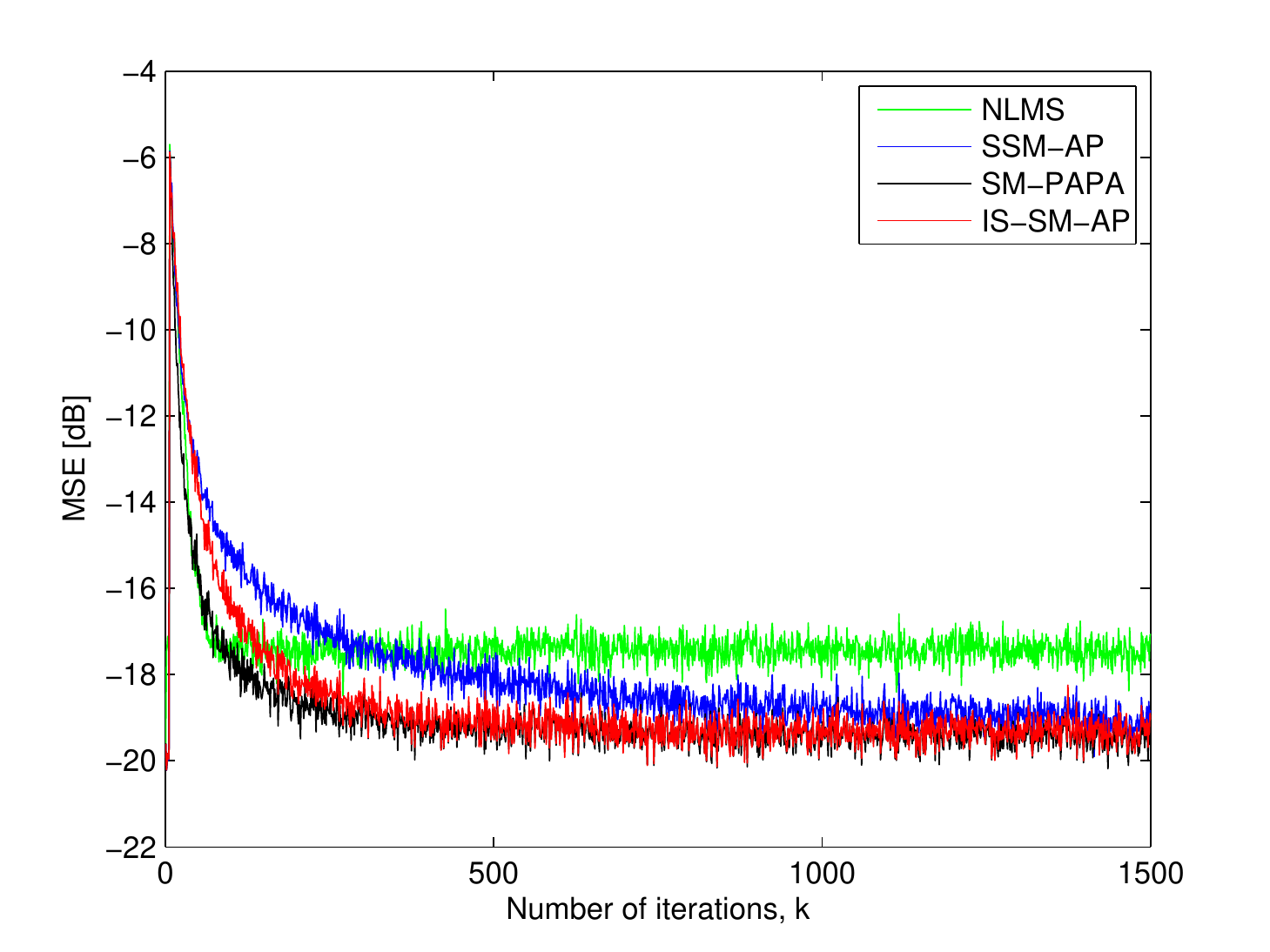}
\label{fig:sim2-eusipco}}
\subfigure[b][]{\includegraphics[width=.48\linewidth,height=7cm]{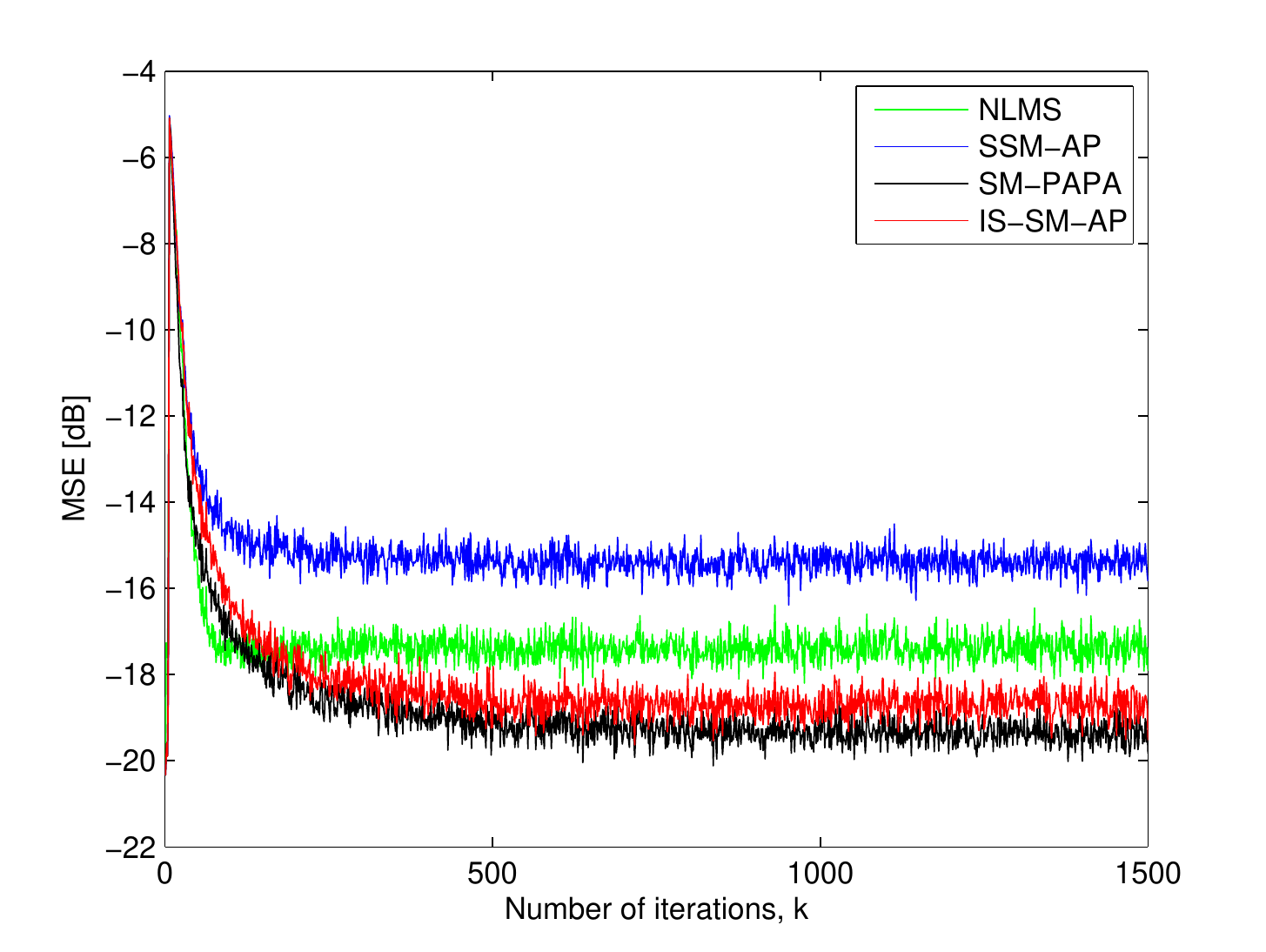}
\label{fig:sim3-eusipco}}
\caption{The learning curves of the SM-PAPA, the SSM-AP, the  IS-SM-AP, and the NLMS algorithms applied on: (a) $\wbf_o$; (b)  $\wbf'_o$; (c) $\wbf''_o$.  \label{fig:1-eusipco}}
\end{figure}

Figures \ref{fig:sim1-eusipco}, \ref{fig:sim2-eusipco}, and \ref{fig:sim3-eusipco} depict the learning curves for 
the IS-SM-AP\abbrev{IS-SM-AP}{Improved S-SM-AP}, the SM-PAPA\abbrev{SM-PAPA}{Set-Membership Proportionate AP Algorithm}, the SSM-AP\abbrev{SSM-AP}{Sparsity-Aware SM-AP}, and the NLMS\abbrev{NLMS}{Normalized LMS}  algorithms to identify the unknown systems $\wbf_o$, $\wbf'_o$, and $\wbf''_o$, respectively. The average number of updates implemented by the IS-SM-AP\abbrev{IS-SM-AP}{Improved S-SM-AP}, the SM-PAPA\abbrev{SM-PAPA}{Set-Membership Proportionate AP Algorithm}, and the SSM-AP\abbrev{SSM-AP}{Sparsity-Aware SM-AP} algorithms are given in columns 2 to 4 of Table \ref{tab:update-rate-eusipco}.

In addition, we have applied all the aforementioned algorithms in this scenario, using the parameters that 
were already defined in the previous paragraph, but changing the input signal model to an 
autoregressive (AR)\abbrev{AR}{Autoregressive} process in order to identify the unknown system $\wbf_o$. The new input signal is generated as a first-order AR\abbrev{AR}{Autoregressive} process defined as 
$x(k)=0.95x(k-1)+n(k)$. 
In this case, the learning curves of the algorithms are shown in Figure~\ref{fig:sim4-eusipco}, 
and the average number of updates performed by the IS-SM-AP\abbrev{IS-SM-AP}{Improved S-SM-AP}, the SM-PAPA\abbrev{SM-PAPA}{Set-Membership Proportionate AP Algorithm}, and the SSM-AP\abbrev{SSM-AP}{Sparsity-Aware SM-AP} algorithms 
are presented in the fifth column of Table~\ref{tab:update-rate-eusipco}. Also, the number of arithmetic operations required by the IS-SM-AP\abbrev{IS-SM-AP}{Improved S-SM-AP}, the SM-PAPA\abbrev{SM-PAPA}{Set-Membership Proportionate AP Algorithm}, and the SSM-AP\abbrev{SSM-AP}{Sparsity-Aware SM-AP} algorithms in whole iterations are 41635, 110835, and 84396, respectively.

Observe that, in every scenario we tested, the IS-SM-AP\abbrev{IS-SM-AP}{Improved S-SM-AP} algorithm performed as well as 
the other state-of-the-art sparsity-aware algorithms, but this  algorithm has the advantage of 
requiring fewer computations since at each iteration in which an update occurs only 
a subset (on average, one third) of the coefficients is updated. Another interesting observation is that the SM-PAPA\abbrev{SM-PAPA}{Set-Membership Proportionate AP Algorithm} algorithm works better with BPSK\abbrev{BPSK}{Binary Phase-Shift Keying} input signal, 
whereas the SSM-AP\abbrev{SSM-AP}{Sparsity-Aware SM-AP} algorithm is slightly better when a correlated input signal is used.

\begin{figure}[t!]
\centering
\includegraphics[width=1\linewidth]{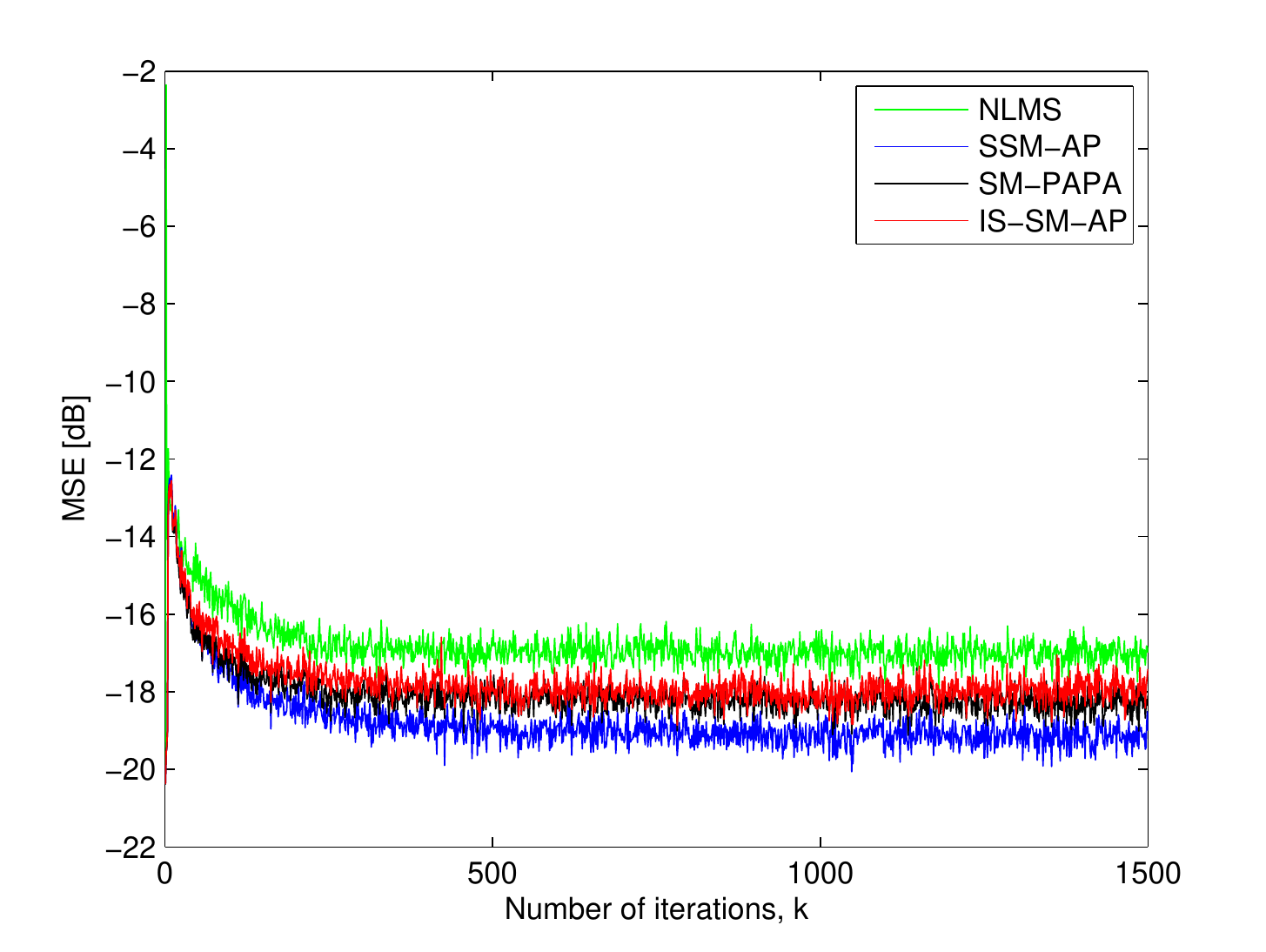}
\caption{The learning curves of the SM-PAPA, the SSM-AP, the  IS-SM-AP, and the NLMS algorithms applied on $\wbf_o$ using AR input signal.\label{fig:sim4-eusipco}}
\end{figure}

\begin{table*}
\caption{The average number of updates implemented by 
the IS-SM-AP, the SM-PAPA, and the SSM-AP algorithms \label{tab:update-rate-eusipco}}
\begin{center}
\begin{tabular}{|l|c|c|c|c|}\hline
Algorithm&$\wbf_o$ BPSK input&$\wbf'_o$ BPSK input&$\wbf''_o$ BPSK input& $\wbf''_o$ AR input\\\hline
IS-SM-AP&6.3$\%$&6.3$\%$&7.6$\%$&8.4$\%$\\
SM-PAPA&5.3$\%$&5.3$\%$&5.9$\%$&7.7$\%$\\
SSM-AP&8.9$\%$&8.9$\%$&20.5$\%$&5.6$\%$\\\hline
\end{tabular}
\end{center}
\end{table*}

\begin{figure}[t!]
\centering
\subfigure[b][]{\includegraphics[width=.48\linewidth,height=7cm]{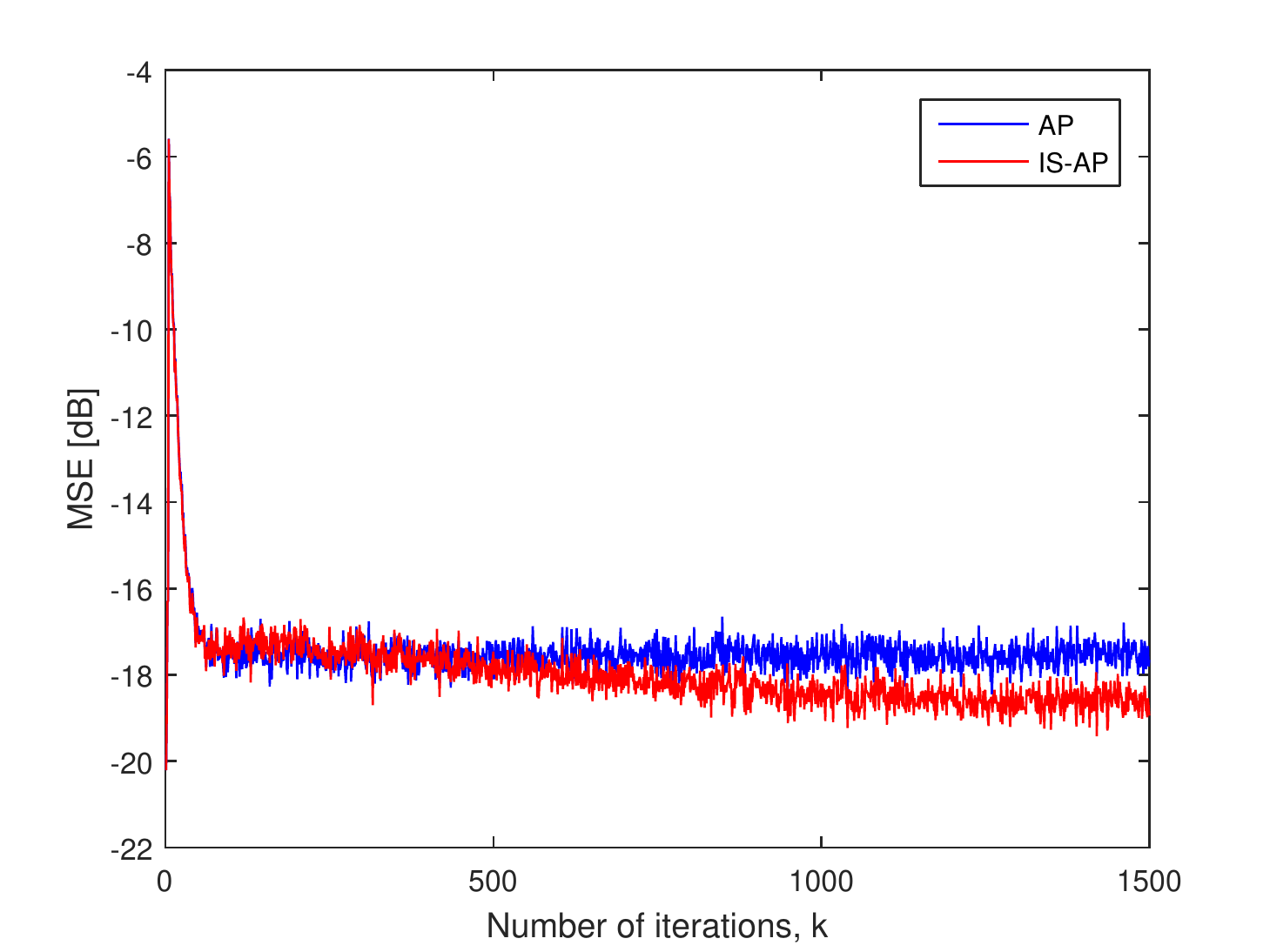}
\label{fig:sim5-eusipco}}
\subfigure[b][]{\includegraphics[width=.48\linewidth,height=7cm]{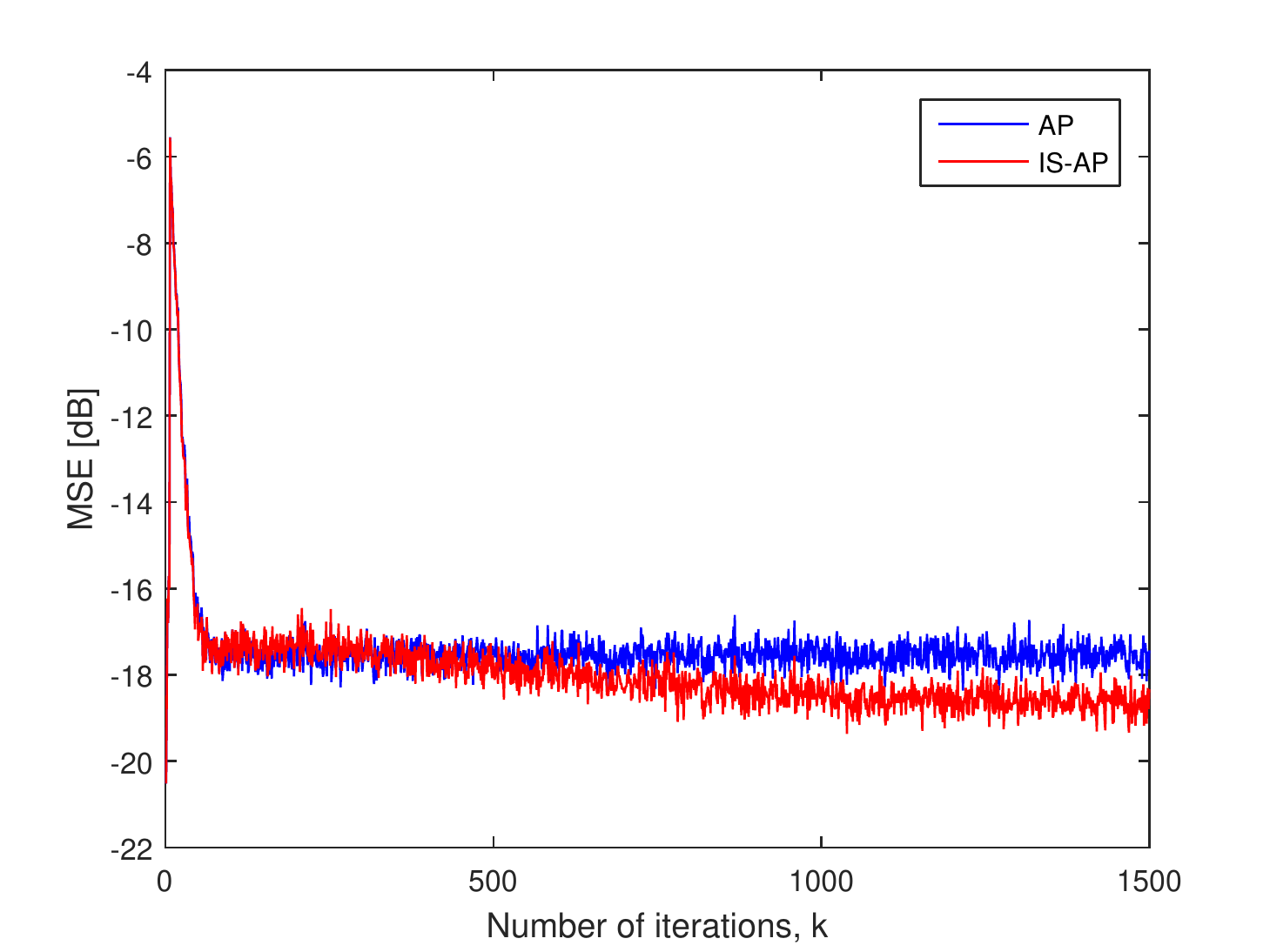}
\label{fig:sim6-eusipco}}
\subfigure[b][]{\includegraphics[width=.48\linewidth,height=7cm]{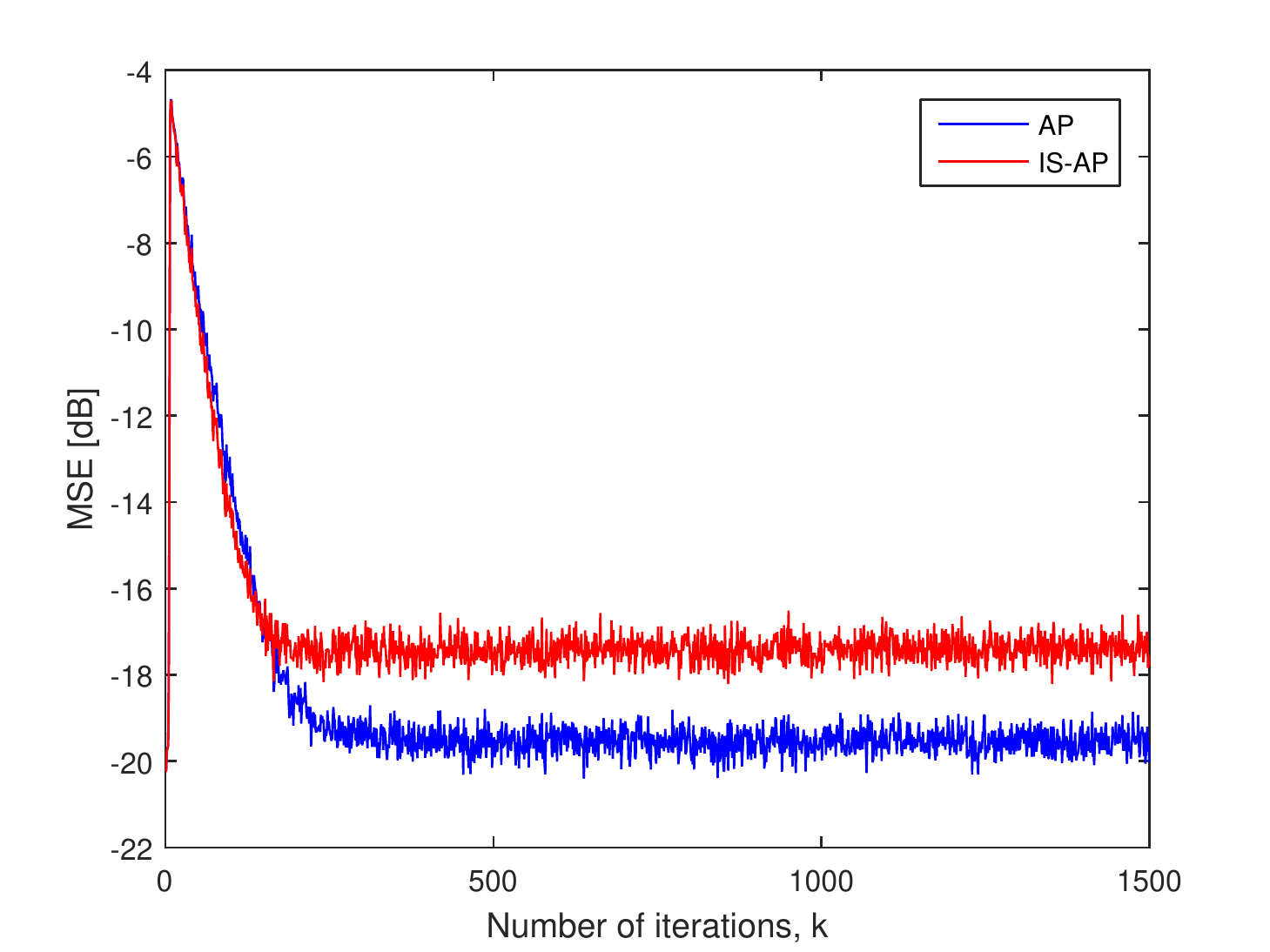}
\label{fig:sim7-eusipco}}
\caption{The learning curves of the AP and the IS-AP algorithms applied on: (a) $\wbf_o$; (b)  $\wbf'_o$; (c) $\wbf''_o$.  \label{fig:2-eusipco}}
\end{figure}

\subsubsection{Scenario 2}

In this scenario, we have applied the AP\abbrev{AP}{Affine Projection} and the IS-AP\abbrev{IS-AP}{Improved S-AP} algorithms to identify the three unknown sparse systems in Table \ref*{tab2-eusipco}. To identify $\wbf_o$ and $\wbf'_o$ we choose the convergence factor $\mu=0.6$ and to identify $\wbf''_o$ we adopt $\mu=0.1$. Figures \ref{fig:sim5-eusipco}, \ref{fig:sim6-eusipco}, and \ref{fig:sim7-eusipco} show the learning curves for 
the AP\abbrev{AP}{Affine Projection} and the IS-AP\abbrev{IS-AP}{Improved S-AP} algorithms to identify the 
unknown systems $\wbf_o$, $\wbf'_o$, and $\wbf''_o$, respectively.

Moreover, we have applied the AP\abbrev{AP}{Affine Projection} and the IS-AP\abbrev{IS-AP}{Improved S-AP} algorithms in this scenario, with same parameters, but changing the input signal model to an 
autoregressive (AR)\abbrev{AR}{Autoregressive} as Scenario 1 to identify the unknown system $\wbf_o$. The convergence factor $\mu$ is equal to 0.6. Their  learning curves are shown in Figure~\ref{fig:sim8-eusipco}. By comparing Figures \ref{fig:sim4-eusipco} and \ref{fig:sim8-eusipco} we can observe the value of set-membership filtering. In fact, by utilizing the SMF\abbrev{SMF}{Set-Membership Filtering} approach not only we have a lower number of arithmetic operations, but also we improve the steady state performance. Note that, we have obtained better MSE\abbrev{MSE}{Mean-Squared Error}  in all figures of Scenario 1 compared to their corresponding figures in Scenario 2.

\begin{figure}[t!]
\centering
\includegraphics[width=1\linewidth]{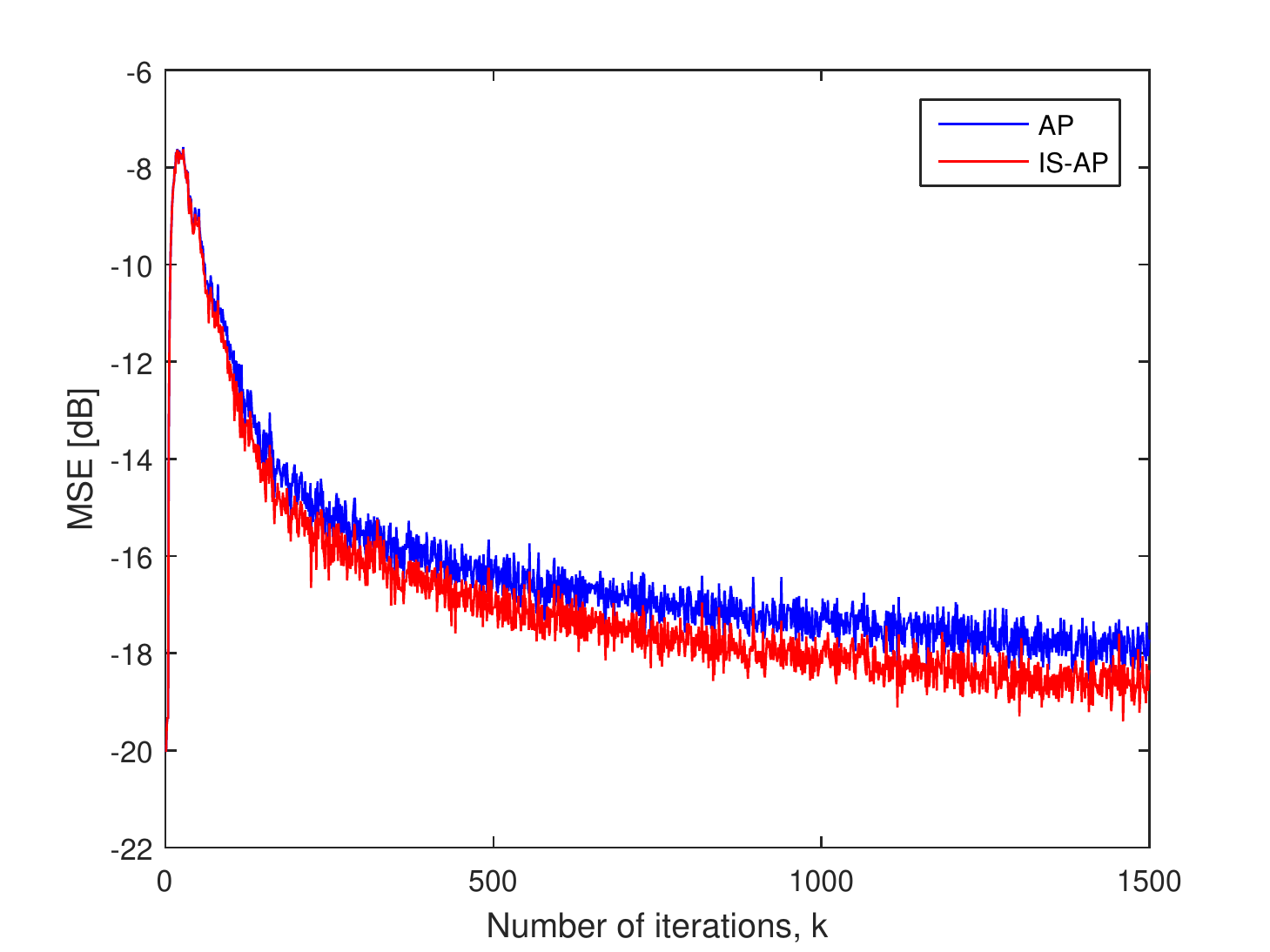}
\caption{The learning curves of the AP and the IS-AP algorithms applied on $\wbf_o$ using AR input signal.\label{fig:sim8-eusipco}}
\end{figure}


\begin{figure}[t!]
\centering
\subfigure[b][]{\includegraphics[width=.48\linewidth,height=7cm]{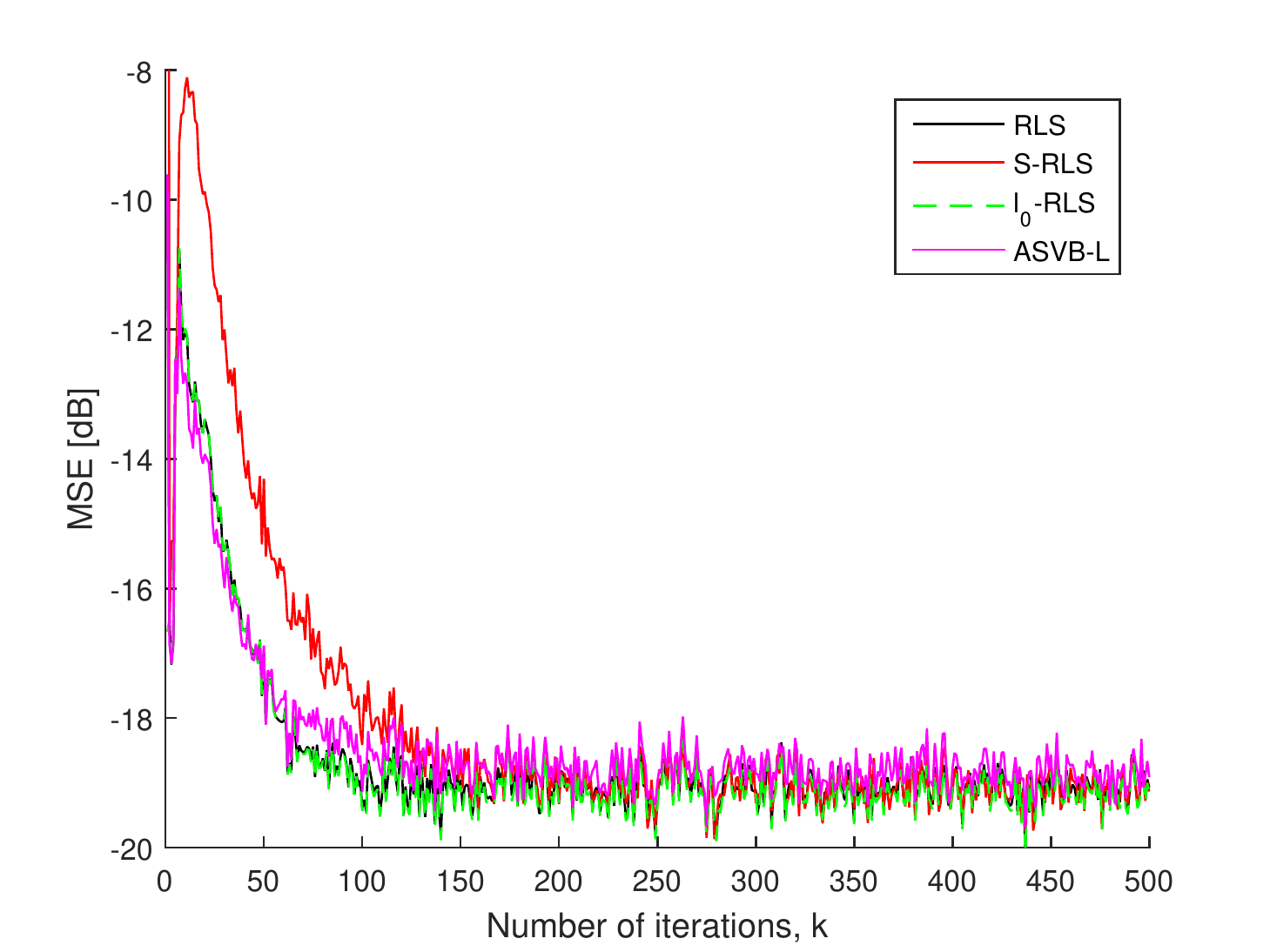}
\label{fig:RLS_sys1_sparse}}
\subfigure[b][]{\includegraphics[width=.48\linewidth,height=7cm]{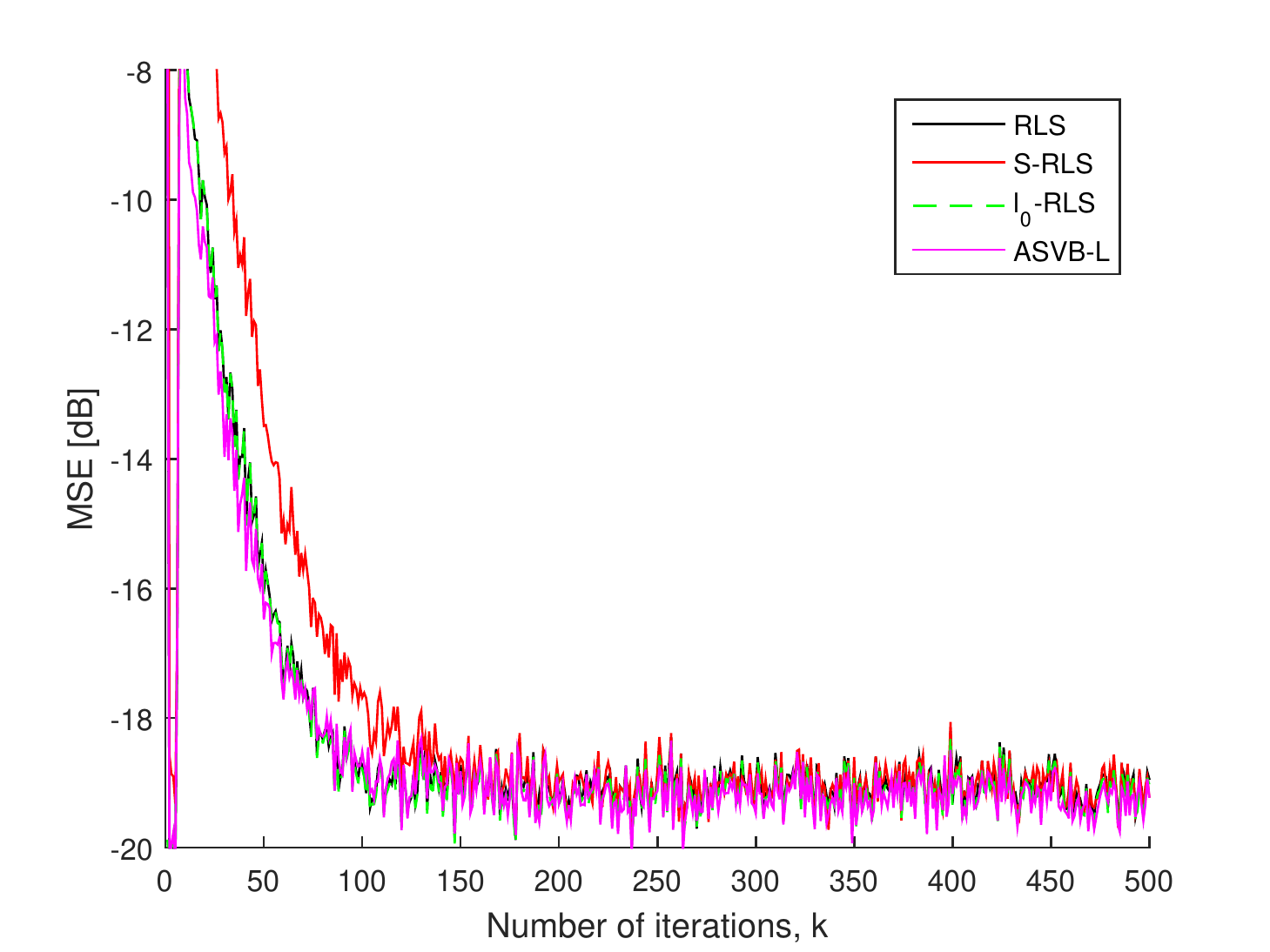}
\label{fig:RLS_sys2_sparse}}
\subfigure[b][]{\includegraphics[width=.48\linewidth,height=7cm]{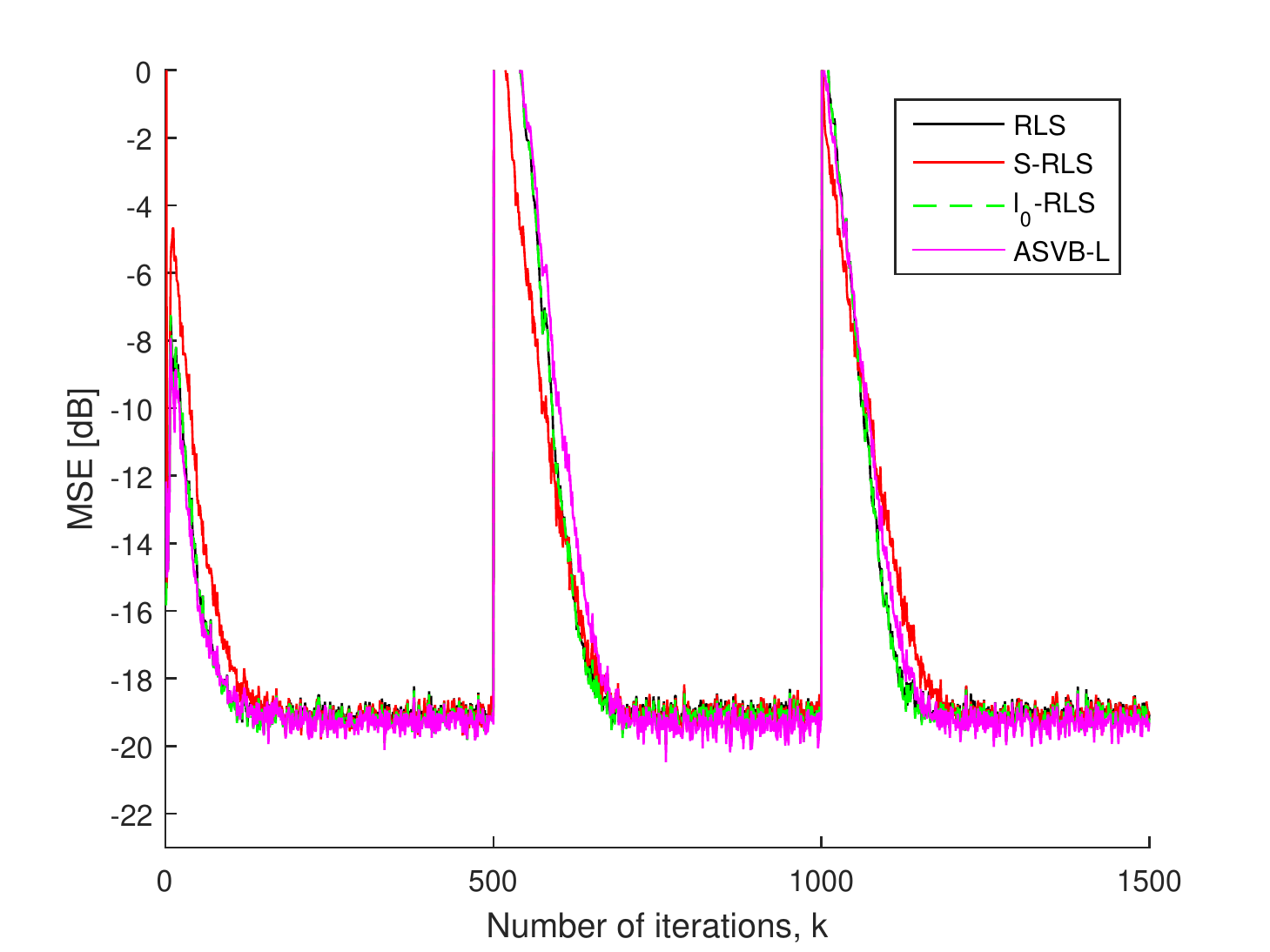}
\label{fig:RLS_sys3_sparse}}
\caption{The learning curves of the RLS, the S-RLS, the $l_0$-RLS, and the ASVB-L algorithms applied to identify: (a) $\wbf_o$; (b) $\wbf'_o$; (c) $\wbf'''_o$.  \label{fig:RLS_sparse}}
\end{figure}

\subsection{Simulation results of the RLS-based algorithms} \label{sub:simulation_rls_based_sparse}

Here, the RLS\abbrev{RLS}{Recursive Least-Squares}, the S-RLS\abbrev{S-RLS}{RLS Algorithm for Sparse System}, the AS-RLS\abbrev{AS-RLS}{Alternative S-RLS}, the $l_0$-RLS\abbrev{$l_0$-RLS}{$l_0$ Norm RLS}, the A-$l_0$-RLS\abbrev{A-$l_0$-RLS}{Alternative $l_0$-RLS}, the  ASVB-L~\cite{Themelis_BayesianAP_tsp2014,Giampouras_Bayesian_LR_Subspace_eusipco2015,Themelis_Bayesian_GIGMC_eusipco2015}\abbrev{ASVB-L}{Adaptive Sparse Variational Bayes Iterative Scheme Based on Laplace Prior}, the DS-S-RLS\abbrev{DS-S-RLS}{Data-Selective S-RLS}, the DS-$l_0$-RLS\abbrev{DS-$l_0$-RLS}{Data-Selective $l_0$-RLS}, and the data-selective ASVB-L (DS-ASVB-L)\abbrev{DS-ASVB-L}{Data-Selective ASVB-L} algorithms are tested to identify three unknown sparse systems of order 14. The first model is an arbitrary sparse system $\wbf_o$, the second model is a block sparse 
system $\wbf'_o$, and the third model, $\wbf'''_o$, is a sparse system which its coefficients changes at $500th$ and $1000th$ iterations.  
The coefficients of $\wbf_o$ and $\wbf'_o$ are listed in Table~\ref{tab2-eusipco}. 
The input is an autoregressive signal generated by $x(k)=0.95x(k-1)+n(k-1)$.  
The signal-to-noise ratio (SNR)\abbrev{SNR}{Signal-to-Noise Ratio} is set to be 20 dB, meaning that the noise variance is $\sigma_n^2=0.01$. 
The bound on the estimation error is set to be $\gammabar=\sqrt{5\sigma_n^2}$. The initial vector $\wbf(0)$ and $\lambda$ are $[1,\cdots,1]^T$ and 
$0.97$, respectively. The parameter $\delta$ is $0.2$ and the constant $\epsilon$ is chosen as $0.015$. For the DS-$l_0$-RLS\abbrev{DS-$l_0$-RLS}{Data-Selective $l_0$-RLS} and the $l_0$-RLS\abbrev{$l_0$-RLS}{$l_0$ Norm RLS} algorithms, the parameters $\alpha$ and $\beta$ are chosen as 0.005 and 5, respectively. We have chosen the GMF\abbrev{GMF}{Geman-McClure Function} as the approximation of the $l_0$ norm. The depicted learning curves represent the results of averaging of the outcomes of 500 trials.

\begin{figure}[t!]
\centering
\subfigure[b][]{\includegraphics[width=.48\linewidth,height=7cm]{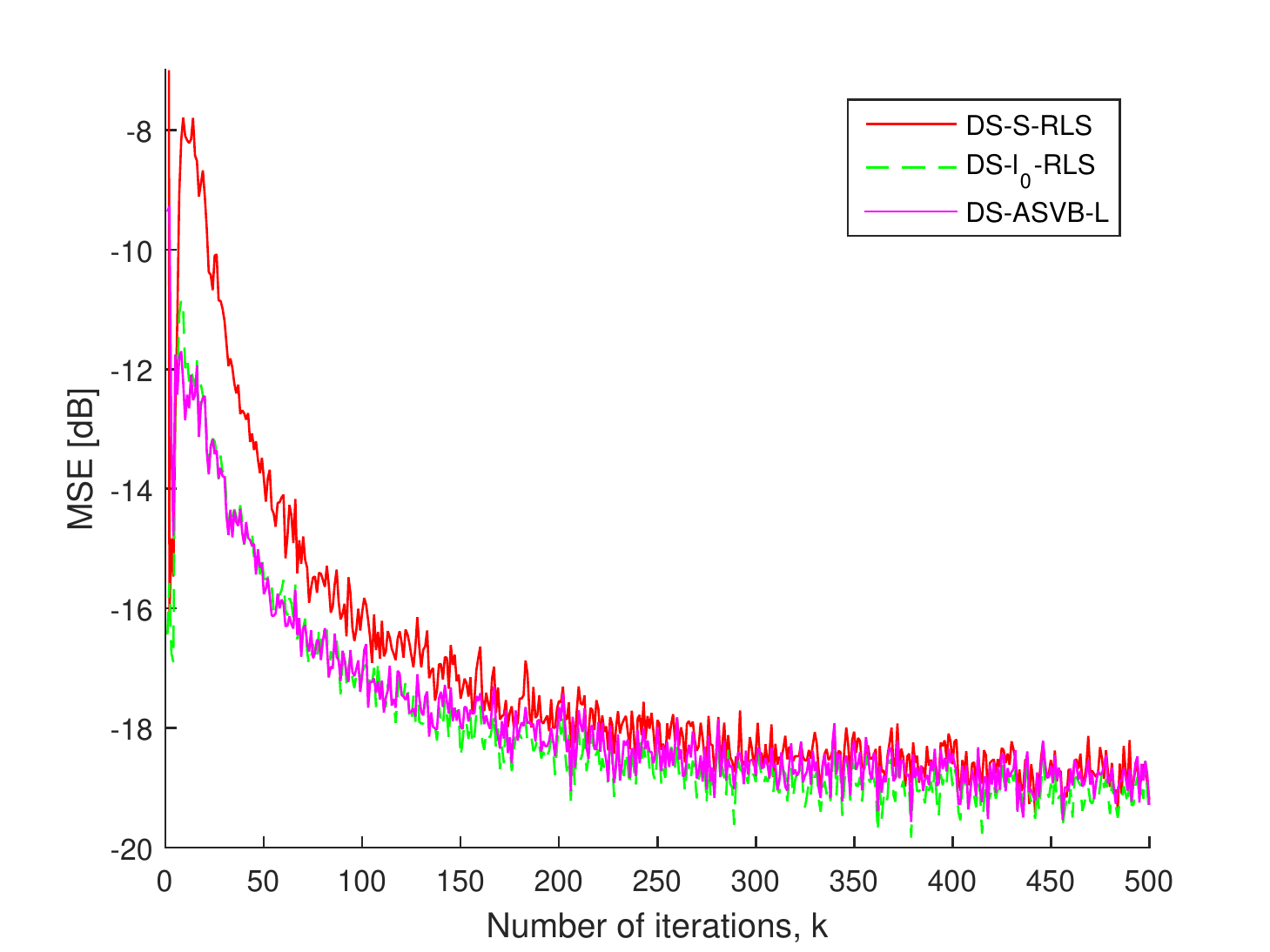}
\label{fig:DS_RLS_sys1_sparse}}
\subfigure[b][]{\includegraphics[width=.48\linewidth,height=7cm]{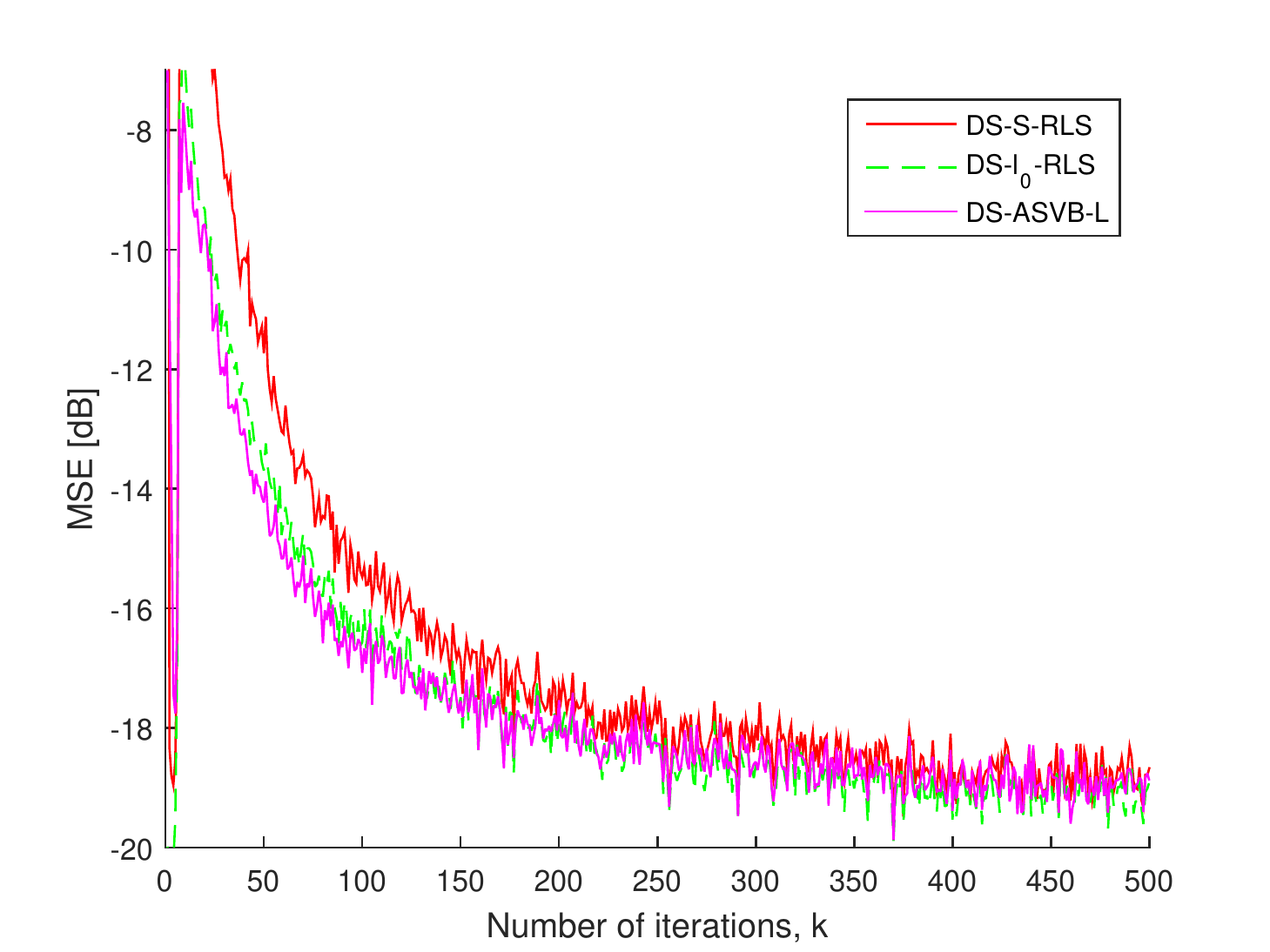}
\label{fig:DS_RLS_sys2_sparse}}
\subfigure[b][]{\includegraphics[width=.48\linewidth,height=7cm]{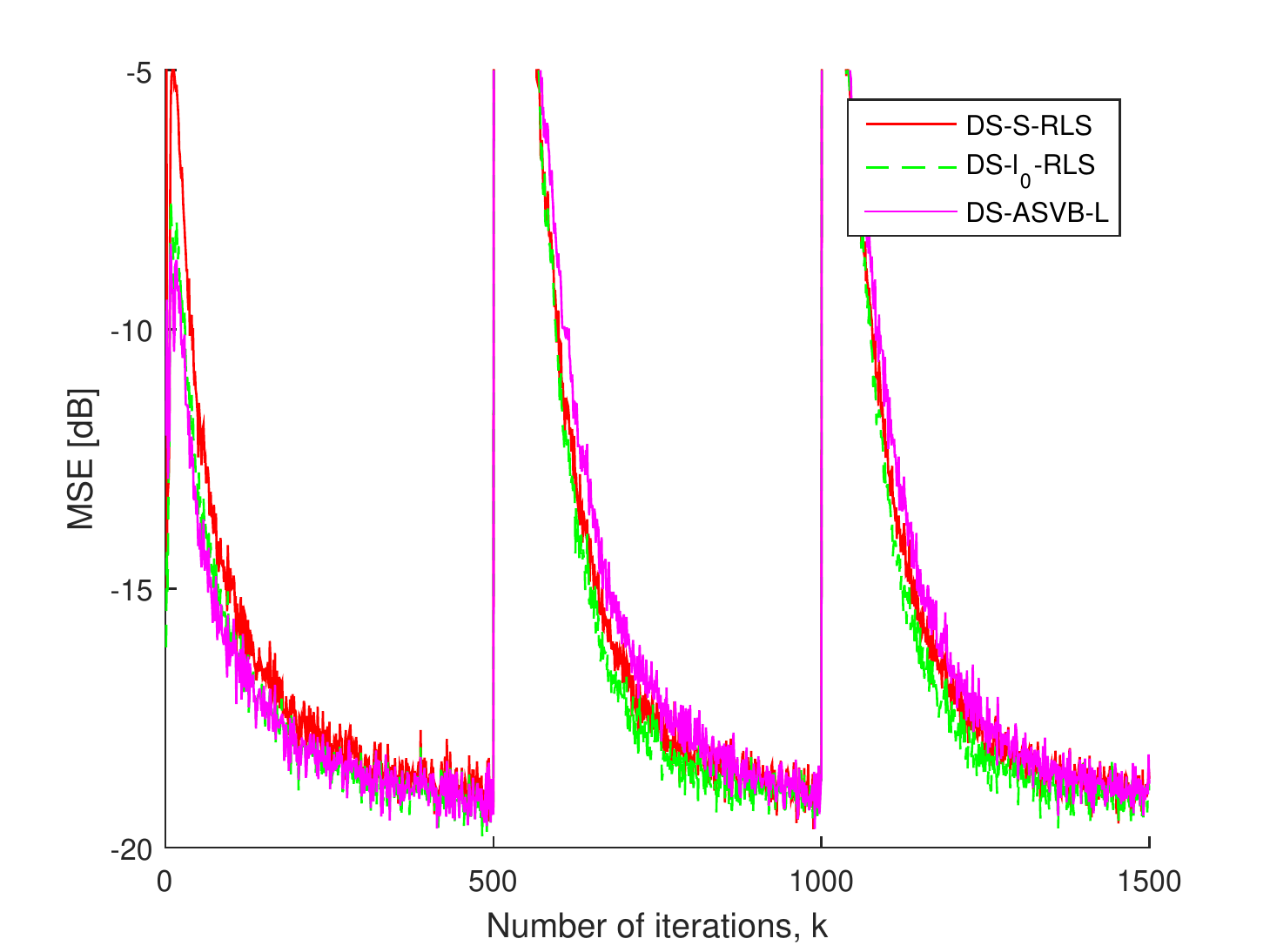}
\label{fig:DS_RLS_sys4_sparse}}
\caption{The learning curves of the DS-S-RLS, the DS-$l_0$-RLS, and the DS-ASVB-L algorithms applied to identify: (a) $\wbf_o$; (b) $\wbf'_o$; (c) $\wbf'''_o$.  \label{fig:DS_RLS_sparse}}
\end{figure}

Figures~\ref{fig:RLS_sys1_sparse}, \ref{fig:RLS_sys2_sparse}, and \ref{fig:RLS_sys3_sparse} show the learning curves for the RLS\abbrev{RLS}{Recursive Least-Squares}, the S-RLS\abbrev{S-RLS}{RLS Algorithm for Sparse System}, the $l_0$-RLS\abbrev{$l_0$-RLS}{$l_0$ Norm RLS}, and the ASVB-L\abbrev{ASVB-L}{Adaptive Sparse Variational Bayes Iterative Scheme Based on Laplace Prior} algorithms to identify the unknown systems $\wbf_o$, $\wbf'_o$, and $\wbf'''_o$, respectively. Figures~\ref{fig:DS_RLS_sys1_sparse}, \ref{fig:DS_RLS_sys2_sparse}, and \ref{fig:DS_RLS_sys4_sparse} illustrate the learning curves for the DS-S-RLS\abbrev{DS-S-RLS}{Data-Selective S-RLS}, the DS-$l_0$-RLS\abbrev{DS-$l_0$-RLS}{Data-Selective $l_0$-RLS}, and the DS-ASVB-L\abbrev{DS-ASVB-L}{Data-Selective ASVB-L} algorithms to identify the unknown systems $\wbf_o$, $\wbf'_o$, and $\wbf'''_o$, respectively. The average number of updates implemented by the DS-S-RLS\abbrev{DS-S-RLS}{Data-Selective S-RLS}, the DS-$l_0$-RLS\abbrev{DS-$l_0$-RLS}{Data-Selective $l_0$-RLS}, and the DS-ASVB-L\abbrev{DS-ASVB-L}{Data-Selective ASVB-L} algorithms are presented in columns 2 to 4 of Table~\ref{tab:update-rate-DS-RLS-sparse}.

Observe that, in every scenario we tested, the S-RLS\abbrev{S-RLS}{RLS Algorithm for Sparse System} and the $l_0$-RLS\abbrev{$l_0$-RLS}{$l_0$ Norm RLS} algorithms performed as well as the RLS\abbrev{RLS}{Recursive Least-Squares} algorithm. The S-RLS\abbrev{S-RLS}{RLS Algorithm for Sparse System} algorithm has lower computational complexity compared to the $l_0$-RLS\abbrev{$l_0$-RLS}{$l_0$ Norm RLS} algorithm. As can be seen, the performances of the S-RLS\abbrev{S-RLS}{RLS Algorithm for Sparse System} and the DS-S-RLS\abbrev{DS-S-RLS}{Data-Selective S-RLS} algorithms are close to the ASVB-L\abbrev{ASVB-L}{Adaptive Sparse Variational Bayes Iterative Scheme Based on Laplace Prior} and the DS-ASVB-L\abbrev{DS-ASVB-L}{Data-Selective ASVB-L} algorithms, respectively, while the former ones require lower computational resources.

Finally, Figures~\ref{fig:A_RLS_sys1_sparse} and~\ref{fig:A_RLS_sys2_sparse} depict the learning curves of the S-RLS, the AS-RLS, the $l_0$-RLS, and the A-$l_0$-RLS algorithms, when they are applied to identify the unknown systems $\wbf_o$ and $\wbf_o'$, respectively. As can be seen, the performances of the AS-RLS and the A-$l_0$-RLS algorithms are similar to the S-RLS and the $l_0$-RLS algorithms, respectively.

\begin{table*}
\caption{The average number of updates implemented by 
the DS-S-RLS, the DS-$l_0$-RLS, and the DS-ASVB-L algorithms \label{tab:update-rate-DS-RLS-sparse}}
\begin{center}
\begin{tabular}{|l|c|c|c|}\hline
Algorithm&$\wbf_o$ &$\wbf'_o$ &$\wbf'''_o$ \\\hline
DS-S-RLS&11.95$\%$&14.13$\%$&19.40$\%$\\
DS-$l_0$-RLS&8.72$\%$&10.90$\%$&17.74$\%$\\
DS-ASVB-L&9.18$\%$&10.53$\%$&19.69$\%$\\\hline
\end{tabular}
\end{center}
\end{table*}

\begin{figure}[t!]
\centering
\subfigure[b][]{\includegraphics[width=.48\linewidth,height=7cm]{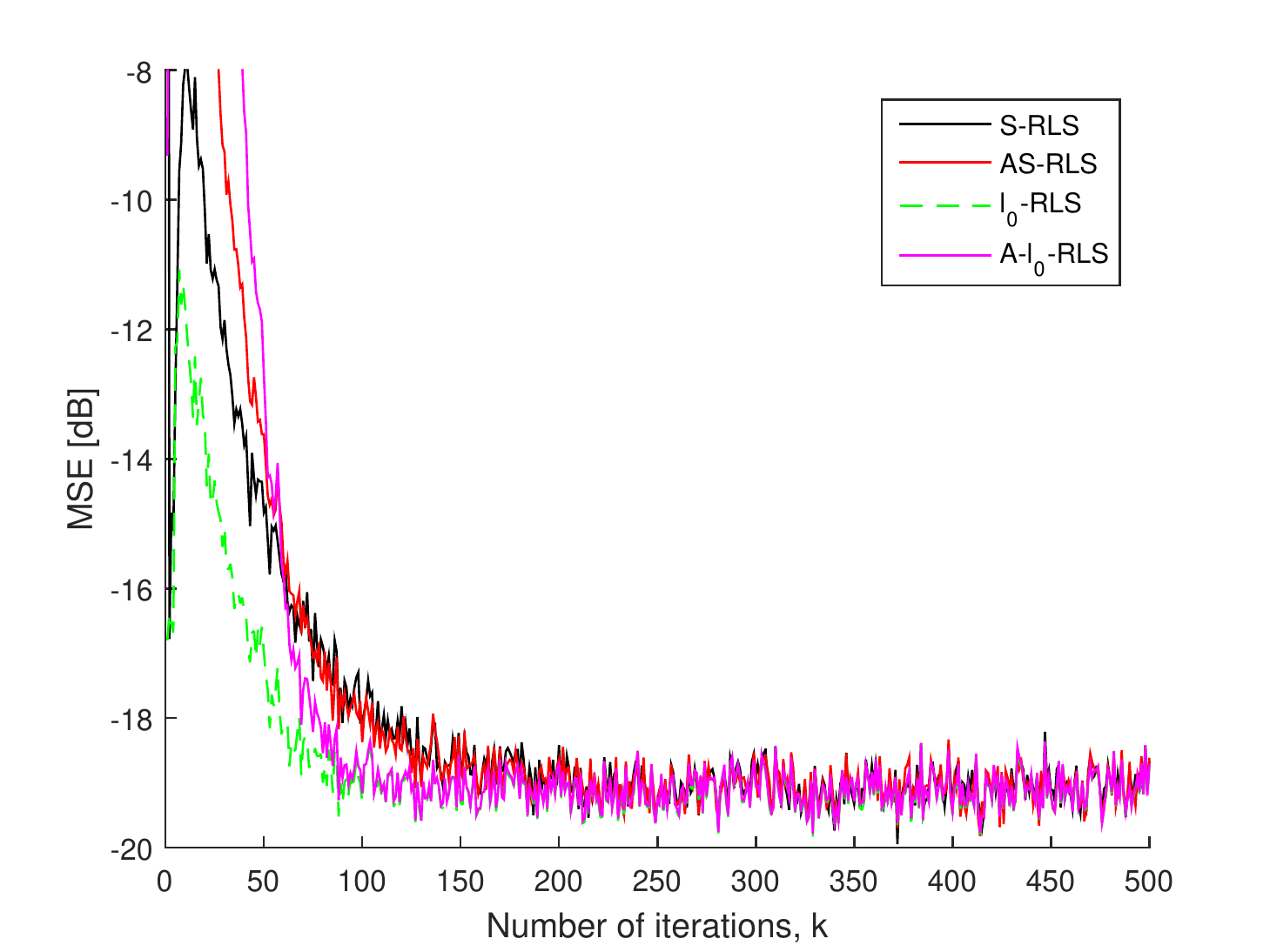}
\label{fig:A_RLS_sys1_sparse}}
\subfigure[b][]{\includegraphics[width=.48\linewidth,height=7cm]{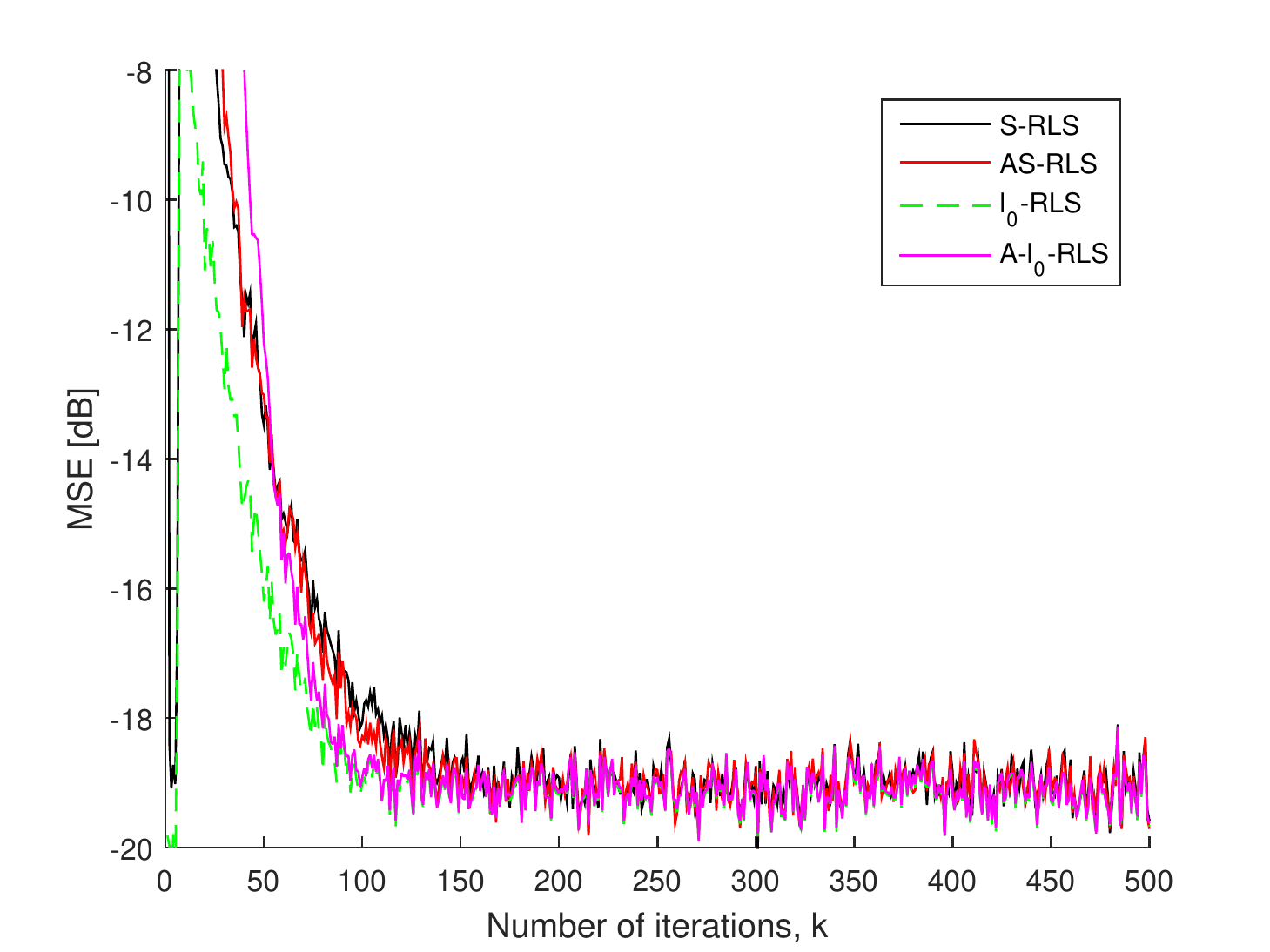}
\label{fig:A_RLS_sys2_sparse}}
\caption{The learning curves of the S-RLS, the AS-RLS, the $l_0$-RLS, and the A-$l_0$-RLS algorithms applied to identify: (a) $\wbf_o$; (b) $\wbf'_o$.  \label{fig:A_RLS_sparse}}
\end{figure}


\section{Conclusions} \label{sec:conclusions-eusipco}

In this chapter, we have proposed the S-SM-AP\abbrev{S-SM-AP}{Simple SM-AP} and the IS-SM-AP\abbrev{IS-SM-AP}{Improved S-SM-AP} algorithms 
to take advantage of sparsity in the signal models while attaining low computational 
complexity. 
To reach this target, we have derived a simple update equation which only updates 
the filter coefficients whose magnitudes are greater than a predetermined value. 
Also, this method is jointly applied with the well-known set-membership approach 
aiming at obtaining even lower computational complexity and better convergence rate. 
The simulation results have shown the excellent performance of the algorithm and 
lower computational complexity as compared to some other sparsity-aware data-selective 
adaptive filters. 
{Indeed, the IS-SM-AP\abbrev{IS-SM-AP}{Improved S-SM-AP} algorithm performed as well as the SM-PAPA\abbrev{SM-PAPA}{Set-Membership Proportionate AP Algorithm} algorithm while requiring fewer arithmetic operations 
(for the scenarios in Section \ref{sec:simulations-eusipco}, it entailed about 38$\%$ of the operations spent by the SM-PAPA).}\abbrev{SM-PAPA}{Set-Membership Proportionate AP Algorithm} Also, the numerical results in Section~\ref{sec:simulations-eusipco} confirm the importance of SMF\abbrev{SMF}{Set-Membership Filtering} technique for the proposed algorithm.

Moreover, we have used the discard function and the $l_0$ norm in order to propose the S-RLS\abbrev{S-RLS}{RLS Algorithm for Sparse System} and the $l_0$-RLS\abbrev{$l_0$-RLS}{$l_0$ Norm RLS} algorithms, respectively, to exploit the sparsity in the involved signal models. Also, we have employed the data-selective strategy to implement an update when the output estimation error is greater than a pre-described positive value leading to 
reduced update rate and lower computational complexity. The simulation results have shown the excellent performance of the proposed algorithms as compared to the standard RLS\abbrev{RLS}{Recursive Least-Squares} algorithm being
competitive with the new proposed state-of-the-art ASVB-L\abbrev{ASVB-L}{Adaptive Sparse Variational Bayes Iterative Scheme Based on Laplace Prior} algorithm which requires much more computations. It is worthy to mention that there are many RLS-based algorithms to exploit sparsity in signal and system models~\cite{Angelosante_rls-sparse_cd_tsp2010,Angelosante_rls_lasso_sparse_icassp2009,Valdman_rls_lar_eusipco2014}; however, their update equation is entirely different from the algorithms proposed in this chapter. Therefore, we avoid comparing the RLS-based algorithms proposed here with other RLS-based algorithms in the literature.

  \chapter{Feature LMS algorithms}

Among the adaptive filtering algorithms, the popular least-mean-square (LMS)\abbrev{LMS}{Least-Mean-Square} algorithm, first introduced in 1960~\cite{Widrow_lms_1960,Maloberti_history_book2016}, has been widely considered as the most used in the field. Elaborate studies of the LMS\abbrev{LMS}{Least-Mean-Square} algorithm were presented in~\cite{Widrow_adaptiveFiltering_book1985,Diniz_adaptiveFiltering_book2013}.
Also, the LMS\abbrev{LMS}{Least-Mean-Square} and its variants solve real problems including active noise control~\cite{Rupp_active_noise_control_eusipco2014}, digital equalization~\cite{Rebhi_digital_equalizer_ICTON2016}, continuous-time filter tuning~\cite{Westwick_continuous_time_filter_tuning_IEECDS2005}, system identification~\cite{Ciochina_LMS_system_identification_eusipco2016}, among others.

In the previous chapter, some adaptive filtering algorithms exploiting the sparsity in the system parameters were proposed. Also, a number of adaptive filtering algorithms exploiting the sparsity in the model coefficients has been introduced by imposing some constraints 
in the cost function~\cite{Markus_sparseSMAP_tsp2014,Candes_reweightedl1_fourier2008,Gasso_nonconvex_penalties_tsp2009,Vitor_SparsityAwareAPA_sspd2011}. This strategy relies on the attraction of some coefficient values to zero enabling the detection of nonrelevant parameters of the model.

In this chapter, we introduce the feature LMS (F-LMS)\abbrev{F-LMS}{Feature LMS} family of algorithms inducing simple sparsity properties hidden in the parameters. The type of feature to seek determines the structure of the feature matrix $\Fbf (k)$\symbl{$\Fbf(k)$}{Feature matrix} to be applied in the constraints of the F-LMS\abbrev{F-LMS}{Feature LMS} algorithm. In fact, a plethora of featured algorithms is possible to be defined by applying smart combinations of feature matrices to the coefficient vector. In this work, some simple cases are discussed whereas many more advanced solutions will be exploited in future publications. Moreover, by introducing {\it feature function}, we propose the low-complexity F-LMS (LCF-LMS) algorithm to reduce the computational complexity of the F-LMS algorithms. The LCF-LMS algorithm implements less multiplication in calculating the output signal.

The content of this chapter was partially published in~\cite{Hamed_Flms_ICASSP2018}. This chapter is organized as follows. 
Section~\ref{sec:F-LMS-chap7} proposes the F-LMS\abbrev{F-LMS}{Feature LMS} family of algorithms. 
Some examples of F-LMS\abbrev{F-LMS}{Feature LMS} algorithms for systems with lowpass and highpass spectrum are introduced in Section~\ref{sec:example_algorithms-chap7}. The LCF-LMS and the alternative LCF-LMS (ALCF-LMS) algorithms are derived in Sections~\ref{sec:low_comp_f_lms-chap7} and~\ref{sec:a-lcf-lms}, respectively. The matrix representation of the feature function is explained in Section~\ref{sec:matrix-trend-function}. 
Simulation results are presented in Section~\ref{sec:simulations-chap7} and the conclusions are drawn in Section~\ref{sec:conclusions-chap7}.


\section{The Feature LMS algorithms} \label{sec:F-LMS-chap7}

Feature LMS (F-LMS)\abbrev{F-LMS}{Feature LMS} refers to a family of LMS-type\abbrev{LMS}{Least-Mean-Square} algorithms capable of exploiting the features inherent to the unknown systems to be identified. These algorithms minimize the general objective function \symbl{${\cal P}(\cdot)$}{Sparsity-promoting penalty function}
\begin{align}
 \xi_{\text{F-LMS}} (k) = \underbrace{ \frac{1}{2}|e(k)|^2 }_{\text{standard LMS term}} + \underbrace{\alpha  {\cal P} \left( \Fbf(k) \wbf(k) \right) }_{\text{feature-inducing term}} , \label{eq:objective_function_general-chap7}
\end{align}
where $\alpha\in\mathbb{R}_+$ stands for the weight given to the {\it sparsity-promoting penalty function} ${\cal P}$, which maps a vector to the nonnegative reals $\mathbb{R}_+$, and $\Fbf(k)$ is the so-called {\it feature matrix} responsible for revealing the hidden sparsity, i.e., the result of applying $\Fbf(k)$ to $\wbf(k)$ should be a sparse vector (in the sense that most entries of the vector $\Fbf(k)\wbf(k)$ should be close or equal to zero).

The penalty function ${\cal P}$ can be any sparsity-promoting penalty function that is almost everywhere differentiable in order to allow for 
gradient-based methods.  
Examples of suitable functions are: 
(i) vector norms, especially the widely used $l_1$ norm~\cite{Candes_reweightedl1_fourier2008,Vitor_SparsityAwareAPA_sspd2011}; 
(ii) vector norms combined with shrinking strategies~\cite{Hamed_eusipco2016}; 
(iii) a function that approximates the $l_0$ norm \cite{Markus_sparseSMAP_tsp2014,Markus_apssi_icassp2013}.

The feature matrix $\Fbf(k)$ can vary at each iteration and it represents any linear combination that when applied to $\wbf(k)$ results in a sparse vector. 
In practice, $\Fbf(k)$ should be chosen based on some previous knowledge about the unknown system $\wbf_o$.  
For instance, $\wbf_o$ can represent a lowpass or a highpass filter, it can have linear phase, it can be an upsampled or downsampled signal, etc. 
All these features can be exploited by the F-LMS\abbrev{F-LMS}{Feature LMS} algorithm in order to accelerate convergence and/or achieve lower mean-squared error (MSE).\abbrev{MSE}{Mean-Squared Error}

The resulting gradient-based algorithms using the objective function given in~\eqref{eq:objective_function_general-chap7} are known as F-LMS\abbrev{F-LMS}{Feature LMS} algorithms, and their 
recursions have the general form
\begin{align}
\wbf(k+1)=\wbf(k)+\mu e(k)\xbf(k) -\mu\alpha\pbf(k), \label{eq:update_equation-chap7}
\end{align}
where $\mu \in \mathbb{R}_+$ is the step size, which should be small enough to ensure convergence~\cite{Diniz_adaptiveFiltering_book2013}, 
and $\pbf(k) \in \mathbb{R}^{N+1}$ is the gradient of function ${\cal P} \left( \Fbf(k) \wbf(k) \right)$. \symbl{$\pbf(k)$}{Gradient of ${\cal P} \left( \Fbf(k) \wbf(k) \right)$}


\section{Examples of F-LMS algorithms} \label{sec:example_algorithms-chap7}

From Section~\ref{sec:F-LMS-chap7}, it is clear  that the F-LMS\abbrev{F-LMS}{Feature LMS} family contains infinitely many algorithms. So, in this section we introduce some of these algorithms in order to illustrate how some specific features of the unknown system can be exploited. For the sake of clarity, we focus on simple algorithms and, therefore, we choose function ${\cal P}$ to be the $l_1$ norm and the feature matrix to be time-invariant $\Fbf$ so that the cost function in~\eqref{eq:objective_function_general-chap7} simplifies to  
\begin{align}
\xi_{\text{F-LMS}} (k) = \frac{1}{2}|e(k)|^2  + \alpha  \|\Fbf\wbf(k)\|_1, \label{eq:objective_function-chap7}
\end{align}
where $\|\cdot\|_1$ denotes the $l_1$-norm and for a vector $\wbf\in\mathbb{R}^{N+1}$ it is given by $\|\wbf\|_1=\sum_{i=0}^N|w_i|$. As a consequence, the reader will notice that the computational complexity of the algorithms proposed in this section is only slightly superior to the complexity of the LMS\abbrev{LMS}{Least-Mean-Square} algorithm, as the computation of $\pbf(k)$ required in~\eqref{eq:update_equation-chap7} is very simple (does not involve multiplication or division). 


\subsection{The F-LMS algorithm for lowpass systems} \label{sub:F-LMS-lowpass-chap7}

Most systems found in practice have their energy concentrated mainly in the low frequencies. If the unknown system has lowpass narrowband spectrum, then its impulse response $\wbf_o$ is smooth, meaning that the difference between adjacent coefficients is small (probably close to zero).

The adaptive filtering algorithm can take advantage of this feature present in the unknown system by selecting the feature matrix properly.  
Indeed, by selecting $\Fbf$ as $\Fbf_l$, where $\Fbf_l$ is a $N \times N+1$ matrix defined as \symbl{$\Fbf_l$}{Feature matrix for systems with lowpass narrowband spectrum}
\begin{align}
\Fbf_l=\left[\begin{array}{ccccc}1&-1&0&\cdots&0\\0&1&-1&\cdots&0\\\vdots&&\ddots&\ddots&\\0&0&\cdots&1&-1\end{array}\right], \label{eq:F_lowpass}
\end{align}
and $\|\Fbf_l\wbf(k)\|_1=\sum_{i=0}^{N-1}|w_i(k)-w_{i+1}(k)|$, 
the optimization problem in~\eqref{eq:objective_function-chap7} can be interpreted as: we seek for $\wbf(k)$ that minimizes both the squared error (LMS\abbrev{LMS}{Least-Mean-Square} term) and the distances between adjacent coefficients of $\wbf(k)$. In other words, the F-LMS\abbrev{F-LMS}{Feature LMS} algorithm for lowpass systems acts like the LMS\abbrev{LMS}{Least-Mean-Square} algorithm, but enforcing $\wbf(k)$ to be a lowpass system. It is worth mentioning that if $\wbf_o$ is indeed a lowpass system, then matrix $\Fbf_l$ yields a sparse vector $\Fbf_l\wbf(k)$.\footnote{A matrix similar to the $\Fbf_l$ in~\eqref{eq:F_lowpass} is already known by the statisticians working on a field called {\it trend filtering}~\cite{Wang_Trend_Graphs_jmlr2016}.}

Thus, the F-LMS\abbrev{F-LMS}{Feature LMS} algorithm for lowpass systems is defined by the recursion given in~\eqref{eq:update_equation-chap7}, but replacing 
vector $\pbf(k)$ with $\pbf_l(k)$ defined as
\begin{align}
\left\{\begin{array}{ll}p_{l,i}(k)={\rm sgn}(w_0(k)-w_1(k))&{\rm if~} i=0,\\
p_{l,i}(k)=-{\rm sgn}(w_{i-1}(k)-w_i(k))+{\rm sgn}(w_i(k)-w_{i+1}(k))&{\rm if~} i=1,\cdots,N-1,\\
p_{l,i}(k)=-{\rm sgn}(w_{N-1}(k)-w_{N}(k))&{\rm if~}i=N,\end{array}\right. \label{eq:p_lowpass-chap7}
\end{align}
where ${\rm sgn}(\cdot)$ denotes the sign function.

As previously explained, the F-LMS\abbrev{F-LMS}{Feature LMS} algorithm above tries to reduce the distances between consecutive coefficients of $\wbf(k)$, i.e., 
matrix $\Fbf_l$ can be understood as the process of windowing $\wbf(k)$ with a window of length $2$ (i.e., two coefficients are considered at a time). We can increase the window length, in order to make a smoothing considering more coefficients simultaneously, by nesting linear combinations as follows
\begin{align}
 \Fbf_l^{M{\rm -nested}} = \prod_{m=1}^{M} \Fbf_l^{(m)}   \Fbf_l ,
\end{align}
where $\Fbf_l^{(m)}$ has the same structure given in~\eqref{eq:F_lowpass}, but losing $m$ rows and $m$ columns in relation to the dimensions of $\Fbf_l$.

In addition to the previous examples, suppose that the unknown system is the result of upsampling a lowpass system by a factor of $L$. In this case, we should use matrix $\Fbf_l^*$, whose rows have $L-1$ zeros between the $\pm 1$ entries, in~\eqref{eq:objective_function-chap7}. For $L=2$, we have the following matrix
\begin{align}
\Fbf_l^*=\left[\begin{array}{cccccc}1&0&-1&0&\cdots&0\\0&1&0&-1&\cdots&0\\\vdots&&\ddots&\ddots&\ddots&\\0&0&\cdots&1&0&-1\end{array}\right], \label{eq:F*_lowpass-chap7}
\end{align}
and $\|\Fbf_l^*\wbf(k)\|_1=\sum_{i=0}^{N-2}|w_i(k)-w_{i+2}(k)|$.

Next the F-LMS\abbrev{F-LMS}{Feature LMS} algorithm using such $\Fbf_l^*$ has the update rule given in~\eqref{eq:update_equation-chap7}, but replacing $\pbf(k)$ with $\pbf_l^*(k)$ defined as
\begin{align}
\left\{\begin{array}{ll}p_{l,i}^*(k)={\rm sgn}(w_i(k)-w_{i+2}(k))&{\rm if~} i=0,1,\\
p_{l,i}^*(k)=-{\rm sgn}(w_{i-2}(k)-w_i(k))+{\rm sgn}(w_i(k)-w_{i+2}(k))&{\rm if~} i=2,\cdots,N-2,\\
p_{l,i}^*(k)=-{\rm sgn}(w_{i-2}(k)-w_{i}(k))&{\rm if~}i=N-1,N.\end{array}\right.  \label{eq:p*_lowpass-chap7} 
\end{align}


\subsection{The F-LMS algorithm for highpass systems} \label{sub:F-LMS-highpass-chap7}

If the unknown system $\wbf_o$ has a highpass narrowband spectrum, then adjacent coefficients tend to have similar absolute values, but with opposite signs. 
Therefore, the sum of two consecutive coefficients is close to zero and we can exploit this feature in the learning process by minimizing the sum of 
adjacent coefficients of $\wbf(k)$. 
This can be accomplished by selecting $\Fbf$ as $\Fbf_h$, where $\Fbf_h$ is an $N \times N+1$ feature matrix defined as \symbl{$\Fbf_h$}{Feature matrix for systems with highpass narrowband spectrum}
\begin{align}
\Fbf_h=\left[\begin{array}{ccccc}1&1&0&\cdots&0\\0&1&1&\cdots&0\\\vdots&&\ddots&\ddots&\\0&0&\cdots&1&1\end{array}\right], \label{eq:D_highpass-chap7}
\end{align}
such that $\|\Fbf_h\wbf(k)\|_1=\sum_{i=0}^{N-1}|w_i(k)+w_{i+1}(k)|$.

The F-LMS\abbrev{F-LMS}{Feature LMS} algorithm for highpass systems is characterized by the recursion given in~\eqref{eq:update_equation-chap7}, but replacing $\pbf(k)$
with $\pbf_h(k)$, which is defined as
\begin{align}
\left\{\begin{array}{ll}p_{h,i}(k)={\rm sgn}(w_0(k)+w_1(k))&{\rm if~} i=0,\\
p_{h,i}(k)={\rm sgn}(w_{i-1}(k)+w_i(k))+{\rm sgn}(w_i(k)+w_{i+1}(k))&{\rm if~} i=1,\cdots,N-1,\\
p_{h,i}(k)={\rm sgn}(w_{N-1}(k)+w_{N}(k))&{\rm if~}i=N.\end{array}\right. \label{eq:p_highpass-chap7}
\end{align}

Similar to the lowpass case, let us consider that the unknown system is the result of interpolating a highpass system by a factor $L=2$.  
The set of interpolated highpass systems leads to a notch filter with zeros at $z=\pm \jmath$. 
In this case, we can utilize $\Fbf_h^*$ in the objective function~\eqref{eq:objective_function-chap7}, where $\Fbf_h^*$ is described by
\begin{align}
\Fbf_h^*=\left[\begin{array}{cccccc}1&0&1&0&\cdots&0\\0&1&0&1&\cdots&0\\\vdots&&\ddots&\ddots&\ddots&\\0&0&\cdots&1&0&1\end{array}\right], \label{eq:F*_highpass-chap7}
\end{align}
and $\|\Fbf_h^*\wbf(k)\|_1=\sum_{i=0}^{N-2}|w_i(k)+w_{i+2}(k)|$.

Using $\Fbf_h^*$, the F-LMS\abbrev{F-LMS}{Feature LMS} recursion in~\eqref{eq:update_equation-chap7} should substitute $\pbf(k)$ by $\pbf_h^*(k)$ defined as
\begin{align}
\left\{\begin{array}{ll}p_{h,i}^*(k)={\rm sgn}(w_i(k)+w_{i+2}(k))&{\rm if~} i=0,1,\\
p_{h,i}^*(k)={\rm sgn}(w_{i-2}(k)+w_i(k))+{\rm sgn}(w_i(k)+w_{i+2}(k))&{\rm if~} i=2,\cdots,N-2,\\
p_{h,i}^*(k)={\rm sgn}(w_{i-2}(k)+w_{i}(k))&{\rm if~} i=N-1,N.\end{array}\right. \label{eq:p*_highpass-chap7}
\end{align} 


\section{Low-complexity F-LMS Algorithms} \label{sec:low_comp_f_lms-chap7}

In this section, we derive the low-complexity feature LMS (LCF-LMS)\abbrev{LCF-LMS}{Low-Complexity Feature LMS} algorithm to exploit sparsity in the linear combination of the parameters, as the F-LMS algorithms do, while also reducing the computational cost of calculating the output signal. 

Here, the idea is to reduce the number of multiplications required for computing the output signal when there is a strong relation between neighboring coefficients. In systems with lowpass frequency content, for example, neighboring coefficients vary smoothly. Therefore, when the input signal is highly correlated, we can fix the value of the neighboring coefficients where the distances (the absolute value of their differences) between any two consecutive coefficients are less than a small constant $\epsilon>0$. As a result, we reduce the number of multiplications in the calculation of $y(k)\triangleq\wbf^T(k)\xbf(k)$. For instance, if for nonnegative integers $m$ and $j$, where $m,j<N$, the discrepancies between the coefficients with indexes $m$ to $m+j$ are less than $\epsilon$, then we can use the $m$th coefficient as a reference. Mathematically, if the value of $|w_{m+i+1}(k)-w_{m+i}(k)|\leq\epsilon$ for $i=0,1,2,\cdots,j-1$, then in the calculation of the output signal instead of computing
\begin{align}
y(k)=w_m(k)x_m(k)+\cdots+w_{m+j}(k)x_{m+j}(k),
\end{align}
we can approximate $y(k)$ as
\begin{align}
\hat{y}(k)\triangleq\underbrace{w_m(k)x_m(k)+\cdots+w_m(k)x_m(k)}_{(j+1)-{\rm times}}. \label{eq:output_approx}
\end{align}
As a result, we decrease the number of multiplications from $j+1$ to one. Hence, for a block of coefficients in which the distance between any two consecutive coefficients is less than $\epsilon$, we can use the first parameter of the block as the reference parameter. As soon as the distance between two consecutive coefficients becomes greater than $\epsilon$, we will use the new one as a reference for the new block of coefficients.

To this end, for each block of coefficients in which the distance of any two consecutive coefficients is less than $\epsilon$, we have to preserve the first coefficient of the block, and the rest of them will be replaced by zero. Furthermore, when the absolute value of a coefficient is less than $\epsilon$, we can replace it with zero to avoid additional multiplication~\cite{Hamed_eusipco2016,Hamed_S_RLS_ICASSP2017}. Therefore, two subsets of parameters will be replaced by zero: (I) the coefficients whose absolute values are less than $\epsilon$, and (II) the coefficients whose distances from their antecessor are less than $\epsilon$.

The above reasoning can be implemented by means of the {\it feature function}, $\mathbb{F}_\epsilon:\mathbb{R}^{N+1}\rightarrow\mathbb{R}^{N+1}$, \symbl{$\mathbb{F}_\epsilon$}{Feature function} applied to the weigh vector of the adaptive filter. The $i$th element of the feature function, for $i=0,1,\cdots,N$, is defined as
\begin{align}
\mathbb{F}_{\epsilon,i}(\wbf(k))\triangleq\left\{\begin{array}{ll}f_\epsilon(w_0(k))&{\rm if~}i=0,\\
f_\epsilon(w_i(k))&{\rm if~}|w_i(k)-w_{i-1}(k)|>\epsilon~\&~i\neq0,\\
0&{\rm if~}|w_i(k)-w_{i-1}(k)|\leq\epsilon~\&~i\neq0, \end{array}\right. \label{eq:trend_function}
\end{align}
where $f_\epsilon$ is the discard function defined in~\eqref{eq:f_epsilon-eusipco}. As can be observed, the feature function replaces the subsets (I) and (II) of the coefficients of $\wbf(k)$ with zero. Let us define $\wbf_s(k)\triangleq\mathbb{F}_\epsilon(\wbf(k))$. Figure~\ref{fig:stem_explain} shows an example for the impulse response of $\wbf(k)$ and $\wbf_s(k)$ when $\epsilon=0.02$. As can be observed, $\wbf(k)$ has fifteen nonzero coefficients, and after using the feature function twelve of them are replaced by zero.

\begin{figure}[t!]
\centering
\subfigure[b][]{\includegraphics[width=.48\linewidth,height=7cm]{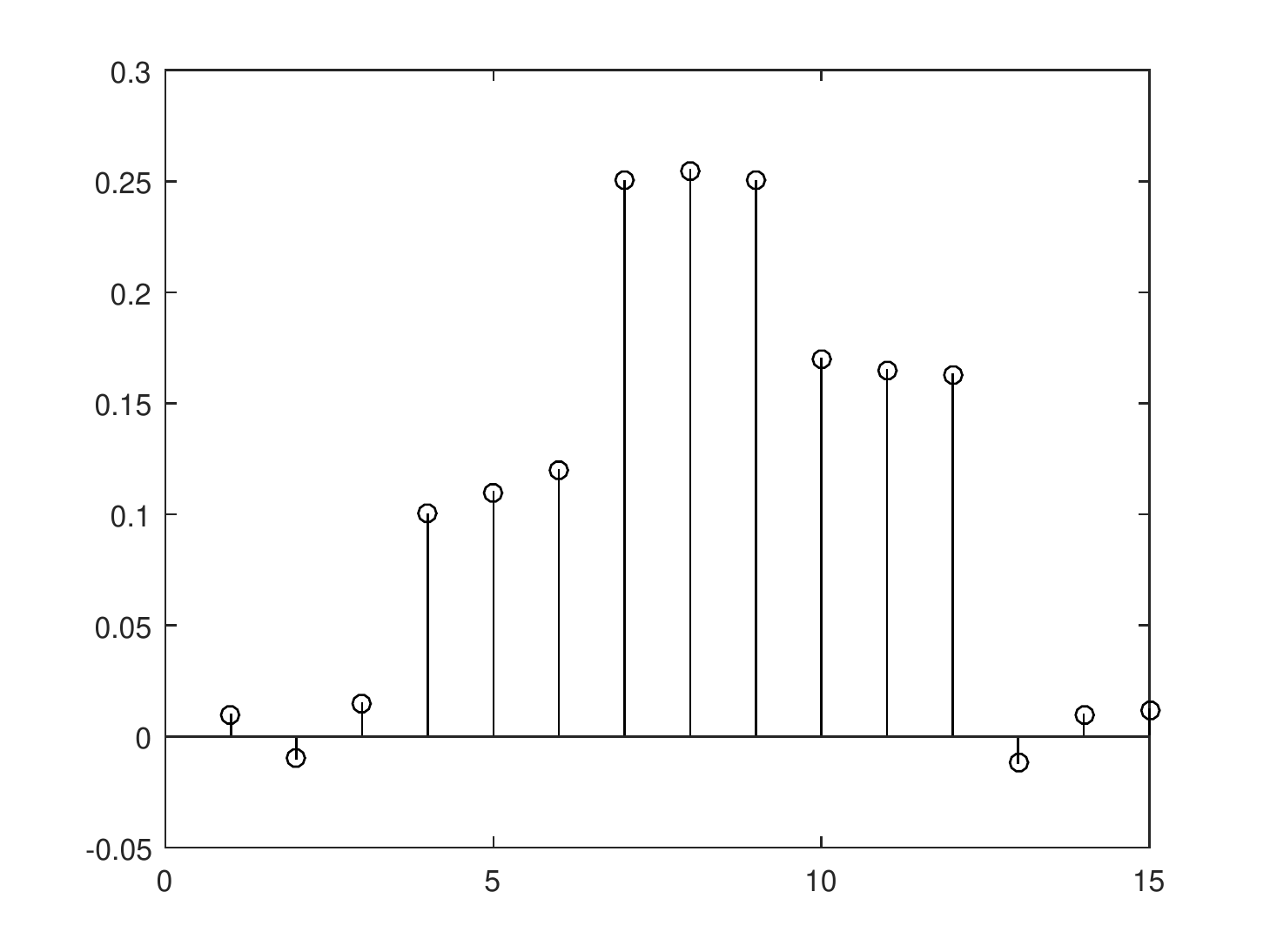}
\label{fig:stem_explain_prio}}
\subfigure[b][]{\includegraphics[width=.48\linewidth,height=7cm]{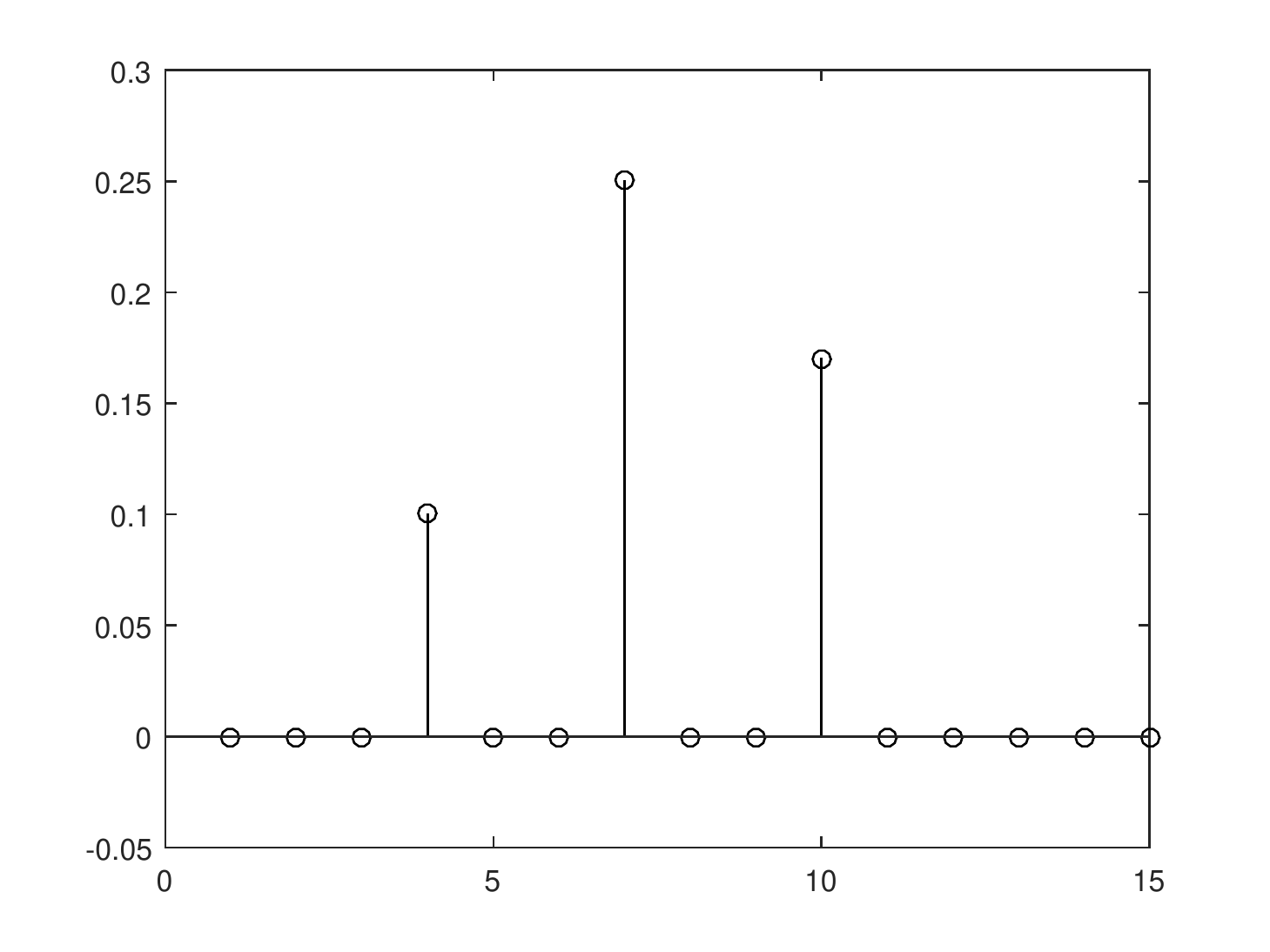}
\label{fig:stem_explain_pos}}
\caption{The impulse response of (a) $\wbf(k)$; (b) $\wbf_s(k)=\mathbb{F}_\epsilon(\wbf(k))$ for $\epsilon=0.02$. \label{fig:stem_explain}}
\end{figure}

Our goal is to utilize $\wbf_s(k)=\mathbb{F}_\epsilon(\wbf(k))$ in the calculation of the output signal. However, we must determine from which subset of coefficients of $\wbf(k)$ the zero elements of $\wbf_s(k)$ came, i.e., subsets (I) or (II). In fact, for some $i$, $w_{s_i}(k)$ is zero if and only if $w_i(k)$ belongs to the subsets (I) or (II). If $w_i(k)$ belongs to the subset (I), then we can directly apply $w_{s_i}(k)$ to calculate the output signal, i.e., we use $w_{s_i}(k)x_i(k)=0$. However, if $w_i(k)$ belongs to the subset (II), then we must apply the last nonzero coefficient of $\wbf_s(k)$ before the $i$th index to compute the output signal. Assume that this nonzero coefficient has index $m$, then we use $w_{s_m}(k)$ instead of $w_i(k)$ since their values are close to each other. Hence, in the calculation of the output signal, we use $w_{s_m}(k)x_m(k)$ instead of $w_{s_i}(k)x_i(k)$. 

In order to determine the background of the zero coefficients in $\wbf_s(k)$, we define a binary vector $\bbf(k)\in\{0,1\}^{N+1}$ as $\bbf(k)=\fbf_\epsilon(\wbf(k))$, where $\fbf_\epsilon$ is the discard vector function. Then, for some $i$, if $w_{s_i}(k)$ and $b_i(k)$ are zero, we infer that $w_i(k)$ belongs to the subset (I). However, if $w_{s_i}(k)=0$ and $b_i(k)=1$, then we conclude that $w_i(k)$ belongs to the subset (II).

Finally, we can present the LCF-LMS\abbrev{LCF-LMS}{Low-Complexity Feature LMS} algorithm in Table~\ref{tb:LCF-LMS}. This algorithm implements less multiplication as compared to the LMS algorithm.

\begin{table}[t!]
\caption{Low-complexity feature LMS algorithm}
\begin{center}
\begin{footnotesize}
\begin {tabular}{|l|} \hline\\ \hspace{0.7cm}{\bf LCF-LMS Algorithm}\\ \\
\hline\\
Initialization
\\
$\wbf_s(0)=\bbf(0)=\wbf(0)=[0~\cdots~0]^T$\\
choose $\mu$ in the range $0<\mu\ll 1$\\
choose small constant $\epsilon>0$\\
Do for $k\geq0$\\
\hspace*{0.15cm} ${\rm temp}=0$, $y(k)=0$\\
\hspace*{0.15cm} for $i=0$ to $N$\\
\hspace*{0.3cm} if $w_{s_i}(k)\neq0$\\
\hspace*{0.45cm} ${\rm temp}=w_{s_i}(k)x_i(k)$\\
\hspace*{0.45cm} $y(k)=y(k)+{\rm temp}$\\
\hspace*{0.3cm} else\\
\hspace*{0.45cm} $y(k)=y(k)+({\rm temp}\times b_i(k))$\\
\hspace*{0.3cm} end\\
\hspace*{0.15cm} end\\
\hspace*{0.15cm} $e(k)=d(k)-y(k)$\\
\hspace*{0.15cm} $\wbf(k+1)=\wbf(k)+\mu e(k)\xbf(k)$\\
\hspace*{0.15cm} $\wbf_s(k+1)=\mathbb{F}_\epsilon(\wbf(k+1))$\\
\hspace*{0.15cm} $\bbf(k+1)=\fbf_\epsilon(\wbf(k+1))$\\
end  \\
\\
\hline
\end {tabular}
\end{footnotesize}
\end{center}
\label{tb:LCF-LMS}
\end{table}

As mentioned earlier, for proposing the LCF-LMS algorithm, we assumed that the input signal is highly correlated. This assumption restricts the use of the LCF-LMS algorithm. To avoid this assumption, instead of approximating $y(k)$ by~\eqref{eq:output_approx}, we can approximate $y(k)$ as
\begin{align}
\hat{y}(k)\triangleq w_m(k)(x_m(k)+x_{m+1}(k)+\cdots+x_{m+j}(k)). \label{eq:output_approx_modified}
\end{align}
In other words, when $w_m(k)$ represents a block of coefficients of length $j+1$, the LCF-LMS algorithm sums $j+1$ copies of $w_m(k)x_m(k)$; however, in Equation~\eqref{eq:output_approx_modified}, we multiply $w_m(k)$  by the sum of the input signal components corresponding to the coefficients represented by $w_m(k)$. Note that the number of required arithmetic operations in~\eqref{eq:output_approx_modified} and~\eqref{eq:output_approx} are identical; i.e., both equations implement one multiplication and $j$ additions. The algorithm using Equation~\eqref{eq:output_approx_modified} in calculating output signal is called the improved LCF-LMS (I-LCF-LMS) \abbrev{I-LCF-LMS}{Improved LCF-LMS} algorithm, and its application is not limited to cases with correlated input signals. The I-LCF-LMS algorithm is presented in Table~\ref{tb:ILCF-LMS}.

\begin{table}[t!]
\caption{Improved low-complexity feature LMS algorithm}
\begin{center}
\begin{footnotesize}
\begin {tabular}{|l|} \hline\\ \hspace{0.8cm}{\bf I-LCF-LMS Algorithm}\\ \\
\hline\\
Initialization
\\
$\wbf_s(0)=\bbf(0)=\wbf(0)=[0~\cdots~0]^T$\\
choose $\mu$ in the range $0<\mu\ll 1$\\
choose small constant $\epsilon>0$\\
Do for $k\geq0$\\
\hspace*{0.15cm} ${\rm temp}_x=0$, ${\rm temp}_w=0$, $y(k)=0$\\
\hspace*{0.15cm} for $i=0$ to $N$\\
\hspace*{0.3cm} if $w_{s_i}(k)\neq0$\\
\hspace*{0.45cm} $y(k)=y(k)+({\rm temp}_w\times{\rm temp}_x)$\\
\hspace*{0.45cm} ${\rm temp}_w=w_{s_i}(k)$\\
\hspace*{0.45cm} ${\rm temp}_x=x_i(k)$\\
\hspace*{0.3cm} else\\
\hspace*{0.45cm} ${\rm temp}_x={\rm temp}_x+(x_i(k)\times b_i(k))$\\
\hspace*{0.3cm} end\\
\hspace*{0.15cm} end\\
\hspace*{0.15cm} $y(k)=y(k)+({\rm temp}_w\times{\rm temp}_x)$\\
\hspace*{0.15cm} $e(k)=d(k)-y(k)$\\
\hspace*{0.15cm} $\wbf(k+1)=\wbf(k)+\mu e(k)\xbf(k)$\\
\hspace*{0.15cm} $\wbf_s(k+1)=\mathbb{F}_\epsilon(\wbf(k+1))$\\
\hspace*{0.15cm} $\bbf(k+1)=\fbf_\epsilon(\wbf(k+1))$\\
end  \\
\\
\hline
\end {tabular}
\end{footnotesize}
\end{center}
\label{tb:ILCF-LMS}
\end{table}


\section{Alternative LCF-LMS Algorithm} \label{sec:a-lcf-lms}

In the LCF-LMS\abbrev{LCF-LMS}{Low-Complexity Feature LMS} algorithm, when $\wbf(k)$ contains a long sequence of coefficients with almost similar absolute values, then $\wbf_s(k)$ contains a long block of zeros. Therefore, when calculating the output signal, all parameters of this block are represented by the first element of the block. As a result, since we are using a fixed coefficient to represent many ones, we could have an accumulated error. In this section, we introduce the alternative LCF-LMS (ALCF-LMS)\abbrev{ALCF-LMS}{Alternative Low-Complexity Feature LMS} algorithm to address this problem.

To avoid accumulated error because of many adjacent zeros in $\wbf_s(k)$, for some natural number $p<N$, we can force the feature function to keep every $p$ coefficients of $\wbf(k)$ in $\wbf_s(k)$ if the absolute value of the coefficient is greater than $\epsilon$. In other words, no parameter can represent a block of coefficients with more than $p$ elements. The only exception is the case when the parameters of the block have absolute values smaller than $\epsilon$ (i.e., they are really close to zero; therefore, they must be replaced by zero). Let us denote by $\mathbb{F}^a_\epsilon:\mathbb{R}^{N+1}\rightarrow\mathbb{R}^{N+1}$ \symbl{$\mathbb{F}^a_\epsilon$}{Alternative feature function} the new feature function, and it is called the {\it alternative feature function}. The $i$th element of $\mathbb{F}^a_\epsilon$, for $i=0,1,\cdots,N$, is defined by
\begin{align}
\mathbb{F}^a_{\epsilon,i}(\wbf(k))\triangleq\left\{\begin{array}{ll}f_\epsilon(w_i(k))&{\rm if~mod}(i,p)=0,\\
f_\epsilon(w_i(k))&{\rm if~}|w_i(k)-w_{i-1}(k)|>\epsilon~\&~{\rm mod}(i,p)\neq0,\\
0&{\rm if~}|w_i(k)-w_{i-1}(k)|\leq\epsilon~\&~{\rm mod}(i,p)\neq0, \end{array}\right. \label{eq:a_trend_function}
\end{align}
where ${\rm mod}(i,p)$ stands for the remainder of $\frac{i}{p}$. Therefore, the ALCF-LMS\abbrev{ALCF-LMS}{Alternative Low-Complexity Feature LMS} algorithm is similar to the LCF-LMS\abbrev{LCF-LMS}{Low-Complexity Feature LMS} one in Table~\ref{tb:LCF-LMS}, but the feature function is replaced by the alternative feature function (i.e., $\wbf_s(k+1)=\mathbb{F}^a_\epsilon(\wbf(k+1))$).

By using the same argument, we can propose the alternative I-LCF-LMS (AI-LCF-LMS) \abbrev{AI-LCF-LMS}{Alternative I-LCF-LMS} algorithm. Indeed, if we replace the feature function in Table~\ref{tb:ILCF-LMS} with the alternative feature function, then we obtain the AI-LCF-LMS algorithm. 


\section{Matrix Representation of the Feature Function} \label{sec:matrix-trend-function}

In this section, we show how to generate $\wbf_s(k)$ through matrix operations. Indeed, presenting $\wbf_s(k)$ through matrix operations is helpful for future mathematical analysis.

To generate $\wbf_s(k)$, we use quantization matrices $\Qbf_t(k)$ for $t=1,2,3$, and two feature matrices $\Fbf_1$ and $\Fbf_2(k)$, all matrices belong to $\mathbb{R}^{(N+1)\times(N+1)}$. The matrices $\Fbf_1$ and $\Fbf_2(k)$ are responsible for exploiting the sparsity in the linear combination of the parameters and reconstructing the weight vector after exploiting the sparsity, respectively. Therefore, to exploit the hidden sparsity in the parameters of $\wbf(k)$ and their linear combinations, we introduce $\wbf_s(k)$ as follows
\begin{align}
\wbf_s(k)\triangleq\Qbf_3(k)\Fbf_2(k)\Qbf_2(k)\Fbf_1\Qbf_1(k)\wbf(k). \label{eq:exploit_sparsity-chap7}
\end{align}

In the following, we describe the matrices and justify their actions. We define the quantization matrix $\Qbf_1(k)$ as the Jacobian matrix of $\fbf_\epsilon(\wbf(k))$. Therefore, $\Qbf_1(k)$ is a diagonal matrix whose entries are zero or one. For the coefficients of $\wbf(k)$ where their absolute values are less than $\epsilon$, the corresponding entries on the diagonal of $\Qbf_1(k)$ are zero, otherwise they are one. Similarly, the matrices $\Qbf_2(k)$ and $\Qbf_3(k)$ are defined as the Jacobian matrices of $\fbf_\epsilon(\Fbf_1\Qbf_1(k)\wbf(k))$ and $\fbf_\epsilon(\Fbf_2(k)\Qbf_2(k)\Fbf_1\Qbf_1(k)\wbf(k))$, respectively. Thus $\Qbf_2(k)$ is a diagonal matrix with zero and one. Its diagonal entries are zero (one) for the corresponding elements of $\Fbf_1\Qbf_1(k)\wbf(k)$ with the absolute value lower (greater) than $\epsilon$. Also, $\Qbf_3(k)$ is a diagonal matrix similar to $\Qbf_2(k)$; however, it is derived from the vector $\Fbf_2(k)\Qbf_2(k)\Fbf_1\Qbf_1(k)\wbf(k)$. The diagonal entries of $\Qbf_3(k)$ are one for the corresponding elements of $\Fbf_2(k)\Qbf_2(k)\Fbf_1\Qbf_1(k)\wbf(k)$ with absolute value greater than $\epsilon$, and zero for the others.

The feature matrix $\Fbf_1$ has to find the difference between the coefficients of the vector $\Qbf_1(k)\wbf(k)$. In fact, it keeps the first parameter unchanged, and for other coefficients replaces them with the differences between them and the previous one. Thus, it can be represented as
\begin{align}
\Fbf_1\triangleq\left[\begin{array}{cccccc}1&0&0&0&\cdots&0\\-1&1&0&0&\cdots&0\\0&-1&1&0&\cdots&0\\\vdots&0&\ddots&\ddots&0&\vdots\\0&\cdots&0&-1&1&0\\0&0&\cdots&0&-1&1\end{array}\right]. \label{eq:F_matrix}
\end{align}

The function of the feature matrix $\Fbf_2(k)$ is to reconstruct the weight vector from the vector $\rbf(k)\triangleq\Qbf_2(k)\Fbf_1\Qbf_1(k)\wbf(k)$. The structure of $\Fbf_2(k)$ is a little complicated. In the following steps, we explain how to construct $\Fbf_2(k)$:
\begin{enumerate}
\item Assume that the first nonzero element of $\rbf(k)$ is $r_{i_1}(k)$, thus all rows of $\Fbf_2(k)$ before the $i_1$th row are zero vectors.
\item For $i_1$th row, the element corresponding to the $r_{i_1}(k)$ is one, and other entries of this row are zero.
\item If the next element of $\rbf(k)$ is nonzero, then the next row of $\Fbf_2(k)$ contains one more nonzero entry equal to one corresponding to these nonzero coefficients of $\rbf(k)$. We repeat this step as far as a zero element appears in $\rbf(k)$.
\item As soon as a zero element appears in $\rbf(k)$, we look for the next nonzero element, and assume that it is $r_{i_2}(k)$. Then the next row of $\Fbf_2(k)$ is similar to the previous row, but the element corresponding to $r_{i_2}(k)$ must be equal to one.
\item Suppose that the first nonzero element of $\rbf(k)$ after $r_{i_2}(k)$ is $r_{i_3}(k)$. Then next rows of $\Fbf_2(k)$ until the $(i_3-1)$th row are identical to the last constructed row. Note that if it does not exist some nonzero element as $r_{i_3}(k)$, the remaining rows of $\Fbf_2(k)$ are identical to the last constructed row.
\item The $i_3$th row of $\Fbf_2(k)$ contains only one nonzero element equal to one, and it must be placed on column $i_3$. This row is similar to the $i_1$th row (step 2); however, the position of one is different. Now, we go back to the step 3 and repeat the same process to construct the next rows of $\Fbf_2(k)$.
\end{enumerate}
      
In Equation~\eqref{eq:exploit_sparsity-chap7}, the matrix $\Qbf_1(k)$ replaces the coefficients of $\wbf(k)$ which has absolute value lower than $\epsilon$ with zero. Then matrix $\Fbf_1$ keeps the first coefficient unchanged. For the other components, this matrix subtracts the previous component from each of them. Hence, for the resulting vector, the matrix $\Qbf_2(k)$ changes the elements with an absolute value lower than $\epsilon$ to zero. Afterwards, the matrix $\Fbf_2(k)$ reconstructs the weight vector and, in some sense, it inverts the effect of $\Fbf_1$. Finally, for the resulting vector, the matrix $\Qbf_3(k)$ replaces the coefficients inside $[-\epsilon,\epsilon]$ with zero. The final result is identical to $\mathbb{F}_\epsilon(\wbf(k))$.

To clarify the process above, we describe the details for $\wbf(k)=[0~0.5~0.51~0.01~0.6~0.7~0.8~0.81~0~-0.01]^T$, as an example, when $\epsilon=0.02$. $\Qbf_1(k)$ is a diagonal matrix, where its diagonal is
$[0~1~1~0~1~1~1~1~0~0]^T$. Therefore, $\Qbf_1(k)\wbf(k)=[0~0.5~0.51~0~0.6~0.7~0.8~0.81~0~0]^T$. Then $\Fbf_1\Qbf_1(k)\wbf(k)=[0~0.5~0.01~-0.51~0.6~0.1~0.1~0.01~-0.81~0]^T$. The diagonal of $\Qbf_2(k)$ is $[0~1~0~1~1~1~1~0~1~0]^T$, and $\Qbf_2(k)\Fbf_1\Qbf_1(k)\wbf(k)=[0~0.5~0~-0.51~0.6~0.1~0.1~0~-0.81~0]^T$. Following the procedure explained to construct $\Fbf_2(k)$, we obtain the matrix $\Fbf_2(k)$ as follows
\begin{align}
\Fbf_2(k)=\left[\begin{array}{cccccccccc}0&0&0&0&0&0&0&0&0&0\\
0&1&0&0&0&0&0&0&0&0\\
0&1&0&1&0&0&0&0&0&0\\
0&1&0&1&0&0&0&0&0&0\\
0&0&0&0&1&0&0&0&0&0\\
0&0&0&0&1&1&0&0&0&0\\
0&0&0&0&1&1&1&0&0&0\\
0&0&0&0&1&1&1&0&1&0\\
0&0&0&0&1&1&1&0&1&0\\
0&0&0&0&1&1&1&0&1&0\end{array}\right].
\end{align}
Then $\Fbf_2(k)\Qbf_2(k)\Fbf_1\Qbf_1(k)\wbf(k)=[0~0.5~-0.01~-0.01~0.6~0.7~0.8~-0.01~-0.01~-0.01]^T$. The diagonal of $\Qbf_3(k)$ is $[0~1~0~0~1~1~1~0~0~0]^T$. Hence, $\wbf_s(k)=\Qbf_3(k)\Fbf_2(k)\Qbf_2(k)\Fbf_1\Qbf_1(k)\wbf(k)=[0~0.5~0~0~0.6~0.7~0.8~0~0~0]^T$. Also, if we use the feature function with $\epsilon=0.02$, then we obtain $\wbf_s(k)=\mathbb{F}_\epsilon(\wbf(k))=[0~0.5~0~0~0.6~0.7~0.8~0~0~0]^T$.


\section{Simulations} \label{sec:simulations-chap7}

In this section, we apply the LMS, the F-LMS, the LCF-LMS, and the ALCF-LMS algorithms to system identification problems. In scenario 1, we utilize the LMS and the F-LMS algorithms. Then, in scenario 2, we use the LMS, the LCF-LMS, and the ALCF-LMS algorithms.

In both scenarios, the order of all the unknown systems is 39, i.e., they have 40 coefficients. The signal-to-noise ratio (SNR)\abbrev{SNR}{Signal-to-Noise Ratio} is chosen as 20 dB. For all algorithms, the initial vector is $\wbf(0) = [0~\cdots~0]^T$, and the MSE\abbrev{MSE}{Mean-Squared Error} learning curves are computed by averaging the outcomes of 200 independent trials.


\subsection{Scenario 1} 

In this scenario, we apply the LMS\abbrev{LMS}{Least-Mean-Square} and the F-LMS\abbrev{F-LMS}{Feature LMS} algorithms to identify some unknown lowpass and highpass systems. The first example considers predominantly lowpass and highpass systems defined as $\wbf_{o,l}= [0.4,\cdots,0.4]^T$ and $\wbf_{o,h}= [0.4,-0.4,0.4,\cdots,-0.4]^T$, respectively. The second example uses the interpolated models $\wbf_{o,l}'=[0.4,0,0.4,\cdots,0,0.4,0]^T$ and 
$\wbf_{o,h}'=[0.4,0,-0.4,0,0.4,\cdots,0]^T$. The third example uses block-sparse lowpass and block-sparse highpass models, $\wbf_{o,l}''$ and $\wbf_{o,h}''$, whose entries are defined in~\eqref{eq:second_wo_lowpass-chap7} 
and~\eqref{eq:second_wo_highpass-chap7}, respectively.
\begin{align}
w_{o,l_i}''&=\left\{\begin{array}{ll}0 & {\rm if~}0 \leq i \leq 9,\\
0.05(i-9) & {\rm if~} 10\leq i \leq 14,\\
0.3 & {\rm if~} 15 \leq i \leq 24,\\
0.3-0.05(i-24) & {\rm if~} 25 \leq i \leq 29,\\
0 & {\rm if~} 30 \leq i \leq 39,\end{array}\right. \label{eq:second_wo_lowpass-chap7}\\
w_{o,h_i}''&=(-1)^{i+1}w_{o,l_i}''. \label{eq:second_wo_highpass-chap7}
\end{align}

The input signal is a zero-mean white Gaussian noise with unit variance. The value of $\alpha$ for the F-LMS algorithm is chosen as 0.05. The values of the step size $\mu$ are informed later for each simulated scenario. The MSE\abbrev{MSE}{Mean-Squared Error} learning curves of the LMS\abbrev{LMS}{Least-Mean-Square} and the F-LMS\abbrev{F-LMS}{Feature LMS} algorithms are depicted in Figures~\ref{fig:LP-chap7} to~\ref{fig:Block-chap7}.

\begin{figure}[t!]
\centering
\subfigure[b][]{\includegraphics[width=.48\linewidth,height=7cm]{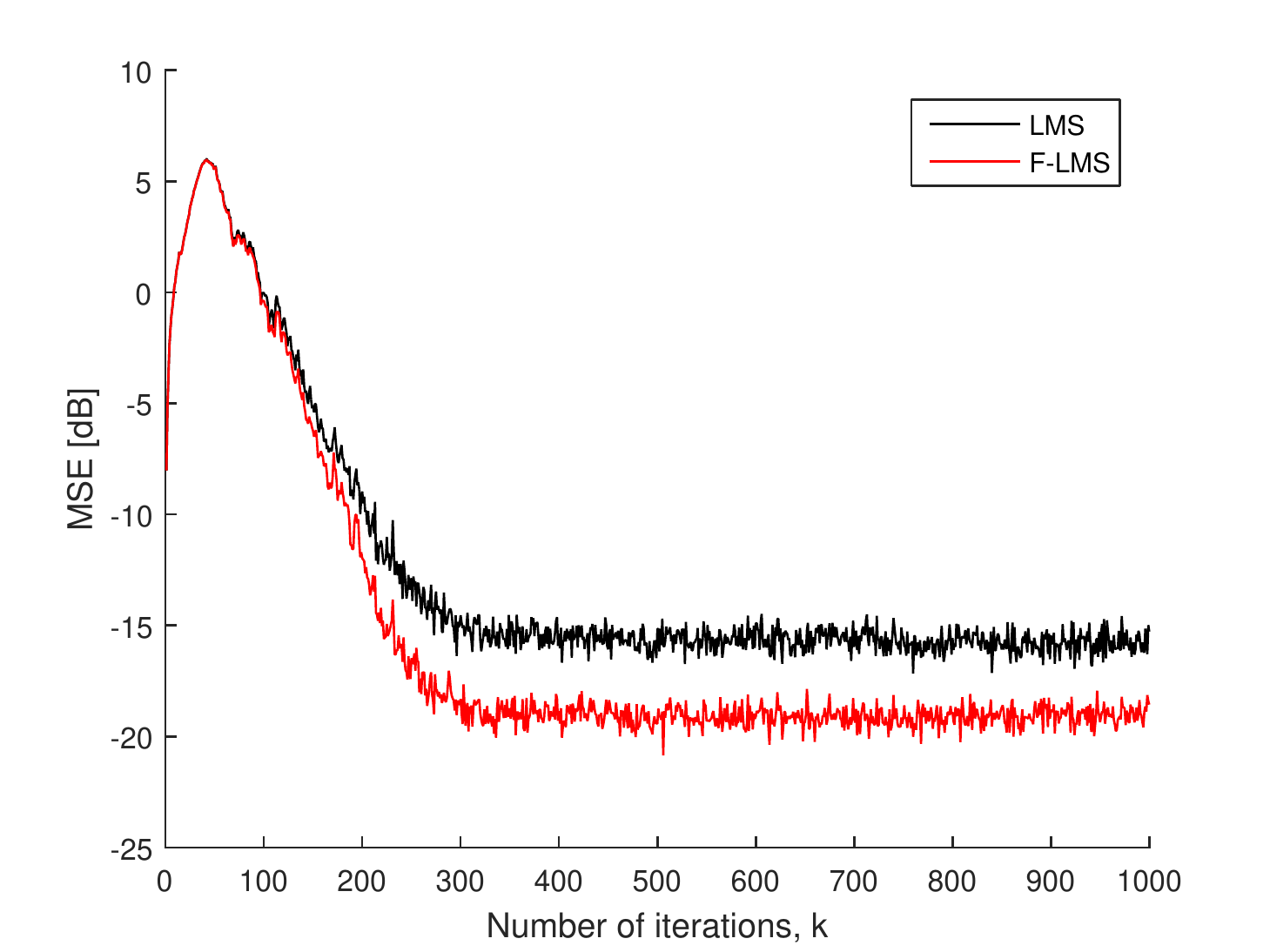}
\label{fig:LP_same_mu-chap7}}
\subfigure[b][]{\includegraphics[width=.48\linewidth,height=7cm]{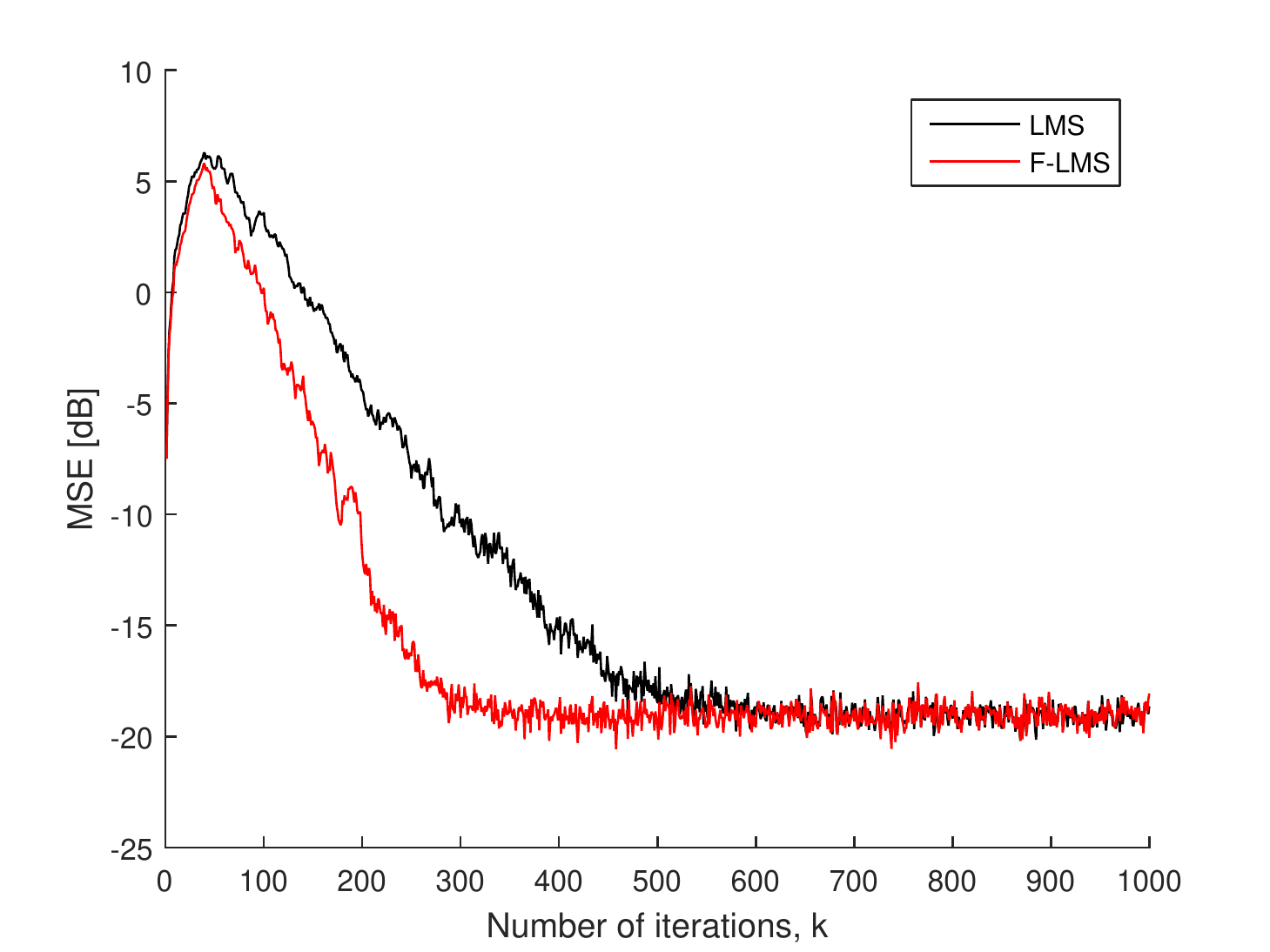}
\label{fig:LP_diff_mu-chap7}}
\caption{MSE learning curves of the LMS and F-LMS algorithms considering $\wbf_{o,l}$: 
(a) both algorithms with the same step size: $\mu = 0.03$; (b) LMS and F-LMS with step sizes equal to 0.01 and 0.03, respectively. \label{fig:LP-chap7}}
\end{figure}

\begin{figure}[t!]
\centering
\subfigure[b][]{\includegraphics[width=.48\linewidth,height=7cm]{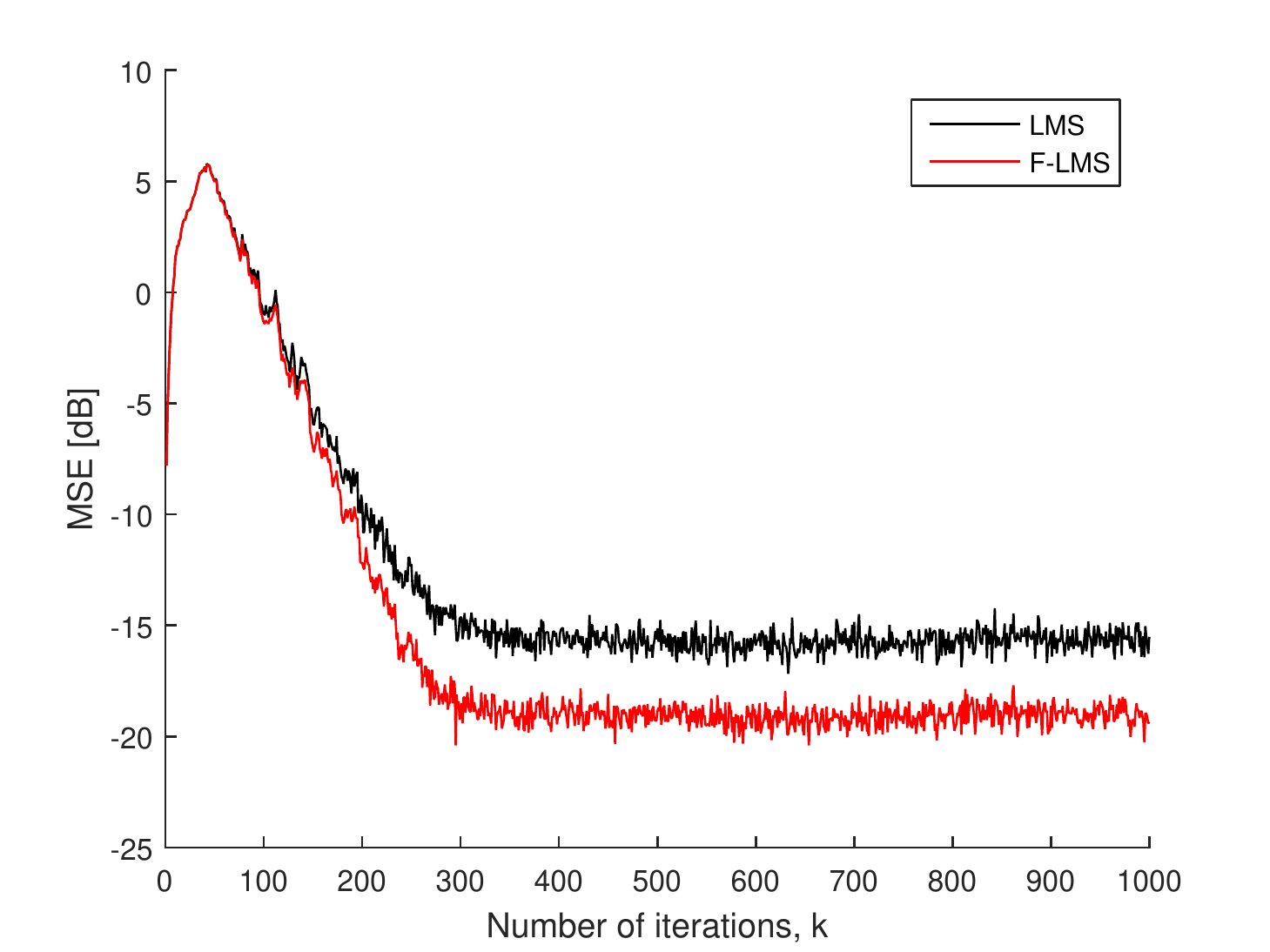}
\label{fig:HP_same_mu-chap7}}
\subfigure[b][]{\includegraphics[width=.48\linewidth,height=7cm]{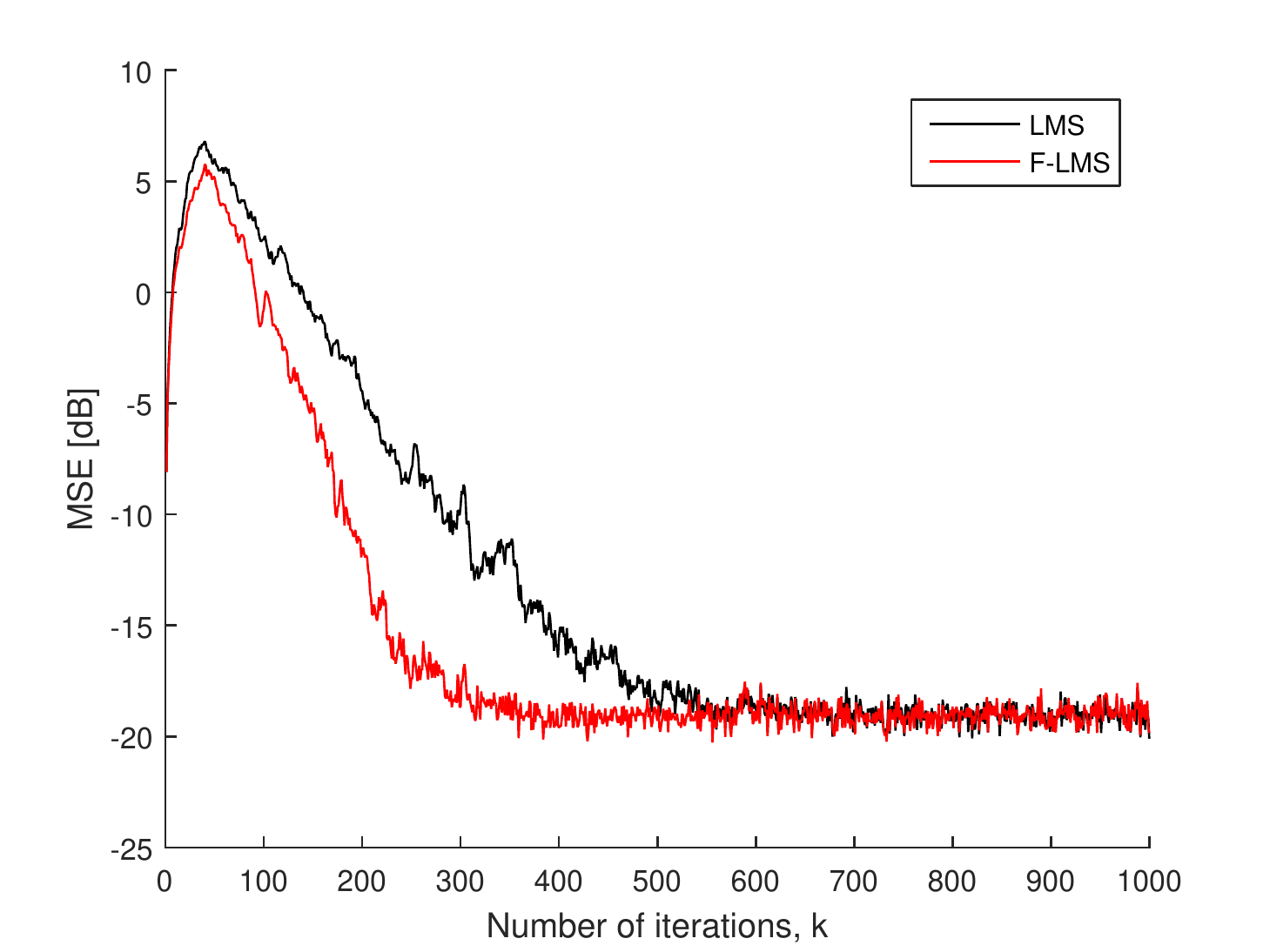}
\label{fig:HP_diff_mu-chap7}}
\caption{MSE learning curves of the LMS and F-LMS algorithms considering $\wbf_{o,h}$: 
(a) both algorithms with the same step size: $\mu = 0.03$; (b) LMS and F-LMS with step sizes equal to 0.01 and 0.03, respectively. \label{fig:HP-chap7}}
\end{figure}

Figure~\ref{fig:LP-chap7} depicts the MSE\abbrev{MSE}{Mean-Squared Error} learning curves of the LMS\abbrev{LMS}{Least-Mean-Square} and the F-LMS\abbrev{F-LMS}{Feature LMS} algorithms considering the lowpass system $\wbf_{o,l}$. In Figure~\ref{fig:LP_same_mu-chap7}, both algorithms use the same step size $\mu = 0.03$ so that they exhibit similar convergence speeds. In this figure, we can observe that the F-LMS\abbrev{F-LMS}{Feature LMS} algorithm achieved a steady-state MSE\abbrev{MSE}{Mean-Squared Error} which is more than $3$~dB lower than the MSE\abbrev{MSE}{Mean-Squared Error} results of the LMS\abbrev{LMS}{Least-Mean-Square} algorithm. In Figure~\ref{fig:LP_diff_mu-chap7}, the steady-state MSE\abbrev{MSE}{Mean-Squared Error} of the algorithms are fixed in order to compare their convergence speeds. 
Thus, we set the step sizes of the LMS\abbrev{LMS}{Least-Mean-Square} and the F-LMS\abbrev{F-LMS}{Feature LMS} algorithms as 0.01 and 0.03, respectively. We can observe, in this figure, that the F-LMS\abbrev{F-LMS}{Feature LMS} algorithm converged much faster than the LMS\abbrev{LMS}{Least-Mean-Square} algorithm.

In Figure~\ref{fig:HP-chap7}, we present results equivalent to the ones presented in Figure~\ref{fig:LP-chap7}, but considering the highpass system $\wbf_{o,h}$. Once again, when the step sizes of both algorithms are the same ($\mu = 0.03$), refer to Figure~\ref{fig:HP_same_mu-chap7}, the F-LMS\abbrev{F-LMS}{Feature LMS} algorithm achieved 
lower steady-state MSE;\abbrev{MSE}{Mean-Squared Error} whereas the F-LMS\abbrev{F-LMS}{Feature LMS} algorithm (with $\mu = 0.03$) converged much faster than the LMS\abbrev{LMS}{Least-Mean-Square} algorithm (with $\mu = 0.01$) when their steady-state MSEs\abbrev{MSE}{Mean-Squared Error} are fixed, as illustrated in Figure~\ref{fig:HP_diff_mu-chap7}. 

\begin{figure}[t!]
\centering
\subfigure[b][]{\includegraphics[width=.48\linewidth,height=7cm]{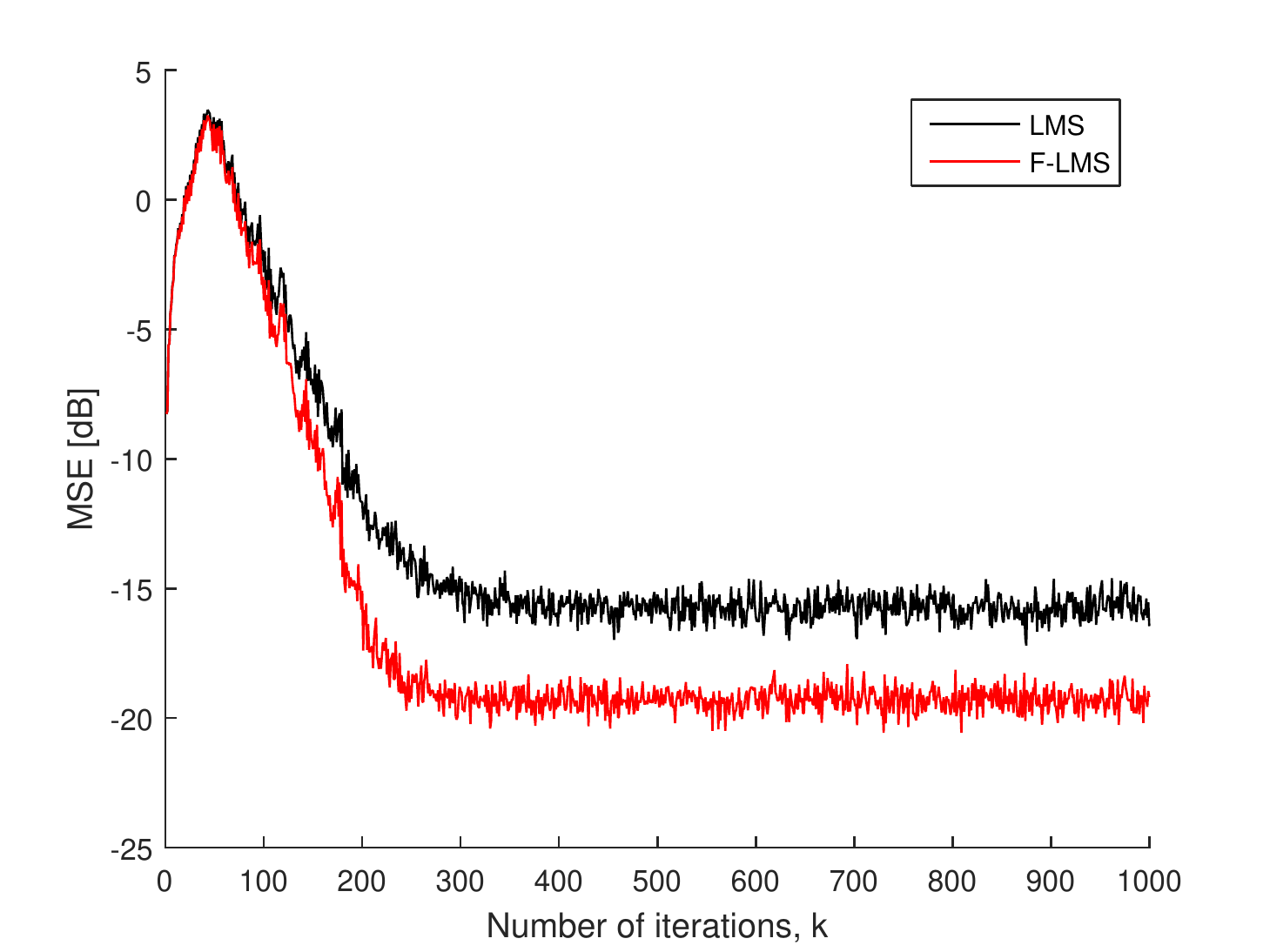}
\label{fig:LP_int-chap7}}
\subfigure[b][]{\includegraphics[width=.48\linewidth,height=7cm]{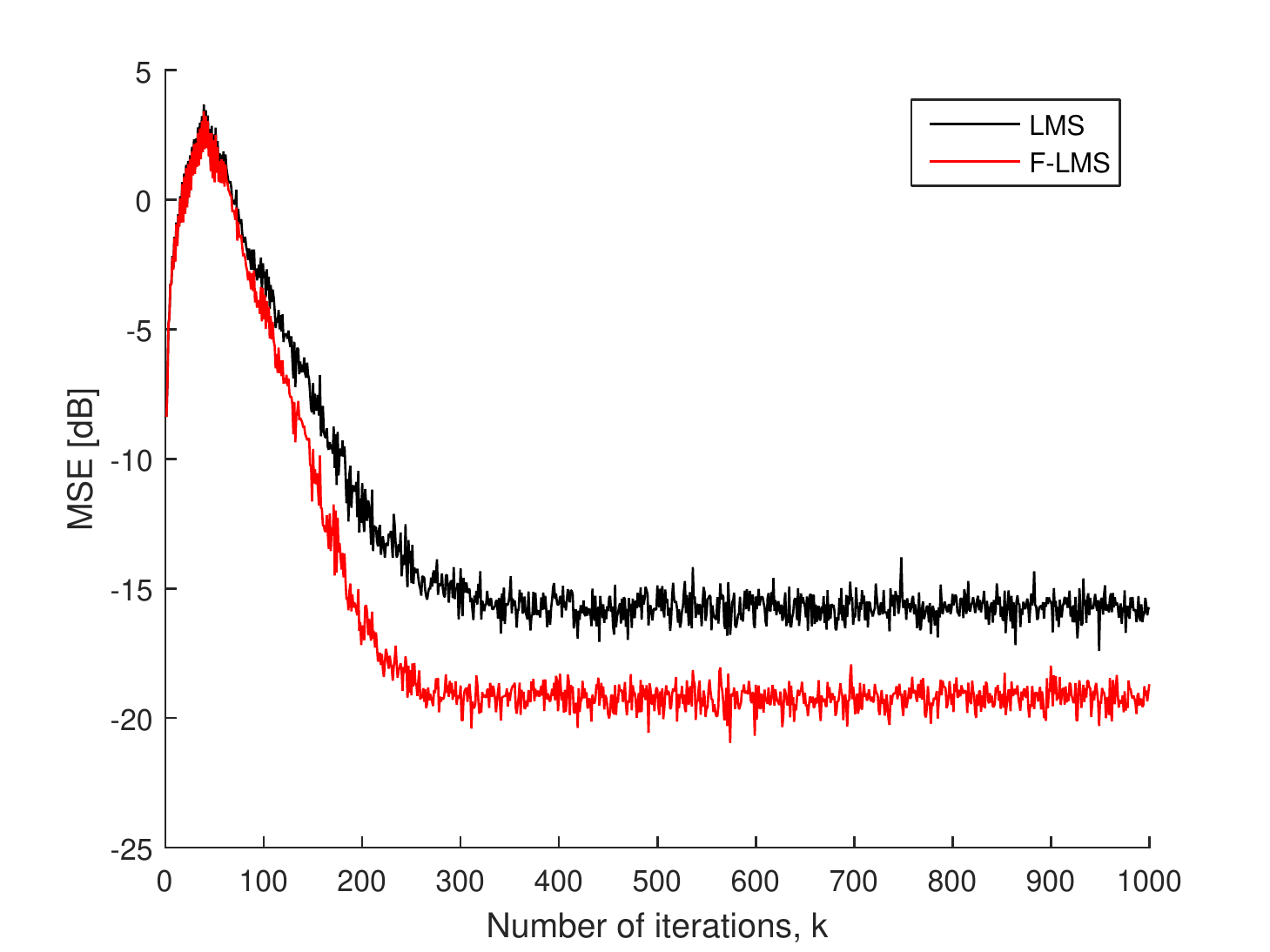}
\label{fig:HP_int-chap7}}
\caption{MSE learning curves of the LMS and F-LMS algorithms, both with step size $\mu = 0.03$, considering 
the unknown systems: (a) $\wbf_{o,l}'$ and (b) $\wbf_{o,h}'$. \label{fig:int-chap7}}
\end{figure}

\begin{figure}[t!]
\centering
\subfigure[b][]{\includegraphics[width=.48\linewidth,height=7cm]{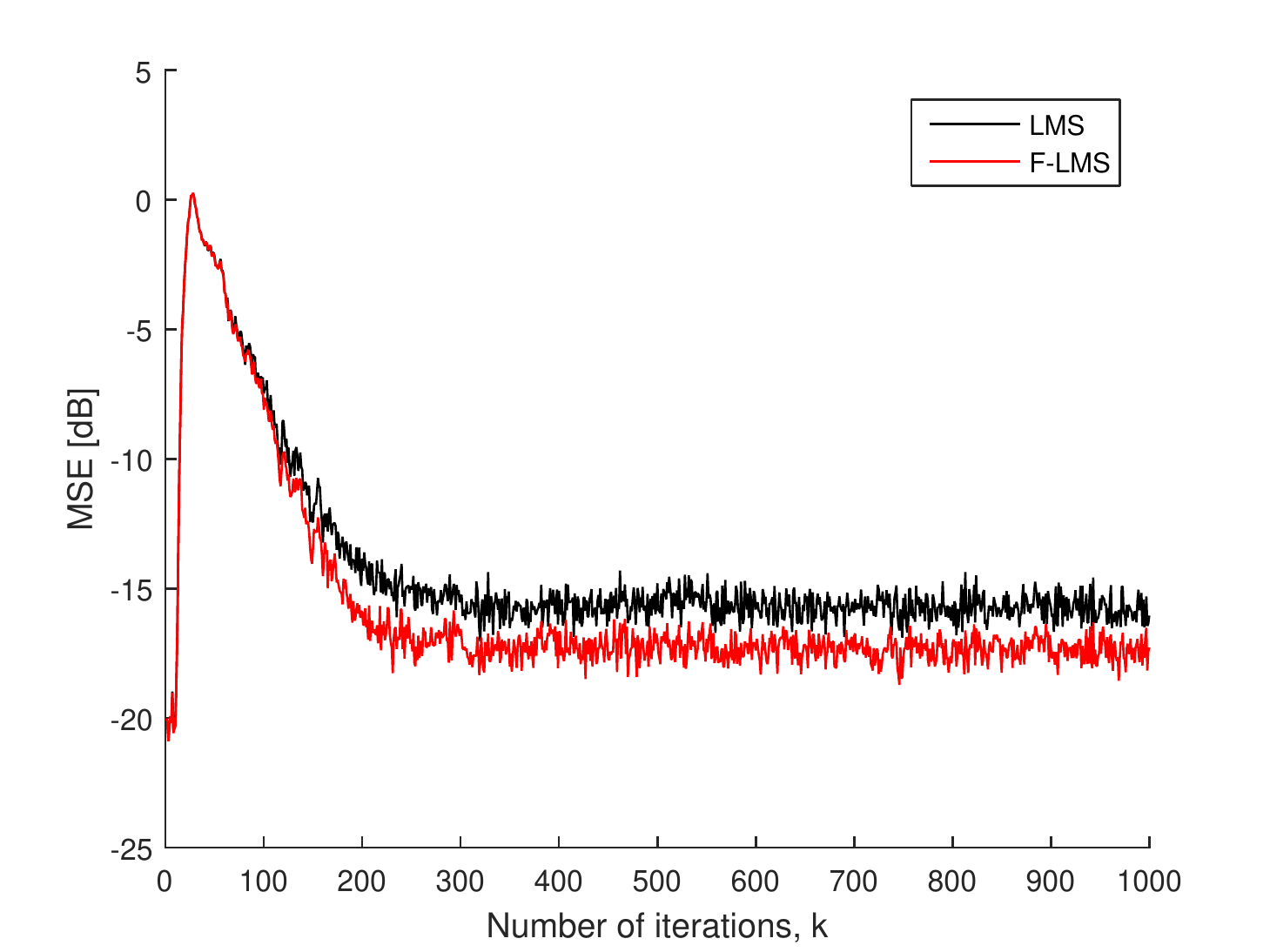}
\label{fig:LP_Block-chap7}}
\subfigure[b][]{\includegraphics[width=.48\linewidth,height=7cm]{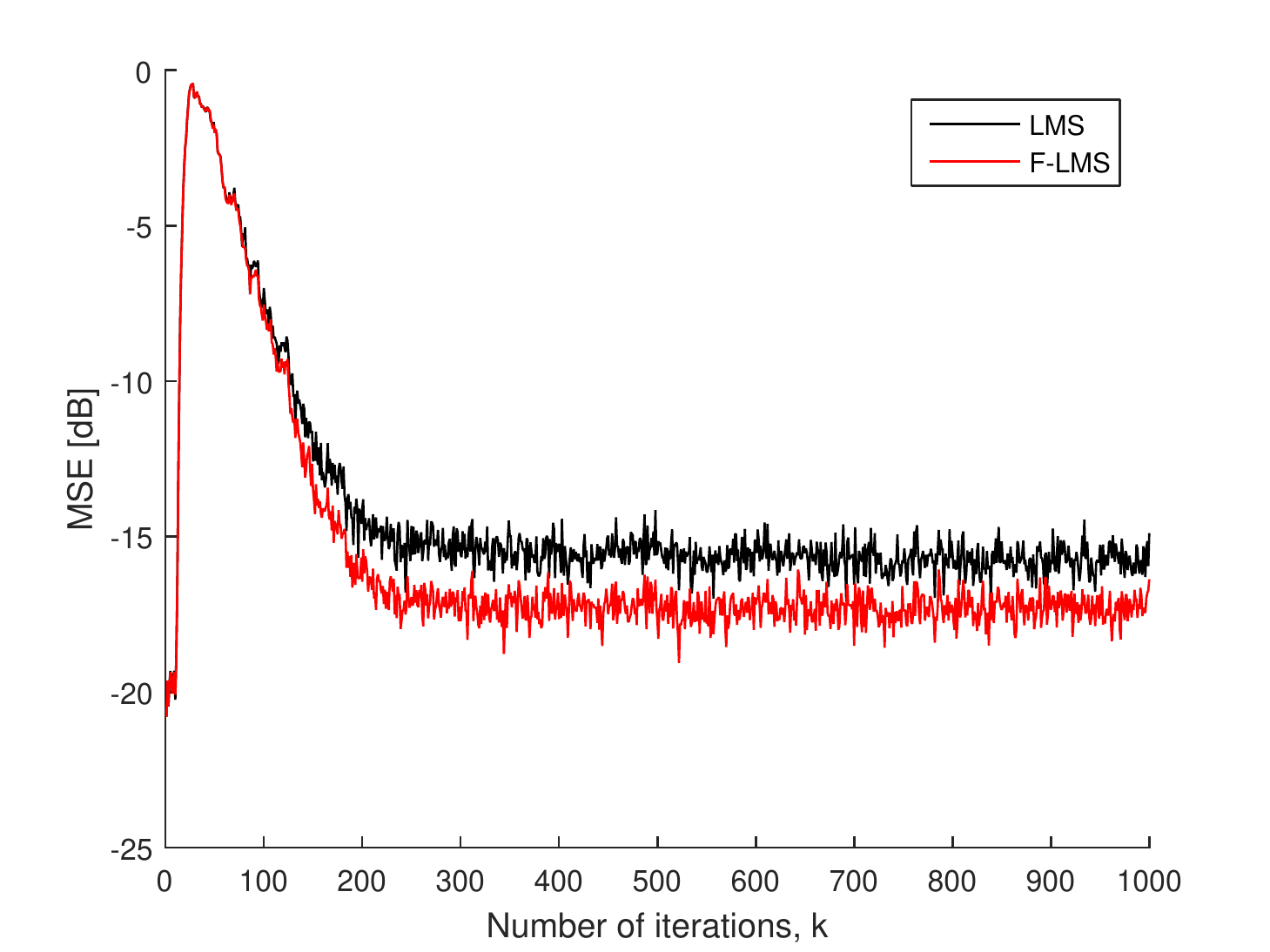}
\label{fig:HP_Block-chap7}}
\caption{MSE learning curves of the LMS and F-LMS algorithms, both with step size $\mu = 0.03$, considering 
the unknown systems: (a) $\wbf_{o,l}''$ and (b) $\wbf_{o,h}''$. \label{fig:Block-chap7}}
\end{figure}

Figures~\ref{fig:LP_int-chap7} and~\ref{fig:HP_int-chap7} depict the MSE\abbrev{MSE}{Mean-Squared Error} learning curves of the LMS\abbrev{LMS}{Least-Mean-Square} and the F-LMS\abbrev{F-LMS}{Feature LMS} algorithms, both using $\mu = 0.03$, considering the interpolated systems $\wbf_{o,l}'$ and $\wbf_{o,h}'$, respectively. Notice, in both figures, that the F-LMS\abbrev{F-LMS}{Feature LMS} algorithm achieved lower steady-state MSE,\abbrev{MSE}{Mean-Squared Error} thus outperforming the LMS\abbrev{LMS}{Least-Mean-Square} algorithm.

Figures~\ref{fig:LP_Block-chap7} and~\ref{fig:HP_Block-chap7} depict the MSE\abbrev{MSE}{Mean-Squared Error} learning curves of the LMS\abbrev{LMS}{Least-Mean-Square} and the F-LMS\abbrev{F-LMS}{Feature LMS} algorithms, both using $\mu = 0.03$, considering the block-sparse systems $\wbf_{o,l}''$ and $\wbf_{o,h}''$, respectively. In both cases, the F-LMS\abbrev{F-LMS}{Feature LMS} algorithm achieved lower steady-state MSE,\abbrev{MSE}{Mean-Squared Error} thus outperforming the LMS\abbrev{LMS}{Least-Mean-Square} algorithm.


\subsection{Scenario 2}

In this scenario, we apply the LMS, the LCF-LMS, the ALCF-LMS, the I-LCF-LMS, and the AI-LCF-LMS algorithms to identify two unknown systems. The first unknown system is the predominantly lowpass system $\wbf_{o,l}$. The second unknown model is a block-sparse model, $\wbf_{o,l}'''$, defined as follows
\begin{align}
w_{o,l_i}'''&=\left\{\begin{array}{ll}0 & {\rm if~}0 \leq i \leq 9,\\
0.04+0.01(i-9) & {\rm if~} 10\leq i \leq 17,\\
0.5 & {\rm if~} 18 \leq i \leq 21,\\
0.13-0.01(i-21) & {\rm if~} 22 \leq i \leq 29,\\
0 & {\rm if~} 30 \leq i \leq 39.\end{array}\right. \label{eq:second_lowpass_for_LCF-LMS-chap7}
\end{align}
In the case of the LCF-LMS and the ALCF-LMS algorithms, the input signal is an autoregressive signal generated by $x(k)=0.99x(k-1)+n(k-1)$. However, we do not have any restrictions on the input signal when utilizing the I-LCF-LMS and the AI-LCF-LMS algorithms. Thus, we use a zero-mean white Gaussian noise with unit variance as the input signal when implementing the I-LCF-LMS and the AI-LCF-LMS algorithms. The step size $\mu$ for the all algorithms is 0.003. Also, we adopt $\epsilon$ equal to 0.02. 

Figures~\ref{fig:ALCF_LMS-chap7} and~\ref{fig:ALCF_LMS_Block-chap7} show the MSE learning curves of the LMS, the LCF-LMS, and the ALCF-LMS algorithms. Furthermore, the MSE learning curves of the LMS, the I-LCF-LMS, and the AI-LCF-LMS algorithms are illustrated in Figures~\ref{fig:Improved_Lowpass-chap7} and~\ref{fig:Improved_Block_Lowpass-chap7}.

\begin{figure}[t!]
\centering
\subfigure[b][]{\includegraphics[width=.48\linewidth,height=7cm]{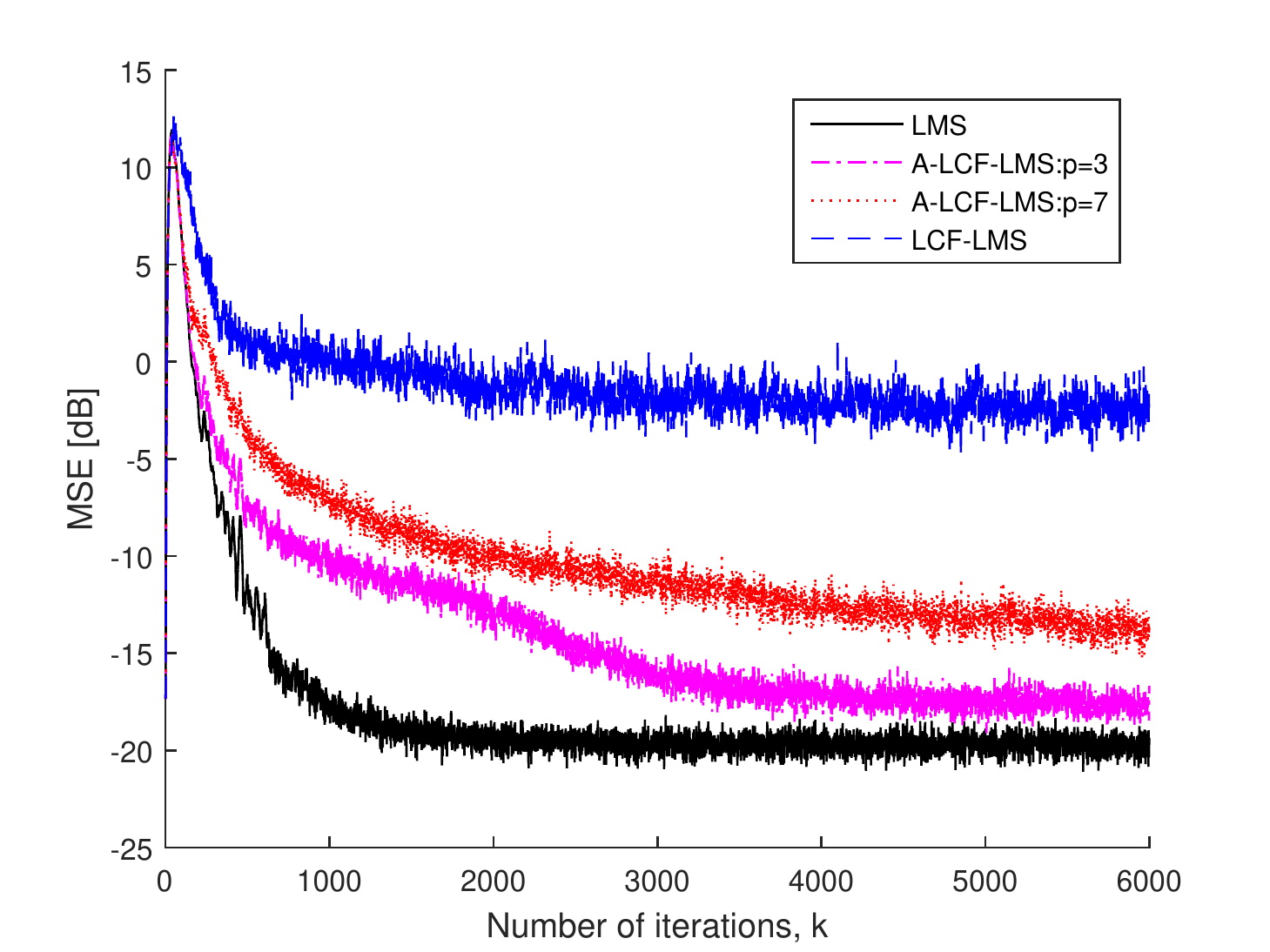}
\label{fig:ALCF_LMS-chap7}}
\subfigure[b][]{\includegraphics[width=.48\linewidth,height=7cm]{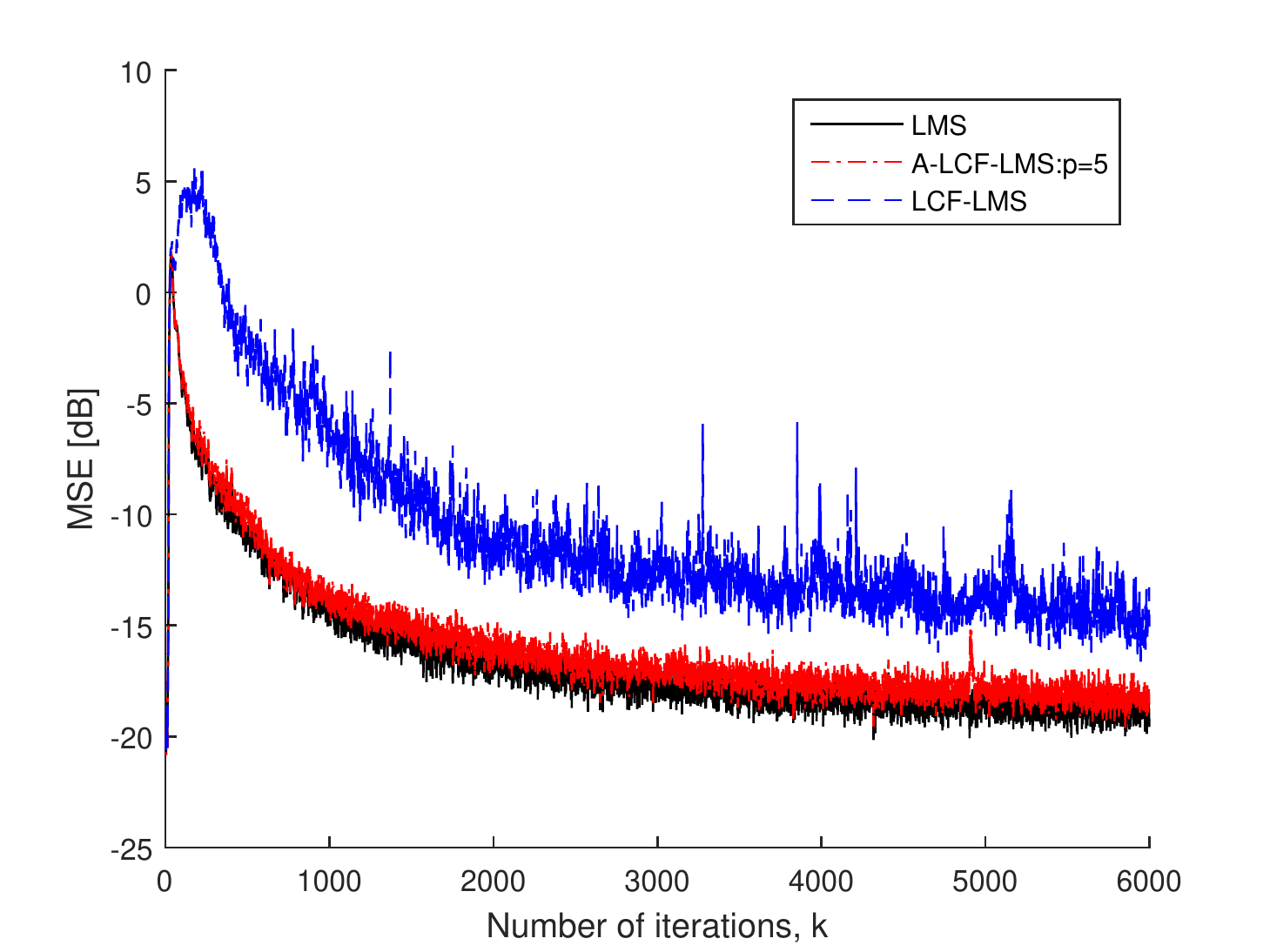}
\label{fig:ALCF_LMS_Block-chap7}}
\caption{MSE learning curves of the LMS, the LCF-LMS, and the ALCF-LMS algorithms considering 
the unknown systems: (a) $\wbf_{o,l}$ and (b) $\wbf_{o,l}'''$. \label{fig:LCF_LMS-chap7}}
\end{figure}

Figure~\ref{fig:ALCF_LMS-chap7} shows the learning curves of the mentioned algorithms when they are applied to identify the predominantly lowpass unknown system $\wbf_{o,l}$. We can observe that the LCF-LMS algorithm, the blue curve, has high MSE but it has the lowest computational complexity. In the steady-state environment, it implements only one multiplication to calculate the error signal. However, the LMS algorithms, the black curve, requires forty multiplication to compute the error signal, and it has the highest computational burden. The ALCF-LMS algorithms have acceptable performances and, using $p=3$ and 7, they need thirteen and six multiplication to calculate the error signal, respectively.

Figure~\ref{fig:ALCF_LMS_Block-chap7} depicts the learning curves of the algorithms, when they are applied to identify the block-sparse lowpass unknown model $\wbf_{o,l}'''$. As can be seen, the LCF-LMS algorithm, the blue curve, has the highest MSE but it executes only three multiplication to compute the error signal. The red curve illustrates the remarkable performance of the ALCF-LMS algorithm. Indeed, its learning curve is extremely close to the learning curve of the LMS algorithm. However, in the steady-state environment, it implements only six multiplication to calculate the error signal.

\begin{figure}[t!]
\centering
\subfigure[b][]{\includegraphics[width=.48\linewidth,height=7cm]{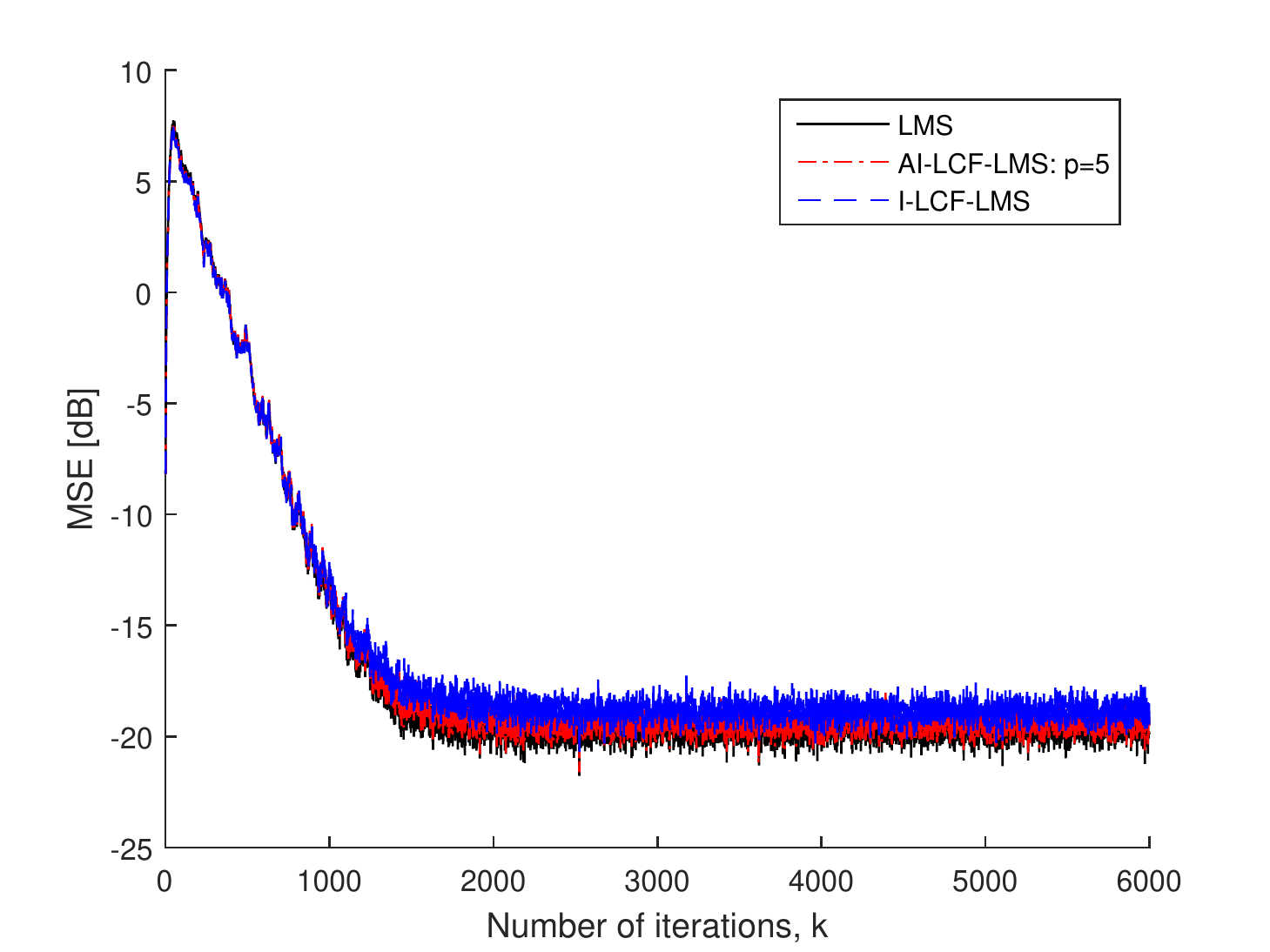}
\label{fig:Improved_Lowpass-chap7}}
\subfigure[b][]{\includegraphics[width=.48\linewidth,height=7cm]{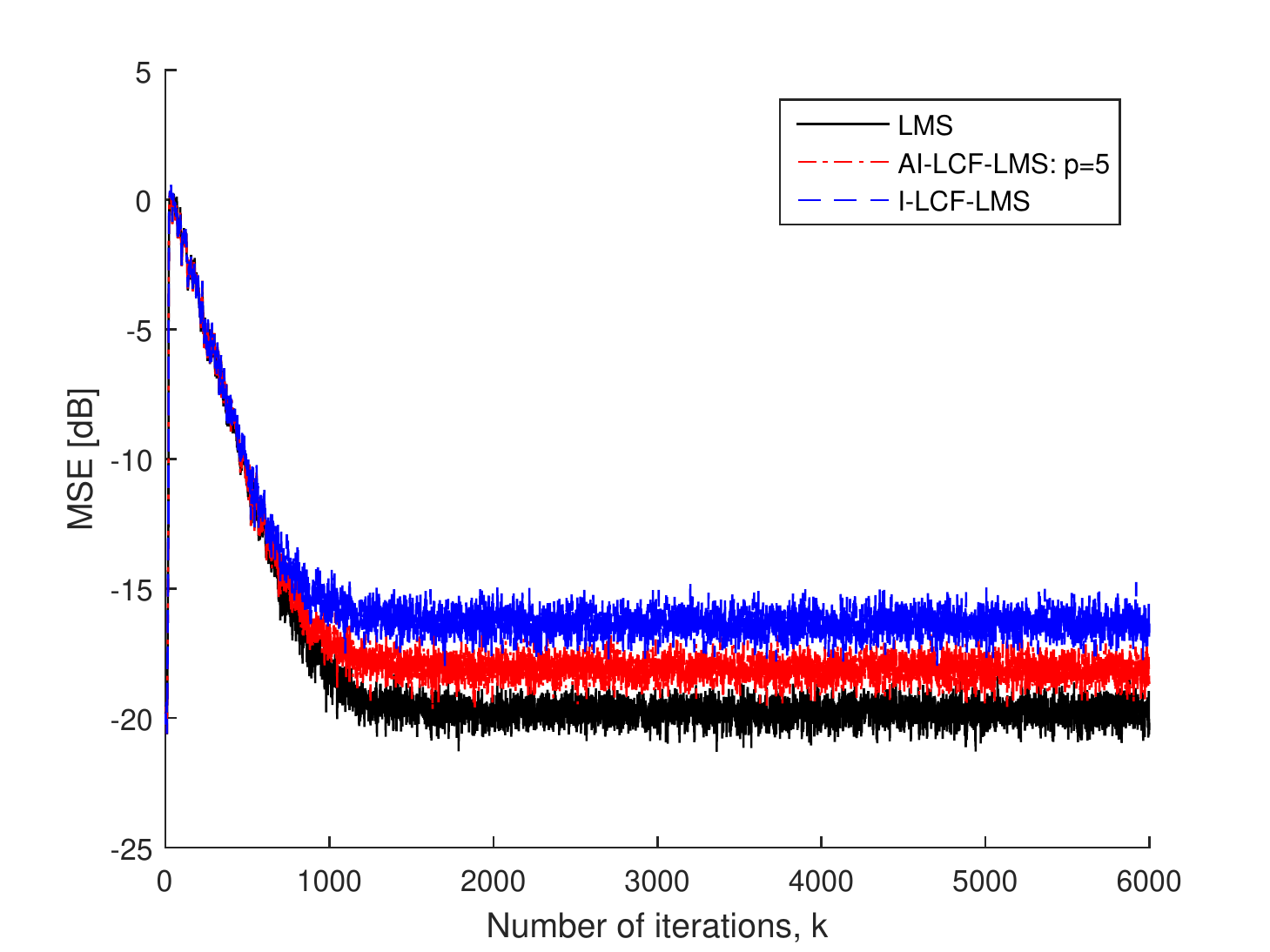}
\label{fig:Improved_Block_Lowpass-chap7}}
\caption{MSE learning curves of the LMS, the I-LCF-LMS, and the AI-LCF-LMS algorithms considering 
the unknown systems: (a) $\wbf_{o,l}$ and (b) $\wbf_{o,l}'''$. \label{fig:Improved_LCF_LMS-chap7}}
\end{figure}

Figure~\ref{fig:Improved_Lowpass-chap7} illustrates the learning curves of the LMS, the I-LCF-LMS, and the AI-LCF-LMS algorithms when they are utilized in the identification of the predominantly lowpass unknown system $\wbf_{o,l}$. The three algorithms have the same convergence rate; however, the LMS algorithm has the best MSE, followed by the AI-LCF-LMS and the I-LCF-LMS algorithms. As  can be seen, the superiority of the MSE of the LMS algorithm to the MSE of the other two algorithms is not remarkable but the LMS algorithm has higher computational load. In the steady-state environment, for the calculation of the error signal, the LMS algorithm implements 40 multiplication, whereas the I-LCF-LMS and the AI-LCF-LMS algorithms execute one and eight multiplication, respectively.

The MSE learning curves of the LMS, the I-LCF-LMS, and the AI-LCF-LMS algorithms, when they are applied to identify the block-sparse unknown system $\wbf_{o,l}'''$, are presented in Figure~\ref{fig:Improved_Block_Lowpass-chap7}. The curves shown in this figure indicate that the LMS algorithm has the best misadjustment, followed by the AI-LCF-LMS and the I-LCF-LMS algorithms. Moreover, we can observe that the three algorithms have similar convergence speed. We must note that the computational complexity of the LMS algorithm is higher than that of the I-LCF-LMS and of the AI-LCF-LMS algorithms. In other words, to compute the error signal in the steady-state environment, the LMS algorithm requires 40 multiplication; however, the I-LCF-LMS and the AI-LCF-LMS algorithms need three and six multiplication, respectively.

As can be seen, in Scenario 1, the learning curves of the F-LMS algorithm are lower than that of the LMS algorithm. However, in Scenario 2, the learning curves of the LCF-LMS, the ALCF-LMS, the I-LCF-LMS, and the AI-LCF-LMS algorithms are higher than that of the LMS algorithm. It is worthwhile to mention that the computational complexity of the F-LMS algorithm is higher than that of the LMS algorithm, whereas the LCF-LMS, the ALCF-LMS, the I-LCF-LMS, and the AI-LCF-LMS algorithms require lower computational resources as compared to the LMS algorithm. Therefore, higher MSE in the performance of the low-complexity F-LMS algorithms is compensated by their lower computational complexity.


\section{Conclusions} \label{sec:conclusions-chap7}

In this chapter, we have proposed a family of algorithms called Feature LMS (F-LMS)\abbrev{F-LMS}{Feature LMS}. The F-LMS\abbrev{F-LMS}{Feature LMS} algorithms are capable of exploiting specific features of the unknown system to be identified in order to accelerate convergence speed and/or reduce steady-state MSE,\abbrev{MSE}{Mean-Squared Error} obtaining a more accurate estimate. The main idea is to apply a sparsity-promoting function to a linear combination of the parameters, in which this linear combination should reveal the sparsity hidden in the parameters, i.e., the linear combination exploits the specific structure/feature in order to generate a sparse vector. Some examples of the F-LMS\abbrev{F-LMS}{Feature LMS} algorithms having low computational complexity and exploiting the lowpass and highpass characteristics of unknown systems were introduced. Simulation results confirmed the superior performance of the F-LMS\abbrev{F-LMS}{Feature LMS} algorithm in comparison with the LMS\abbrev{LMS}{Least-Mean-Square} algorithm. 

Furthermore, we have introduced the low-complexity F-LMS (LCF-LMS) and the alternative LCF-LMS (ALCF-LMS) algorithms in order to exploit hidden sparsity in the parameter with low computational cost. For this purpose, we have defined the feature function. The proposed algorithms have lower computational burden compared to the LMS algorithm; however, they have competitive performance. Also, we have introduced the improved versions of the LCF-LMS and the ALCF-LMS algorithms. Numerical results showed the competitive performance of the AI-LCF-LMS algorithm while requiring less multiplication to compute the error signal.

In future works, we intend to investigate other choices for the sparsity-promoting penalty function and the feature matrix. Also, we want to analyze the stability and MSE\abbrev{MSE}{Mean-Squared Error} of the F-LMS\abbrev{F-LMS}{Feature LMS} and the LCF-LMS algorithms.  
  \chapter{Conclusions, and Future Works}

In this thesis, we have investigated a number of data-selective adaptive filtering algorithms. It is generally accepted that data selection is an effective strategy to reduce the computational resources of the adaptive algorithms. To benefit from data selection in adaptive filtering algorithms, we have utilized the set-membership filtering (SMF) approach. 

In set-membership (SM) adaptive filtering algorithms, the inclusion of {\it a priori} information, such as the noise bound, into the objective function leads to some noticeable advantages. The SM adaptive algorithms evaluate, choose, and process data at each iteration of their learning process. These algorithms have the potential to outperform the conventional adaptive filtering algorithms. Indeed, they retain the advantages of their traditional counterparts; however, they are more accurate, more robust against noise, and have lower computational load.

Moreover, we incorporate some sparsity-aware techniques into the SM adaptive algorithms. Thus, we introduced some sparsity-aware set-membership adaptive filtering algorithms. In order to exploit the sparsity in system models, we utilized the $l_0$ norm approximation, the discard function, and the feature matrices. The $l_0$ norm approximation and the discard function exploit the sparsity in coefficients close to zero; however, the feature matrices exploit the sparsity in linear combination of the parameters.

\section{Contributions} \label{sec:contributions}

The thesis started by reviewing the classical adaptive filtering algorithms. Also, we have introduced the SM normalized least-mean-square (SM-NLMS) and the SM affine projection (SM-AP) algorithms briefly. Then we have analyzed the robustness (in the sense of $l_2$ stability) of the SM-NLMS and the SM-AP algorithms. One of the major drawbacks of adopting the conventional algorithms is that one cannot guarantee the convergence of the algorithm independent of the choice of the parameters. However, when the additional noise is bounded, we have proved that the SM algorithms never diverge.

Moreover, the SMF approach has been generalized to trinion and quaternion numbers. Whenever the problem at hand suits both the quaternion and trinion solutions, the trinion algorithms clearly have an advantage over the quaternion ones in terms of computational burden. Furthermore, we have derived a new set-membership partial-update affine projection algorithm. This algorithm can improve the convergence rate significantly, particularly in a nonstationary environment.

In addition, some data-selective adaptive filtering algorithms have been proposed in order to exploit sparsity in systems with low computational cost. The key idea is to apply the discard function and the $l_0$ norm approximation. In particular, the use of discard function can effectively decrease the computational complexity. Finally, we have derived some feature least-mean-square (F-LMS) algorithms to exploit hidden sparsity in models when adjacent coefficients have a strong relation. To this end, the feature matrices and the feature function play fundamental roles.

\section{Future Works} \label{sec:future}

In this section, we list our future works. Indeed, research into studying and analyzing the F-LMS and the low-complexity (LCF-LMS) algorithms is already in progress. We are investigating some mathematical properties, such as the stability and MSE, of the F-LMS and the LCF-LMS algorithms. Also, we are currently in the process of investigating other choices for the sparsity-promoting penalty function and the feature matrix. 

A possible topic for research is to employ distinct feature matrices in an online basis aiming at verifying the best one for a given iteration. It is also possible to derive a multitude for feature matrices inspired by previous knowledge of the spectral content of the unknown system model.

Another future work will concentrate on proposing some set-membership quaternion-valued adaptive filtering algorithms to exploit sparsity in system models. Also, further works need to be performed in order to analyze the performance of the proposed trinion- and quaternion-valued and partial-update adaptive algorithms.

  \backmatter
  \bibliographystyle{coppe-unsrt}
  \bibliography{thesis}

\begin{thebibliography}{131}
\providecommand{\natexlab}[1]{#1}
\providecommand{\url}[1]{\texttt{#1}}
\expandafter\ifx\csname urlstyle\endcsname\relax
  \providecommand{\doi}[1]{doi: #1}\else
  \providecommand{\doi}{doi: \begingroup \urlstyle{rm}\Url}\fi
\newcommand{\enquote}[1]{``#1''}

\bibitem[\MakeUppercase{Lima}(2013)Lima]{Markus-phdthesis}
\MakeUppercase{Lima, M. V.~S.}
\newblock \emph{Energy-efficient adaptive filters}.
\newblock {D.Sc.} {T}hesis, COPPE/UFRJ, Rio de Janeiro, RJ, Brasil, December
  2013.

\bibitem[\MakeUppercase{Diniz}(2013)Diniz]{Diniz_adaptiveFiltering_book2013}
\MakeUppercase{Diniz, P. S.~R.}
\newblock \emph{Adaptive Filtering: Algorithms and Practical Implementation}.
\newblock 4th ed. New York, USA, Springer, 2013.

\bibitem[\MakeUppercase{Jiang}(2016)Jiang]{Jiang-phdthesis}
\MakeUppercase{Jiang, M.}
\newblock \emph{Quaternion-valued adaptive signal processing and its
  application to adaptive beamforming and wind profile prediction}.
\newblock {D.Sc.} {T}hesis, University of Sheffield, 2016.

\bibitem[\MakeUppercase{Berberidis}\emph{ et~al.}(2016)Berberidis, Kekatos, e
  Giannakis]{Berberidis_censor_data_tsp2016}
\MakeUppercase{Berberidis, D., Kekatos, V., Giannakis, G.~B.}
\newblock \enquote{Online censoring for large-scale regressions with
  application to streaming big data}, \emph{IEEE Transactions on Signal
  Processing}, v.~64, n.~15, \penalty0 pp.~3854--3867, August 2016.

\bibitem[\MakeUppercase{Wang}\emph{ et~al.}(2014)Wang, Berberidis, Kekatos, e
  Giannakis]{Wang_Big_data_GlobalSIP2014}
\MakeUppercase{Wang, G., Berberidis, D., Kekatos, V., \MakeLowercase{et~al.}}
\newblock \enquote{Online reconstruction from big data via compressive
  censoring}. In: \emph{IEEE Global Conference on Signal and Information
  Processing (GlobalSIP 2014)}, pp. 326--330, Atlanta, GA, USA, December 2014.

\bibitem[\MakeUppercase{Yazdanpanah}\emph{
  et~al.}(2016{\natexlab{a}})Yazdanpanah, Lima, e
  Diniz]{Hamed_robustnessSMNLMS_sam2016}
\MakeUppercase{Yazdanpanah, H., Lima, M. V.~S., Diniz, P. S.~R.}
\newblock \enquote{On the robustness of the set-membership {NLMS} algorithm}.
  In: \emph{9th IEEE Sensor Array and Multichannel Signal Processing Workshop
  (SAM 2016)}, pp. 1--5, Rio de Janeiro, Brazil, July 2016{\natexlab{a}}.

\bibitem[\MakeUppercase{Yazdanpanah}\emph{ et~al.}(2017)Yazdanpanah, Lima, e
  Diniz]{Hamed_robustnessSM_EURASIP2017}
\MakeUppercase{Yazdanpanah, H., Lima, M. V.~S., Diniz, P. S.~R.}
\newblock \enquote{On the robustness of set-membership adaptive filtering
  algorithms}, \emph{EURASIP Journal on Advances in Signal Processing}, v.~72,
  December 2017.

\bibitem[\MakeUppercase{Lima}\emph{ et~al.}(2014{\natexlab{a}})Lima, Ferreira,
  Martins, e Diniz]{Markus_sparseSMAP_tsp2014}
\MakeUppercase{Lima, M. V.~S., Ferreira, T.~N., Martins, W.~A.,
  \MakeLowercase{et~al.}}
\newblock \enquote{Sparsity-aware data-selective adaptive filters}, \emph{IEEE
  Transactions on Signal Processing}, v.~62, n.~17, \penalty0 pp.~4557--4572,
  September 2014{\natexlab{a}}.

\bibitem[\MakeUppercase{Gollamudi}\emph{ et~al.}(1998{\natexlab{a}})Gollamudi,
  Nagaraj, Kapoor, e Huang]{Gollamudi_smf_letter1998}
\MakeUppercase{Gollamudi, S., Nagaraj, S., Kapoor, S., \MakeLowercase{et~al.}}
\newblock \enquote{Set-membership filtering and a set-membership normalized
  {LMS} algorithm with an adaptive step size}, \emph{IEEE Signal Processing
  Letters}, v.~5, n.~5, \penalty0 pp.~111--114, May 1998{\natexlab{a}}.

\bibitem[\MakeUppercase{Sayed}(2008)Sayed]{Sayed_adaptiveFilters_book2008}
\MakeUppercase{Sayed, A.~H.}
\newblock \emph{Adaptive Filters}.
\newblock  New York, USA, Wiley-IEEE, 2008.

\bibitem[\MakeUppercase{Haykin}(2002)Haykin]{Haykin_adaptiveFiltering_book2002}
\MakeUppercase{Haykin, S.}
\newblock \emph{Adaptive Filter Theory}.
\newblock 4th ed. Englewood Cliffs, NJ, Prentice Hall, 2002.

\bibitem[\MakeUppercase{Claser}\emph{ et~al.}(2016)Claser, Nascimento, e
  Zakharov]{Raffaello_rls_dcd_eusipco2016}
\MakeUppercase{Claser, R., Nascimento, V.~H., Zakharov, Y.~V.}
\newblock \enquote{A low-complexity {RLS-DCD} algorithm for volterra system
  identification}. In: \emph{24th European Signal Processing Conference
  (EUSIPCO 2016)}, pp. 6--10, Budapest, Hungary, September 2016.

\bibitem[\MakeUppercase{Diniz}\emph{ et~al.}(2008{\natexlab{a}})Diniz, Lima, e
  Martins]{Diniz_semiblind_ds_iscas2008}
\MakeUppercase{Diniz, P. S.~R., Lima, M. V.~S., Martins, W.~A.}
\newblock \enquote{Semi-blind data-selective algorithms for channel
  equalization}. In: \emph{IEEE International Symposium on Circuits and Systems
  (ISCAS 2008)}, pp. 53--56, Seattle, WA, USA, May 2008{\natexlab{a}}.

\bibitem[\MakeUppercase{Andersen} e \MakeUppercase{Moonen}(2016)Andersen e
  Moonen]{Andersen_atf_taslp2016}
\MakeUppercase{Andersen, K.~T., Moonen, M.}
\newblock \enquote{Adaptive time-frequency analysis for noise reduction in an
  audio filter bank with low delay}, \emph{IEEE/ACM Transactions on Audio,
  Speech, and Language Processing}, v.~24, n.~4, \penalty0 pp.~784--795, April
  2016.

\bibitem[\MakeUppercase{Azpicueta-Ruiz}\emph{ et~al.}(2014)Azpicueta-Ruiz,
  Arenas-García, Nascimento, e Silva]{Ruiz_acoustic_ec_its2014}
\MakeUppercase{Azpicueta-Ruiz, L.~A., Arenas-García, J., Nascimento, V.~H.,
  \MakeLowercase{et~al.}}
\newblock \enquote{Reduced-cost combination of adaptive filters for acoustic
  echo cancellation}. In: \emph{International Telecommunications Symposium (ITS
  2014)}, pp. 1--5, São Paulo, Brazil, August 2014.

\bibitem[\MakeUppercase{de~Lamare}(2013)de~Lamare]{Rodrigo_multi-antenna_twc2013}
\MakeUppercase{de~Lamare, R.~C.}
\newblock \enquote{Adaptive and iterative multi-branch {MMSE} decision feedback
  detection algorithms for multi-antenna systems}, \emph{IEEE Transactions on
  Wireless Communications}, v.~12, n.~10, \penalty0 pp.~5294--5308, October
  2013.

\bibitem[\MakeUppercase{Yazdanpanah} e
  \MakeUppercase{Diniz}(2017{\natexlab{a}})Yazdanpanah e
  Diniz]{Hamed_smtrinion-tcssII2016}
\MakeUppercase{Yazdanpanah, H., Diniz, P. S.~R.}
\newblock \enquote{New trinion and quaternion set-membership affine projection
  algorithms}, \emph{IEEE Transactions on Circuits and Systems II: Express
  Briefs}, v.~64, n.~2, \penalty0 pp.~216--220, February 2017{\natexlab{a}}.

\bibitem[\MakeUppercase{Ehrenfried} e \MakeUppercase{Koop}(2007)Ehrenfried e
  Koop]{Ehrenfried_damas_aiaa2007}
\MakeUppercase{Ehrenfried, K., Koop, L.}
\newblock \enquote{Comparison of iterative deconvolution algorithms for the
  mapping of acoustic sources}, \emph{AIAA Journal}, v.~45(7), \penalty0
  pp.~1584--1595, July 2007.

\bibitem[\MakeUppercase{Zheng} e \MakeUppercase{Chen}(2010)Zheng e
  Chen]{Zheng_stock_market_icca2010}
\MakeUppercase{Zheng, X., Chen, B.~M.}
\newblock \enquote{Identification of market forces in the financial system
  adaptation framework}. In: \emph{IEEE International Conference on Control and
  Automation (ICCA 2010)}, pp. 103--108, Xiamen, China, June 2010.

\bibitem[\MakeUppercase{Lehmann} e \MakeUppercase{Casella}(2003)Lehmann e
  Casella]{Lehmann_pointEstimation_book2003}
\MakeUppercase{Lehmann, E.~L., Casella, G.}
\newblock \emph{Theory of Point Estimation}.
\newblock 2nd ed. New York, USA, Springer, 2003.

\bibitem[\MakeUppercase{Theodoridis} e
  \MakeUppercase{Koutroumbas}(2008)Theodoridis e
  Koutroumbas]{Theodoridis_Pattern_Recognition_book2008}
\MakeUppercase{Theodoridis, S., Koutroumbas, K.}
\newblock \emph{Pattern Recognition}.
\newblock 4th ed. , Academic Press, 2008.

\bibitem[\MakeUppercase{Bishop}(2011)Bishop]{Bishop_Pattern_Recognition_book2011}
\MakeUppercase{Bishop, C.~M.}
\newblock \emph{Pattern Recognition and Machine Learning}.
\newblock 2nd ed. , Springer, 2011.

\bibitem[\MakeUppercase{Combettes}(1993)Combettes]{Combettes_foundationSetTheoreticEstimation_procIEEE1993}
\MakeUppercase{Combettes, P.~L.}
\newblock \enquote{The foundations of set theoretic estimation},
  \emph{Proceedings of the IEEE}, v.~81, n.~2, \penalty0 pp.~182--208, February
  1993.

\bibitem[\MakeUppercase{Lima} e \MakeUppercase{Diniz}(2013{\natexlab{a}})Lima e
  Diniz]{Markus_edcv_eusipco2013}
\MakeUppercase{Lima, M. V.~S., Diniz, P. S.~R.}
\newblock \enquote{Fast learning set theoretic estimation}. In: \emph{21st
  European Signal Processing Conference (EUSIPCO 2013)}, pp. 1--5, Marrakech,
  Moroccol, September 2013{\natexlab{a}}.

\bibitem[\MakeUppercase{Combettes} e \MakeUppercase{Trussell}(1991)Combettes e
  Trussell]{Combettes_noise_SetTheoretic_tsp1991}
\MakeUppercase{Combettes, P.~L., Trussell, H.~J.}
\newblock \enquote{The use of noise properties in set theoretic estimation},
  \emph{IEEE Transactions on Signal Processing}, v.~39, n.~7, \penalty0
  pp.~1630--1641, July 1991.

\bibitem[\MakeUppercase{Fogel} e \MakeUppercase{Huang}(1982)Fogel e
  Huang]{Fogel_valueOfInformation_automatica1982}
\MakeUppercase{Fogel, E., Huang, Y.-F.}
\newblock \enquote{On the value of information in system
  identification--bounded noise case}, \emph{Automatica}, v.~18, n.~2,
  \penalty0 pp.~229--238, March 1982.

\bibitem[\MakeUppercase{Deller}(1989)Deller]{Deller_smi_asspmag1989}
\MakeUppercase{Deller, J.~R.}
\newblock \enquote{Set-membership identification in digital signal processing},
  \emph{IEEE ASSP Magazine}, v.~6, n.~4, \penalty0 pp.~4--20, October 1989.

\bibitem[\MakeUppercase{Gollamudi}\emph{ et~al.}(1998{\natexlab{b}})Gollamudi,
  Kapoor, Nagaraj, e Huang]{Gollamudi_smUpdatorShared_tsp1998}
\MakeUppercase{Gollamudi, S., Kapoor, S., Nagaraj, S., \MakeLowercase{et~al.}}
\newblock \enquote{Set-membership adaptive equalization and updator-shared
  implementation for multiple channel communications systems}, \emph{IEEE
  Transactions on Signal Processing}, v.~46, n.~9, \penalty0 pp.~2372--2385,
  September 1998{\natexlab{b}}.

\bibitem[\MakeUppercase{Nagaraj}\emph{ et~al.}(1999)Nagaraj, Gollamudi, Kapoor,
  e Huang]{Nagaraj_beacon_tsp1999}
\MakeUppercase{Nagaraj, S., Gollamudi, S., Kapoor, S., \MakeLowercase{et~al.}}
\newblock \enquote{{BEACON}: an adaptive set-membership filtering technique
  with sparse updates}, \emph{IEEE Transactions on Signal Processing}, v.~47,
  n.~11, \penalty0 pp.~2928--2941, November 1999.

\bibitem[\MakeUppercase{Diniz} e \MakeUppercase{Werner}(2003)Diniz e
  Werner]{Diniz_sm_bnlms_tsp2003}
\MakeUppercase{Diniz, P. S.~R., Werner, S.}
\newblock \enquote{Set-membership binormalized data reusing {LMS} algorithms},
  \emph{IEEE Transactions on Signal Processing}, v.~51, n.~1, \penalty0
  pp.~124--134, January 2003.

\bibitem[\MakeUppercase{Werner} e \MakeUppercase{Diniz}(2001)Werner e
  Diniz]{Werner_sm_ap_letter2001}
\MakeUppercase{Werner, S., Diniz, P. S.~R.}
\newblock \enquote{Set-membership affine projection algorithm}, \emph{IEEE
  Signal Processing Letters}, v.~8, n.~8, \penalty0 pp.~231--235, August 2001.

\bibitem[\MakeUppercase{Goodwin} e \MakeUppercase{Payne}(1977)Goodwin e
  Payne]{Goodwin_Dynamic_system_id_book1977}
\MakeUppercase{Goodwin, G.~C., Payne, R.~L.}
\newblock \emph{Dynamic System Identification: Experiment Design and Data
  Analysis}.
\newblock  New York, USA, Academic, 1977.

\bibitem[\MakeUppercase{Galdino}\emph{ et~al.}(2006)Galdino, {Apolinário,
  Jr.}, e de~Campos]{Galdino_SMNLMS_gammabar_ISCAS2006}
\MakeUppercase{Galdino, J.~F., {Apolinário, Jr.}, J.~A., de~Campos, M. L.~R.}
\newblock \enquote{A set-membership {NLMS} algorithm with time-varying error
  bound}. In: \emph{IEEE International Symposium on Circuits and Systems (ISCAS
  2006)}, pp. 277--280, Island of Kos, Greece, May 2006.

\bibitem[\MakeUppercase{Ozeki} e \MakeUppercase{Umeda}(1984)Ozeki e
  Umeda]{Ozeki_ap_japan1984}
\MakeUppercase{Ozeki, K., Umeda, T.}
\newblock \enquote{An adaptive filtering algorithm using an orthogonal
  projection to an affine subspace and its properties}, \emph{Electronics and
  Communications in Japan}, v.~67-A, n.~5, \penalty0 pp.~19--27, 1984.

\bibitem[\MakeUppercase{Diniz} e \MakeUppercase{Yazdanpanah}(2017)Diniz e
  Yazdanpanah]{Hamed_gamma_estimate_GlobalSIP2017}
\MakeUppercase{Diniz, P. S.~R., Yazdanpanah, H.}
\newblock \enquote{Data censoring with set-membership algorithms}. In:
  \emph{IEEE Global Conference on Signal and Information Processing (GlobalSIP
  2017)}, Montreal, Canada, November 2017.

\bibitem[\MakeUppercase{Lima} e \MakeUppercase{Diniz}(2010{\natexlab{a}})Lima e
  Diniz]{Markus_mseSMAP_icassp2010}
\MakeUppercase{Lima, M. V.~S., Diniz, P. S.~R.}
\newblock \enquote{Steady-state analysis of the set-membership affine
  projection algorithm}. In: \emph{IEEE International Conference on Acoustics,
  Speech and Signal Processing (ICASSP 2010)}, pp. 3802--3805, Dallas, USA,
  March 2010{\natexlab{a}}.

\bibitem[\MakeUppercase{Lima} e \MakeUppercase{Diniz}(2013{\natexlab{b}})Lima e
  Diniz]{Markus_mseSMAP_cssp2013}
\MakeUppercase{Lima, M. V.~S., Diniz, P. S.~R.}
\newblock \enquote{Steady-state {MSE} performance of the set-membership affine
  projection algorithm}, \emph{Circuits, Systems, and Signal Processing},
  v.~32, n.~4, \penalty0 pp.~1811--1837, January 2013{\natexlab{b}}.

\bibitem[\MakeUppercase{Arablouei} e \MakeUppercase{Do$\check{{\rm
  g}}$an\c{c}ay}(2012{\natexlab{a}})Arablouei e Do$\check{{\rm
  g}}$an\c{c}ay]{Arablouei_tracking_performance_SMNLMS_APSIPA2012}
\MakeUppercase{Arablouei, R., Do$\check{{\rm g}}$an\c{c}ay, K.}
\newblock \enquote{Tracking performance analysis of the set-membership {NLMS}
  adaptive filtering algorithm}. In: \emph{Signal $\&$ Information Processing
  Association Annual Summit and Conference (APSIPA ASC 2012)}, pp. 1--6,
  Hollywood, CA, USA, December 2012{\natexlab{a}}.

\bibitem[\MakeUppercase{Carini} e \MakeUppercase{Sicuranza}(2006)Carini e
  Sicuranza]{Carini_Filtered_x_SMAP_icassp2006}
\MakeUppercase{Carini, A., Sicuranza, G.~L.}
\newblock \enquote{Analysis of a multichannel filtered-x set-membership affine
  projection algorithm}. In: \emph{IEEE International Conference on Acoustics,
  Speech and Signal Processing Proceedings}, Toulouse, France, May 2006.

\bibitem[\MakeUppercase{Guo} e \MakeUppercase{Huang}(2007)Guo e
  Huang]{Guo_fsmf_tsp2007}
\MakeUppercase{Guo, L., Huang, Y.-F.}
\newblock \enquote{Frequency-domain set-membership filtering and its
  applications}, \emph{IEEE Transactions on Signal Processing}, v.~55, n.~4,
  \penalty0 pp.~1326--1338, April 2007.

\bibitem[\MakeUppercase{Werner}\emph{ et~al.}(2007)Werner, {Apolinário, Jr.},
  e Diniz]{Diniz_sm_pap_jasmp2007}
\MakeUppercase{Werner, S., {Apolinário, Jr.}, J.~A., Diniz, P. S.~R.}
\newblock \enquote{Set-membership proportionate affine projection algorithms},
  \emph{EURASIP Journal on Audio, Speech, and Music Processing}, v.~2007, n.~1,
  \penalty0 pp.~1--10, January 2007.

\bibitem[\MakeUppercase{Martins}\emph{ et~al.}(2008)Martins, Lima, e
  Diniz]{Markus_semiblindQAM_spawc2008}
\MakeUppercase{Martins, W.~A., Lima, M. V.~S., Diniz, P. S.~R.}
\newblock \enquote{Semi-blind data-selective equalizers for {QAM}}. In:
  \emph{IEEE 9th Workshop on Signal Processing Advances in Wireless
  Communications (SPAWC 2008)}, pp. 501--505, Recife, Brazil, July 2008.

\bibitem[\MakeUppercase{Bhotto} e
  \MakeUppercase{Antoniou}(2012{\natexlab{a}})Bhotto e
  Antoniou]{Bhotto_2012_TSP}
\MakeUppercase{Bhotto, M. Z.~A., Antoniou, A.}
\newblock \enquote{Robust set-membership affine-projection adaptive-filtering
  algorithm}, \emph{IEEE Transactions on Signal Processing}, v.~60, n.~1,
  \penalty0 pp.~73--81, January 2012{\natexlab{a}}.

\bibitem[\MakeUppercase{Zhang} e \MakeUppercase{Zhang}(2014)Zhang e
  Zhang]{Zhang_robustSMnlms_tcas2014}
\MakeUppercase{Zhang, S., Zhang, J.}
\newblock \enquote{Set-membership {NLMS} algorithm with robust error bound},
  \emph{IEEE Transactions on Circuits and Systems II: Express Briefs}, v.~61,
  n.~7, \penalty0 pp.~536--540, July 2014.

\bibitem[\MakeUppercase{Mao}(2017)Mao]{Mao_smfGPS_sensors2017}
\MakeUppercase{Mao, W.~L.}
\newblock \enquote{Robust set-membership filtering techniques on {GPS} sensor
  jamming mitigation}, \emph{IEEE Sensors Journal}, v.~17, n.~6, \penalty0
  pp.~1810--1818, March 2017.

\bibitem[\MakeUppercase{Lima} e \MakeUppercase{Diniz}(2010{\natexlab{b}})Lima e
  Diniz]{Markus_mseSMNLMS_iswcs2010}
\MakeUppercase{Lima, M. V.~S., Diniz, P. S.~R.}
\newblock \enquote{On the steady-state {MSE} performance of the set-membership
  {NLMS} algorithm}. In: \emph{7th International Symposium on Wireless
  Communication Systems (ISWCS 2010)}, pp. 389--393, York, UK, September
  2010{\natexlab{b}}.

\bibitem[\MakeUppercase{Takahashi} e \MakeUppercase{Yamada}(2009)Takahashi e
  Yamada]{Yamada_sm-nlmsAnalysis_tsp2009}
\MakeUppercase{Takahashi, N., Yamada, I.}
\newblock \enquote{Steady-state mean-square performance analysis of a relaxed
  set-membership {NLMS} algorithm by the energy conservation argument},
  \emph{IEEE Transactions on Signal Processing}, v.~57, n.~9, \penalty0
  pp.~3361--3372, September 2009.

\bibitem[\MakeUppercase{Diniz}(2011)Diniz]{Diniz_CSSP_2011}
\MakeUppercase{Diniz, P. S.~R.}
\newblock \enquote{Convergence performance of the simplified set-membership
  affine projection algorithm}, \emph{Circuits, Systems, and Signal
  Processing}, v.~30, n.~2, \penalty0 pp.~439--462, April 2011.

\bibitem[\MakeUppercase{Rupp}(2011)Rupp]{Rupp_PAProbustness_tsp2011}
\MakeUppercase{Rupp, M.}
\newblock \enquote{Pseudo affine projection algorithms revisited: robustness
  and stability analysis}, \emph{IEEE Transactions on Signal Processing},
  v.~59, n.~5, \penalty0 pp.~2017--2023, May 2011.

\bibitem[\MakeUppercase{Proakis}(1995)Proakis]{Proakis_DigitalCommunications_book1995}
\MakeUppercase{Proakis, J.~G.}
\newblock \emph{Digital Communications}.
\newblock  USA, McGraw-Hill, 1995.

\bibitem[\MakeUppercase{Martins}\emph{ et~al.}(2017)Martins, Lima, Diniz, e
  Ferreira]{Markus_optimalCV_sigpro2017}
\MakeUppercase{Martins, W.~A., Lima, M. V.~S., Diniz, P. S.~R.,
  \MakeLowercase{et~al.}}
\newblock \enquote{Optimal constraint vectors for set-membership affine
  projection algorithms}, \emph{Signal Processing}, v.~134, \penalty0
  pp.~285--294, May 2017.

\bibitem[\MakeUppercase{Hamilton}(1844)Hamilton]{Hamilton_quaternion_PM1844}
\MakeUppercase{Hamilton, W.~R.}
\newblock \enquote{On quaternions, or a new system of imaginaries in algebra},
  \emph{Philosophical Magazine}, v.~25, n.~3, \penalty0 pp.~489--495, 1844.

\bibitem[\MakeUppercase{Pei}\emph{ et~al.}(2004)Pei, Chang, e
  Ding]{Pei_crb_tsp2004}
\MakeUppercase{Pei, S.~C., Chang, J.~H., Ding, J.~J.}
\newblock \enquote{Commutative reduced biquaternions and their Fourier
  transform for signal and image processing applications}, \emph{IEEE
  Transactions on Signal Processing}, v.~52, n.~7, \penalty0 pp.~2012--2031,
  July 2004.

\bibitem[\MakeUppercase{Guo}\emph{ et~al.}(2011)Guo, Zhu, e
  Ge]{Guo_rbct_sp2011}
\MakeUppercase{Guo, L.~Q., Zhu, M., Ge, X.~H.}
\newblock \enquote{Reduced biquaternion canonical transform, convolution and
  correlation}, \emph{Signal Processing}, v.~91, n.~8, \penalty0
  pp.~2147--2153, August 2011.

\bibitem[\MakeUppercase{Took}\emph{ et~al.}(2011)Took, Strbac, Aihara, e
  Mandic]{Took_qvstjf_RE2011}
\MakeUppercase{Took, C.~C., Strbac, G., Aihara, K., \MakeLowercase{et~al.}}
\newblock \enquote{Quaternion-valued short term joint forecasting of
  three-dimensional wind and atmospheric parameters}, \emph{Renewable Energy},
  v.~36, n.~6, \penalty0 pp.~1754--1760, June 2011.

\bibitem[\MakeUppercase{Barth\'{e}lemy}\emph{ et~al.}(2014)Barth\'{e}lemy,
  Larue, e Mars]{Barth_aqd_letter2014}
\MakeUppercase{Barth\'{e}lemy, Q., Larue, A., Mars, J.~I.}
\newblock \enquote{About {QLMS} derivations}, \emph{IEEE Signal Processing
  Letters}, v.~21, n.~2, \penalty0 pp.~240--243, February 2014.

\bibitem[\MakeUppercase{Jiang}\emph{ et~al.}(2014)Jiang, Liu, e
  Li]{Jiang_gqvgo_DSP2014}
\MakeUppercase{Jiang, M.~D., Liu, W., Li, Y.}
\newblock \enquote{A general quaternion-valued gradient operator and its
  applications to computational fluid dynamics and adaptive beamforming}. In:
  \emph{International Conference on Digital Signal Processing (DSP 2014)}, pp.
  821--826, Hong Kong, China, August 2014.

\bibitem[\MakeUppercase{Zhang}\emph{ et~al.}(2014)Zhang, Liu, Xu, e
  Liu]{Zhang_qvrab_SP2014}
\MakeUppercase{Zhang, X.~R., Liu, W., Xu, Y.~G., \MakeLowercase{et~al.}}
\newblock \enquote{Quaternion-valued robust adaptive beamformer for
  electromagnetic vector-sensor arrays with worst-case constraint},
  \emph{Signal Processing}, v.~104, \penalty0 pp.~274--283, November 2014.

\bibitem[\MakeUppercase{Ujang}\emph{ et~al.}(2011)Ujang, Took, e
  Mandic]{Ujang_qvnaf_TNN2011}
\MakeUppercase{Ujang, B.~C., Took, C.~C., Mandic, D.~P.}
\newblock \enquote{Quaternion-valued nonlinear adaptive filtering}, \emph{IEEE
  Transactions on Neural Networks}, v.~22, n.~8, \penalty0 pp.~1193--1206,
  August 2011.

\bibitem[\MakeUppercase{Took} e \MakeUppercase{Mandic}(2009)Took e
  Mandic]{Took_qlaafhp_TSP2009}
\MakeUppercase{Took, C.~C., Mandic, D.~P.}
\newblock \enquote{The quaternion {LMS} algorithm for adaptive filtering of
  hypercomplex processes}, \emph{IEEE Transactions on Signal Processing},
  v.~57, n.~4, \penalty0 pp.~1316--1327, April 2009.

\bibitem[\MakeUppercase{Took}\emph{ et~al.}(2009)Took, Mandic, e
  Benesty]{Took_sqlfcla_icassp2009}
\MakeUppercase{Took, C.~C., Mandic, D., Benesty, J.}
\newblock \enquote{Study of the quaternion {LMS} and four-channel {LMS}
  algorithms}. In: \emph{IEEE International Conference on Acoustics, Speech and
  Signal Processing (ICASSP 2009)}, pp. 3109--3112, Taipei, Taiwan, April 2009.

\bibitem[\MakeUppercase{Neto} e \MakeUppercase{Nascimento}(2011)Neto e
  Nascimento]{Neto_nrcwlqa_SSP2011}
\MakeUppercase{Neto, F. G.~A., Nascimento, V.~H.}
\newblock \enquote{A novel reduced-complexity widely linear {QLMS} algorithm}.
  In: \emph{IEEE Statistical Signal Processing Workshop (SSP 2011)}, pp.
  81--84, Nice, France, June 2011.

\bibitem[\MakeUppercase{Pei} e \MakeUppercase{Cheng}(1999)Pei e
  Cheng]{Pei_quaternion_TIP1999}
\MakeUppercase{Pei, S.-C., Cheng, C.-M.}
\newblock \enquote{Color image processing by using binary
  quaternion-moment-preserving thresholding technique}, \emph{IEEE Transactions
  on Image Processing}, v.~8, n.~5, \penalty0 pp.~614--628, May 1999.

\bibitem[\MakeUppercase{Bihan} e \MakeUppercase{Sangwine}(2003)Bihan e
  Sangwine]{Bihan_quaternion_ICIP2003}
\MakeUppercase{Bihan, N.~L., Sangwine, S.~J.}
\newblock \enquote{Quaternion principal component analysis of color images}.
  In: \emph{International Conference on Image Processing (ICIP 2003)}, pp.
  I--809--12 vol. 1, Barcelona, Spain, Spain, September 2003.

\bibitem[\MakeUppercase{Campa}\emph{ et~al.}(2006)Campa, Camarillo, e
  Arias]{Campa_quaternion_CDC2006}
\MakeUppercase{Campa, R., Camarillo, K., Arias, L.}
\newblock \enquote{Kinematic modeling and control of robot manipulators via
  unit quaternions: application to a spherical wrist}. In: \emph{IEEE
  Conference on Decision and Control}, pp. 6474--6479, San Diego, CA, USA,
  December 2006.

\bibitem[\MakeUppercase{Guo}\emph{ et~al.}(2015)Guo, Liu, Liu, e
  Xu]{Guo_tdwpp_DSP2015}
\MakeUppercase{Guo, X., Liu, Z., Liu, W., \MakeLowercase{et~al.}}
\newblock \enquote{Three-dimensional wind profile prediction with
  trinion-valued adaptive algorithms}. In: \emph{International Conference on
  Digital Signal Processing (DSP 2015)}, pp. 566--569, Singapore, July 2015.

\bibitem[\MakeUppercase{Jahanchahi}\emph{ et~al.}(2013)Jahanchahi, Took, e
  Mandic]{Jahanchahil_cqvapa_SP2013}
\MakeUppercase{Jahanchahi, C., Took, C.~C., Mandic, D.~P.}
\newblock \enquote{A class of quaternion valued affine projection algorithms},
  \emph{Signal Processing}, v.~93, \penalty0 pp.~1712--1723, 2013.

\bibitem[\MakeUppercase{Ell} e \MakeUppercase{Sangwine}(2007)Ell e
  Sangwine]{Ell_quaternion_involution_CMA2011}
\MakeUppercase{Ell, T.~A., Sangwine, S.~J.}
\newblock \enquote{Quaternion involutions and anti-involutions},
  \emph{Computers \& Mathematics with Applications}, v.~53, n.~1, \penalty0
  pp.~137--143, 2007.

\bibitem[\MakeUppercase{Mandic}\emph{ et~al.}(2011)Mandic, Jahanchahi, e
  Took]{Mandic_quaternion_gradient_SPL2011}
\MakeUppercase{Mandic, D.~P., Jahanchahi, C., Took, C.~C.}
\newblock \enquote{A quaternion gradient operator and its applications},
  \emph{Signal Processing Letters}, v.~18, n.~1, \penalty0 pp.~47--50, 2011.

\bibitem[\MakeUppercase{Brandwood}(1983)Brandwood]{Brandwood_gradient_FCRSP1983}
\MakeUppercase{Brandwood, D.~H.}
\newblock \enquote{A complex gradient operator and its application in adaptive
  array theory}. In: \emph{IEE Proceedings F - Communications, Radar and Signal
  Processing}, pp. 11--16, February 1983.

\bibitem[\MakeUppercase{Assefa}\emph{ et~al.}(2011)Assefa, Mansinha, Tiampo,
  Rasmussen, e Abdella]{Assefa_tftci_SP2011}
\MakeUppercase{Assefa, D., Mansinha, L., Tiampo, K.~F., \MakeLowercase{et~al.}}
\newblock \enquote{The trinion Fourier transform of color images}, \emph{Signal
  Processing}, v.~91, n.~8, \penalty0 pp.~1887--1900, August 2011.

\bibitem[\MakeUppercase{van~den Bos}(1994)van~den Bos]{Bos_cgh_PVISP1994}
\MakeUppercase{van~den Bos, A.}
\newblock \enquote{Complex gradient and Hessian}, \emph{IEE Proceedings Vision
  Image Signal Process}, v.~141, n.~6, \penalty0 pp.~380--383, December 1994.

\bibitem[\MakeUppercase{Gou}\emph{ et~al.}(2011)Gou, Xu, Liu, e
  Gong]{Gou_beamformer_MAPE2011}
\MakeUppercase{Gou, X., Xu, Y., Liu, Z., \MakeLowercase{et~al.}}
\newblock \enquote{Quaternion-capon beamformer using crossed-dipole arrays}.
  In: \emph{IEEE 4th International Symposium on Microwave, Antenna,
  Propagation, and EMC Technologies for Wireless Communications (MAPE 2011)},
  pp. 34--37, Beijing, China, November 2011.

\bibitem[\MakeUppercase{Tao} e \MakeUppercase{Chang}(2013)Tao e
  Chang]{Tao_beamformer_TAES2013}
\MakeUppercase{Tao, J.~W., Chang, W.~X.}
\newblock \enquote{A novel combined beamformer based on hypercomplex
  processes}, \emph{IEEE Transactions on Aerospace and Electronic Systems},
  v.~49, n.~2, \penalty0 pp.~1276--1289, April 2013.

\bibitem[\MakeUppercase{Tao} e \MakeUppercase{Chang}(2014)Tao e
  Chang]{Tao_beamformer_MPE2014}
\MakeUppercase{Tao, J.~W., Chang, W.~X.}
\newblock \enquote{Adaptive beamforming based on complex quaternion processes},
  \emph{Mathematical Problems in Engineering}, v.~2014, 2014.

\bibitem[\MakeUppercase{Compton}(1981)Compton]{Compton_beamformer_TAP1981}
\MakeUppercase{Compton, R.}
\newblock \enquote{On the performance of a polarization sensitive adaptive
  array}, \emph{IEEE Transactions on Antennas and Propagation}, v.~29, n.~5,
  \penalty0 pp.~718--725, September 1981.

\bibitem[\MakeUppercase{Li} e \MakeUppercase{Compton}(1991)Li e
  Compton]{Li_beamformer_TAP1991}
\MakeUppercase{Li, J., Compton, R.~T.}
\newblock \enquote{Angle and polarization estimation using {ESPRIT} with a
  polarization sensitive array}, \emph{IEEE Transactions on Antennas and
  Propagation}, v.~39, n.~9, \penalty0 pp.~1376--1383, September 1991.

\bibitem[\MakeUppercase{Google}()Google]{Google_wind}
\MakeUppercase{Google}.
\newblock \enquote{{RE}$<${C}: surface level wind data collection},
  \emph{Google Code 2011}, [Online]. Available:
  http://code.google.com/p/google-rec-csp/ .

\bibitem[\MakeUppercase{Do$\check{{\rm g}}$an\c{c}ay}(2008)Do$\check{{\rm
  g}}$an\c{c}ay]{PUbook}
\MakeUppercase{Do$\check{{\rm g}}$an\c{c}ay, K.}
\newblock \emph{Partial-Update Adaptive Signal Processing: Design, Analysis and
  Implementation}.
\newblock Academic Press, Elsevier, 2008.

\bibitem[\MakeUppercase{Douglas}(1997)Douglas]{Douglas-PU-1997}
\MakeUppercase{Douglas, S.~C.}
\newblock \enquote{Adaptive filters employing partial updates}, \emph{IEEE
  Transactions on Circuits and Systems II: Analog and Digital Signal
  Processing}, v.~44, n.~3, \penalty0 pp.~209--216, March 1997.

\bibitem[\MakeUppercase{Aboulnasr} e \MakeUppercase{Mayyas}(1999)Aboulnasr e
  Mayyas]{Aboulnasr-PU-1999}
\MakeUppercase{Aboulnasr, T., Mayyas, K.}
\newblock \enquote{Complexity reduction of the {NLMS} algorithm via selective
  coefficient updating}, \emph{IEEE Transactions on Signal Processing}, v.~47,
  n.~5, \penalty0 pp.~1421--1427, May 1999.

\bibitem[\MakeUppercase{Do$\check{{\rm g}}$an\c{c}ay} e
  \MakeUppercase{Tanrikulu}(2001)Do$\check{{\rm g}}$an\c{c}ay e
  Tanrikulu]{Dogancay-PU-2001}
\MakeUppercase{Do$\check{{\rm g}}$an\c{c}ay, K., Tanrikulu, O.}
\newblock \enquote{Adaptive filtering algorithms with selective partial
  updates}, \emph{IEEE Transactions on Circuits and Systems II: Analog and
  Digital Signal Processing}, v.~48, n.~8, \penalty0 pp.~762--769, August 2001.

\bibitem[\MakeUppercase{Werner}\emph{ et~al.}(2003)Werner, de~Campos, e
  Diniz]{Werner-PU-2003}
\MakeUppercase{Werner, S., de~Campos, M. L.~R., Diniz, P. S.~R.}
\newblock \enquote{Mean-squared analysis of the partial-update {NLMS}
  algorithm}, \emph{Brazilian Telecommunications Journal - SBrT}, v.~18,
  \penalty0 pp.~77--85, June 2003.

\bibitem[\MakeUppercase{Werner}\emph{ et~al.}(2004)Werner, de~Campos, e
  Diniz]{Werner-PU-2004}
\MakeUppercase{Werner, S., de~Campos, M. L.~R., Diniz, P. S.~R.}
\newblock \enquote{Partial-update {NLMS} algorithms with data-selective
  updating}, \emph{IEEE Transactions on Signal Processing}, v.~52, n.~4,
  \penalty0 pp.~938--949, April 2004.

\bibitem[\MakeUppercase{Godavarti} e \MakeUppercase{III}(2005)Godavarti e
  III]{Godavarti}
\MakeUppercase{Godavarti, M., III, A. O.~H.}
\newblock \enquote{Partial update {LMS} algorithms}, \emph{IEEE Transactions on
  Signal Processing}, v.~53, n.~7, \penalty0 pp.~2384--2399, July 2005.

\bibitem[\MakeUppercase{Grira} e \MakeUppercase{Chambers}(2007)Grira e
  Chambers]{Grira}
\MakeUppercase{Grira, M., Chambers, J.~A.}
\newblock \enquote{Adaptive partial update channel shortening in impulsive
  noise environments}. In: \emph{15th International Conference on Digital
  Signal Processing}, pp. 555--558, Cardiff, July 2007.

\bibitem[\MakeUppercase{Arablouei}\emph{ et~al.}(2011)Arablouei, Do$\check{{\rm
  g}}$an\c{c}ay, e Perreau]{Arablouei}
\MakeUppercase{Arablouei, R., Do$\check{{\rm g}}$an\c{c}ay, K., Perreau, S.}
\newblock \enquote{Partial-update adaptive decision-feedback equalization}. In:
  \emph{19th European Signal Processing Conference}, pp. 2205--2209, Barcelona,
  Spain, August 2011.

\bibitem[\MakeUppercase{Diniz}\emph{ et~al.}(2008{\natexlab{b}})Diniz, Pinto, e
  Hjorungnes]{Pinto}
\MakeUppercase{Diniz, P. S.~R., Pinto, G.~O., Hjorungnes, A.}
\newblock \enquote{Data selective partial-update affine projection algorithm}.
  In: \emph{IEEE International Conference on Acoustics, Speech and Signal
  Processing (ICASSP 2008)}, pp. 3833--3836, Las Vegas, USA, April
  2008{\natexlab{b}}.

\bibitem[\MakeUppercase{Tandon}\emph{ et~al.}(2007)Tandon, Swamy, e
  Ahmad]{Tandon}
\MakeUppercase{Tandon, A., Swamy, M. N.~S., Ahmad, M.~O.}
\newblock \enquote{Partial-update {$L^\infty$}-norm based algorithms},
  \emph{IEEE Transactions on Circuits and Systems I: Regular Papers}, v.~54,
  n.~2, \penalty0 pp.~411--419, February 2007.

\bibitem[\MakeUppercase{Bhotto} e \MakeUppercase{Antoniou}(2014)Bhotto e
  Antoniou]{Bhotto}
\MakeUppercase{Bhotto, M. Z.~A., Antoniou, A.}
\newblock \enquote{A new partial-update {NLMS} adaptive-filtering algorithm}.
  In: \emph{IEEE 27th Canadian Conference on Electrical and Computer
  Engineering (CCECE 2014)}, pp. 1--5, Toronto, Canada, May 2014.

\bibitem[\MakeUppercase{Deng}(2007)Deng]{Deng}
\MakeUppercase{Deng, G.}
\newblock \enquote{Partial update and sparse adaptive filters}, \emph{IET
  Signal Processing}, v.~1, n.~1, \penalty0 pp.~9--17, March 2007.

\bibitem[\MakeUppercase{Arablouei} e \MakeUppercase{Do$\check{{\rm
  g}}$an\c{c}ay}(2012{\natexlab{b}})Arablouei e Do$\check{{\rm
  g}}$an\c{c}ay]{Arablouei_2012_ICASSP}
\MakeUppercase{Arablouei, R., Do$\check{{\rm g}}$an\c{c}ay, K.}
\newblock \enquote{Set-membership recursive least-squares adaptive filtering
  algorithm}. In: \emph{IEEE International Conference on Acoustics, Speech and
  Signal Processing (ICASSP 2012)}, pp. 3765--3768, Kyoto, Japan, March
  2012{\natexlab{b}}.

\bibitem[\MakeUppercase{Bhotto} e
  \MakeUppercase{Antoniou}(2012{\natexlab{b}})Bhotto e
  Antoniou]{Bhotto_2012_ISCCSP}
\MakeUppercase{Bhotto, M. Z.~A., Antoniou, A.}
\newblock \enquote{A robust constrained set-membership affine-projection
  adaptive-filtering algorithm}. In: \emph{5th International Symposium on
  Communications, Control and Signal Processing (ISCCSP 2012)}, pp. 1--4, Rome,
  Italy, May 2012{\natexlab{b}}.

\bibitem[\MakeUppercase{Abadi} e \MakeUppercase{Husoy}(2008)Abadi e
  Husoy]{Abadi_2008_ISCCSP}
\MakeUppercase{Abadi, M. S.~E., Husoy, J.~H.}
\newblock \enquote{Set-membership subband adaptive filters}. In: \emph{3rd
  International Symposium on Communications, Control and Signal Processing
  (ISCCSP 2008)}, pp. 193--196, St. Julian's, Malta, March 2008.

\bibitem[\MakeUppercase{Diniz} e \MakeUppercase{Yazdanpanah}(2016)Diniz e
  Yazdanpanah]{Hamed_I_SM-PUAP_ICASSP2016}
\MakeUppercase{Diniz, P. S.~R., Yazdanpanah, H.}
\newblock \enquote{Improved set-membership partial-update affine projection
  algorithm}. In: \emph{IEEE International Conference on Acoustics, Speech and
  Signal Processing (ICASSP 2016)}, pp. 4174--4178, Shanghai, China, March
  2016.

\bibitem[\MakeUppercase{Duttweiler}(2000)Duttweiler]{Duttweiler_PNLMS_tsap2000}
\MakeUppercase{Duttweiler, D.~L.}
\newblock \enquote{Proportionate normalized least-mean-squares adaptation in
  echo cancelers}, \emph{IEEE Transactions on Speech and Audio Processing},
  v.~8, n.~5, \penalty0 pp.~508--518, September 2000.

\bibitem[\MakeUppercase{Benesty} e \MakeUppercase{Gay}(2002)Benesty e
  Gay]{Benesty_IPNLMS_icassp2002}
\MakeUppercase{Benesty, J., Gay, S.~L.}
\newblock \enquote{An improved {PNLMS} algorithm}. In: \emph{IEEE International
  Conference on Acoustics, Speech and Signal Processing (ICASSP 2002)}, pp.
  1881--1884, Dallas, USA, May 2002.

\bibitem[\MakeUppercase{Gay}(1998)Gay]{Gay_pnlmsPlusPlus_acssc1998}
\MakeUppercase{Gay, S.~L.}
\newblock \enquote{An efficient, fast converging adaptive filter for network
  echo cancellation}. In: \emph{Thirty-Second Asilomar Conference on Signals,
  Systems amp, Computers (ACSSC 1998)}, pp. 394--398, Pacific Grove, CA, USA,
  November 1998.

\bibitem[\MakeUppercase{Paleologu}\emph{ et~al.}(2010)Paleologu, Ciochina, e
  Benesty]{Paleologu_papaEcho_spl2010}
\MakeUppercase{Paleologu, C., Ciochina, S., Benesty, J.}
\newblock \enquote{An efficient proportionate affine projection algorithm for
  echo cancellation}, \emph{IEEE Signal Processing Letters}, v.~17, n.~2,
  \penalty0 pp.~165--168, February 2010.

\bibitem[\MakeUppercase{Meng}\emph{ et~al.}(2011)Meng, de~Lamare, e
  Nascimento]{Vitor_SparsityAwareAPA_sspd2011}
\MakeUppercase{Meng, R., de~Lamare, R.~C., Nascimento, V.~H.}
\newblock \enquote{Sparsity-aware affine projection adaptive algorithms for
  system identification}. In: \emph{Sensor Signal Processing for Defence (SSPD
  2011)}, pp. 1--5, London, U.K., September 2011.

\bibitem[\MakeUppercase{Kopsinis}\emph{ et~al.}(2011)Kopsinis, Slavakis, e
  Theodoridis]{Theodoridis_l1ball_tsp2011}
\MakeUppercase{Kopsinis, Y., Slavakis, K., Theodoridis, S.}
\newblock \enquote{Online sparse system identification and signal
  reconstruction using projections onto weighted {$l_1$} balls}, \emph{IEEE
  Transactions on Signal Processing}, v.~59, n.~3, \penalty0 pp.~936--952,
  March 2011.

\bibitem[\MakeUppercase{Chen}\emph{ et~al.}(2009)Chen, Gu, e
  Hero]{Chen_sparseLMS_icassp2009}
\MakeUppercase{Chen, Y., Gu, Y., Hero, A.~O.}
\newblock \enquote{Sparse {LMS} for system identification}. In: \emph{IEEE
  International Conference on Acoustics, Speech and Signal Processing (ICASSP
  2009)}, pp. 3125--3128, Taipei, Taiwan, April 2009.

\bibitem[\MakeUppercase{Babadi}\emph{ et~al.}(2010)Babadi, Kalouptsidis, e
  Tarokh]{Babadi_Sparse_RLS_tsp2010}
\MakeUppercase{Babadi, B., Kalouptsidis, N., Tarokh, V.}
\newblock \enquote{SPARLS: the sparse {RLS} algorithm}, \emph{IEEE Transactions
  on Signal Processing}, v.~58, n.~8, \penalty0 pp.~4013--4025, August 2010.

\bibitem[\MakeUppercase{Lima}\emph{ et~al.}(2014{\natexlab{b}})Lima, Sobron,
  Martins, e Diniz]{Markus_apssiAnalysis_icassp2014}
\MakeUppercase{Lima, M. V.~S., Sobron, I., Martins, W.~A.,
  \MakeLowercase{et~al.}}
\newblock \enquote{Stability and {MSE} analyses of affine projection algorithms
  for sparse system identification}. In: \emph{IEEE International Conference on
  Acoustics, Speech and Signal Processing (ICASSP 2014)}, pp. 6399--6403,
  Florence, Italy, May 2014{\natexlab{b}}.

\bibitem[\MakeUppercase{Lima}\emph{ et~al.}(2013)Lima, Martins, e
  Diniz]{Markus_apssi_icassp2013}
\MakeUppercase{Lima, M. V.~S., Martins, W.~A., Diniz, P. S.~R.}
\newblock \enquote{Affine projection algorithms for sparse system
  identification}. In: \emph{IEEE International Conference on Acoustics, Speech
  and Signal Processing (ICASSP 2013)}, pp. 5666--5670, Vancouver, Canada, May
  2013.

\bibitem[\MakeUppercase{Gu}\emph{ et~al.}(2009)Gu, Jin, e
  Mei]{Gu_l0_LMS_SPletter2009}
\MakeUppercase{Gu, Y., Jin, J., Mei, S.}
\newblock \enquote{{$l_0$} norm constraint {LMS} algorithm for sparse system
  identification}, \emph{IEEE Signal Processing Letters}, v.~16, n.~9,
  \penalty0 pp.~774--777, September 2009.

\bibitem[\MakeUppercase{Pelekanakis} e \MakeUppercase{Chitre}(2013)Pelekanakis
  e Chitre]{Pelekanakis2012}
\MakeUppercase{Pelekanakis, K., Chitre, M.}
\newblock \enquote{New sparse adaptive algorithms based on the natural gradient
  and the {$l_0$}-norm}, \emph{IEEE Journal of Oceanic Engineeering}, v.~38,
  n.~2, \penalty0 pp.~323--332, April 2013.

\bibitem[\MakeUppercase{Ferreira}\emph{ et~al.}(2016)Ferreira, Lima, Diniz, e
  Martins]{Markus_proportionatePlusPenalty_iscas2016}
\MakeUppercase{Ferreira, T.~N., Lima, M. V.~S., Diniz, P. S.~R.,
  \MakeLowercase{et~al.}}
\newblock \enquote{Low-complexity proportionate algorithms with
  sparsity-promoting penalties}. In: \emph{IEEE International Symposium on
  Circuits and Systems (ISCAS 2016)}, Canada, May 2016.

\bibitem[\MakeUppercase{Yazdanpanah}\emph{
  et~al.}(2016{\natexlab{b}})Yazdanpanah, Diniz, e Lima]{Hamed_eusipco2016}
\MakeUppercase{Yazdanpanah, H., Diniz, P. S.~R., Lima, M. V.~S.}
\newblock \enquote{A simple set-membership affine projection algorithm for
  sparse system modeling}. In: \emph{24th European Signal Processing Conference
  (EUSIPCO 2016)}, pp. 1798--1802, Budapest, Hungary, September
  2016{\natexlab{b}}.

\bibitem[\MakeUppercase{Yazdanpanah} e
  \MakeUppercase{Diniz}(2017{\natexlab{b}})Yazdanpanah e
  Diniz]{Hamed_S_RLS_ICASSP2017}
\MakeUppercase{Yazdanpanah, H., Diniz, P. S.~R.}
\newblock \enquote{Recursive least-squares algorithms for sparse system
  modeling}. In: \emph{IEEE International Conference on Acoustics, Speech and
  Signal Processing (ICASSP 2017)}, pp. 3879--3883, New Orleans, LA, USA, March
  2017{\natexlab{b}}.

\bibitem[\MakeUppercase{Wang} e \MakeUppercase{Poor}(2004)Wang e
  Poor]{Wang_WirelessCommunicationSystems_book2004}
\MakeUppercase{Wang, X., Poor, H.~V.}
\newblock \emph{Wireless Communication Systems: Advanced Techniques for Signal
  Reception}.
\newblock  Upper Saddle River, NJ, Prentice Hall, 2004.

\bibitem[\MakeUppercase{Hansler} e \MakeUppercase{Schmidt}(2004)Hansler e
  Schmidt]{Hansler_echo_book2004}
\MakeUppercase{Hansler, E., Schmidt, G.}
\newblock \emph{Acoustic Echo and Noise Control: A Practical Approach}.
\newblock  Hoboken, NJ, USA, Wiley, 2004.

\bibitem[\MakeUppercase{Benesty}\emph{ et~al.}(2010)Benesty, Gansler, Morgan,
  Sondhi, e Gay]{Benesty_echo_book2010}
\MakeUppercase{Benesty, J., Gansler, T., Morgan, D.~R., \MakeLowercase{et~al.}}
\newblock \emph{Advances in Network and Acoustic Echo Cancellation}.
\newblock  Berlin Heidelberg, Germany, Springer, 2010.

\bibitem[\MakeUppercase{Hu} e \MakeUppercase{Chklovskii}(2014)Hu e
  Chklovskii]{Hu_shrink_sparse_icassp2014}
\MakeUppercase{Hu, T., Chklovskii, D.~B.}
\newblock \enquote{Sparse {LMS} via online linearized {B}regman iteration}. In:
  \emph{IEEE International Conference on Acoustics, Speech and Signal
  Processing (ICASSP 2014)}, pp. 7213--7217, Florence, Italy, May 2014.

\bibitem[\MakeUppercase{Themelis}\emph{ et~al.}(2014)Themelis, Rontogiannis, e
  Koutroumbas]{Themelis_BayesianAP_tsp2014}
\MakeUppercase{Themelis, K.~E., Rontogiannis, A.~A., Koutroumbas, K.~D.}
\newblock \enquote{A variational Bayes framework for sparse adaptive
  estimation}, \emph{IEEE Transactions on Signal Processing}, v.~62, n.~18,
  \penalty0 pp.~4723--4736, September 2014.

\bibitem[\MakeUppercase{Giampouras}\emph{ et~al.}(2015)Giampouras,
  Rontogiannins, Themelis, e
  Koutroumbas]{Giampouras_Bayesian_LR_Subspace_eusipco2015}
\MakeUppercase{Giampouras, P.~V., Rontogiannins, A.~A., Themelis, K.~E.,
  \MakeLowercase{et~al.}}
\newblock \enquote{Online Bayesian low-rank subspace learning from partial
  observations}. In: \emph{European Signal Processing Conference (EUSIPCO
  2015)}, pp. 2526--2530, Nice, France, September 2015.

\bibitem[\MakeUppercase{Themelis}\emph{ et~al.}(2015)Themelis, Rontogiannins, e
  Koutroumbas]{Themelis_Bayesian_GIGMC_eusipco2015}
\MakeUppercase{Themelis, K.~E., Rontogiannins, A.~A., Koutroumbas, K.~D.}
\newblock \enquote{Online Bayesian group sparse parameter estimation using a
  generalized inverse Gaussian Markov chain}. In: \emph{European Signal
  Processing Conference (EUSIPCO 2015)}, pp. 1686--1690, Nice, France,
  September 2015.

\bibitem[\MakeUppercase{Angelosante}\emph{ et~al.}(2010)Angelosante, Bazerque,
  e Giannakis]{Angelosante_rls-sparse_cd_tsp2010}
\MakeUppercase{Angelosante, D., Bazerque, J.~A., Giannakis, G.~B.}
\newblock \enquote{Online adaptive estimation of sparse signals: where {RLS}
  meets the {$l_1$}-norm}, \emph{IEEE Transactions on Signal Processing},
  v.~58, n.~7, \penalty0 pp.~3436--3447, March 2010.

\bibitem[\MakeUppercase{Angelosante} e
  \MakeUppercase{Giannakis}(2009)Angelosante e
  Giannakis]{Angelosante_rls_lasso_sparse_icassp2009}
\MakeUppercase{Angelosante, D., Giannakis, G.~B.}
\newblock \enquote{{RLS}-weighted Lasso for adaptive estimation of sparse
  signals}. In: \emph{IEEE International Conference on Acoustics, Speech and
  Signal Processing (ICASSP 2009)}, pp. 3245--3248, Taipei, Taiwan, April 2009.

\bibitem[\MakeUppercase{Valdman}\emph{ et~al.}(2014)Valdman, de~Campos, e
  {Apolinário, Jr.}]{Valdman_rls_lar_eusipco2014}
\MakeUppercase{Valdman, C., de~Campos, M. L.~R., {Apolinário, Jr.}, J.~A.}
\newblock \enquote{{RLS} sparse system identification using {LAR}-based
  situational awareness}. In: \emph{22nd European Signal Processing Conference
  (EUSIPCO 2014)}, pp. 726--730, Lisbon, Portugal, September 2014.

\bibitem[\MakeUppercase{Widrow} e \MakeUppercase{Hoff}(1960)Widrow e
  Hoff]{Widrow_lms_1960}
\MakeUppercase{Widrow, B., Hoff, M.~E.}
\newblock \enquote{Adaptive switching circuits}, \emph{IRE WESCON Convention
  Record}, v.~4, \penalty0 pp.~96--104, 1960.

\bibitem[\MakeUppercase{Diniz} e \MakeUppercase{Widrow}(2016)Diniz e
  Widrow]{Maloberti_history_book2016}
\MakeUppercase{Diniz, P. S.~R., Widrow, B.}
\newblock \enquote{The History of Adaptive Filters}. In: Maloberti, F., Davies,
  A.~C. (Eds.), \emph{A Short History of Circuits and Systems}, River
  Publishers, cap.~5, pp. 85--90, Denmark, 2016.
\newblock ISBN: 978-87-93379-71-8.

\bibitem[\MakeUppercase{Widrow} e \MakeUppercase{Stearns}(1985)Widrow e
  Stearns]{Widrow_adaptiveFiltering_book1985}
\MakeUppercase{Widrow, B., Stearns, S.~D.}
\newblock \emph{Adaptive Signal Processing}.
\newblock  Englewood Cliffs, NJ, Prentice Hall, 1985.

\bibitem[\MakeUppercase{Rupp} e \MakeUppercase{Hausberg}(2014)Rupp e
  Hausberg]{Rupp_active_noise_control_eusipco2014}
\MakeUppercase{Rupp, M., Hausberg, F.}
\newblock \enquote{{LMS} algorithmic variants in active noise and vibration
  control}. In: \emph{22th European Signal Processing Conference (EUSIPCO
  2014)}, pp. 691--695, Lisbon, Portugal, September 2014.

\bibitem[\MakeUppercase{Rebhi}\emph{ et~al.}(2016)Rebhi, Barrak, e
  Menif]{Rebhi_digital_equalizer_ICTON2016}
\MakeUppercase{Rebhi, S., Barrak, R., Menif, M.}
\newblock \enquote{{LMS}-based digital pre-equalizer for cognitive {R}o{F}
  system}. In: \emph{18th International Conference on Transparent Optical
  Networks (ICTON 2016)}, pp. 1--4, Trento, Italy, July 2016.

\bibitem[\MakeUppercase{Westwick}\emph{ et~al.}(2005)Westwick, Maundy, e
  Salmeh]{Westwick_continuous_time_filter_tuning_IEECDS2005}
\MakeUppercase{Westwick, D.~T., Maundy, B., Salmeh, R.}
\newblock \enquote{Approximate {LMS} tuning of continuous time filters:
  convergence and sensitivity analysis}, \emph{IEE Proceedings - Circuits,
  Devices and Systems}, v.~152, n.~1, \penalty0 pp.~1--6, February 2005.

\bibitem[\MakeUppercase{Ciochin\v{a}}\emph{ et~al.}(2016)Ciochin\v{a},
  Paleologu, Benesty, Grant, e
  Anghel]{Ciochina_LMS_system_identification_eusipco2016}
\MakeUppercase{Ciochin\v{a}, S., Paleologu, C., Benesty, J.,
  \MakeLowercase{et~al.}}
\newblock \enquote{A family of optimized {LMS}-based algorithms for system
  identification}. In: \emph{24th European Signal Processing Conference
  (EUSIPCO 2016)}, pp. 1803--1807, Budapest, Hungary, September 2016.

\bibitem[\MakeUppercase{Candes}\emph{ et~al.}(2008)Candes, Wakin, e
  Boyd]{Candes_reweightedl1_fourier2008}
\MakeUppercase{Candes, E.~J., Wakin, M.~B., Boyd, S.~P.}
\newblock \enquote{Enhancing sparsity by reweighted {$l_1$} minimization},
  \emph{Journal of Fourier Analysis and Applications}, v.~14, n.~5, \penalty0
  pp.~877--905, December 2008.

\bibitem[\MakeUppercase{Gasso}\emph{ et~al.}(2009)Gasso, Rakotomamonjy, e
  Canu]{Gasso_nonconvex_penalties_tsp2009}
\MakeUppercase{Gasso, G., Rakotomamonjy, A., Canu, S.}
\newblock \enquote{Recovering sparse signals with a certain family of nonconvex
  penalties and {DC} programming}, \emph{IEEE Transactions on Signal
  Processing}, v.~57, n.~12, \penalty0 pp.~4686--4698, December 2009.

\bibitem[\MakeUppercase{Diniz}\emph{ et~al.}(2018)Diniz, Yazdanpanah, e
  Lima]{Hamed_Flms_ICASSP2018}
\MakeUppercase{Diniz, P. S.~R., Yazdanpanah, H., Lima, M. V.~S.}
\newblock \enquote{Feature {LMS} algorithms}. In: \emph{IEEE International
  Conference on Acoustics, Speech and Signal Processing (ICASSP 2018)},
  Calgary, Alberta, Canada, April 2018.

\bibitem[\MakeUppercase{Wang}\emph{ et~al.}(2016)Wang, Sharpnack, Smola, e
  Tibshirani]{Wang_Trend_Graphs_jmlr2016}
\MakeUppercase{Wang, Y.-X., Sharpnack, J., Smola, A., \MakeLowercase{et~al.}}
\newblock \enquote{Trend filtering on graphs}, \emph{Journal of Machine
  Learning Research}, v.~17, \penalty0 pp.~1--41, April 2016.

\end{thebibliography}

\end{document}